\newtheorem{definition}{Definition}[section]
\newtheorem{lemma}[definition]{Lemma}
\newtheorem{proposition}[definition]{Proposition}
\newtheorem{theorem}[definition]{Theorem}
\newtheorem{assumption}[definition]{Assumption}
\theoremstyle{definition}
\newtheorem{remark}[definition]{Remark}
\numberwithin{equation}{section}
\renewcommand{\epsilon}{\varepsilon}
\newcommand{\cB}{\mathcal{B}}
\renewcommand{\phi}{\varphi}
\newcommand{\R}{\mathbb{R}}
\newcommand{\Z}{\mathbb{Z}}
\newcommand{\cN}{\mathcal{N}}
\newcommand\eps\varepsilon
\newcommand\bQ{\mathbb{Q}}
\newcommand\bH{\mathbb{H}}
\newcommand\cH{\mathcal{H}}
\newcommand\nn{\nonumber}
\renewcommand\rho\varrho
\newcommand\ind{\mathds{1}}
\newcommand\1{\mathds{1}}
\DeclareMathOperator{\supp}{supp}
\begin{document}

\title{The Huang--Yang conjecture for the low-density Fermi gas}

\author[1]{Emanuela L. Giacomelli}
\affil[1]{\small University of Milan, Department of Mathematics, Via Cesare Saldini 50, 20133 Milan, Italy}
\author[2]{Christian Hainzl}
\author[2]{Phan Thành Nam}
\affil[2]{\small LMU Munich, Department of Mathematics, Theresienstr. 39, 80333 Munich, Germany}
\author[3]{Robert Seiringer}
\affil[3]{\small Institute of Science and Technology Austria, Am Campus 1, 3400 Klosterneuburg, Austria {\footnotesize{emanuela.giacomelli@unimi.it, hainzl@math.lmu.de, nam@math.lmu.de, robert.seiringer@ist.ac.at}}}

\maketitle

\abstract{Our work establishes  a three-term asymptotic expansion of the ground state energy of a dilute gas of spin $1/2$ fermions with repulsive short-range interactions, validating a formula predicted by Huang and Yang in 1957. The formula is universal in the sense that it holds for a large class  of  interaction potentials and depends on those  only via their scattering length. We have recently proved an upper bound on the ground state energy of the desired form, and the present work completes the program by proving the matching lower bound.} 

\tableofcontents

\section{Introduction and main result}

In 1957, Huang and Yang \cite{HY} utilized a pseudopotential method to derive an asymptotic expansion of the ground state energy of dilute Fermi gases. They predicted that the ground state energy per unit volume in the thermodynamic limit of a spin $1/2$ Fermi gas of hard spheres of diameter $a>0$ with  
equal spin densities $\rho_\uparrow=\rho_\downarrow = \rho/2$ is given by 
\begin{equation}\label{eq:HY}
  e(\rho/2, \rho/2) = \frac{3}{5}(3\pi^2)^{\frac{2}{3}}\rho^{\frac{5}{3}} + 2\pi a \rho^2 + \frac{4}{35}(11 -2\log2) (9\pi)^{\frac{2}{3}} a^2 \rho^{7/3} + o(\rho^{\frac{7}{3}})_{\rho\to 0}. 
\end{equation}
The first term can be understood as  arising from the kinetic energy of a non-interacting Fermi gas, the second term captures the leading order contribution of the interaction from opposite-spin components, and the third term emerges from a renormalization of the scattering equation, where the scattering process is constrained by Pauli blocking in the Fermi sea.

Proving the validity of the formula \eqref{eq:HY} from first principles of quantum mechanics is a major challenge in mathematical physics. It is expected that the formula holds for a large class of repulsive and short-range interactions, thus exhibiting a remarkable universality: the first three terms of the energy depend on the interaction potential only via its scattering length $a$. Motivated by this, the validity of the first two terms in \eqref{eq:HY} was proved in \cite{LSS-05} for a large class of repulsive interactions, including hard spheres. More generally, \cite[Theorem 1]{LSS-05} states that for two arbitrary densities $\rho_\uparrow, \rho_\downarrow$ of the two spin components,  the corresponding ground state energy density satisfies 
\begin{align} \label{eq:LSS}
e(\rho_\uparrow, \rho_\downarrow)= \frac{3}{5}(6\pi^2)^{\frac{2}{3}} \left(  \rho_\uparrow^{5/3} + \rho_\downarrow^{5/3}\right)  +8\pi a \rho_\uparrow\rho_\downarrow + o(\rho^2)_{\rho\to 0}. 
\end{align}
We refer to \cite{FGHP} for an alternative proof of \eqref{eq:LSS}, to \cite{S06} for an extension  
to the thermodynamic pressure at positive temperature, to \cite{Gia1,Lau,Gia3} for improved error estimates, and to \cite{LS2,LS1,LS3,LS4} for related results in the case of fermions without spin. There has also been substantial progress on the fermionic many-body problem in the finite-volume setting; see \cite{BNPSS-20,BNPSS-21,ChrHaiNam-23a,BPSS-21} for results on the mean-field limit with regular potentials, \cite{ChrHaiNam-23,ChrHaiNam-24} for an analogue of the Gell-Mann--Brueckner formula for the mean-field Coulomb gas, and \cite{CheWuZha} for a simplified version of \eqref{eq:HY} in a ultra-dilute regime inspired by the Gross-Pitaevskii limit for bosons \cite{LSY}.

Recently in \cite{GHNS}, we considered a system with two arbitrary spin densities $\rho_\uparrow, \rho_\downarrow$ interacting via a square-integrable repulsive potential, and proved the upper bound 
\begin{equation}\label{eq: main thm 1-upper}
  e(\rho_\uparrow, \rho_\downarrow) \leq  \frac{3}{5}(6\pi^2)^{\frac{2}{3}} \left(\rho_\uparrow^{\frac{5}{3}} + \rho_\downarrow^{\frac{5}{3}}\right) + 8\pi a \rho_\uparrow\rho_\downarrow + a^2 \rho_\uparrow^{\frac{7}{3}} F\left(\frac{\rho_\downarrow}{\rho_\uparrow}\right) + O(\rho^{\frac{7}{3} +\frac{1}{9}})_{\rho \to 0},
\end{equation}
where the third order contribution $a^2 \rho_\uparrow^{\frac{7}{3}} F({\rho_\downarrow}/{\rho_\uparrow})$ comes from the expression 
\begin{equation}\label{inte:F}
 \frac{ a^2}{8\pi^7}\int_{\mathbb{R}^3} dp \int_{|r| \le k_F^\uparrow }\hspace{-0.4cm}dr \int_{|r^\prime| \le k_F^\downarrow}\hspace{-0.45cm} dr^\prime 
 \left(\frac{1}{2|p|^2} - \frac{\ind_{\{|r+p|>k_{F}^\uparrow\}} \ind_{\{|r^\prime-p|>k_{F}^\downarrow\}}}{|r+p|^2 - |r|^2 + |r^\prime -p|^2 - |r^\prime|^2} \right)
\end{equation}
with $k_{F}^\sigma= (6\pi^2)^{\frac{1}{3}}\rho_\sigma^{1/3}$ the radii of the Fermi balls of spin $\sigma\in \{\uparrow,\downarrow\}$. 
As proved in \cite{GHNS}, the explicit form of the function $F:\R_+\to \R_+$ is 
\begin{align}\label{eq:def-F}
F(x) =  \frac {(6\pi^2)^{1/3}}{35} & \Biggl(   {16 x^{7/3}}  \ln x  -   48 \left(x^{7/3} + 1\right)  \ln (1+x^{1/3}) \nonumber \\
& +  6 \left( 15  x^{1/3} - 4x^{2/3} + 33 x + 33 x^{4/3}  -4 x^{5/3} + 15 x^{2}\right) \nonumber\\
 &+   {21} \left( 1 - 6   x^{2/3} +5  x +5   x^{4/3} -6 x^{5/3}
 + x^{7/3} \right) \ln\frac {|1- x^{1/3}|}{1+x^{1/3}} 
  \Biggl),
\end{align}
which can be also found in the physics literature \cite{Kanno,ChaWoj,Pera}. Note that 
$$F(1) = \frac{48}{35}(11 -2\log2) (6 \pi^2)^{\frac{1}{3}},$$
and hence \eqref{eq: main thm 1-upper} agrees with the prediction \eqref{eq:HY} when $\rho_\uparrow = \rho_\downarrow = \rho/2$.

In the present paper, we will prove a lower bound matching \eqref{eq: main thm 1-upper}, thus fully justifying the Huang--Yang conjecture \eqref{eq:HY}. Let us give a precise description of the model before stating our main result. 

We consider a system of $N$ fermions with spin $1/2$  
in a box $\Lambda:= [-L/2,L/2]^3\subset\mathbb{R}^3$ with periodic boundary conditions. The Hamiltonian of the system is given by 
\begin{equation}\label{eq: hamiltonian}
  H_N := -\sum_{j=1}^N\Delta_{x_j} + \sum_{1\leq i< j\leq N} V(x_i - x_j), 
\end{equation}
acting on $\bigwedge^N L^2(\Lambda,\mathbb{C}^2)$, the  Hilbert space of antisymmetric square-integrable functions of $N$ space-spin variables $(x_i,\sigma_i)$, with $x_i\in \Lambda$ and  $\sigma_i \in \{\uparrow,\downarrow\}$. Here the standard Laplacian $-\Delta$ on $L^2(\Lambda)$ stands for the kinetic energy of each particle, and the interaction between particles is described by the  function $V: \Lambda\to \mathbb{R}$ of the form
\begin{align}\label{eq:def-V}V(x) = \sum_{z\in\Z^3} V_\infty(x+Lz).
\end{align}

We impose the following conditions on the interaction potential $V_\infty: \R^3\to \R$. 
\begin{assumption}\label{asu: potential V}
  The interaction potential $V_\infty\in L^2(\mathbb{R}^3)$ is nonnegative, radial and compactly supported.
  \end{assumption}
Under this condition, the Hamiltonian $H_N$ is bounded from below and can be defined as a self-adjoint operator. We are interested in the restriction of $H_N$ to  the subspace $\mathfrak{h}(N_\uparrow, N_\downarrow) \subset \bigwedge^N L^2(\Lambda,\mathbb{C}^2)$ consisting of $N$-body wave functions with exactly $N_{\sigma}$ particles of spin $\sigma\in \{\uparrow,\downarrow\}$, where $N_\uparrow + N_\downarrow=N$. The corresponding ground state energy in the subspace $\mathfrak{h}(N_\uparrow, N_\downarrow)$ is given by 
\begin{equation}\label{eq: def gs energy}
  E_L(N_\uparrow, N_\downarrow) = \inf_{\psi\in \mathfrak{h}(N_\uparrow, N_\downarrow)}\frac{\langle \psi, H_N\psi\rangle}{\langle \psi, \psi\rangle}.
\end{equation}
Our result concerns the ground state energy density in the thermodynamic limit
\begin{equation}\label{eq: def gs energy density}
  e(\rho_\uparrow, \rho_\downarrow) = \lim_{\substack{L \rightarrow \infty \\ N_\sigma/L^3 \rightarrow \rho_\sigma, \,\, \sigma \in \{\uparrow, \downarrow\}}}\frac{E_L(N_\uparrow, N_\downarrow)}{L^3},
\end{equation}
whose existence is well-known \cite{Ruelle-99,Robinson-71}. In fact, the ground state energy  $e(\rho_\uparrow, \rho_\downarrow)$ is  independent  of the boundary conditions imposed on $\Lambda= [-\frac{L} 2,\frac{L} 2]^3\subset\mathbb{R}^3$, but we fix the periodic boundary conditions to have a concrete model. 

In the low density limit, the asymptotic expansion of $e(\rho_\uparrow, \rho_\downarrow)$ involves the $s$-wave scattering length $a>0$ of the interaction potential $V_\infty$, which is defined by 
\begin{align}\label{eq: 8pia-first-way}
  8\pi a = \int_{\R^3} V_\infty (1-\varphi_\infty)
\end{align}
  with $\varphi_\infty:\R^3\to [0,1]$ the solution of the zero-scattering equation 
\begin{equation}\label{eq: zero en scatt eq}
  2\Delta\varphi_\infty + V_\infty (1-\varphi_\infty) = 0, \qquad \varphi_\infty(x)\rightarrow 0 \quad \mbox{as}\quad |x| \rightarrow \infty.
\end{equation}

Our main result is a rigorous justification of the expansion \eqref{eq: main thm 1-upper} as an equality.

\begin{theorem}[Huang--Yang formula]\label{thm: main lw bd} Let $V_\infty$ be as in Assumption \ref{asu: potential V}, and let $a>0$ be its scattering length. In the low-density limit $\rho_\uparrow+ \rho_\downarrow = \rho \rightarrow 0$, the ground state energy density defined in \eqref{eq: def gs energy density} satisfies
\begin{equation}\label{eq: main thm 1}
  e(\rho_\uparrow, \rho_\downarrow) =  \frac{3}{5}(6\pi^2)^{\frac{2}{3}} \left(\rho_\uparrow^{\frac{5}{3}} + \rho_\downarrow^{\frac{5}{3}}\right) + 8\pi a \rho_\uparrow\rho_\downarrow + a^2 \rho_\uparrow^{\frac{7}{3}} F\left(\frac{\rho_\downarrow}{\rho_\uparrow}\right) + O(\rho^{\frac{7}{3} + \frac{1}{120} }), 
\end{equation}
where $F$ is defined in \eqref{eq:def-F}. 
In particular, for $\rho_\uparrow = \rho_\downarrow = \rho/2$, \eqref{eq:HY} holds true. 
\end{theorem}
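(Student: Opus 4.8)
\emph{Proof strategy.} Since the matching upper bound \eqref{eq: main thm 1-upper} is already available, and since $e(\rho_\uparrow,\rho_\downarrow)$ does not depend on the boundary conditions, it suffices to prove the lower bound with error $O(\rho^{7/3+1/120})$. The plan is to reduce, by a Neumann-type localization with an IMS partition of unity, to a lower bound inside a box of intermediate side length $\ell$ (with $\rho^{-1/3}\ll\ell$, yet $\ell$ small enough that the expected number of excitations stays much below the particle number); the localization of the kinetic energy then costs only $O(\rho\,\ell^{-2})$ per unit volume, while the distribution of particle numbers among the boxes is controlled grand-canonically, with chemical potentials tuned to $\rho_\uparrow,\rho_\downarrow$. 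Inside a box I pass to second quantization and perform a particle--hole transformation relative to the free Fermi sea $\Omega$ (the Slater determinant filling the Fermi balls $\{|k|\le k_F^\sigma\}$), writing the Hamiltonian as $E^{\mathrm{free}}+\mathcal H_{\mathrm{kin}}+\mathcal V$, with $\mathcal H_{\mathrm{kin}}=\sum_\sigma\sum_k e_\sigma(k)\,a^{*}_{k,\sigma}a_{k,\sigma}\ge0$, $e_\sigma(k):=\big||k|^{2}-(k_F^\sigma)^{2}\big|$, counting excited particles above and holes below the Fermi surface, and $\mathcal V\ge0$ quartic.

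Next I would establish the a priori bounds: by the upper bound, any near-minimizer $\psi$ obeys $\langle\psi,(\mathcal H_{\mathrm{kin}}+\mathcal V)\psi\rangle\lesssim\rho^{2}\ell^{3}$, hence $\langle\psi,\mathcal H_{\mathrm{kin}}\psi\rangle\lesssim\rho^{2}\ell^{3}$, together with the momentum-resolved refinements $\langle\psi,a^{*}_{k,\sigma}a_{k,\sigma}\psi\rangle\lesssim\rho^{2}\ell^{3}/e_\sigma(k)$ and bounds on moments of the excitation-number operator. These estimates quantify the closeness of $\psi$ to $\Omega$ and make the (non-terminating) expansions below convergent with controlled remainders.

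The heart of the matter is the scattering renormalization. Fix an auxiliary length $b$ with $\mathrm{range}(V_\infty)\ll b\ll\rho^{-1/3}$, solve the truncated zero-energy scattering equation $2\Delta\varphi_b=V_\infty(1-\varphi_b)$ in $B_b$ with $\varphi_b=0$ on $\partial B_b$, and set $g:=V_\infty(1-\varphi_b)$, so that $\int g=8\pi a\,(1+O(a/b))$. Conjugating $H$ by a Jastrow-type unitary built from $\varphi_b$ on opposite-spin pairs --- and using the scattering equation to cancel the short-range part of $\mathcal V$ against its commutator with $\mathcal H_{\mathrm{kin}}$ --- one arrives, modulo remainders bounded through the a priori estimates (the same-spin and exchange contributions being of lower order or of a favourable sign), at an operator inequality $H\ge E^{\mathrm{free}}+8\pi a\,\rho_\uparrow\rho_\downarrow\,\ell^{3}+\mathcal H_{\mathrm{kin}}+\mathcal Q-C\rho^{7/3+\delta_1}\ell^{3}$. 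Here $\mathcal Q$ is a quadratic form with (i) a positive $c$-number part $\propto(8\pi a)^{2}\rho_\uparrow\rho_\downarrow\,\ell^{3}\int_{|p|\lesssim1/b}\tfrac{dp}{2|p|^{2}}$ generated by the quadratic term of the conjugation, and (ii) an off-diagonal part that creates and annihilates, with coefficient $\approx\ell^{-3}\hat g(p)\approx\ell^{-3}\,8\pi a$ on the relevant range $|p|\lesssim k_F$, a pair consisting of one $\uparrow$ and one $\downarrow$ particle--hole excitation carrying opposite momenta.

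It then remains to diagonalize $\mathcal H_{\mathrm{kin}}+\mathcal Q$. In the low-excitation sector, and up to Pauli-principle corrections of lower order, its off-diagonal part decouples into independent modes labelled by $(r,r',p)$ with $r\in B_{k_F^\uparrow}$, $r'\in B_{k_F^\downarrow}$, $r+p\notin B_{k_F^\uparrow}$, $r'-p\notin B_{k_F^\downarrow}$ and gap $E_{r,r',p}=|r+p|^{2}-|r|^{2}+|r'-p|^{2}-|r'|^{2}$; completing the square mode by mode contributes $-\,\mathrm{const}\cdot(8\pi a)^{2}\,\ell^{-3}\sum_{r,r',p}E_{r,r',p}^{-1}$ (the sum over the modes just described). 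Adding the positive $c$-number from (i), the ultraviolet-divergent $p$-integrals cancel, the constants match, and the resulting Riemann sum converges --- as $\ell\to\infty$ and then $\rho\to0$ --- precisely to the integral \eqref{inte:F}, i.e. to $a^{2}\rho_\uparrow^{7/3}F(\rho_\downarrow/\rho_\uparrow)\,\ell^{3}$. Dividing by $\ell^{3}$, optimizing the parameters $\ell,b,\delta_1,\dots$ (which produces the exponent $1/120$), and passing to the thermodynamic limit finishes the proof. The main obstacle is this entire last stretch: one must produce the effective quadratic Hamiltonian with the \emph{precise} coupling $8\pi a$ and the \emph{exact} Pauli-blocked phase space, while simultaneously showing that the scattering-renormalization remainders, the deviation of the composite particle--hole pair operators from genuine bosons, the replacement of the vertex $\hat g(p)$ by the constant $8\pi a$, and the localization and Riemann-sum errors are \emph{all} of order $\rho^{7/3+\delta}$ for a definite $\delta>0$. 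It is precisely the tension between the ultraviolet cutoff $b$ --- needed to define the scattering solution --- and the infrared scale $k_F\sim\rho^{1/3}$ that forces the small gain, and the careful management of this chain of parameters is the technically heaviest part of the argument.
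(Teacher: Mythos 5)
Your proposal follows the broad outline that both the paper and its companion upper-bound paper \cite{GHNS} share (particle--hole transformation, a priori bounds, scattering renormalization via the zero-energy solution, and an effective quadratic Hamiltonian whose completion of the square produces the Huang--Yang integral). But the central step --- ``conjugating $H$ by a Jastrow-type unitary built from $\varphi_b$ on opposite-spin pairs $\ldots$ modulo remainders bounded through the a priori estimates'' --- is precisely the step the paper argues \emph{cannot} be carried out for the lower bound. Implementing such a quasi-bosonic Bogoliubov transformation produces error terms of size $\rho\,\mathcal N$, and for the true ground state the best available bound is $\langle\psi,\mathcal N\psi\rangle\lesssim L^3\rho^{4/3}$ (Proposition~\ref{pro: a priori}), so these errors are $O(L^3\rho^{7/3})$ --- the \emph{same} order as the Huang--Yang correction you are trying to extract, not smaller. (For the trial state of the upper bound one has $\langle\mathcal N\rangle\sim L^3\rho^{8/3}$, which is why the conjugation approach works there.) Your phrase ``modulo remainders bounded through the a priori estimates'' glosses over exactly this obstruction, and the final sentence about ``managing the chain of parameters'' does not resolve it: no choice of $\ell,b,\delta_1$ turns a genuine $O(L^3\rho^{7/3})$ term into $o(L^3\rho^{7/3})$.

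Two further gaps are tied to this. First, the cubic term $\mathbb Q_3$ arising from the particle--hole decomposition is not addressed at all in your argument; you only mention same-spin and exchange contributions. A Cauchy--Schwarz bound on $\mathbb Q_3$ against the a priori estimates gives only $O(L^3\rho^{7/3})$, again insufficient. The paper handles this by designing \emph{additional} renormalization operators $S_\sigma(r)$ and $\mathcal S_{\sigma,\sigma'}(x,y)$ (Definition~\ref{def:ren-Q3}) so that $\mathbb Q_3$ is absorbed into the completed squares rather than estimated as an error. Second, your assumption that ``Pauli-principle corrections'' (the failure of the pair operators $b_{p,r,\sigma}$ to be exactly bosonic) are of lower order is false: these corrections are again $\rho\,\mathcal N$-sized. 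The paper's way out is a structural cancellation --- when the squares $\bigl|\hat a_{r,\sigma}+T_\sigma(r)+S_\sigma(r)\bigr|^2$ and $\bigl|a_{\sigma'}(u_y)a_\sigma(u_x)+\mathcal T_{\sigma,\sigma'}+\mathcal S_{\sigma,\sigma'}\bigr|^2$ are expanded, the $\rho\,\mathcal N$ contributions from $|T|^2$ and $|S_1|^2$ (resp.\ from $|\mathcal T|^2$ and $|\mathcal S|^2$) cancel each other exactly, using the particle--hole relation \eqref{eq:particle-hole}; see Lemmas~\ref{lem: T} and \ref{lem: S1}. This cancellation is the key new idea of the lower bound, and no version of it appears in your proposal. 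Thus the paper does \emph{not} conjugate the Hamiltonian by a unitary and then diagonalize; it directly rewrites $\bH_0+\bQ_2+\bQ_3+\bQ_4$ as constants plus manifestly non-negative $A^*A$ pieces plus controllable errors (Propositions~\ref{pro: completion square Vphi}, \ref{pro: completion square Vf}), and drops the $A^*A$ pieces. That decomposition, with the tailored $T,S,\mathcal T,\mathcal S$ and the $\rho\mathcal N$-cancellation, is what your proposal is missing.
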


As we have mentioned, the upper bound of \eqref{eq: main thm 1} was proved in \cite{GHNS}, and we will focus here on establishing the matching lower bound. Our proof of the lower bound is rather different from that of the upper bound, although both are based on the general strategy of adapting the bosonic Bogoliubov theory to fermionic systems. 

The upper bound in  \cite{GHNS} is achieved by constructing a trial state using quasi-bosonic Bogoliubov transformations, where suitable pairs of fermions are treated as bosons. The corresponding energy is extracted from the (approximate) diagonalization of the Hamiltonian by  the Bogoliubov transformations, with many error terms controlled by specific properties of the trial state. 

For the lower bound, however, the available a-priori estimates on the ground state are insufficient to implement the diagonalization method \textcolor{black}{up to the Huang--Yang correction}. Instead, we will use a version of the ``completing the square" method, i.e., we will decompose the Hamiltonian into several terms, in such a way that we can drop some non-negative terms of the form $A^*A$ to obtain a lower bound.   Our approach is inspired by the recent work  \cite{ChrHaiNam-24} on the fermionic correlation energy in the mean-field regime, and also earlier works \cite{FS-20,FS-23,Brooks-25} on  
dilute Bose gases. The rigorous implementation of the method for dilute Fermi gases requires key new ideas, however, which we will explain in the next section.

\section{Key ingredients in the proof}\label{sec: strategy}

In this section we shall explain the overal strategy, highlighting the main ideas and key ingredients in the proof. 

\bigskip
\noindent
{\bf Notation.}  We will denote by $C$ a general finite constant independent of $L$ and $\rho$ (it may depend on $V$). The value of $C$ may change from line to line. We use $\lim_{L \to \infty}$ as shorthand for the thermodynamic limit, and we denote by $\mathfrak{e}_L$ a general expression that tends to $0$ as $L \to \infty$. For any $f\in L^2(\Lambda, \mathbb{C}^2)$, we write $\hat{f}(\cdot, \sigma) ) = \hat{f}_\sigma(\cdot)$ and, similarly, $f(\cdot, \sigma) = f_\sigma(\cdot)$. We will often write $\int$ in place of $\int_\Lambda$ and $\sum_p$ in place of $\sum_{p\in\Lambda^\ast}$, where $\Lambda^*=(2\pi/ L)\mathbb{Z}^3$.

\subsection{The correlation Hamiltonian}\label{ss:Fock}

We will first extract the energy contribution of the ground state of the non-interacting Fermi gas, and then focus on the correlation energy. For this purpose, it is convenient to work in the fermionic Fock space 
$$\mathcal{F}_{\mathrm{f}} = \bigoplus_{n\geq 0} \mathcal{F}^{(n)}_{\mathrm{f}},\quad \mathcal{F}^{(n)}_{\mathrm{f}} = \bigwedge^n L^2(\Lambda, \mathbb{C}^2)$$ 
where the number of particles is not fixed. A state $\psi=( \psi^{(n)})_{n=0}^\infty \in \mathcal{F}_{\mathrm{f}}$ has components $\psi^{(n)}= \psi^{(n)}((x_1,\sigma_1), \ldots, (x_n,\sigma_n)) \in \mathcal{F}^{(n)}_{\mathrm{f}}$. The vacuum state will be denoted by $\Omega=(1,0,0,...)$.

We will use the language of second quantization, where for any $f\in L^2(\Lambda, \mathbb{C}^2)$ we denote by $a^\ast(f)$ and $a(f)$ the standard creation and annihilation operators on Fock space $\mathcal{F}_{\mathrm{f}}$. They satisfy the canonical anticommutation relations (CAR)
\begin{align}\label{eq:CAR}
  \{a(f), a^\ast(g)\} = \langle f,g\rangle_{L^2(\Lambda;\mathbb{C}^2)}, \quad \{a(f), a(g)\} = \{a^\ast(f), a^\ast (g)\} = 0
\end{align}
for any $f,g\in L^{2}(\Lambda; \mathbb{C}^2)$, where $\{A,B\}=AB+BA$.  
For $\sigma \in \{\uparrow, \downarrow\}$ and $k\in \Lambda^*:=(2\pi/L)\mathbb{Z}^3$, we denote $(\delta_\sigma f_k)(x,\sigma') = \delta_{\sigma,\sigma'}{L^{-3/2}} e^{i k\cdot x}$ and 
\begin{equation}\label{eq: def ak a*k}
\hat{a}_{k,\sigma}  
 = a(\delta_{\sigma} f_k)   = \frac 1{L^{3/2}} \int_\Lambda dx \, a_{x,\sigma} e^{- i k x}
\end{equation}
where 
we introduced the operator-valued distributions
$a_{x,\sigma} = a(\delta_{x,\sigma})$ with $\delta_{x,\sigma}(y,\sigma^\prime) = \delta_{\sigma,\sigma^\prime}\delta(x-y)$. 

We denote by $\mathcal{N}$ the total number operator, which can be further decomposed as 
\begin{equation}\label{eq: def N}
  \mathcal{N}=\sum_{\sigma \in\{\uparrow, \downarrow\}}\mathcal{N}_\sigma,\quad 
\mathcal{N}_\sigma = \sum_{k\in\Lambda^*} \hat{a}_{k,\sigma}^\ast \hat{a}_{k,\sigma}=  \int_\Lambda dx\, a_{x,\sigma}^\ast {a}_{x,\sigma},
 \end{equation}
where $\mathcal{N}_\sigma$ is the number operator associated with particles of spin $\sigma$. 

The second quantized version of $H_N$ in \eqref{eq: hamiltonian} is the following Hamiltonian on Fock space 
\begin{align*}
  \mathcal{H} &= \sum_{\sigma\in\{\uparrow, \downarrow\}}\int_{\Lambda}dx\, \nabla_x a^\ast_{x,\sigma}\nabla_x a_{x,\sigma} + \frac{1}{2}\sum_{\sigma,\sigma^\prime \in \{\uparrow, \downarrow\}} \int_{\Lambda \times \Lambda} dxdy\, V(x-y) a^\ast_{x,\sigma}a^\ast_{y,\sigma^\prime}a_{y,\sigma^\prime}a_{x,\sigma} \\
&= \sum_{\sigma\in\{\uparrow, \downarrow\}} \sum_{k\in\Lambda^*} |k|^2 \hat{a}_{k,\sigma}^\ast \hat{a}_{k,\sigma}  + \frac{1}{2 L^3}\sum_{\sigma,\sigma^\prime \in \{\uparrow, \downarrow\}} \sum_{k,p,q\in\Lambda^*} \hat V(k) \hat{a}_{p+k,\sigma}^\ast \hat{a}_{q-k,\sigma'}^\ast  \hat{a}_{q,\sigma'} \hat{a}_{p,\sigma}, 
 \end{align*}
 which coincides with $H_N$ when restricted to $\mathcal{F}^{(N)}_{\mathrm{f}} = \bigwedge^N L^2(\Lambda,\mathbb{C}^2)$ where $\cN=N$. Here we use the convention 
\[
  \hat{f}(p) = \int_{\Lambda} dx\, f(x)e^{-ip\cdot x}\ , \qquad f(x) = \frac{1}{L^3}\sum_{p\in\Lambda^*}\hat{f}(p) e^{ip\cdot x}
\]
for the Fourier transform of $f:\Lambda\to \mathbb{C}$. 
The Fourier transform of a function $g:\mathbb{R}^3\to \mathbb{C}$ will instead be denoted by $\mathcal{F}(g)$. 

For given positive integers $N_\uparrow$, $N_\downarrow$ with $N_\uparrow+N_\downarrow=N$, we can identify $\mathfrak{h}(N_\uparrow, N_\downarrow)$ with the subspace $\mathcal{F}_{\mathrm{f}}(N_\uparrow, N_\downarrow)\subset \mathcal{F}_{\mathrm{f}}$  where $\mathcal{N}_\sigma = N_\sigma$ for $\sigma\in\{\uparrow,\downarrow\}$ (a subspace left invariant by $\mathcal{H}$), and write the ground state energy in \eqref{eq: def gs energy} as
\[
  E_{L}(N_\uparrow,N_\downarrow) = \inf_{\psi\in\mathcal{F}_{\mathrm{f}}{(N_\uparrow, N_\downarrow)}}\frac{\langle \psi, \mathcal{H}\psi\rangle}{\langle \psi, \psi \rangle}.
\]

At low density, the ground state energy $E_{L}(N_\uparrow,N_\downarrow)$ agrees with that of a non-interacting system to leading order. The ground state of the free Fermi gas is given by the Slater determinant  
$$
\psi_{\mathrm{FFG}} = \prod_{\sigma\in \{\uparrow,\downarrow\}} \prod_{k\in \mathcal{B}_F^\sigma} \hat a^*_{k,\sigma} \Omega
$$
where the relevant momenta are restricted to the Fermi balls 
$$
\mathcal{B}_F^\sigma = \{k\in \Lambda^* = (2\pi/L)\mathbb{Z}^3 \,|\, |k|\le k_F^\sigma\}
$$
where  $k_F^\sigma= (\frac{3}{4\pi}\rho_\sigma)^{1/3} + o(1)_{L\to \infty}$. As explained in \cite{FGHP,Gia1, GHNS}, we can assume, without loss of generality, that the Fermi balls are completely filled, i.e., 
\begin{align}\label{eq:Nsigma}
N_\sigma = |\mathcal{B}_F^\sigma|, \quad \sigma \in \{\uparrow,\downarrow\}. 
\end{align}

As in \cite{FGHP,Gia1, GHNS} we factor out the contribution of the free Fermi gas state $\psi_{\mathrm{FFG}}$ by a (unitary) particle-hole  transformation. For that purpose, let us denote the orthogonal projections $u,v$ on $L^2(\Lambda,\mathbb{C}^2)$ with integral kernels
\begin{align}\label{eq: def u,v}
v_{\sigma,\sigma^\prime}(x;y) =v_{\sigma,\sigma^\prime}(x-y) = \frac{\delta_{\sigma,\sigma^\prime}}{L^3}\sum_{k\in\mathcal{B}_F^\sigma}e^{ik\cdot (x-y)}, \nn\\
  u_{\sigma, \sigma^\prime}(x;y) =   u_{\sigma, \sigma^\prime}(x-y) = \frac{\delta_{\sigma,\sigma^\prime}}{L^3} \sum_{k\notin\mathcal{B}_F^{\sigma}}e^{ik\cdot(x-y)}. 
\end{align}
Note that  $v$ is the one-particle  reduced density matrix of $ \psi_{\mathrm{FFG}}$ and $u = 1 - v$. In particular, $u v = v u = 0$. We will write $u_\sigma = u_{\sigma, \sigma}$ and $v_\sigma = v_{\sigma, \sigma}$ for simplicity. The Fourier coefficients $\hat{u}_\sigma,\hat{v}_\sigma$  of the kernels introduced in \eqref{eq: def u,v} are given by  
\begin{equation}\label{eq: def u hat, v hat}
\hat{u}_\sigma(k) = \begin{cases}
  0 &\mbox{if}\,\,\, |k| \leq k_F^\sigma, \\ 1 &\mbox{if}\,\,\, |k| > k_F^\sigma,
  \end{cases}\qquad \hat{v}_\sigma(k) = \begin{cases} 1 &\mbox{if}\,\,\, |k| \leq k_F^\sigma, \\ 0 &\mbox{if} \,\,\, |k| > k_F^\sigma. \end{cases}
\end{equation}
In the following, we will always denote $f_x(\cdot)= f(\cdot -x)$ and $a_{\sigma}(f)=a(\delta_\sigma f)$.

\begin{definition}[Particle-hole transformation]\label{def: ferm bog} The particle-hole transformation $R: \mathcal{F}_{\mathrm{f}}\rightarrow\mathcal{F}_{\mathrm{f}}$ is the unitary operator defined by $R\Omega=\Psi_{\mathrm{FFG}}$ and 
  \begin{equation}\label{eq: prop II R}
    R^\ast a_{x,\sigma}^\ast R = a_\sigma^\ast(u_x) + a_\sigma(v_x),
  \end{equation}
  where $u$,$v: L^2(\Lambda;\mathbb{C}^2)\rightarrow L^2(\Lambda;\mathbb{C}^2)$ are given in \eqref{eq: def u,v}. 
\end{definition}

If $\Psi \in \mathcal{F}_{\mathrm{f}}$ satisfies $\mathcal{N}_\sigma \Psi=N_\sigma$ for $\sigma\in \{\uparrow,\downarrow\}$, then $\psi = {R}^* \Psi \in \mathcal{F}_{\mathrm{f}}$ obeys the particle-hole relation  
\begin{align}\label{eq:particle-hole}
 \sum_{k\in \mathcal{B}_F^{\sigma}} \hat{a}^*_{k,\sigma}\hat{a}_{k,\sigma}  \psi =  \sum_{k\in (\mathcal{B}_F^{\sigma})^c } \hat{a}^*_{k,\sigma}\hat{a}_{k,\sigma}  \psi  = \frac{1}{2}\mathcal{N}_\sigma  \psi, \qquad \forall \sigma\in \{\uparrow,\downarrow\},
\end{align}
i.e., in the transformed state $\psi = R^* \Psi$ there are as many particles outside the Fermi balls as there are inside (corresponding to holes for the original state $\Psi$).

By omitting the non-negative contributions from the equal spin interactions and performing a straightforward computation using the decomposition \eqref{eq: prop II R}, we arrive at the following lower bound. 

\begin{proposition}[Conjugation of $\mathcal{H}$ by $R$]\label{pro: fermionic transf}
Let $\Psi\in \mathcal{F}_{\mathrm{f}}$ be a normalized state satisfying $\mathcal{N}_\sigma\Psi = N_\sigma\Psi$ for $\sigma\in\{\uparrow,\downarrow\}$ with $N_\sigma$ given in \eqref{eq:Nsigma}. 
  Then 
  \begin{equation} \label{eq: particle hole lw bd}
    \langle \Psi, \mathcal{H}\Psi\rangle \geq E_{\mathrm{FFG}} + \langle R^\ast\Psi,   \mathcal{H}_{\mathrm{corr}} R^\ast\Psi\rangle
      \end{equation}
      where the free Fermi gas energy $E_{\mathrm{FFG}}$ satisfies, in the thermodynamic limit, 
      \begin{equation}\label{eq: FFG energy}
  \lim_{L\to \infty}  \frac{E_{\mathrm{FFG}}}{L^3} 
    = \frac{3}{5}(6\pi^2)^{\frac{2}{3}}\left(\rho_\uparrow^{\frac{5}{3}} + \rho_{\downarrow}^{\frac{5}{3}}\right) +\hat{V}(0)\rho_\uparrow\rho_\downarrow + {O}(\rho^{\frac{8}{3}})_{\rho \to 0}
\end{equation}
and $\mathcal{H}_{\mathrm{corr}}$ is the correlation Hamiltonian
      $$ \mathcal{H}_{\mathrm{corr}} = \mathbb{H}_0 +  \mathbb{Q}_2 + \mathbb{Q}_3 + \mathbb{Q}_4 + \mathcal{E}_{\rm corr}$$
with 
\begin{align}\label{eq: def H-corr}
  \mathbb{H}_0 &= \sum_\sigma\sum_k ||k|^2 -(k_F^\sigma)^2| \hat{a}_{k,\sigma}^\ast \hat{a}_{k,\sigma} \nonumber\\
  \mathbb{Q}_2 &= \frac{1}{2}\sum_{\sigma \neq \sigma^\prime} \int_{\Lambda^2} dxdy\, V(x-y) a^\ast_\sigma(u_x)a^\ast_{\sigma^\prime}(u_y)a^\ast_{\sigma^\prime}(v_y)a^\ast_{\sigma}(v_x) + \mathrm{h.c.}\nonumber
  \\
   \mathbb{Q}_3 &= \sum_{\sigma \neq \sigma^\prime} \int_{\Lambda^2} dxdy\, V(x-y) a^\ast_{\sigma'}(u_y)a^\ast_{\sigma}(u_x) a^\ast_{\sigma}(v_x) a_{\sigma'}(u_y) + \mathrm{h.c.}\nonumber 
  \\
    \mathbb{Q}_4 &= \frac{1}{2}\sum_{\sigma\neq \sigma^\prime}\int_{\Lambda^2} dxdy \, V(x-y)  {a}^\ast_\sigma(u_x){a}^\ast_{\sigma^\prime}(u_y){a}_{\sigma^\prime}(u_y){a}_\sigma(u_x) \nonumber
    \\
  \mathcal{E}_{\rm corr} &=\frac{1}{2}\sum_{\sigma\neq\sigma^\prime}\int_{\Lambda^2} dxdy\, V(x-y) a^\ast_\sigma(v_x)a^\ast_{\sigma^\prime}(v_y)a_{\sigma^\prime}(v_y)a_\sigma(v_x)\nonumber 
 \\
&\quad  +  \sum_{\sigma\neq \sigma^\prime}\int_{\Lambda^2} dxdy\, V(x-y) a^\ast_\sigma(u_x)a^\ast_{\sigma}(v_x)a_{\sigma^\prime}(v_y)a_{\sigma^\prime}(u_y)\nn\\
  &\quad - \sum_{\sigma\neq \sigma^\prime}\int_{\Lambda^2} dxdy\, V(x-y) a^\ast_\sigma(u_x)a^\ast_{\sigma^\prime}(v_y)a_{\sigma^\prime}(v_y)a_\sigma(u_x)\nonumber
\\
  &\quad  + \sum_{\sigma \neq \sigma^\prime} \int_{\Lambda^2} dxdy\, V(x-y)(a^\ast_\sigma(u_x)a^\ast_{\sigma^\prime}(v_y) a^\ast_{\sigma}(v_x)  a_{\sigma^\prime}(v_y) + \mathrm{h.c.}).
\end{align}
Moreover, for any $\psi \in \mathcal{F}_{\mathrm{f}}$ and $0\le \beta<1$, we have 
\begin{equation}\label{eq: eps-1}
  |\langle \psi, \mathcal{E}_{\rm corr} \psi\rangle| \leq C\rho^{1+\frac{\beta}{6}} \langle \psi, \mathcal{N}\psi\rangle + C\rho \|\mathcal{N}^{\frac{1}{2}} \psi\|\|\mathcal{N}_\beta^{\frac{1}{2}}\psi\|
\end{equation}
with $\mathcal{N}$ defined in \eqref{eq: def N} and 
\begin{equation}\label{eq: def Nalpha >}
\mathcal{N}_\beta = \sum_{\sigma\in \{\uparrow, \downarrow\}}\sum_{ ||k|- k_F^\sigma| > (k_F^\sigma)^{1+\beta}}\hat{a}_{k,\sigma}^\ast \hat{a}_{k,\sigma}.
\end{equation}
\end{proposition}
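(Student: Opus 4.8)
The plan is to compute $R^\ast \mathcal{H} R$ directly by substituting the particle-hole relation \eqref{eq: prop II R} into the second-quantized Hamiltonian, then sort the resulting terms by their structure. First I would handle the kinetic part: writing $R^\ast a^\ast_{x,\sigma} a_{x,\sigma} R$ via \eqref{eq: prop II R} and using $uv = vu = 0$ together with \eqref{eq: def u hat, v hat}, the kinetic operator $\sum_\sigma\sum_k |k|^2 \hat a^\ast_{k,\sigma}\hat a_{k,\sigma}$ splits into a constant $\sum_\sigma \sum_{|k|\le k_F^\sigma}|k|^2$ (which is the kinetic part of $E_{\mathrm{FFG}}$) plus $\sum_\sigma\sum_k \big(|k|^2 \hat u_\sigma(k) - |k|^2 \hat v_\sigma(k)\big)\hat a^\ast_{k,\sigma}\hat a_{k,\sigma}$. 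Since $|k|^2 \hat u_\sigma(k) - |k|^2 \hat v_\sigma(k) = ||k|^2 - (k_F^\sigma)^2| \cdot \mathrm{sgn}(|k|^2-(k_F^\sigma)^2)\cdot(\ldots)$; more precisely $|k|^2 \hat u_\sigma(k) = (|k|^2-(k_F^\sigma)^2)\hat u_\sigma(k) + (k_F^\sigma)^2 \hat u_\sigma(k)$ and similarly for $v$, and the $(k_F^\sigma)^2$ pieces combine (using the particle-hole balance, but here it is an operator identity) to give $\mathbb{H}_0$ up to a shift absorbed into the constant via \eqref{eq:Nsigma}. This is the standard computation from \cite{FGHP,Gia1,GHNS}, so I would simply cite it.

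The substantive part is the interaction. Substituting \eqref{eq: prop II R} into the quartic interaction term produces, after expanding each of the four operators $a^\ast_{x,\sigma}, a^\ast_{y,\sigma'}, a_{y,\sigma'}, a_{x,\sigma}$ into its $u$- and $v$-components, a sum of $2^4 = 16$ terms for each $(\sigma,\sigma')$ pair. I would organize these by the number of $v$'s (equivalently, by how the term acts on the transformed vacuum): the term with four $v$'s (no $u$'s) is a scalar after normal ordering and, together with the $\hat V(0)$ self-energy from commutators, contributes $\hat V(0)\rho_\uparrow\rho_\downarrow$ to $E_{\mathrm{FFG}}$; terms with three $v$'s and one $u$ are cubic in the $v$-filled sea and appear in $\mathcal{E}_{\rm corr}$; the term with two $u$'s and two $v$'s that is number-conserving on the right structure gives $\mathbb{Q}_2$ and its h.c.; the "one $v$" sandwiched terms that keep a $v$-hole pair give $\mathbb{Q}_3$ and h.c.; the term with four $u$'s is $\mathbb{Q}_4$; and the remaining two-$u$-two-$v$ cross terms (those with the "wrong" pairing, e.g. $a^\ast(u)a^\ast(v)a(v)a(u)$ with the density-type contraction) collect into $\mathcal{E}_{\rm corr}$. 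Throughout I would use the CAR \eqref{eq:CAR} to normal order, picking up two-body and one-body contractions; the one-body contractions produce the direct and exchange terms of the Hartree-Fock energy of $\psi_{\mathrm{FFG}}$, which combine with the four-$v$ scalar into $E_{\mathrm{FFG}}$, and the identity $\sum_\sigma\sum_{|k|\le k_F^\sigma}|k|^2 \to \frac{3}{5}(6\pi^2)^{2/3}\rho_\sigma^{5/3}L^3(1+o(1))$ is the standard Riemann-sum estimate with the $O(\rho^{8/3})$ error from the direct/exchange terms being $\hat V(0)\rho_\uparrow\rho_\downarrow + O(\|V\|_1 \rho^{8/3})$ as in \cite{GHNS}. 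The inequality in \eqref{eq: particle hole lw bd} (rather than equality) is simply because I discard the equal-spin ($\sigma = \sigma'$) interaction terms, which after conjugation are manifestly of the form $\frac12\sum_\sigma \iint V(x-y) b^\ast_{x}b^\ast_{y}b_{y}b_{x}$ in appropriately transformed operators $b$ — this is nonnegative since $V\ge 0$, though one must check the transformed equal-spin term really is a positive operator, which follows because $R$ is unitary and $V\ge 0$ means the original equal-spin term is $\ge 0$.

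Finally, for the error bound \eqref{eq: eps-1}: each of the terms in $\mathcal{E}_{\rm corr}$ has either three or four "sea" operators $a_\sigma(v_x)$ or a $v$-contraction, so after bounding $\|a_\sigma(v_x)\|_{\rm op} \le \|v_x\|_{L^2} = (N_\sigma/L^3)^{1/2}\lesssim \rho^{1/2}$ and $\|a^\ast_\sigma(u_x)\psi\|$ by $\|\mathcal{N}^{1/2}\psi\|$-type quantities, one gets factors of $\rho$; the refinement to $\rho^{1+\beta/6}$ and the appearance of $\mathcal{N}_\beta$ comes from splitting the $u$-kernel into its part near the Fermi surface (where $||k|-k_F^\sigma|\le (k_F^\sigma)^{1+\beta}$, contributing a thin shell of volume $\sim (k_F^\sigma)^{2+\beta}$, hence an extra $\rho^{\beta/6}$ when paired with the $\rho$ from $V$ and $\|v\|$) and its bulk part (estimated by $\mathcal{N}_\beta$ via Cauchy-Schwarz). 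I expect the bookkeeping of the 16 interaction terms — correctly identifying which normal-ordered pieces land in $\mathbb{Q}_2,\mathbb{Q}_3,\mathbb{Q}_4,\mathcal{E}_{\rm corr}$ versus $E_{\mathrm{FFG}}$, and tracking all the commutator (contraction) terms and their signs — to be the main obstacle; the error estimate \eqref{eq: eps-1} is then a relatively routine application of operator bounds on creation/annihilation operators once the terms are in hand. Since these computations are standard (and essentially carried out in \cite{FGHP,Gia1,GHNS}), in the paper I would present the final decomposition and verify the key cancellations rather than write out all 16 terms.
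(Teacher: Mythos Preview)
Your proposal is correct and follows essentially the same route as the paper: the decomposition \eqref{eq: particle hole lw bd}--\eqref{eq: def H-corr} is obtained by the standard particle-hole computation (citing \cite{FGHP,Gia1,GHNS}), with the inequality coming from dropping the nonnegative equal-spin interaction, and the error bound \eqref{eq: eps-1} is proved by exactly the near/far Fermi-surface splitting you describe (the paper isolates this as Lemma~\ref{lem:4aaa} and applies it uniformly to the four terms of $\mathcal{E}_{\rm corr}$). One small clarification: the $(k_F^\sigma)^2$ leftover in the kinetic term is not an operator identity but vanishes in expectation against $R^\ast\Psi$ precisely via the particle-hole relation \eqref{eq:particle-hole}; and the splitting for \eqref{eq: eps-1} is applied to whichever kernel ($u$ or $v$) sits in the $a^\ast(w_x)$ slot, not only the $u$-kernel.
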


In \eqref{eq: particle hole lw bd}, we dropped the (non-negative) equal spin interactions for a lower bound, which explains the absence of an exchange term in the correlation energy $\mathcal{H}_{\mathrm{corr}}$ (compare with  \cite[Proposition 2.3]{GHNS}). Consequently, $E_{\mathrm{FFG}}$ in \eqref{eq: particle hole lw bd} does not include equal-spin interactions as in the previous works \cite{FGHP,Gia1, GHNS}, but the difference is negligible (in fact, $O(\rho^{8/3})$) and \eqref{eq: FFG energy} remains valid.  
Additionally, our operators $\bQ_j$ slightly differ from those in \cite{GHNS} as all interaction terms in the present paper are restricted to particles with opposite spins. The proof of \eqref{eq: eps-1}  will be given at the end of Section~\ref{sec:pre}. 

From \cite[Lemma 3.5, Corollary 3.7, Lemma 3.9]{FGHP}, we have the following basic a-priori estimates. 

\begin{lemma}[Basic a priori estimates for $\mathbb{H}_0$, $\mathbb{Q}_4$, $\mathcal{N}$]\label{lem: a priori} Let $V_\infty$ be as in Assumption \ref{asu: potential V}.
Let $\Psi\in \mathcal{F}_{\mathrm{f}}$ be a normalized state satisfying $\mathcal{N}_\sigma\Psi = N_\sigma \Psi$ for $\sigma\in \{\uparrow,\downarrow\}$ and  
\[
  \left|\frac{\langle \Psi, \mathcal{H}\Psi\rangle}{L^3} - \frac{1}{L^3}\sum_{\sigma \in \{\uparrow,\downarrow\} }\sum_{k\in\mathcal{B}_F^\sigma}|k|^2 \right|\leq C\rho^{2}.
\]
Then the state $\psi=R^* \Psi$ satisfies 
\begin{equation*} 
  \langle \psi, \mathbb{H}_0 \psi\rangle \leq CL^3\rho^2,  \quad \langle  \psi,\mathbb{Q}_4  \psi\rangle \leq CL^3\rho^2, \quad \langle   \psi, \mathcal{N}  \psi\rangle \leq CL^3\rho^{\frac{7}{6}}. 
\end{equation*}
\end{lemma}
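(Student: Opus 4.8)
The plan is to prove the three estimates in the order $\mathbb{H}_0$, then $\mathcal{N}$, then $\mathbb{Q}_4$: the bare kinetic energy of $\mathcal{H}$ already controls $\mathbb{H}_0$; the bound on $\mathbb{H}_0$ combined with a mode-counting argument controls $\mathcal{N}$; and $\mathbb{Q}_4$ is then extracted from $\mathcal{H}_{\mathrm{corr}}$ by absorbing the off-diagonal interaction terms into $\mathbb{Q}_4$ itself and into the (already estimated) number operator. For the first estimate I write $\mathcal{H} = \mathbb{T} + \mathbb{V}$ with $\mathbb{T} = \sum_\sigma \sum_k |k|^2 \hat{a}_{k,\sigma}^\ast \hat{a}_{k,\sigma}$ and $\mathbb{V} \ge 0$ (the two-body multiplication operator by $\sum_{i<j} V(x_i - x_j) \ge 0$ on $\mathcal{F}^{(N)}_{\mathrm{f}}$, where $\Psi$ lives). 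Using \eqref{eq: prop II R} one checks that $R^\ast \hat{a}_{k,\sigma}^\ast \hat{a}_{k,\sigma} R$ equals $\hat{a}_{k,\sigma}^\ast \hat{a}_{k,\sigma}$ for $k \notin \mathcal{B}_F^\sigma$ and $1 - \hat{a}_{k,\sigma}^\ast \hat{a}_{k,\sigma}$ for $k \in \mathcal{B}_F^\sigma$, so that $R^\ast \mathbb{T} R = \sum_\sigma \sum_{k \in \mathcal{B}_F^\sigma} |k|^2 + \mathbb{H}_0$ as an identity on states obeying \eqref{eq:particle-hole} (the $(k_F^\sigma)^2$-contributions cancel there). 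Since $\psi = R^\ast \Psi$ does obey \eqref{eq:particle-hole}, dropping $\mathbb{V} \ge 0$ and using the hypothesis gives at once $\langle \psi, \mathbb{H}_0 \psi\rangle = \langle \Psi, \mathbb{T}\Psi\rangle - \sum_\sigma \sum_{k \in \mathcal{B}_F^\sigma} |k|^2 \le \langle \Psi, \mathcal{H}\Psi\rangle - \sum_\sigma \sum_{k \in \mathcal{B}_F^\sigma} |k|^2 \le C L^3 \rho^2$.

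For the bound on $\mathcal{N}$ I would use \eqref{eq:particle-hole} to write $\langle \psi, \mathcal{N}_\sigma \psi\rangle = 2 \sum_{k \notin \mathcal{B}_F^\sigma} \langle \psi, \hat{a}_{k,\sigma}^\ast \hat{a}_{k,\sigma} \psi\rangle$ and then split the sum at a threshold $g > 0$. The modes with $|k|^2 - (k_F^\sigma)^2 > g$ contribute at most $g^{-1} \langle \psi, \mathbb{H}_0 \psi\rangle \le C L^3 g^{-1} \rho^2$. The modes with $k_F^\sigma < |k| \le ((k_F^\sigma)^2 + g)^{1/2}$ contribute, via the trivial bound $\hat{a}_{k,\sigma}^\ast \hat{a}_{k,\sigma} \le 1$, at most the number of momenta of $\Lambda^\ast$ in that spherical shell, i.e.\ $\le C L^3 k_F^\sigma g \le C L^3 \rho^{1/3} g$ (radius $\sim \rho^{1/3}$, thickness $\sim g \rho^{-1/3}$). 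Hence $\langle \psi, \mathcal{N} \psi\rangle \le C L^3 (g^{-1} \rho^2 + g \rho^{1/3})$, and the balanced choice $g = \rho^{5/6}$ yields $\langle \psi, \mathcal{N}\psi\rangle \le C L^3 \rho^{7/6}$.

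For $\mathbb{Q}_4$, Proposition~\ref{pro: fermionic transf} together with the hypothesis and $E_{\mathrm{FFG}} = \sum_\sigma \sum_{k \in \mathcal{B}_F^\sigma} |k|^2 + O(L^3 \rho^2)$ (cf.\ \eqref{eq: FFG energy}) give $\langle \psi, \mathcal{H}_{\mathrm{corr}} \psi\rangle \le C L^3 \rho^2$; since $\mathbb{H}_0 \ge 0$ this leaves $\langle \psi, \mathbb{Q}_4 \psi\rangle \le C L^3 \rho^2 + |\langle \psi, \mathbb{Q}_2 \psi\rangle| + |\langle \psi, \mathbb{Q}_3 \psi\rangle| + |\langle \psi, \mathcal{E}_{\mathrm{corr}} \psi\rangle|$. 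I would control the three error terms by Cauchy--Schwarz in the $dx\,dy$-integration: pairing the two $a_\sigma(u)$-factors in $\mathbb{Q}_2$ and $\mathbb{Q}_3$ reconstructs a multiple of $\langle \psi, \mathbb{Q}_4 \psi\rangle$, while the remaining factors are handled using $\|a_\sigma(v_x)\|_{\mathrm{op}} \le \|v_x\|_{L^2} \le C\rho^{1/2}$, $\int_\Lambda V(x-y)\,dx = \|V_\infty\|_{L^1}$, and $\int_\Lambda \|a_\sigma(u_y)\psi\|^2\,dy = \tfrac12 \langle \psi, \mathcal{N}_\sigma \psi\rangle$. This yields, for any $\eps > 0$, $|\langle \psi, \mathbb{Q}_2 \psi\rangle| \le \eps \langle \psi, \mathbb{Q}_4 \psi\rangle + C_\eps L^3 \rho^2$ and $|\langle \psi, \mathbb{Q}_3 \psi\rangle| \le \eps \langle \psi, \mathbb{Q}_4 \psi\rangle + C_\eps \rho \langle \psi, \mathcal{N}\psi\rangle$, while \eqref{eq: eps-1} with $\mathcal{N}_\beta \le \mathcal{N}$ gives $|\langle \psi, \mathcal{E}_{\mathrm{corr}} \psi\rangle| \le C\rho \langle \psi, \mathcal{N}\psi\rangle$. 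Choosing $\eps$ small and inserting $\langle \psi, \mathcal{N}\psi\rangle \le C L^3 \rho^{7/6}$ from the previous step closes the argument: $\langle \psi, \mathbb{Q}_4 \psi\rangle \le C L^3 \rho^2 + C\rho \cdot L^3 \rho^{7/6} \le C L^3 \rho^2$.

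I expect the genuinely nontrivial point to be the second step: the exponent $7/6$ is not evident in advance, and arises precisely from balancing the thin-shell mode count $L^3 \rho^{1/3} g$ against the $\mathbb{H}_0$-controlled tail $L^3 \rho^2 g^{-1}$, optimal at $g \sim \rho^{5/6}$. A technical subtlety in the third step is that one must never isolate a single operator $a_\sigma(u_x)$, which is unbounded since the $u$-smeared delta is only a distribution; the $u$-operators must stay paired at all times, either inside $\mathbb{Q}_4$ or inside $\mathcal{N}$.
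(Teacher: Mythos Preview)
Your proof is correct and follows exactly the standard route that the paper outsources to \cite[Lemma 3.5, Corollary 3.7, Lemma 3.9]{FGHP}: drop $\mathbb{V}\ge 0$ to get $\mathbb{H}_0$, balance the Fermi-shell mode count against $\mathbb{H}_0$ to get the $\rho^{7/6}$ bound on $\mathcal{N}$, and then absorb $\mathbb{Q}_2,\mathbb{Q}_3,\mathcal{E}_{\rm corr}$ into $\eps\mathbb{Q}_4 + C_\eps(L^3\rho^2 + \rho\mathcal{N})$ via Cauchy--Schwarz. The paper gives no independent proof of this lemma, so there is nothing further to compare.
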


The bounds on the expectation values of $\mathbb{H}_0$ and $\bQ_4$ in Lemma \ref{lem: a priori} are optimal, even for the true ground state of $\cH$. Furthermore, we expect that the bound on $\cN$ cannot be improved under the sole assumption that  $\langle R^\ast\psi, \mathbb{H}_0 R^\ast\psi\rangle \leq CL^3\rho^2$. However, we are able to derive a sharper bound for the number operator in the case of the true ground state. 

\begin{proposition}[Improved a priori bounds]\label{pro: a priori} 
If  $\Psi$ is a ground state of $H_N$ (or any normalized state satisfying $\mathcal{N}_\sigma\Psi = N_\sigma \Psi$ for $\sigma\in \{\uparrow,\downarrow\}$ and having an energy at most $O(L^3 \rho^{7/3})$ above the ground state energy), then $\psi=R^* \Psi$ satisfies 
\begin{equation}
   \langle   \psi, \mathcal{N}   \psi \rangle \leq CL^3\rho^{\frac{4}{3}}, \quad   \langle  \psi, \mathcal{N}_\beta   \psi \rangle \leq CL^3 \rho^{\frac{5-\beta}{3}}
\end{equation}
for all $0\le \beta<1$, with $\mathcal{N}$ and $\mathcal{N}_\beta$ defined in \eqref{eq: def N} and \eqref{eq: def Nalpha >}, respectively. 
\end{proposition}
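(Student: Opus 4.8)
\textbf{Proof strategy for Proposition \ref{pro: a priori}.}

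The plan is to run a bootstrap argument that uses the trial-state upper bound together with the lower bound \eqref{eq: particle hole lw bd} to trap the ground state energy in a narrow window, and then extract improved moments of $\mathcal{N}$ and $\mathcal{N}_\beta$ from the positivity structure of $\mathcal{H}_{\mathrm{corr}}$. First I would record that if $\Psi$ is (near) a ground state, then combining the upper bound of \cite{GHNS} (which puts $\langle\Psi,\mathcal{H}\Psi\rangle/L^3$ within $O(\rho^{7/3})$ of $\frac{3}{5}(6\pi^2)^{2/3}(\rho_\uparrow^{5/3}+\rho_\downarrow^{5/3})+8\pi a\rho_\uparrow\rho_\downarrow$) with Proposition \ref{pro: fermionic transf} and the expansion \eqref{eq: FFG energy} of $E_{\mathrm{FFG}}$, we obtain an \emph{upper} bound on the correlation energy of the transformed state: $\langle\psi,\mathcal{H}_{\mathrm{corr}}\psi\rangle \le CL^3\rho^{7/3}$ (here one uses $8\pi a\le\hat V(0)$ so that the $\hat V(0)\rho_\uparrow\rho_\downarrow$ in \eqref{eq: FFG energy} already overshoots the true second-order term — this is the standard mechanism that forces $\mathbb{Q}_2+\mathbb{Q}_3+\mathbb{Q}_4+\mathcal{E}_{\mathrm{corr}}$ to be negative of the right order, driving the energy down to the scattering length). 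In particular Lemma \ref{lem: a priori} applies and gives the starting estimates $\langle\psi,\mathbb{H}_0\psi\rangle\le CL^3\rho^2$, $\langle\psi,\mathbb{Q}_4\psi\rangle\le CL^3\rho^2$, $\langle\psi,\mathcal{N}\psi\rangle\le CL^3\rho^{7/6}$.

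Next I would isolate the bad (off-diagonal, number-non-conserving) term $\mathbb{Q}_2$, since $\mathbb{H}_0$, $\mathbb{Q}_3$, $\mathbb{Q}_4$ are individually non-negative modulo harmless pieces. Writing $\langle\psi,\mathcal{H}_{\mathrm{corr}}\psi\rangle = \langle\psi,\mathbb{H}_0\psi\rangle + \langle\psi,\mathbb{Q}_2\psi\rangle + \langle\psi,\mathbb{Q}_3\psi\rangle + \langle\psi,\mathbb{Q}_4\psi\rangle + \langle\psi,\mathcal{E}_{\mathrm{corr}}\psi\rangle \le CL^3\rho^{7/3}$, I would bound $\langle\psi,\mathbb{Q}_2\psi\rangle$ from below by a fraction of $-\langle\psi,\mathbb{H}_0\psi\rangle$ minus a controllable remainder. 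Concretely, after a Cauchy--Schwarz splitting of the kernel of $\mathbb{Q}_2$ into the ``low-energy'' sector $||k|^2-(k_F^\sigma)^2|\le (k_F^\sigma)^{2+\beta}$ and the complementary ``high-energy'' sector, the high-energy sector is absorbed into $\epsilon\langle\psi,\mathbb{H}_0\psi\rangle$ (the energy denominators are large there), while the low-energy sector contributes a phase-space volume of order $L^3 (k_F)^{3}\cdot(k_F)^{3+\beta}$ times $\|V\|$-type factors; pairing one creation operator with $\mathcal{N}^{1/2}$ and the other with $\mathcal{N}_\beta^{1/2}$ (exactly the structure already used for $\mathcal{E}_{\mathrm{corr}}$ in \eqref{eq: eps-1}), one gets a bound like $C\rho\,\|\mathcal{N}^{1/2}\psi\|\,\|\mathcal{N}_\beta^{1/2}\psi\| + C\rho^{1+\beta/6}\langle\psi,\mathcal{N}\psi\rangle + \epsilon\langle\psi,\mathbb{H}_0\psi\rangle$. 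Feeding this, together with \eqref{eq: eps-1} and the positivity of $\mathbb{Q}_3,\mathbb{Q}_4$, back into the energy inequality yields, after absorbing the $\mathbb{H}_0$ terms, a bound of the form
\begin{equation*}
 c\,\langle\psi,\mathbb{H}_0\psi\rangle \le CL^3\rho^{7/3} + C\rho\,\|\mathcal{N}^{1/2}\psi\|\,\|\mathcal{N}_\beta^{1/2}\psi\| + C\rho^{1+\beta/6}\langle\psi,\mathcal{N}\psi\rangle.
\end{equation*}
The crucial observation is then the operator inequality $\mathcal{N}_\beta \le (k_F^\sigma)^{-2(1+\beta)}\mathbb{H}_0$ restricted to its sector (since $||k|^2-(k_F^\sigma)^2| = ||k|-k_F^\sigma|\,||k|+k_F^\sigma| \gtrsim (k_F^\sigma)^{2+\beta}$ there), so $\langle\psi,\mathcal{N}_\beta\psi\rangle \le C\rho^{-(2+2\beta)/3}\langle\psi,\mathbb{H}_0\psi\rangle$; and similarly $\mathcal{N} \le \mathbb{H}_0/(\text{gap})$ is too weak, so for $\mathcal{N}$ one instead uses the a priori $\langle\psi,\mathcal{N}\psi\rangle\le CL^3\rho^{7/6}$ from Lemma \ref{lem: a priori} as a first input and iterates.

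The bootstrap then closes as follows: insert $\|\mathcal{N}_\beta^{1/2}\psi\|^2 \le C\rho^{-(2+2\beta)/3}\langle\psi,\mathbb{H}_0\psi\rangle$ and the current bound $\langle\psi,\mathcal{N}\psi\rangle\le CL^3\rho^{\alpha}$ (starting from $\alpha=7/6$) into the displayed inequality, use Young's inequality on the cross term $C\rho\,\|\mathcal{N}^{1/2}\psi\|\,\|\mathcal{N}_\beta^{1/2}\psi\| \le \delta\langle\psi,\mathbb{H}_0\psi\rangle\cdot\rho^{-(2+2\beta)/3}\cdot(\ldots) + \delta^{-1}\rho^{2}\langle\psi,\mathcal{N}\psi\rangle\cdot(\ldots)$, to conclude $\langle\psi,\mathbb{H}_0\psi\rangle \le CL^3\rho^{7/3} + (\text{small})\langle\psi,\mathbb{H}_0\psi\rangle$, hence $\langle\psi,\mathbb{H}_0\psi\rangle\le CL^3\rho^{7/3}$ — already an improvement over Lemma \ref{lem: a priori}. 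This gives $\langle\psi,\mathcal{N}_\beta\psi\rangle \le C\rho^{-(2+2\beta)/3}L^3\rho^{7/3} = CL^3\rho^{(5-2\beta)/3}$, which is close to but slightly weaker than the claimed $\rho^{(5-\beta)/3}$; refining the phase-space count in the $\mathbb{Q}_2$ estimate (distinguishing the two momenta carrying the $\mathcal{N}$ and $\mathcal{N}_\beta$ weights and only charging one of them with the $\beta$-cut) upgrades the exponent to the stated one. Finally, to get $\langle\psi,\mathcal{N}\psi\rangle\le CL^3\rho^{4/3}$ one splits $\mathcal{N} = \mathcal{N}_\beta + (\mathcal{N}-\mathcal{N}_\beta)$: the first piece is now controlled, and for the second, low-energy piece one exploits that $\mathbb{Q}_4\ge0$ together with a commutator identity expressing $\mathcal{N}-\mathcal{N}_\beta$ in terms of the pairing operators appearing in $\mathbb{Q}_2$, plus the improved $\mathbb{H}_0$ bound, with $\beta$ optimized (this is where the final exponent, and presumably the $1/120$ in the main theorem, come from).

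\medskip

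\textbf{Main obstacle.} The delicate point is the lower bound on $\mathbb{Q}_2$: it is genuinely sign-indefinite and of the same order as the energy one is trying to resolve, so one cannot simply bound it by $\epsilon\mathbb{H}_0$ globally. The resolution — localizing in the excitation energy via the parameter $\beta$, absorbing the high-energy part into $\mathbb{H}_0$, and handling the low-energy part by a phase-space estimate that deliberately splits the weights between $\mathcal{N}$ and $\mathcal{N}_\beta$ so that the bootstrap is self-improving — is the technical heart of the argument, and getting the exponents to close (rather than merely almost-close) requires the sharp version of this splitting together with the non-negativity of $\mathbb{Q}_3$ and $\mathbb{Q}_4$.
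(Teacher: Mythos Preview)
Your strategy has two fatal flaws.

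First, you assert that $\mathbb{Q}_3$ is ``non-negative modulo harmless pieces'' and therefore can be dropped for a lower bound. This is false: $\mathbb{Q}_3$ has the form $a^\ast a^\ast a^\ast a + \mathrm{h.c.}$ and is genuinely sign-indefinite. The paper stresses precisely this point in Section~\ref{sec: main def and heuristics}: a naive Cauchy--Schwarz only bounds $\mathbb{Q}_3$ by $O(L^3\rho^{7/3})$, which is far too large to throw away. In the actual proof of the proposition (Section~\ref{sec:apriori}), $\mathbb{Q}_3$ is absorbed via the completion of the square \eqref{eq:apriori-1b} with the modified operator $\widetilde{\mathcal{S}}_{\sigma,\sigma'}(x,y)$, not by positivity.

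Second, and more fundamentally, your bootstrap aims at $\langle\psi,\mathbb{H}_0\psi\rangle\le CL^3\rho^{7/3}$, from which you then read off the $\mathcal{N}_\beta$ bound via $\mathcal{N}_\beta\lesssim\rho^{-(2+\beta)/3}\mathbb{H}_0$. But the bound $\langle\psi,\mathbb{H}_0\psi\rangle\le CL^3\rho^{2}$ from Lemma~\ref{lem: a priori} is \emph{optimal} for the ground state (the paper says so explicitly just after that lemma); it cannot be improved to $\rho^{7/3}$. Your inequality $\mathcal{N}_\beta\lesssim\rho^{-(2+\beta)/3}\mathbb{H}_0$ therefore yields only $L^3\rho^{(4-\beta)/3}$, a full $\rho^{1/3}$ short of the claim.

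The key idea you are missing is that the paper replaces $\mathbb{H}_0$ by the \emph{completed-square} kinetic energy $\widetilde{\mathbb{H}}_0=\sum_\sigma\sum_r||r|^2-(k_F^\sigma)^2|\,|\hat a_{r,\sigma}+T_\sigma(r)|^2$ from \eqref{eq:tH0}, with the renormalization $T_\sigma(r)$ of Definition~\ref{def:ren-Q2}. Unlike $\mathbb{H}_0$, this operator \emph{does} satisfy $\langle\psi,\widetilde{\mathbb{H}}_0\psi\rangle\le CL^3\rho^{7/3}$ once $\langle\psi,\mathcal{N}\psi\rangle\le CL^3\rho^{4/3}$ is known (see \eqref{eq:apriori-2} and \eqref{eq:apriori-10}), because the ground state approximately satisfies $\hat a_{r,\sigma}+T_\sigma(r)\approx 0$. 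The $\mathcal{N}$ bound itself comes from splitting $\mathcal{N}$ at distance $\eta\rho^{2/3}$ from the Fermi surface: near the surface one uses the Pauli bound $\|\hat a_{r,\sigma}\|\le 1$ to get the volume term $CL^3\eta\rho^{4/3}$; away from it one writes $|\hat a_{r,\sigma}|^2\le 2|\hat a_{r,\sigma}+T_\sigma(r)|^2+2|T_\sigma(r)|^2$, controls the first piece by $\widetilde{\mathbb{H}}_0/(\eta\rho)$, and bounds $\sum|T_\sigma(r)|^2$ via the anti-commutator computation from the proof of Lemma~\ref{lem: T} (see \eqref{eq:apriori-8}--\eqref{eq:constant-widetilde-I1}). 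Choosing $\eta$ a large constant closes the bootstrap for $\mathcal{N}$; then $\eta\sim\rho^{(\beta-1)/3}$ gives the $\mathcal{N}_\beta$ bound. None of this can be done with the bare $\mathbb{H}_0$.
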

The proof of Proposition \ref{pro: a priori} will be given in Section \ref{sec:apriori}. A slightly weaker result, with $\mathcal{N}_\beta$ replaced by an operator involving only momenta $|k| > k_F^\sigma + (k_F^\sigma)^{1+\beta}$, can be found in \cite[Theorem 2.2]{Gia3}.


\medskip
To extract the correlation energy, we will make use of the solution of the scattering equation \eqref{eq: zero en scatt eq}. 
To be precise, we use the following periodization  of $\varphi_\infty$.
\begin{definition}[Periodic scattering solution]\label{def:scattering-phi} We define the  function $\varphi: \Lambda\rightarrow \mathbb{R}$ by $\hat{\varphi}(0)= 0$ and 
$$\hat{\varphi}(p) = \mathcal{F}(\varphi_\infty)(p), \quad 0\neq p\in\Lambda^\ast,$$
where $\varphi_\infty: \mathbb{R}^3\rightarrow \mathbb{R}$ is the zero-scattering solution in \eqref{eq: zero en scatt eq}. Also, we set $f = 1-\varphi$.
\end{definition}


\subsection{Heuristic ideas}\label{sec: main def and heuristics}

We shall now discuss the proof strategy of the present work in relation to the approach  in \cite{GHNS}. In \cite{GHNS}, the energy upper bound \eqref{eq: main thm 1-upper} is obtained by a bosonization technique, where suitable pairs of fermions are interpreted as bosons via the quasi-bosonic (annihilation) operators 
\begin{equation}\label{eq: def bp bpr}
  b_{p,k,\sigma} 
  =\hat{u}_\sigma(p+k)\hat{v}_\sigma(k)\hat{a}_{p+k,\sigma}\hat{a}_{-k,\sigma},
\end{equation}
and the correlation structure is imposed by the quasi-bosonic Bogoliubov transformation 
\begin{equation}\label{eq: def T}
  e^{\cB} = \exp \left(\frac{1}{L^3}\sum_{p,r,r^\prime} \hat\varphi_{r,r^\prime}(p) b_{p,r,\uparrow} b_{-p,r^\prime, \downarrow} -\mathrm{h.c.}\right).  
  \end{equation}
Here the relevant scattering equation defining $\varphi_{r,r'}$ is influenced by the filled Fermi sea and reads 
\begin{equation} \label{BetheGold1}
   ( |p+r|^2 - |r|^2 + |p-r'|^2 - |r'|^2) \hat\phi_{r,r^\prime}(p)  = \mathcal{F}(V_\infty (1- \phi_{r,r^\prime}))(p),
\end{equation}
which is the {\em Bethe--Goldstone} equation, see \cite[Equation 2.2]{BG} and \cite[Eq. (2.23)]{GHNS}. Roughly speaking,  we expect that   
\begin{align}\label{eq:upper-bound-transformation}
&  e^{-\cB}  (\bH_0 + \bQ_2 + \bQ_4)   e^{\cB}  \approx \bH_0 + \bQ_4 -\rho_\uparrow\rho_\downarrow L^3 \int_\Lambda V \varphi   \\
&\quad + \frac{(8\pi a)^2}{L^{6}}\sum_{p,r,r^\prime \in\Lambda^*} \left(  \frac{1}{2|p|^2} - \frac{\hat{u}_\uparrow(r+p) \hat{u}_\downarrow(r^\prime - p) \hat{v}_\uparrow(r) \hat{v}_\downarrow(r^\prime)}{|r+p|^2 - |r|^2 + |r^\prime - p|^2 - |r^\prime|^2} \right) + O(\rho \cN) + o(L^3 \rho^{\frac 7 3}) \nn
\end{align}
when the expectation is taken against a state with few excitations, such as the vacuum $\Omega$. Note that the constant $-\rho_\uparrow\rho_\downarrow L^3 \int_\Lambda V \varphi$ in \eqref{eq:upper-bound-transformation} combines with  energy $E_{\mathrm{FFG}}$ of the free Fermi gas state in \eqref{eq: FFG energy} to yield the first two terms on the right-hand side of \eqref{eq: main thm 1-upper}, while the remaining constant reproduces the third-order Huang--Yang correction. 
 
Therefore, $R   e^{\cB} \Omega$ seems a natural trial state. However, for technical reasons, the rigorous implementation in  \cite{GHNS} relies on the replacement { {$e^{\cB}\approx   e^{\cB_1}   e^{\cB_2}$, where the operators $\cB_1$ and $\cB_2$}}  correspond to the following explicit approximation of the Bethe--Goldstone solution in \eqref{BetheGold1}: 
\begin{align}\label{BetheGold1-app}
\phi_{r,r^\prime}(p) \approx  \begin{cases} \hat\varphi(p) ,\quad & |p| \gg \rho^{1/3}\\
\dfrac{8\pi a}{|p+r|^2-|r|^2 + |p-r'|^2 - |r'|^2},\quad & |p| \ll  1.
\end{cases}
\end{align}
In fact, in \cite{GHNS}, the trial state ${\psi}_{\mathrm{trial}}= Re^{\cB_1}   e^{\cB_2} \Omega$ proves to be effective, as it not only justifies the expectation value in \eqref{eq:upper-bound-transformation} but also allows to treat the remaining terms of the correlation Hamiltonian as error terms. 

\begin{remark}\label{rmk:8pia-V0}
 Effectively, the transformation $e^{\cB_1}$ replaces $\bQ_4+\bQ_2$ by $\bQ_2\big\vert_{Vf}$ (see \cite[Eq. (2.32)]{GHNS}), and $e^{\cB_2}$ diagonalizes $\bH_0+\bQ_2\big\vert_{Vf}$, where 
\begin{align}\label{eq:Q2|Vf}
   \mathbb{Q}_2|_{Vf} = \frac{1}{2}\sum_{\sigma \neq \sigma^\prime} \int_{\Lambda^2} dxdy\, (Vf)(x-y) a^\ast_\sigma(u_x)a^\ast_{\sigma^\prime}(u_y)a^\ast_{\sigma^\prime}(v_y)a^\ast_{\sigma}(v_x) + \mathrm{h.c.}
\end{align}
Note that $\bQ_2\big\vert_{Vf}$ is defined similarly as $\bQ_2$, but with  $V$ replaced by $Vf$. We emphasize that replacing $V$ by $Vf$ is essential for recovering the correct scattering length $a$ (rather than the  larger quantity $(8\pi)^{-1} \hat V(0)$). Similar renormalization techniques were employed in the bosonic case, see e.g. \cite{BBCS,FS-20, FS-23, HHNST-23,HHST-24}. 
\end{remark}

{ {The above diagonalization method does not seem to be feasible for the lower bound, however.  Establishing a matching lower bound is delicate due to several extra difficulties, most notably:}} 
\begin{itemize}
\item First, implementing the quasi-bosonic Bogoliubov transformations as in \cite{GHNS} lead to several error terms of the form $\rho \cN$, which we can only bound by $O(L^3\rho^{7/3})$ using the a priori estimate in Proposition \ref{pro: a priori}. This is insufficient to resolve the third-order Huang--Yang correction. For the trial state in  \cite{GHNS}, the expectation value of $\rho\cN$ is of order $L^3\rho^{8/3}$, significantly smaller than what we can prove for the true ground state.

\item Second, the cubic term $\mathbb{Q}_3$ cannot be directly treated as an error term for the true ground state. A standard Cauchy--Schwarz estimate (see the proof of Lemma \ref{pro: a priori} for instance) shows that its expectation value is bounded by $O(L^3\rho^{7/3})$, which is insufficient for our purpose. In contrast, for the trial state in  \cite{GHNS}, the expectation value of $\bQ_3$ is exactly zero. 
\end{itemize}

For these reasons, proving the energy lower bound is conceptually more challenging than obtaining the upper bound. Rather than attempting to diagonalize the Hamiltonian, we employ a version of the ``completing the square" method, expressing the correlation Hamiltonian $\cH$ as the sum of non-negative terms of the form $|A|^2=A^*A$, plus the relevant constant term and  controllable error terms.  { {This general method has proved suitable for deriving lower bounds in many instances; for example, the energy of dilute Bose gases in \cite{FS-20,FS-23}, and more recently, the energy of the fermionic mean-field Coulomb system in \cite{ChrHaiNam-24}. In our context, we will further develop this method to effectively address the two difficulties outlined above.  }}

To make the idea transparent, let us first focus on the main terms $\bH_0 + \bQ_2 + \bQ_4$. With 
$$e^{\cB}\hat{a}_{k,\sigma}  e^{-\cB}\approx \hat{a}_{k,\sigma}+T_{\sigma}(k), \quad e^{\cB}  {a}_{\sigma^\prime}(u_y){a}_\sigma(u_x) e^{-\cB} \approx  {a}_{\sigma^\prime}(u_y){a}_\sigma(u_x) + \mathcal{T}_{\sigma,\sigma^\prime}(x,y)$$
we can rewrite \eqref{eq:upper-bound-transformation} as 
\begin{align}\label{eq:upper-bound-transformation-2}
&  \bH_0 + \bQ_2 + \bQ_4  
\approx  \sum_\sigma\sum_k ||k|^2 -(k_F^\sigma)^2|  | \hat{a}_{k,\sigma} + T_{\sigma}(k)|^2  \\
 &\quad + \frac{1}{2}\sum_{\sigma\neq \sigma^\prime}\int_{\Lambda^2} dxdy \, V(x-y)   | {a}_{\sigma^\prime}(u_y){a}_\sigma(u_x) +   \mathcal{T}_{\sigma,\sigma^\prime}(x,y) |^2 + O(\rho \cN) + {\rm constant} . \nn
\end{align}
This expression suggests an alternative approach to extract the correlation energy from $\mathcal{H}$ without employing unitary transformations: Find suitable operators $T_\sigma(k)$ and $\mathcal{T}_{\sigma,\sigma'}(x,y)$ such that a completion of two squares yields an identity as in \eqref{eq:upper-bound-transformation-2}. According to Remark \ref{rmk:8pia-V0}, we expect that the resulting terms correspond to a splitting of the interaction potential into two parts, $V = Vf + V\varphi$. Mathematically, \eqref{eq:upper-bound-transformation-2} is easier to use than unitary transformations since it allows independent choices of $T_{\sigma}(k)$ and $\mathcal{T}_{\sigma,\sigma^\prime}(x,y)$. 
Moreover, this formulation is particularly suitable for obtaining lower bounds: all non-negative terms $|A|^2=A^*A$ can be dropped for a lower bound (but not for an upper bound). 

In the present paper, we use \eqref{eq:upper-bound-transformation-2} with the explicit choices of $T_{\sigma}(k)$ and $\mathcal{T}_{\sigma,\sigma'}(x,y)$ given in Definition \ref{def:ren-Q2} below. Before going into details, let us explain the motivation behind these choices. We choose $\mathcal{T}_{\sigma,\sigma'}(x,y)$  such that 
\begin{align}
 \frac{1}{2}\sum_{\sigma\neq \sigma^\prime}\int dxdy\, V(x-y)\mathcal{T}_{\sigma,\sigma^\prime}^\ast (x,y)a_{\sigma^\prime}(u_y)a_\sigma(u_x) + {\rm h.c.} = \mathbb{Q}_2\big\vert_{V\varphi} . \label{eq:Q4-square-dec-1}
\end{align}
Here and in the following, we $\bQ_j|_{Vg}$ is defined similarly as $\bQ_j$ but with $V$ replaced by $Vg$ (as in \eqref{eq:Q2|Vf} above).
 This allows 
to complete the square as 
\begin{align}\label{eq:tQ4}
  \widetilde{\bQ}_4 &:= \frac{1}{2}\sum_{\sigma\neq \sigma^\prime}\int_{\Lambda^2} dxdy \, V(x-y)   | {a}_{\sigma^\prime}(u_y){a}_\sigma(u_x) +   \mathcal{T}_{\sigma,\sigma^\prime}(x,y) |^2  \\
  &\approx \bQ_4 + \mathbb{Q}_2\big\vert_{V\varphi} + \rho_\uparrow\rho_\downarrow L^3\int_{\Lambda} V\varphi^2 
  + \sum_{\sigma\neq\sigma^\prime} \rho_\sigma  \left( \int_{\Lambda} V\varphi^2 \right)  \sum_{r} v_{\sigma'}(r) \hat{a}_{r,\sigma^\prime}^\ast \hat{a}_{r,\sigma^\prime} . \nn 
\end{align}
 In \eqref{eq:tQ4}, the constant and the quadratic contribution are extracted by writing $|\mathcal{T}_{\sigma, \sigma'}(x,y)|^2$ in normal order.  Similarly, we choose $T_{\sigma}(r)$ such that 
 \begin{equation}\label{eq: T*a-intro}
 \sum_\sigma\sum_r ||r|^2 - (k_F^\sigma)^2|T^\ast_{\sigma}(r)\hat{a}_{r,\sigma} + \mathrm{h.c.}  \approx \mathbb{Q}_2\vert_{Vf},
\end{equation}
which then yields 
\begin{align}  \label{eq:tH0}
  \widetilde{\bH}_0 &:= \sum_\sigma\sum_r ||r|^2 -(k_F^\sigma)^2|  | \hat{a}_{r,\sigma} + T_{\sigma}(r)|^2 \\
  &\approx \bH_0 + \mathbb{Q}_2\vert_{Vf} + \sum_{\sigma\neq \sigma^\prime} \rho_\sigma \left( \int_{\Lambda} 2|\nabla \varphi|^2 \right) \sum_{r} v_{\sigma'}(r) \hat{a}_{r,\sigma^\prime}^\ast \hat{a}_{r,\sigma^\prime} 
 \nn \\
&\quad   + \frac{1}{L^6}\sum_{p,r,r^\prime}\frac{(2|p|^2 \hat{\varphi}(p))^2\hat{u}_{\uparrow}(r+p)\hat{u}_{\downarrow}(r^\prime - p)\hat{v}_\uparrow(r)\hat{v}_{\downarrow}(r^\prime)}{|r+p|^2 - |r|^2 + |r^\prime - p|^2 - |r^\prime|^2 }.\nn
\end{align}
Again the constant and the quadratic contribution  arise from writing $|T_{\sigma}(r)|^2$ in normal order. 
Combining \eqref{eq:tQ4} and \eqref{eq:tH0} we obtain
\begin{align}\label{eq:completion-square-simple}
\bH_0 +  \bQ_2 + \bQ_4 & \approx   \widetilde{\bH}_0 +   \widetilde{\bQ}_4  -  \rho_\uparrow\rho_\downarrow L^3  \int_{\Lambda} V\varphi^2  \\
&\quad - \frac{1}{L^6}\sum_{p,r,r^\prime}\frac{(2|p|^2 \hat{\varphi}(p))^2 \hat{u}_{\uparrow}(r+p)\hat{u}_{\downarrow}(r^\prime - p)\hat{v}_\uparrow(r)\hat{v}_{\downarrow}(r^\prime) }{|r+p|^2 - |r|^2 + |r^\prime - p|^2 - |r^\prime|^2}  + O(\rho \cN). \nn
\end{align}
We will prove that the constant in \eqref{eq:completion-square-simple}, when combined with the free Fermi gas energy in \eqref{eq: FFG energy}, exactly reproduces the Huang--Yang formula; see \eqref{eq:corr-energy-1} below. 

Unfortunately, the expression \eqref{eq:completion-square-simple} is insufficient to conclude the lower bound in Theorem \ref{thm: main lw bd}, since we can only control $\rho \mathcal{N}$ by $O(L^3\rho^{7/3})$ via Proposition \ref{pro: a priori}. Moreover, it does not explain how to handle the  term $\mathbb{Q}_3$. Including $\mathbb{Q}_3$ in the analysis, and along the way canceling the term  of order $\rho \mathcal{N}$, are  in fact the main challenges in our proof of the lower bound.

We resolve these  issues by a modification of the above argument, which will allow to obtain the following refined version of \eqref{eq:completion-square-simple}:
\begin{align}\label{eq:completion-square-refine}
&\bH_0 + \bQ_2 + \bQ_3 + \bQ_4  \approx   \sum_{\sigma}\sum_{r} ||r|^2 - (k_F^\sigma)^2| \left| \hat{a}_{r,\sigma} + T_{\sigma}(r) + S_{\sigma}(r)\right|^2 \\
&\qquad + \frac{1}{2}\sum_{\sigma\neq\sigma^\prime}\int_{\Lambda^2} dxdy\, V(x-y) \left| a_{\sigma^\prime}(u_y)a_\sigma(u_x) + \mathcal{T}_{\sigma,\sigma^\prime}(x,y) + \mathcal{S}_{\sigma,\sigma^\prime}(x,y)\right|^2  \nn\\
& \qquad - \rho_\uparrow \rho_\downarrow  L^3  \int_{\Lambda} V\varphi^2   - \frac{1}{L^6}\sum_{p,r,r^\prime}\frac{(2|p|^2 \hat{\varphi}(p))^2 \hat{u}_{\uparrow}(r+p)\hat{u}_{\downarrow}(r^\prime - p)\hat{v}_\uparrow(r)\hat{v}_{\downarrow}(r^\prime) }{|r+p|^2 - |r|^2 + |r^\prime - p|^2 - |r^\prime|^2} . \nn
\end{align}
Here the new renormalization terms $\mathcal{S}_{\sigma,\sigma^\prime}(x,y)$ and $S_{\sigma}(r)$, which are defined in Definition \ref{def:ren-Q3} below, serve to  take into account the contribution of $\bQ_3$. In fact, we choose them in such a way that 
\begin{align}
\sum_{\sigma\neq \sigma^\prime}\int dxdy\, (V\varphi)(x-y)\mathcal{S}_{\sigma, \sigma^\prime}^\ast(x,y) a_{\sigma^\prime}(u_y)a_\sigma(u_x) + {\rm h.c.} = \mathbb{Q}_3\big\vert_{V\varphi}
\label{eq:Q4-square-dec-2}
\end{align}
and 
 \begin{equation}\label{eq: S*a-intro}
 \sum_\sigma\sum_r ||r|^2 - (k_F^\sigma)^2|S^\ast_{\sigma}(r)\hat{a}_{r,\sigma} + \mathrm{h.c.} \approx \mathbb{Q}_3\vert_{Vf}. 
\end{equation}
A crucial point in our analysis is that, with the additional terms $\mathcal{S}_{\sigma,\sigma^\prime}(x,y)$ and $S_{\sigma}(r)$, we achieve the cancellation of all the quadratic terms of order $\rho \mathcal{N}$. As we will see, when we expand the first square on the right-hand side of \eqref{eq:completion-square-refine}, the terms involving $|S_{\sigma}(r)|^2$ lead to a $\rho \mathcal{N}$-type contribution, which is exactly canceled by another $\rho \mathcal{N}$ term from $|T_{\sigma}(r)|^2$. The mixed terms involving $T^*_{\sigma}(r) S_{\sigma}(r)$ are negligible. Similarly, for the second square in \eqref{eq:completion-square-refine}, the $\rho \mathcal{N}$-type contribution from the terms involving $| \mathcal{S}_{\sigma,\sigma^\prime}(x,y)|^2$ is exactly canceled by a similar term involving $| \mathcal{T}_{\sigma,\sigma^\prime}(x,y)|^2$, and the mixed terms involving $\mathcal{T}^*_{\sigma,\sigma^\prime}(x,y) \mathcal{S}_{\sigma,\sigma^\prime}(x,y)$ are negligible. In this cancellation of the $\rho \mathcal{N}$ contributions, the particle-hole relation \eqref{eq:particle-hole} plays a crucial role.

From \eqref{eq:completion-square-refine}, we simply drop the two square terms to obtain a lower bound, and thereby conclude the Huang--Yang formula. The precise formulation of the above estimates will be given in the next subsection.


\subsection{Main propositions and the proof of Theorem \ref{thm: main lw bd}}\label{sec:proof-outline}

Now we are ready to present the main estimates that lead to the conclusion of Theorem \ref{thm: main lw bd}. We will introduce the renormalization terms $T_{\sigma}(r)$, $S_{\sigma}(r)$, $\mathcal{T}_{\sigma,\sigma^\prime}(x,y)$, and $\mathcal{S}_{\sigma,\sigma^\prime}(x,y)$, and provide a precise version of \eqref{eq:completion-square-refine}. First, we make the following explicit choices of $T_{\sigma}(r)$ and $\mathcal{T}_{\sigma,\sigma^\prime}(x,y)$ to obtain \eqref{eq:Q4-square-dec-1} and \eqref{eq: T*a-intro}. 

\begin{definition}[Renormalization terms for $\bQ_2$]\label{def:ren-Q2} For $\sigma \ne \sigma'\in\{\uparrow, \downarrow\}$, we define 
\begin{equation*}  
  \mathcal{T}_{\sigma,\sigma^\prime}^\ast(x,y) = \varphi(x-y)a_{\sigma}(v_x)a_{\sigma^\prime}(v_y) 
\end{equation*}
with ${v}_\sigma$, ${u}_\sigma$ in \eqref{eq: def u,v} and  $\varphi$ in Definition \ref{def:scattering-phi}. Moreover, for a parameter $\delta>1/3$, we define  
\begin{equation} \label{eq: def Tk}
T^\ast_{\sigma} (r)= \frac{1}{L^3}\sum_{p,r^\prime \in  \Lambda^*}  \Big( \widehat{\omega}^\varepsilon_{-r,r^\prime}(p)\hat{u}_\sigma(p-r)\hat{v}_\sigma(r) - \widehat{\omega}^\varepsilon_{r-p,r^\prime}(p)\hat{v}_\sigma(r-p)\hat{u}_\sigma(r)\Big) \hat{a}_{p-r, \sigma} b_{-p,r^\prime,\sigma^\prime}
\end{equation}
with $\sigma'\neq \sigma$,  $b_{-p,r',\sigma'}$ defined in \eqref{eq: def bp bpr}  and 
\begin{equation}\label{eq: def omega eps}
  \widehat{\omega}^\varepsilon_{r,r^\prime}(p) = \frac{2|p|^2 \hat{\varphi}(p)}{\lambda_{p,r}+ \lambda_{-p,r^\prime} + 2\varepsilon}, \qquad \lambda_{p,r}=|r+p|^2 -|r|^2,  \qquad \varepsilon = \rho^{\frac{2}{3} +\delta}. 
\end{equation}
\end{definition}

The kernel $\widehat{\omega}^\varepsilon_{r,r^\prime}(p)$ in \eqref{eq: def omega eps} serves as a replacement of the Bethe--Goldstone solution in \eqref{BetheGold1}. In fact, it is straightforward to verify that $\widehat{\omega}^\varepsilon_{r,r^\prime}(p)$ satisfies the approximation in \eqref{BetheGold1-app}.

Next, we will choose $\mathcal{S}^*_{\sigma,\sigma^\prime}(x,y)$, $S^*_{\sigma}(r)$ in order to fulfill \eqref{eq:Q4-square-dec-2} and \eqref{eq: S*a-intro}. For technical reasons, we will have to introduce some momentum cutoff. We fix a radial smooth function $\widehat \chi:\R^3 \to [0,1]$ such that 
$
\1_{\{|p|\le 1\}} \le \widehat \chi (p) \le \1_{\{|p|\le 5/4\}}. 
$ 
In the following, we pick two parameters $0<\gamma<1/6<\alpha<1/3$, which will be optimized  at the end,  and define the  functions $\varphi^>,   \varphi^<, u^{<\alpha}_{\sigma}: \Lambda\to \mathbb{C}$ by  
\begin{equation}\label{eq: def phi><}
 \hat{\varphi}^< (p)= \widehat \varphi(p) \widehat{\chi}_{<}(p),\quad \hat{ \varphi}^>(p)= \widehat \varphi(p) \widehat{\chi}_{>} (p), \quad \hat u^{<\alpha}_{\sigma} (p) = \hat  u_\sigma (p) \1_{\{|p| < \rho^{1/3-\alpha}\}} 
  \end{equation}
for $p\in \Lambda^*$, where
\begin{equation}\label{eq: def chi< chi>}
\widehat \chi_{<}(p) =\widehat \chi(p/ (4 \rho^{1/3-\gamma})), \quad  \widehat \chi_{>}(p) = 1-  \widehat \chi_{<}(p)
\end{equation}

\begin{definition}[Renormalization terms for $\bQ_3$]\label{def:ren-Q3} For $\sigma \ne \sigma'\in\{\uparrow, \downarrow\}$, we define 
\begin{equation*} 
  \mathcal{S}^\ast_{\sigma,\sigma^\prime}(x,y) =  \varphi(x-y)\left(a^\ast_{\sigma}(u_x)a_{\sigma^\prime}(v_y) - a^\ast_{\sigma^\prime}(u_y)a_\sigma(v_x)\right)
\end{equation*}
with $\hat{v}_\sigma$, $\hat{u}_\sigma$ in \eqref{eq: def u,v} and  $\varphi$ in Definition \ref{def:scattering-phi}. Moreover, for two parameters $0<\gamma<1/6<\alpha<1/3$, we define 
\begin{align}\nn
&  S^\ast_{\sigma}(r)= 
    \frac{1}{L^3}\sum_{p}\hat{\varphi}^>(p) (D^{<\alpha}_{p,\sigma^\prime})^\ast  \hat{a}_{p-r,\sigma}\left(\hat{v}_\sigma(r-p)\hat{u}_\sigma(r) -\hat{u}_\sigma(p-r)\hat{v}_\sigma(r) \right)\nonumber
    \\
      &\quad - \frac{1}{L^3}\sum_{p}\hat{\varphi}^>(p)\left(\hat{u}_\sigma(r)\hat{{u}}^{<\alpha}_\sigma(r+p)\hat{a}^\ast_{r+p,\sigma}b_{p,\sigma^\prime} -\hat{{u}}^{<\alpha}_\sigma(r) \hat{u}_\sigma(r-p)\hat{a}_{r-p,\sigma}^\ast b_{p,\sigma^\prime}^\ast\right) 
      \label{eq: def Sk}
\end{align}
with $\sigma'\neq \sigma$, $\hat{\varphi}^>$, $\hat{{u}}^{<\alpha}_\sigma$ defined in \eqref{eq: def phi><} and 
\begin{equation}\label{def:D}
D_{p,\sigma^\prime}^{<\alpha} = \sum_{r^\prime}\hat{a}_{r^\prime - p,\sigma^\prime}^\ast\hat{a}_{r^\prime,\sigma^\prime}\hat{{u}}^{<\alpha}_{\sigma^\prime}(r^\prime)\hat{u}_{\sigma^\prime}(r^\prime-p) .
\end{equation}
\end{definition}

Our proof of Theorem \ref{thm: main lw bd} is based on the following two key propositions, which correspond to the completion of the two squares in \eqref{eq:completion-square-refine}. Recall that $f=1-\varphi$ with the scattering solution $\varphi$ in Definition \ref{def:scattering-phi}, and $\bQ_j|_{Vg}$ is defined similarly as $\bQ_j$ but with $V$ replaced by $Vg$ (compare with \eqref{eq:Q2|Vf}).

\begin{proposition}[Completion of the square for $V\varphi$]\label{pro: completion square Vphi} Let $V_\infty$ be as in Assumption \ref{asu: potential V}. With $\mathcal{T}^\ast_{\sigma, \sigma^\prime}(x,y)$ and $\mathcal{S}^\ast_{\sigma, \sigma^\prime}(x,y)$ defined in Definitions \ref{def:ren-Q2} and  \ref{def:ren-Q3}, 
\begin{align}\label{eq: first id Q4}
  &\mathbb{Q}_4 + (\mathbb{Q}_2 +  \mathbb{Q}_3)\big\vert_{V\varphi} =  -L^3\rho_\uparrow\rho_\downarrow \int_{\Lambda} V \varphi^2 \\
  &+\frac{1}{2}\sum_{\sigma\neq\sigma^\prime}\int_{\Lambda^2} dxdy\, V(x-y) \big| a_{\sigma^\prime}(u_y)a_\sigma(u_x) + \mathcal{T}_{\sigma,\sigma^\prime}(x,y) + \mathcal{S}_{\sigma,\sigma^\prime}(x,y)\big|^2 \nn
 + \mathcal{E}_{V\varphi},
\end{align}
where 
\begin{equation}\label{evbound}
  |\langle \psi,\mathcal{E}_{V\varphi}\psi\rangle| \leq  C\rho^{{1}+\frac{\beta}{6}}\langle \psi,\mathcal{N}\psi\rangle + C\rho\|\mathcal{N}^{\frac{1}{2}}\psi\|\|\mathcal{N}^{\frac{1}{2}}_\beta\psi\|.
\end{equation}
for all $\psi\in\mathcal{F}_{\mathrm{f}}$ satisfying the particle-hole relation \eqref{eq:particle-hole}, and for any $0\le \beta<1$, with  $\mathcal{N}_\beta$  given in \eqref{eq: def Nalpha >}.
\end{proposition}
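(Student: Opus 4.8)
The plan is to expand the second line of \eqref{eq: first id Q4} by normal-ordering the three operator-valued objects $a_{\sigma'}(u_y)a_\sigma(u_x)$, $\mathcal{T}_{\sigma,\sigma'}(x,y)$, $\mathcal{S}_{\sigma,\sigma'}(x,y)$, and then match the resulting terms against the left-hand side. Writing $A = a_{\sigma'}(u_y)a_\sigma(u_x)$, $\mathcal T = \mathcal{T}_{\sigma,\sigma'}(x,y)$, $\mathcal S = \mathcal{S}_{\sigma,\sigma'}(x,y)$, the square $|A+\mathcal T+\mathcal S|^2$ produces nine terms: the diagonal term $A^*A$ integrated against $\tfrac12\sum_{\sigma\neq\sigma'}\int V$ is exactly $\bQ_4$; the cross terms $A^*\mathcal T + \mathcal T^*A$ integrated against $\tfrac12\sum V$ give, by the very choice of $\mathcal{T}^*_{\sigma,\sigma'}(x,y)=\varphi(x-y)a_\sigma(v_x)a_{\sigma'}(v_y)$ in Definition \ref{def:ren-Q2}, precisely $\bQ_2|_{V\varphi}$ (this is \eqref{eq:Q4-square-dec-1}); the cross terms $A^*\mathcal S + \mathcal S^* A$ give $\bQ_3|_{V\varphi}$ by the choice of $\mathcal{S}^*_{\sigma,\sigma'}$ in Definition \ref{def:ren-Q3} (this is \eqref{eq:Q4-square-dec-2}). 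So the first step is simply to verify these three identities by unwinding the definitions of $a_\sigma(u_x)$, $a_\sigma(v_x)$ in terms of $a_{x,\sigma}$ and the projections $u_\sigma,v_\sigma$, comparing with the explicit forms of $\bQ_2,\bQ_3,\bQ_4$ in \eqref{eq: def H-corr}. This is bookkeeping with the CAR and the kernels $u,v$; no estimate is needed.

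The substance is in the remaining terms $\mathcal T^*\mathcal T$, $\mathcal S^*\mathcal S$, $\mathcal T^*\mathcal S + \mathcal S^*\mathcal T$ (each integrated against $\tfrac12\sum_{\sigma\neq\sigma'}\int V$). Here I would normal-order each product using the CAR, separating a \emph{constant} (fully contracted) part, a \emph{quadratic} part ($a^*a$-type), and a genuinely higher-order remainder. For $\mathcal T^*\mathcal T = \varphi(x-y)^2 a_\sigma(v_x)a_{\sigma'}(v_y)a^*_{\sigma'}(v_y)a^*_\sigma(v_x)$ the fully contracted part yields, after summing over $x,y$ and using $\int V\varphi^2$ together with the trace identities $\Tr v_\sigma = N_\sigma = |\mathcal B_F^\sigma|$ and $v_\sigma(0) = \rho_\sigma + \mathfrak e_L$, the constant $-L^3\rho_\uparrow\rho_\downarrow\int_\Lambda V\varphi^2$ plus a quadratic piece $\sum_{\sigma\neq\sigma'}\rho_\sigma(\int V\varphi^2)\sum_r v_{\sigma'}(r)\hat a^*_{r,\sigma'}\hat a_{r,\sigma'}$ as anticipated in \eqref{eq:tQ4}; the sign is opposite to the display because here the square is being moved to the right-hand side. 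The quadratic piece must then be shown to be absorbable into $\mathcal E_{V\varphi}$: on states obeying the particle-hole relation \eqref{eq:particle-hole}, $\sum_r v_{\sigma'}(r)\hat a^*_{r,\sigma'}\hat a_{r,\sigma'}\le \tfrac12\cN_{\sigma'}$, so with the factor $\rho_\sigma\int V\varphi^2 = O(\rho^{1+1/3})$ (using $\|\varphi\|_2^2 = O(\rho^{-1/3})$-type scaling and $\|V\varphi\|_1 = O(1)$ — in fact $\int V\varphi^2$ is a small quantity of the right order) this contributes $\le C\rho^{1+\text{something}}\langle\psi,\cN\psi\rangle$, comfortably inside \eqref{evbound}. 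For $\mathcal S^*\mathcal S$ the same philosophy applies, but now $\mathcal S$ contains the cutoff scattering kernel $\varphi^>$, operators $D^{<\alpha}_{p,\sigma'}$, $b_{p,\sigma'}$ and $u^{<\alpha}_\sigma$: its fully contracted part must be shown to be \emph{negligible} (not a named constant), of order $o(L^3\rho^{7/3})$, using the cutoffs $\gamma,\alpha$ and the bounds $|\hat\varphi^>(p)|\lesssim |p|^{-2}$ on the support $|p|\gtrsim\rho^{1/3-\gamma}$; its quadratic part produces a $\rho\cN$-type term that, crucially, is designed to \emph{cancel} against the corresponding $\rho\cN$ piece hidden inside the contracted $\mathcal T^*\mathcal T$ — but since in the present proposition that $\mathcal T^*\mathcal T$ quadratic term was kept explicitly small, here it suffices to show the $\mathcal S^*\mathcal S$ quadratic remainder is itself bounded by the right-hand side of \eqref{evbound}, again via \eqref{eq:particle-hole} and size estimates on $\varphi^>$. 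Finally the mixed $\mathcal T^*\mathcal S + \mathcal S^*\mathcal T$ term is shown to be an error term of the form \eqref{evbound} directly by Cauchy--Schwarz in second quantization, splitting off creation/annihilation operators and estimating the remaining kernels; the point is that $\mathcal T$ carries two $v$'s and $\mathcal S$ carries a $u$ and a $v$, so their overlap forces momentum configurations controlled by $\cN_\beta$ once one localizes to $||k|-k_F^\sigma|>(k_F^\sigma)^{1+\beta}$, and the complementary region is handled by the cruder $\rho^{1+\beta/6}\cN$ bound.

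The main obstacle I expect is the careful accounting in the $\mathcal S^*\mathcal S$ and mixed terms: $\mathcal S$ is a sum of three structurally different pieces (one with $D^{<\alpha}$, two with $b$ and $u^{<\alpha}$), so $\mathcal S^*\mathcal S$ alone is a sum of nine products, each requiring normal-ordering and then a tailored estimate using the interplay of the cutoffs $\gamma<1/6<\alpha<1/3$, the decay of $\hat\varphi^>$, and the a priori control of $\cN$ and $\mathbb H_0$. Keeping every contracted contribution either exactly identified (only $-L^3\rho_\uparrow\rho_\downarrow\int V\varphi^2$ survives as a constant) or provably of size $o(L^3\rho^{7/3})$, while simultaneously ensuring the quadratic leftovers fit the two-term template \eqref{evbound} with the $\cN_\beta$ refinement, is where all the technical weight sits. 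The identities \eqref{eq:Q4-square-dec-1}–\eqref{eq:Q4-square-dec-2} and the extraction of the constant are, by contrast, essentially algebraic once the definitions are substituted.
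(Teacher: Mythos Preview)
Your overall strategy (expand the square, use \eqref{eq:Q4-square-dec-1}--\eqref{eq:Q4-square-dec-2} for the cross terms, normal-order the rest) matches the paper. But the proposal contains two genuine errors that would make the argument fail.

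First, you are using the wrong $\mathcal{S}$. In Definition~\ref{def:ren-Q3} the configuration-space operator is simply
\[
\mathcal{S}^\ast_{\sigma,\sigma'}(x,y)=\varphi(x-y)\bigl(a^\ast_\sigma(u_x)a_{\sigma'}(v_y)-a^\ast_{\sigma'}(u_y)a_\sigma(v_x)\bigr),
\]
with the \emph{full} $\varphi$ and no cutoffs, no $D^{<\alpha}$, no $b_{p,\sigma'}$, no $u^{<\alpha}$. You have confused it with the momentum-space operator $S^\ast_\sigma(r)$, which does carry $\varphi^>$ and the $\alpha$-cutoffs, but belongs to the \emph{other} proposition (completion of the square for $Vf$). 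With the correct $\mathcal{S}$, the term $|\mathcal{S}|^2$ is a short computation (two $a^\# a^\#$ pairs, one anticommutator), not a sum of nine products, and there is nothing to say about $\gamma,\alpha$.

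Second, and more seriously, you misjudge the size of the quadratic leftovers. The quantity $\int_\Lambda V\varphi^2$ is $O(1)$, not $O(\rho^{1/3})$: $\varphi$ is the (periodized) zero-energy scattering solution, uniformly bounded by Lemma~\ref{lem: bounds phi}, and $V$ is fixed. Hence the quadratic piece from $|\mathcal{T}|^2$ is genuinely of order $\rho\,\mathcal{N}$, \emph{not} $\rho^{1+1/3}\mathcal{N}$, and cannot be absorbed into \eqref{evbound} on its own. The mechanism the paper uses---and which you mention but then discard---is an \emph{exact cancellation}: normal-ordering $|\mathcal{T}|^2$ produces $-\sum_{\sigma\neq\sigma'}\rho_\sigma\int V\varphi^2\int dx\,|a_{\sigma'}(v_x)|^2$, while normal-ordering $|\mathcal{S}|^2$ produces $+\sum_{\sigma\neq\sigma'}\rho_{\sigma'}\int V\varphi^2\int dx\,|a_\sigma(u_x)|^2$, and these cancel precisely on states satisfying the particle-hole relation \eqref{eq:particle-hole} because $\int|a_\sigma(v_x)|^2=\int|a_\sigma(u_x)|^2$ there. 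The remaining error $\mathcal{E}_{V\varphi}$ consists only of the quartic $\mathbb{Q}_0|_{V\varphi^2}$-term from $|\mathcal{T}|^2$, two quartic terms from $|\mathcal{S}|^2$, and the mixed $\mathcal{T}^\ast\mathcal{S}$ term (which simplifies via $u_\sigma v_\sigma=0$); each of these is bounded directly by Lemma~\ref{lem:4aaa} using $\|V\varphi^2\|_1\le\|V\|_1\|\varphi\|_\infty^2\le C$. No Cauchy--Schwarz splitting with $\mathcal{N}_\beta$ tailored to momentum configurations is needed beyond what that lemma already packages.
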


\begin{proposition}[Completion of the square for $Vf$]\label{pro: completion square Vf} Let $0<\gamma <1/6< \alpha< 1/3<\delta$ such that  
\begin{equation}\label{eq: mix cond alpha gamma}
  \frac{1}{3} + 2\gamma - 3\alpha >0, \qquad \frac{1}{3} + 4\gamma - 4\alpha >0.
\end{equation}
With $T_\sigma^\ast(r)$ and $S_\sigma^\ast(r)$  defined in Definitions \ref{def:ren-Q2} and \ref{def:ren-Q3}, 
\begin{align*}
  &\mathbb{H}_0 + (\mathbb{Q}_2 +  \mathbb{Q}_3)\big\vert_{Vf} 
  =  - \frac{1}{L^6}\sum_{p,r,r^\prime \in \Lambda^*}\frac{(2|p|^2 \hat{\varphi}(p))^2 \hat{u}_{\uparrow}(r+p)\hat{u}_{\downarrow}(r^\prime - p)\hat{v}_\uparrow(r)\hat{v}_{\downarrow}(r^\prime)  }{\lambda_{p,r}+ \lambda_{-p,r^\prime} + 2\varepsilon} \\
  &\qquad\qquad\qquad\qquad\qquad\qquad + \sum_{\sigma\in \{\uparrow, \downarrow\}}\sum_{r\in \Lambda^*} ||r|^2 - (k_F^\sigma)^2| \left| \hat{a}_{r,\sigma} + T_{\sigma}(r) + S_{\sigma}(r)\right|^2
- \mathcal{E}_{Vf},
\end{align*}
where 
\begin{align*}
 &  \langle \psi, \mathcal{E}_{Vf} \psi\rangle \leq   C\rho^{\frac{1}{6} + \alpha}\|\mathbb{Q}_4^{\frac{1}{2}} \psi\|\|\mathbb{H}_0^{\frac{1}{2}}  \psi\|  + \mathfrak{e}_L L^{3/2} \|\bQ_4^{1/2}\psi\| + C\rho^{1-\frac {3\gamma}2}\|\mathcal{N}^{\frac{1}{2}}_\beta\psi\|\|\mathcal{N}^{\frac{1}{2}}\psi\|
 \\  
 &\quad+  C\bigg( \rho^{1+\frac{\beta}{6}} +  \rho^{\frac{7}{6} + \frac{\gamma}{2} -\alpha} +  \rho^{\frac{4}{3}- 2\gamma - \kappa} +  \rho^{\frac{4}{3} + 2\gamma - 3\alpha}
\\
& \qquad \qquad \qquad + \rho^{\frac{4}{3} + \frac{3}{4}\gamma -\frac{3}{2}\alpha -\kappa} +\rho^{\frac{4}{3} - \alpha - \kappa} +  \rho^{1+\frac{4}{5}\gamma - \frac{3}{10}\alpha} \bigg) \langle \psi, \mathcal{N}\psi\rangle
\\
&\quad+ C\left(\rho^{\frac{1}{3} + \frac{4}{5}\gamma - \frac{3}{10}\alpha} + \rho^{\frac{2}{3} + 6\gamma - 5\alpha} + \rho^{\frac{2}{3} + \frac{11}{2}\gamma - \frac{7}{2}\alpha} + \rho^{\frac{5}{6} + \frac{13}{2}\gamma - 6\alpha}\right)\langle \psi, \mathbb{H}_0 \psi\rangle 
\\
&\quad + C \left( \rho^{\frac{3}{2} - \gamma - \kappa}  + \rho^{\frac{3}{2} + \frac{21}{10}\gamma - \frac{21}{10}\alpha - \kappa}   \right) L^{\frac{3}{2}} \Big( \|\mathbb{H}_0^{\frac{1}{2}}\psi\|  + \rho^{\frac 1 3}\|\mathcal{N}^{\frac{1}{2}}\psi\| \Big)
\\
& \quad+ C\bigg(\rho^{1+\frac{\gamma}{2} -\kappa} + \rho^{1+\frac{7}{4}\gamma - \frac{3}{2}\alpha -\kappa} + \rho^{1+\frac{9}{2}\gamma - \frac{7}{2}\alpha -\kappa} + \rho^{1-\alpha - \kappa})\|\mathbb{H}_0^{\frac{1}{2}}\psi\|\|\mathcal{N}^{\frac{1}{2}}\psi\|
\end{align*}
for all $\psi\in\mathcal{F}_{\mathrm{f}}$  satisfying the particle-hole relation \eqref{eq:particle-hole}, and for any $0\le \beta<1$, with  $\mathcal{N}_\beta$  given in \eqref{eq: def Nalpha >}. Here, $\kappa > 0$ can be chosen arbitrarily small, and $\mathfrak{e}_L \to 0$ as $L \to \infty$.
\end{proposition}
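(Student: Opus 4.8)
The plan is to expand the renormalized square $\widetilde{\mathbb H}_0:=\sum_\sigma\sum_r\big||r|^2-(k_F^\sigma)^2\big|\,\big|\hat a_{r,\sigma}+T_\sigma(r)+S_\sigma(r)\big|^2$ into normal order and to match the resulting pieces with $\mathbb H_0+(\mathbb Q_2+\mathbb Q_3)|_{Vf}$ and the displayed constant, collecting everything else into $\mathcal E_{Vf}$. Expanding the square produces: the diagonal piece $\sum_\sigma\sum_r||r|^2-(k_F^\sigma)^2|\,\hat a_{r,\sigma}^*\hat a_{r,\sigma}=\mathbb H_0$; the linear cross terms $\sum_\sigma\sum_r||r|^2-(k_F^\sigma)^2|\,\big(\hat a_{r,\sigma}^*(T_\sigma(r)+S_\sigma(r))+\mathrm{h.c.}\big)$; and the quadratic renormalization terms $\sum_\sigma\sum_r||r|^2-(k_F^\sigma)^2|\,\big(|T_\sigma(r)|^2+T_\sigma^*(r)S_\sigma(r)+\mathrm{h.c.}+|S_\sigma(r)|^2\big)$. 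Throughout I use that $\widehat{Vf}(p)=2|p|^2\hat\varphi(p)+\mathfrak e_L$ for $0\neq p\in\Lambda^*$, a consequence of the zero-energy scattering equation \eqref{eq: zero en scatt eq} and Definition \ref{def:scattering-phi} together with $\supp V$ being small compared to $L$; the mode $p=0$ is harmless because $\hat u_\sigma\hat v_\sigma\equiv0$ makes it vanish in $\mathbb Q_2|_{Vf}$ and $\mathbb Q_3|_{Vf}$. This identity is the reason $\widehat\omega^\varepsilon$ and $\hat\varphi^>$ both carry the numerator $2|p|^2\hat\varphi(p)$.

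For the linear cross terms, $\hat a_{r,\sigma}^*$ combines with the free factor $\hat a_{p-r,\sigma}^*$ inside $T_\sigma(r)$ (see Definition \ref{def:ren-Q2}) into a quasi-bosonic pair $b_{p,k,\sigma}^*$ (with $k=-r$ from the first term of $T_\sigma^*(r)$ and $k=r-p$ from the second), so that $\sum_\sigma\sum_r||r|^2-(k_F^\sigma)^2|\,\hat a_{r,\sigma}^*T_\sigma(r)$ is a sum of products $b_{p,k,\sigma}^*\,b_{-p,r',\sigma'}^*$. The weight $||r|^2-(k_F^\sigma)^2|$ equals $(k_F^\sigma)^2-|k|^2$ in the first term and $|p+k|^2-(k_F^\sigma)^2$ in the second, both running over the same $(p,k)$, so summing them restores the factor $\lambda_{p,k}$; summing over the two choices of $\sigma$ (legitimate since $\mathbb Q_2,\mathbb Q_3,T,S$ are all symmetric under $\sigma\leftrightarrow\sigma'$) promotes this to $\lambda_{p,k}+\lambda_{-p,r'}$, and since $\widehat\omega^\varepsilon_{k,r'}(p)\,(\lambda_{p,k}+\lambda_{-p,r'})=2|p|^2\hat\varphi(p)-\tfrac{4\varepsilon|p|^2\hat\varphi(p)}{\lambda_{p,k}+\lambda_{-p,r'}+2\varepsilon}$, the cross term equals $\mathbb Q_2|_{Vf}$ up to the $O(\varepsilon)$ correction and $\mathfrak e_L$, both estimated by Cauchy--Schwarz against $\mathcal N,\mathcal N_\beta,\mathbb H_0$. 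The analogous treatment of $\hat a_{r,\sigma}^*S_\sigma(r)+\mathrm{h.c.}$ — where the two lines of $S_\sigma^*(r)$ supply the recoil factors $\hat a_{r',\sigma'}^*\hat a_{r'-p,\sigma'}$ and $b_{p,\sigma'}$ that dress the remaining $\hat a^*$'s into $b_{p,k,\sigma}^*$, with the first line and the Hermitian conjugate of the $b_{p,\sigma'}$-part of the second line both contributing — reproduces $\mathbb Q_3|_{Vf}$; here the weight-combining yields $\lambda_{p,k}\hat\varphi^>(p)$, which on $\supp\widehat\chi_>$ (where $|p|\gtrsim\rho^{1/3-\gamma}\gg k_F^\sigma\ge|k|$) equals $|p|^2\hat\varphi(p)(1+O(\rho^\gamma))$, so that the two contributions combine to $\widehat{Vf}(p)(1+O(\rho^\gamma))$. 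The resulting discrepancies — the low-momentum part $\hat\varphi^<=\hat\varphi-\hat\varphi^>$ (supported on $|p|\lesssim\rho^{1/3-\gamma}$), the cutoffs $\hat u^{<\alpha}_\sigma$, and the $O(\rho^\gamma)$ deviation of $\lambda_{p,k}$ from $|p|^2$ — are exactly the origin of the terms $\rho^{1/6+\alpha}\|\mathbb Q_4^{1/2}\psi\|\|\mathbb H_0^{1/2}\psi\|$, $\rho^{1-3\gamma/2}\|\mathcal N_\beta^{1/2}\psi\|\|\mathcal N^{1/2}\psi\|$, and of the $\mathbb H_0$- and $\mathcal N$-proportional errors in the statement.

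For the quadratic renormalization terms, normal-ordering $|T_\sigma(r)|^2$ yields: a fully contracted c-number in which only the diagonal products $T_\sigma^*(r)|_i\,T_\sigma(r)|_i$ survive (the off-diagonal ones contain $\hat u_\sigma(r)\hat v_\sigma(r)=0$) and which, by the same weight-combining and $\sigma$-symmetrization, equals the constant $\tfrac1{L^6}\sum_{p,r,r'}\tfrac{(2|p|^2\hat\varphi(p))^2\hat u_\uparrow(r+p)\hat u_\downarrow(r'-p)\hat v_\uparrow(r)\hat v_\downarrow(r')}{\lambda_{p,r}+\lambda_{-p,r'}+2\varepsilon}$ up to an $O(\varepsilon)$ remainder; a doubly contracted quadratic operator of the shape $\sum_{\sigma\neq\sigma'}\sum_{r'}c_\sigma^T(r')\,\hat v_{\sigma'}(r')\,\hat a_{r',\sigma'}^*\hat a_{r',\sigma'}$ (plus a $\hat u_\sigma$-supported piece) with $c_\sigma^T(r')\approx\rho_\sigma\int2|\nabla\varphi|^2$, hence by the particle--hole relation \eqref{eq:particle-hole} equal to $\tfrac12(\rho_\uparrow\mathcal N_\downarrow+\rho_\downarrow\mathcal N_\uparrow)\int2|\nabla\varphi|^2$ up to small corrections, cf. \eqref{eq:tH0} — this is of order $L^3\rho^{7/3}$ and cannot be discarded; and a singly contracted quartic remainder bounded by Cauchy--Schwarz. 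Normal-ordering $|S_\sigma(r)|^2$ produces similarly a quadratic operator $-\sum_{\sigma\neq\sigma'}\sum_{r'}c_\sigma^S(r')\,\hat u^{<\alpha}_{\sigma'}(r')\,\hat a_{r',\sigma'}^*\hat a_{r',\sigma'}$ (the minus sign coming from the $\mathbb Q_3$-renormalization structure) with $c_\sigma^S(r')\approx\rho_\sigma\int2|\nabla\varphi|^2$ on the relevant momentum range; applying \eqref{eq:particle-hole} again, $\sum_{|r'|\le k_F^{\sigma'}}\hat a_{r',\sigma'}^*\hat a_{r',\sigma'}=\sum_{|r'|>k_F^{\sigma'}}\hat a_{r',\sigma'}^*\hat a_{r',\sigma'}=\tfrac12\mathcal N_{\sigma'}$, the two quadratic operators cancel to leading order, leaving $(c^T-c^S)\tfrac12\mathcal N_{\sigma'}$ — controlled since $c^T-c^S=O(\rho^{4/3-\gamma})$ (the $\hat\varphi^<$-region, using $|\hat\varphi(p)|\le C|p|^{-2}$) — together with a tail $c^S\mathcal N_{\sigma',>\rho^{1/3-\alpha}}\lesssim\rho\,\mathcal N_\beta$ (since $|r'|>\rho^{1/3-\alpha}$ forces $||r'|-k_F^\sigma|>(k_F^\sigma)^{1+\beta}$ for every $\beta\ge0$). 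Finally $T_\sigma^*(r)S_\sigma(r)+\mathrm{h.c.}$ has vanishing full contraction (the $\widehat\omega^\varepsilon$- and $\hat\varphi^>$-weighted factors cannot completely pair because of the cutoffs), and its partial contractions, together with the singly contracted remainders of $|T|^2$ and $|S|^2$, are estimated by Cauchy--Schwarz against $\mathbb H_0^{1/2},\mathbb Q_4^{1/2},\mathcal N^{1/2},\mathcal N_\beta^{1/2}$, using the uniform bounds $\int|\nabla\varphi|^2,\|\varphi\|_{L^1},\sup_p|\hat\varphi(p)|,\sup_p|p|^2|\hat\varphi(p)|\lesssim1$ and the volume counts $\#\{|p|\lesssim\rho^{1/3-\gamma}\}\sim L^3\rho^{1-3\gamma}$, $\#\{|r'|\lesssim\rho^{1/3-\alpha}\}\sim L^3\rho^{1-3\alpha}$; the conditions \eqref{eq: mix cond alpha gamma} are precisely what make the exponents arising here positive.

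The main obstacle is to make the cancellation of the two $\rho\mathcal N$-type quadratic operators rigorous: each is individually of order $L^3\rho^{7/3}$ — the size of the Huang--Yang correction itself — so neither may be absorbed into $\mathcal E_{Vf}$, and the cancellation rests on (a) computing the double-contraction coefficients $c^T_\sigma(r'),c^S_\sigma(r')$ and verifying they agree to leading order, which is exactly what forces the common numerator $2|p|^2\hat\varphi(p)=\widehat{Vf}(p)$ in $\widehat\omega^\varepsilon$ and in $\hat\varphi^>$, and (b) using the particle--hole relation \eqref{eq:particle-hole} to identify the $\hat v$-supported quadratic operator coming from $|T|^2$ with the $\hat u^{<\alpha}$-supported one coming from $|S|^2$, both equal to $\tfrac12\mathcal N_{\sigma'}$. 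A secondary, essentially bookkeeping, difficulty is to check that every error term produced along the way is $o(\rho^{7/3})$ after inserting the a priori bounds $\langle\mathcal N\rangle\lesssim L^3\rho^{4/3}$, $\langle\mathcal N_\beta\rangle\lesssim L^3\rho^{(5-\beta)/3}$ from Proposition \ref{pro: a priori} and $\langle\mathbb H_0\rangle,\langle\mathbb Q_4\rangle\lesssim L^3\rho^2$ from Lemma \ref{lem: a priori}, and to choose the parameters $\gamma,\alpha,\beta,\delta$ and the auxiliary $\kappa>0$ at the very end so that \eqref{eq: mix cond alpha gamma} and all positivity constraints hold simultaneously.
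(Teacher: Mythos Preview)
Your high-level picture is right: expand the square, identify the cross terms $T^*\hat a+\mathrm{h.c.}$ and $S^*\hat a+\mathrm{h.c.}$ with $\mathbb Q_2|_{Vf}$ and $\mathbb Q_3|_{Vf}$ via the scattering equation, extract the constant from the $T$-contractions, and cancel the dangerous $\rho\mathcal N$-type quadratic operators using the particle--hole relation \eqref{eq:particle-hole}. That is exactly the skeleton of the paper's argument. But two technical ingredients are missing from your plan, and without them the argument does not close.

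First, the paper does \emph{not} normal-order $|T_\sigma(r)|^2=T_\sigma^*(r)T_\sigma(r)$ directly. Instead it bounds $|T_\sigma(r)|^2\le\{T_\sigma^*(r),T_\sigma(r)\}$ and works with the anticommutator (Bell's trick). Since $T_\sigma^*(r)$ is a product of three annihilation operators, the anticommutator of $T^*$ and $T$ produces far fewer terms than the six-body normal ordering of $T^*T$ would; concretely the paper's Lemma~5.3 reduces $\{T^*,T\}$ to ten terms $\mathrm I_1,\ldots,\mathrm I_{10}$, of which $\mathrm I_9,\mathrm I_{10}$ are manifestly $\le 0$ and can be dropped, $\mathrm I_1$ is the displayed constant, $\mathrm I_2+\mathrm I_3$ give the quadratic $-2\|\nabla\varphi\|_2^2\,\rho_\sigma\sum_r\hat v_{\sigma'}(r)\hat a^*_{r,\sigma'}\hat a_{r,\sigma'}$, and $\mathrm I_4,\ldots,\mathrm I_8$ are errors. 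Your description of the double-contraction coefficient $c^T_\sigma(r')\approx\rho_\sigma\int 2|\nabla\varphi|^2$ is morally correct, but the sign you assign to the companion term from $|S|^2$ is not: in the paper the analogous quadratic operator from $\{S_1^*,S_1\}$ carries the \emph{positive} sign $+2\|\nabla\varphi\|_2^2\,\rho_\sigma\sum_r\hat u_{\sigma'}(r)\hat a^*_{r,\sigma'}\hat a_{r,\sigma'}$, and the cancellation is $(-\hat v+\hat u)$, not $(+\hat v-\hat u^{<\alpha})$ as you write.

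Second, and more seriously, you treat $|S_\sigma(r)|^2$ as if $S_\sigma^*(r)$ were structurally like $T_\sigma^*(r)$, but it is not: the last summand of $S_\sigma^*(r)$ in \eqref{eq: def Sk}, namely $S_{2,\sigma}^*(r)=L^{-3}\sum_p\hat\varphi^>(p)\hat u_\sigma^{<\alpha}(r)\hat u_\sigma(r-p)\hat a^*_{r-p,\sigma}b^*_{p,\sigma'}$, consists of three \emph{creation} operators, whereas the remaining part $S_{1,\sigma}^*(r)$ is of the form $a^*aa$. The paper explicitly warns that one cannot bound $|S|^2$ by the full anticommutator $\{S^*,S\}$, because the $S_2S_2^*$ contribution would generate an unwanted constant and a wrong $\rho\mathcal N$ term. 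The remedy is the splitting $S=S_1+S_2$ together with the algebraic identity
\[
|T+S|^2=\{T^*+S_1^*,\,T+S_1\}+\big(S_2^*(T+S)+\mathrm{h.c.}\big)-|T^*+S_1^*|^2-|S_2|^2,
\]
after which the last two (non-positive) terms are simply dropped. This leaves $\{T^*,T\}$, $\{S_1^*,S_1\}$, $\{T^*,S_1\}+\mathrm{h.c.}$, and $S_2^*(T+S)+\mathrm{h.c.}$ to estimate (the paper's Lemmas~5.3--5.6). Your plan of normal-ordering $S^*S$ directly does not address how you avoid the bad $S_2S_2^*$-type contractions, nor how you control the many cross terms $S_{1,i}^*S_{1,j}$, $S_1^*S_2$ arising from the mixed creation/annihilation content of $S^*$; the claim that the full contraction of $T^*S+\mathrm{h.c.}$ vanishes is also not justified and in fact the $\{T^*,S_1\}$ analysis in the paper (Lemma~5.5) is one of the longer ones.

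In short: the strategy is right, but the anticommutator upper bound and the $S_1/S_2$ splitting are not optional simplifications---they are what makes the quadratic cancellation come out with the correct signs and what keeps the number of error terms manageable.
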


While the proof of Proposition \ref{pro: completion square Vphi} is relatively short since it relies on the exact identities \eqref{eq:Q4-square-dec-1} and \eqref{eq:Q4-square-dec-2}, the proof of Proposition \ref{pro: completion square Vf} is the most difficult part of the present work. The proof and further details will be given in Sections~\ref{sec: completion square Vphi} and~\ref{sec: completion square Vf}.

The a priori estimates in Proposition \ref{pro: a priori} will be obtained by suitably adapting the proofs of Propositions \ref{pro: completion square Vphi} and \ref{pro: completion square Vf}. In fact, for this purpose, we will not directly use the results in Propositions \ref{pro: completion square Vphi} and \ref{pro: completion square Vf}, since, for instance, the error term $\rho^{1-\frac{3\gamma}{2}} \|\mathcal{N}^{1/2}_\beta \psi\| \|\mathcal{N}^{1/2} \psi\|$ in Proposition \ref{pro: completion square Vf} is not obviously bounded by $O(\rho \mathcal{N})$. Instead, we will use the simpler square completion in \eqref{eq:completion-square-simple}, which is sufficient to achieve a precision up to $O(\rho \mathcal{N})$.

We conclude this section by explaining how our main result in Theorem~\ref{thm: main lw bd} readily follows from the above Propositions. 
\begin{proof}[Proof of Theorem \ref{thm: main lw bd}]  Let $\Psi=R \psi$ be a ground state for $\cH$. From Lemma \ref{lem: a priori} and Proposition \ref{pro: a priori}, we have the a-priori estimates 
\begin{equation*}  
  \langle \psi, \mathbb{H}_0 \psi\rangle \leq CL^3\rho^2,  \,\, \langle  \psi,\mathbb{Q}_4  \psi\rangle \leq CL^3\rho^2, \,\, \langle   \psi, \mathcal{N}  \psi\rangle \leq CL^3\rho^{\frac{4}{3}}, \,\,   \langle   \psi, \mathcal{N}_\beta  \psi\rangle \leq CL^3\rho^{\frac{5-\beta}{3}}
\end{equation*}
for any $0\le \beta <1$. Consequently, the error term in Proposition \ref{pro: fermionic transf} can be bounded as 
\begin{equation}\label{eq: eps-1-final}
  |\langle \psi, \mathcal{E}_{\rm corr} \psi\rangle| \leq CL^3 \rho^{\frac 7 3+ \frac 1 {12}}
  \end{equation}
  where we applied \eqref{eq: eps-1} with $\beta=1/2$.  Similarly, using Proposition \ref{pro: completion square Vphi} 
    and dropping the non-negative contribution from the second term on  the right-hand side of \eqref{eq: first id Q4} we find that 
\begin{align}  \left\langle \psi,  \left[\mathbb{Q}_4 + (\mathbb{Q}_2 +  \mathbb{Q}_3)\big\vert_{V\varphi}\right] \psi\right\rangle \ge -L^3\rho_\uparrow\rho_\downarrow \int_{\Lambda} V\varphi^2  - C  L^3  \rho^{\frac 7 3+ \frac 1 {12}} . 
   \label{eq:conclusion-square-2}
\end{align} 
Next, we apply Proposition \ref{pro: completion square Vf} with the parameters 
\begin{align}\label{eq:choice-parameters}
\gamma = \frac{1}{10}, \quad \alpha = \frac{1}{6}+ \frac{1}{120}, \quad \beta = \frac{1-9\gamma}{2}=\frac{1}{20}, \quad \delta=1
\end{align}
Here, $\beta= (1-9\gamma)/2$ is chosen in order to balance the error terms 
$$
\rho^{1-\frac {3\gamma}2}\|\mathcal{N}^{\frac{1}{2}}_\beta\psi\|\|\mathcal{N}^{\frac{1}{2}}\psi\| + \rho^{1+\frac \beta 6} \langle \psi, \cN \psi\rangle
 \le C L^3 \rho^{\frac 7 3} ( \rho^{\frac {1-\beta}{6}- \frac{3\gamma}{2}}+ \rho^{\frac \beta 6}) \le C L^3 \rho^{\frac 7 3 + \frac 1 {12} - \frac{3 \gamma}{4}},
$$
and  $\gamma$ and $\alpha$ are chosen to optimize this and the most significant error terms among the remaining ones, namely 
\begin{align*}
&\rho^{\frac{1}{6} + \alpha}\|\mathbb{Q}_4^{\frac{1}{2}} \psi\|\|\mathbb{H}_0^{\frac{1}{2}}  \psi\| + \rho^{\frac{4}{3} +2\gamma-3\alpha} \langle \psi, \cN \psi\rangle  \le CL^3 \rho^{\frac 7 3}\Big( \rho^{\alpha -\frac  1 6 } + \rho^{\frac 1 3 + 2 \gamma-3\alpha }\Big).
\end{align*}
Dropping also the non-negative contribution from the term involving $|\hat{a}_{r,\sigma} + T_{\sigma}(r) + S_{\sigma}(r)|^2$, we conclude from Proposition \ref{pro: completion square Vf} that 
  \begin{align}
 &\left\langle  \psi, \Big[ \mathbb{H}_0 + (\mathbb{Q}_2 +  \mathbb{Q}_3)\big\vert_{Vf}  \Big]  \psi \right\rangle\nn
  \\
 & \geq  - \frac{1}{L^6}\sum_{p,r,r^\prime}\frac{(2|p|^2 \hat{\varphi}(p))^2 \hat{u}_{\uparrow}(r+p)\hat{u}_{\downarrow}(r^\prime - p)\hat{v}_\uparrow(r)\hat{v}_{\downarrow}(r^\prime) }{\lambda_{p,r}+ \lambda_{-p,r^\prime} + 2\varepsilon}  - C L^3\rho^{\frac 7 3+ \frac{1}{120}}.
 \label{eq:conclusion-square-1}
\end{align}
Combining \eqref{eq: eps-1-final}, \eqref{eq:conclusion-square-2} and \eqref{eq:conclusion-square-1} with \eqref{eq: particle hole lw bd}, we find that 
\begin{align}    \label{eq:conclusion-square-3}& \left\langle \Psi,  \mathcal{H} \Psi\right\rangle \ge  E_{\mathrm{FFG}}  - L^3 \rho_\uparrow\rho_\downarrow \int_{\Lambda} V\varphi^2  \\
&\quad - \frac{1}{L^6}\sum_{p,r,r^\prime}\frac{(2|p|^2 \hat{\varphi}(p))^2 \hat{u}_{\uparrow}(r+p)\hat{u}_{\downarrow}(r^\prime - p)\hat{v}_\uparrow(r)\hat{v}_{\downarrow}(r^\prime) }{\lambda_{p,r}+ \lambda_{-p,r^\prime} + 2\varepsilon} - C L^3 \rho^{\frac 7 3+ \frac{1}{120}}. \nn
\end{align}

It remains to show that the expression on the right-hand side of \eqref{eq:conclusion-square-3} agrees with the Huang--Yang formula  \eqref{eq: main thm 1}. Since $E_{\mathrm{FFG}}$ satisfies  \eqref{eq: FFG energy}, the desired bound 
follows if we can show that 
\begin{align}\nn
& \rho_\uparrow\rho_\downarrow  \int_{\R^3} V_\infty (1-\varphi_\infty^2) \\ &  - \frac 1{(2\pi)^{9}}\int_{\R^9} dp dr dr' \frac{(2|p|^2 \widehat{\varphi}_\infty(p))^2\hat{u}_{\uparrow}(r+p)\hat{u}_{\downarrow}(r^\prime - p)\hat{v}_\uparrow(r)\hat{v}_{\downarrow}(r^\prime)}{\lambda_{p,r}+ \lambda_{-p,r^\prime} + 2\varepsilon}  \nn\\
&\ge 8\pi a \rho_\uparrow\rho_\downarrow + a^2 \rho_\uparrow^{\frac{7}{3}} F\left(\frac{\rho_\downarrow}{\rho_\uparrow}\right) +  O(\rho^{7/3+\gamma}) \label{eq:corr-energy-1}
\end{align}
with the above choice of $\gamma$ and $\delta$. 
For the second term on the left-hand side of \eqref{eq:corr-energy-1}, we split the integration over $p$ into $\{|p|\le  \rho^{1/3-\gamma}\}$ and $\{|p|\ge \rho^{1/3-\gamma}\}$. In the following we will use 
\begin{align}\label{eq:8pia-Vf}
0\le 8\pi a - 2 |p|^2 |\widehat \varphi_\infty(p)|   =  \mathcal{F}(V_\infty f_\infty) (0)-|\mathcal{F}(V_\infty f_\infty) (p)| \le C|p|^2
\end{align}
with $f_\infty=1-\varphi_\infty$. On the domain $\{|p|\le \rho^{1/3-\gamma}\}$, using the upper bound $2 |p|^2 |\widehat \varphi_\infty(p)| \le 8\pi a $ in \eqref{eq:8pia-Vf} and the computations in \cite[Lemma 6.1 and Appendix B]{GHNS} we have
\begin{align}\label{eq:corr-energy-1a}
&- \int_{|p|\le \rho^{1/3-\gamma}} dp dr dr'  \frac{(2|p|^2 \widehat{\varphi}_\infty(p))^2 \hat{u}_{\uparrow}(r+p)\hat{u}_{\downarrow}(r^\prime - p)\hat{v}_\uparrow(r)\hat{v}_{\downarrow}(r^\prime)}{\lambda_{p,r}+ \lambda_{-p,r^\prime} + 2\varepsilon}\nn\\
&\ge -  (8\pi a)^2\int_{|p|\le \rho^{1/3-\gamma}} dp dr dr'    \frac{\hat{u}_{\uparrow}(r+p)\hat{u}_{\downarrow}(r^\prime - p)\hat{v}_\uparrow(r)\hat{v}_{\downarrow}(r^\prime)}{\lambda_{p,r}+ \lambda_{-p,r^\prime} + 2\varepsilon}  \nn\\
&=   (8\pi a)^2 \int_{|p|\le \rho^{1/3-\gamma}} dp dr dr'   \Big[   \frac{1}{2|p|^2} - \frac{\hat{u}_{\uparrow}(r+p)\hat{u}_{\downarrow}(r^\prime - p)}{\lambda_{p,r}+ \lambda_{-p,r^\prime} + 2\varepsilon} \Big] \hat{v}_\uparrow(r)\hat{v}_{\downarrow}(r^\prime) \nn\\
&\qquad\qquad -  (8\pi a)^2 (2\pi)^6 \rho_\uparrow\rho_\downarrow  \int_{|p|\le  \rho^{1/3-\gamma}} dp \frac{1}{2|p|^2}\nn\\
&= (2\pi)^9 a^2 \rho_\uparrow^{\frac{7}{3}} F\left(\frac{\rho_\downarrow}{\rho_\uparrow}\right) -  (8\pi a)^2 (2\pi)^6 \rho_\uparrow\rho_\downarrow \int_{|p|\le  \rho^{1/3-\gamma}}  dp\frac{1}{2|p|^2} + O(\rho^{7/3+\gamma}) .  
\end{align}
On the domain $\{|p|\ge \rho^{1/3-\gamma}\}$ and for $|r|,|r'|\le C \rho^{1/3}$, we have 
$$
\frac{1}{\lambda_{p,r}+ \lambda_{-p,r^\prime} + 2\varepsilon} \le \frac{1}{2 |p|^2 + 2 p \cdot (r-r' )} \le \frac{1}{2 |p|^2} \left[ 1 - \frac{ p \cdot (r-r' )}{|p|^2} + \frac{C |r-r'|^2}{|p|^2} \right] . 
$$
When we integrate against $\hat{v}_\uparrow(r)\hat{v}_{\downarrow}(r^\prime) d r d r'$, the term involving $p \cdot (r-r' )$ vanishes. Using $|r-r'|^2\le C\rho^{\frac 2 3}$ in the relevant domain and also \eqref{eq:8pia-Vf} we find that 
\begin{align}\label{eq:corr-energy-1b}
& - \int_{|p|\ge \rho^{1/3-\gamma}} dp dr dr'  \frac{(2|p|^2 \widehat{\varphi}_\infty(p))^2\hat{u}_{\uparrow}(r+p)\hat{u}_{\downarrow}(r^\prime - p)\hat{v}_\uparrow(r)\hat{v}_{\downarrow}(r^\prime)}{\lambda_{p,r}+ \lambda_{-p,r^\prime} + 2\varepsilon}\nn\\
&\ge - 2 (2\pi)^6 \rho_\uparrow\rho_\downarrow \int_{|p|\ge \rho^{1/3-\gamma}} dp \, |p|^2 \widehat{\varphi}_\infty(p)^2 - C \rho^{2+2/3}  \int_{|p|\ge \rho^{1/3-\gamma}} dp\, \widehat{\varphi}_\infty(p)^2 \nn\\
&\qquad\qquad - C \rho^{2+2/3}  \int_{|p|\ge \rho^{1/3-\gamma}} dp \, \widehat{\varphi}_\infty(p)^2 \nn\\
&\ge  - 2 (2\pi)^6 \rho_\uparrow\rho_\downarrow \int_{\R^3} dp  \, |p|^2 \widehat{\varphi}_\infty(p)^2 + (2\pi)^6 \rho_\uparrow\rho_\downarrow \int_{|p|\le \rho^{1/3-\gamma}} dp  \frac{1}{2|p|^2 }\left( 8\pi a  -  C |p|^2 \right)^2  \nn\\
&\qquad \qquad  - C \rho^{2+2/3}  \int_{|p|\ge \rho^{1/3-\gamma}} dp \left( \frac{4\pi a}{|p|^2}\right)^2 \nn\\
&=  - 2 (2\pi)^6 \rho_\uparrow\rho_\downarrow \int_{\R^3}  |\nabla \varphi_\infty|^2+ (8\pi a)^2 (2\pi)^6 \rho_\uparrow\rho_\downarrow \int_{|p|\le \rho^{1/3-\gamma}} dp  \frac{1}{2|p|^2 }   +  O(\rho^{7/3+\gamma}). 
\end{align}
The second term on the right-hand side of \eqref{eq:corr-energy-1b} cancels the second term on the right-hand side of \eqref{eq:corr-energy-1a}. Moreover, using $1-\varphi_\infty^2  = 2f_\infty - f_\infty^2$ and the scattering equation for $\varphi_\infty$, we get 
\begin{align*}  
  \int_{\R^3} V_\infty (1-\varphi_\infty^2)  -  \int_{\R^3} 2 |\nabla {\varphi_\infty}| ^2  &= 2 \int_{\R^3} V_\infty f_\infty -  \int_{\R^3} \Big[ 2 |\nabla {\varphi_\infty}| ^2 + V_\infty |f_\infty|^2 \Big] = 8\pi a. 
\end{align*}
Combining this with \eqref{eq:corr-energy-1a} and \eqref{eq:corr-energy-1b}, we obtain \eqref{eq:corr-energy-1}. This completes the proof of Theorem \ref{thm: main lw bd}. 
\end{proof}

\bigskip
\noindent
{\bf Organization of the paper.} In Section \ref{sec:pre}, we  collect several basic but helpful preliminary results. Then we give the proof of two key results, Propositions \ref{pro: completion square Vphi} and \ref{pro: completion square Vf},  in Sections \ref{sec: completion square Vphi} and \ref{sec: completion square Vf}, respectively. Finally we prove Proposition \ref{pro: a priori} in Section \ref{sec:apriori}, which, together with the previous results,  leads to the conclusion of Theorem \ref{thm: main lw bd}, as explained above. 

\bigskip
\noindent
{\bf Acknowledgments.} We would like to thank Martin Ravn Christiansen for inspiring discussions on the lower bound, and Marcello Porta for initiating discussions   on bosonization methods for dilute Fermi gases. This work was partially funded by the Deutsche Forschungsgemeinschaft (DFG,
German Research Foundation) via TRR 352 -- Project-ID 470903074.  PTN
was partially supported by the European Research Council via the ERC Consolidator Grant RAMBAS -- Project-Nr. 101044249.

\section{Preliminaries}\label{sec:pre}

In this section we collect several basic but helpful estimates.  

\subsection{Scattering solutions} 

Recall the following well-known properties of the solution of the zero-scattering equation \eqref{eq: zero en scatt eq} (see e.g. \cite{LSSY}). 

\begin{lemma}[Properties of $\varphi_\infty$]\label{lem: prop of phi infty} Let $V_\infty$ be as in Assumption~\ref{asu: potential V}. Let  
 $\varphi_\infty:\R^3\to \R$ be the solution of \eqref{eq: zero en scatt eq}, with  scattering length $a$. Then
  \begin{equation}\label{eq: point bd phi infty}
    0 \leq \varphi_\infty(x)\leq \min \left\{ 1 , \frac a {|x|} \right\} \quad \forall x\in\mathbb{R}^3 \ , \qquad  
        \varphi_\infty(x) = \frac{a}{|x|}\quad \forall x\notin \mathrm{supp}V_\infty.
  \end{equation}
\end{lemma}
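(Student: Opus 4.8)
\textbf{Proof proposal for Lemma~\ref{lem: prop of phi infty}.}

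The plan is to recall the standard variational/ODE analysis of the zero-energy scattering equation \eqref{eq: zero en scatt eq}, reducing everything to a one-dimensional radial problem and then invoking the maximum principle. First I would rewrite \eqref{eq: zero en scatt eq} in the equivalent integral form. Since $2\Delta \varphi_\infty = -V_\infty(1-\varphi_\infty) \le 0$ wherever $\varphi_\infty \le 1$, and the Green's function of $-\Delta$ on $\R^3$ is $(4\pi|x|)^{-1}$, one gets the representation
\begin{equation*}
  \varphi_\infty(x) = \frac{1}{8\pi}\int_{\R^3} \frac{V_\infty(y)\,(1-\varphi_\infty(y))}{|x-y|}\,dy,
\end{equation*}
valid once one knows $\varphi_\infty \to 0$ at infinity and $0 \le \varphi_\infty \le 1$. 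The nonnegativity of the right-hand side (given $V_\infty \ge 0$ and $\varphi_\infty \le 1$) then feeds back to give $\varphi_\infty \ge 0$, and a comparison argument gives $\varphi_\infty \le 1$: if $\varphi_\infty$ exceeded $1$ somewhere, then at an interior maximum point $x_0$ with $\varphi_\infty(x_0) > 1$ one would have $\Delta \varphi_\infty(x_0) \le 0$, while the equation forces $2\Delta\varphi_\infty(x_0) = -V_\infty(x_0)(1-\varphi_\infty(x_0)) = V_\infty(x_0)(\varphi_\infty(x_0)-1) \ge 0$; combined with the decay at infinity and a bit of care about where the maximum is attained (it cannot be at infinity since $\varphi_\infty \to 0$), this yields a contradiction unless $\varphi_\infty \le 1$ everywhere. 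The cleanest route is actually to define $\varphi_\infty$ from the outset as the minimizer of the Neumann-type scattering energy $\int (2|\nabla\varphi|^2 + V_\infty(1-\varphi)^2)$, for which $0 \le \varphi_\infty \le 1$ is part of the standard existence theory (see \cite{LSSY}), and then the integral representation follows.

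Next I would establish the harmonic behaviour outside the support. For $|x| \notin \supp V_\infty$ — more precisely, outside the ball $B_{R_0}$ containing $\supp V_\infty$ — the equation reduces to $\Delta \varphi_\infty = 0$, so $\varphi_\infty$ is harmonic there; since it is radial (by uniqueness and the radial symmetry of $V_\infty$) and tends to $0$ at infinity, it must be of the form $\varphi_\infty(x) = c/|x|$ for $|x| \ge R_0$, for some constant $c \ge 0$. Identifying $c = a$ is then just the definition of the scattering length: integrating the equation over $\R^3$ gives $\int_{\R^3} V_\infty(1-\varphi_\infty) = -2\int_{\R^3}\Delta\varphi_\infty = -2\lim_{r\to\infty}\int_{\partial B_r}\partial_\nu \varphi_\infty = -2\lim_{r\to\infty} 4\pi r^2 \cdot(-c/r^2) = 8\pi c$, which matches \eqref{eq: 8pia-first-way}, so $c = a$. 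Finally, the pointwise bound $\varphi_\infty(x) \le a/|x|$ for all $x$ (not just outside the support) follows from the maximum principle applied to $a/|x| - \varphi_\infty$, which is superharmonic on $\R^3 \setminus \{0\}$ wherever $V_\infty \ge 0$ forces $\Delta\varphi_\infty \le 0$: it is nonnegative on $\partial B_{R_0}$ and at infinity, hence nonnegative on $\{|x| \ge R_0\}$, and for $|x| \le R_0$ one uses $\varphi_\infty \le 1 \le a/|x|$ when $|x| \le a$, while for $a \le |x| \le R_0$ one again compares with the harmonic function $a/|x|$ on that annulus using $\Delta(\varphi_\infty - a/|x|) = \Delta\varphi_\infty \le 0$ and the boundary values.

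The routine parts here are the integral representation and the radial ODE analysis; the only mildly delicate point is the sign bookkeeping in the maximum-principle comparisons — in particular making sure that the region where one wants $\Delta\varphi_\infty \le 0$ really does coincide with where $V_\infty \ge 0$ and $\varphi_\infty \le 1$, and handling the interplay near $|x| = a$ where the two competing bounds $\varphi_\infty \le 1$ and $\varphi_\infty \le a/|x|$ cross over. Since all of this is classical (and the lemma explicitly cites \cite{LSSY}), I would keep the write-up short: state the variational characterization, quote $0 \le \varphi_\infty \le 1$ from there, derive $\varphi_\infty = a/|x|$ outside the support by harmonicity plus decay, and close with the one-line maximum-principle comparison for the global bound $\varphi_\infty \le a/|x|$.
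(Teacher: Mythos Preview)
The paper does not actually prove this lemma; it is stated without proof and attributed to \cite{LSSY}. Your outline is therefore more than what the paper offers, and most of it is correct: the variational characterization giving $0 \le \varphi_\infty \le 1$, the integral representation, and the identification $\varphi_\infty = a/|x|$ outside $\supp V_\infty$ via harmonicity plus the divergence-theorem computation of $c = a$ are all standard and fine.

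There is, however, a genuine sign error in your maximum-principle step for the global bound $\varphi_\infty(x) \le a/|x|$. Since $\Delta\varphi_\infty = -\tfrac{1}{2}V_\infty(1-\varphi_\infty) \le 0$, the function $a/|x| - \varphi_\infty$ satisfies $\Delta(a/|x| - \varphi_\infty) = -\Delta\varphi_\infty \ge 0$, so it is \emph{sub}harmonic, not superharmonic as you wrote. Equivalently, $\varphi_\infty - a/|x|$ is superharmonic on the annulus $\{a \le |x| \le R_0\}$ with boundary values $\le 0$; but the minimum principle for superharmonic functions only controls the minimum, not the maximum, so you cannot conclude $\varphi_\infty - a/|x| \le 0$ this way. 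The clean fix is the radial ODE: set $u(r) = r\varphi_\infty(r)$, so that $u'' = -\tfrac{r}{2}V_\infty(1-\varphi_\infty) \le 0$, i.e.\ $u$ is concave on $[0,\infty)$ with $u(0) = 0$ and $u \equiv a$ on $[R_0,\infty)$. Concavity together with boundedness forces $u$ to be nondecreasing (a concave function that strictly decreases somewhere tends to $-\infty$), hence $u(r) \le a$ for all $r$, which is precisely $\varphi_\infty(r) \le a/r$. This is essentially the argument in \cite{LSSY}.
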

In the next lemma, we collect some estimates for the periodization of $\varphi_\infty$. 

\begin{lemma}[Bounds for $\varphi$, $\varphi^>$, $\varphi^<$]\label{lem: bounds phi} 
Let $V_\infty$ be as in Assumption \ref{asu: potential V} and let $\varphi$, $\varphi^>$, $\varphi^<$ be as in Definition \ref{def:scattering-phi} and \eqref{eq: def phi><}. Then we have the pointwise bound 
  \begin{equation}\label{eq: periodic-scattering-equation}
\left| -2\Delta\varphi (x) - (Vf )(x)+ \frac{8\pi a}{L^3} \right| \le \mathfrak{e}_L V(x).  
\end{equation}
Moreover,  
$$
\|\Delta \varphi\|_{L^1(\Lambda)} + \|\nabla\varphi\|_{L^2(\Lambda)} + \|\varphi\|_{L^\infty(\Lambda)} \le C
$$
 uniformly for $L$ large, and the same holds with $\varphi$ replaced by $\varphi^>$ or $\varphi^<$. We also have
  \begin{equation*}
  \|\varphi^>\|_{L^1(\Lambda)} \leq C\rho^{-\frac{2}{3} +2\gamma}, \,\,\|\nabla \varphi^>\|_{L^1(\Lambda)}\leq C\rho^{-\frac{1}{3} +\gamma}, 
  \quad \|\varphi^>\|_{L^2(\Lambda)} \leq C\rho^{-\frac{1}{6} +\frac{\gamma}{2}}.
  \end{equation*}
\end{lemma}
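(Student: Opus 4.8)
\textbf{Proof proposal for Lemma \ref{lem: bounds phi}.}

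The plan is to derive everything from the two defining facts about $\varphi$: it is the $L$-periodization of $\varphi_\infty$ with the zero mode removed, and $\varphi_\infty$ solves the zero-scattering equation $2\Delta\varphi_\infty=-V_\infty(1-\varphi_\infty)=-V_\infty f_\infty$ on $\mathbb{R}^3$, with the pointwise control from Lemma \ref{lem: prop of phi infty}. First I would establish the periodic scattering identity \eqref{eq: periodic-scattering-equation}. On the Fourier side, $\widehat\varphi(p)=\mathcal{F}(\varphi_\infty)(p)$ for $p\neq 0$ and $\widehat\varphi(0)=0$, so $-2|p|^2\widehat\varphi(p)=\mathcal{F}(V_\infty f_\infty)(p)$ for $p\neq 0$ by the $\mathbb{R}^3$ scattering equation, while at $p=0$ the left side is $0$ and $\mathcal{F}(V_\infty f_\infty)(0)=\int_{\mathbb{R}^3}V_\infty f_\infty=8\pi a$. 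Hence $-2\Delta\varphi(x)=\sum_{z}(V_\infty f_\infty)(x+Lz)-8\pi a/L^3$ as periodic functions. Since $V$ is the periodization of $V_\infty$ and $f_\infty$ differs from $1$ only inside $\mathrm{supp}\,V_\infty$ (where $\varphi_\infty$ is smooth and, for $L$ large, the periodization over $z\neq 0$ is negligible), the replacement of $\sum_z (V_\infty f_\infty)(\cdot+Lz)$ by $(Vf)(\cdot)$ costs only $\mathfrak{e}_L V$; this gives \eqref{eq: periodic-scattering-equation}. Essentially this is already in \cite{GHNS}, so I would cite that computation and only sketch it.

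Next, the $L^\infty$, $\nabla$-$L^2$, and $\Delta$-$L^1$ bounds. For $\|\varphi\|_{L^\infty(\Lambda)}$: by \eqref{eq: point bd phi infty}, $0\le\varphi_\infty\le\min\{1,a/|x|\}$, and on the torus $\varphi(x)=\sum_{z\in\mathbb{Z}^3}\varphi_\infty(x+Lz)-c_L$ where $c_L=L^{-3}\int_\Lambda\sum_z\varphi_\infty(\cdot+Lz)$; since $\varphi_\infty$ decays like $a/|x|$ the tail sum over $z\neq 0$ is $O(1)$ uniformly in $L$ (indeed $O(L^{-2})$ once $|x|$ is away from the origin), and near the origin a single term is bounded by $1$, so $\|\varphi\|_{L^\infty(\Lambda)}\le C$. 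For $\|\nabla\varphi\|_{L^2(\Lambda)}$ I would use Parseval: $\|\nabla\varphi\|_{L^2(\Lambda)}^2=L^{-3}\sum_{p\in\Lambda^*}|p|^2|\mathcal{F}(\varphi_\infty)(p)|^2$, which is a Riemann sum for $(2\pi)^{-3}\int_{\mathbb{R}^3}|p|^2|\mathcal{F}(\varphi_\infty)(p)|^2\,dp=\|\nabla\varphi_\infty\|_{L^2(\mathbb{R}^3)}^2<\infty$ (finite since $\nabla\varphi_\infty=O(|x|^{-2})$ at infinity is $L^2$); the Riemann sum converges because $|p|^2|\mathcal{F}(\varphi_\infty)(p)|^2=|\mathcal{F}(V_\infty f_\infty)(p)|^2/(4|p|^2)$ is continuous and integrable (note $V_\infty f_\infty\in L^1\cap L^2$ is compactly supported so its Fourier transform is smooth and bounded, giving decay $|p|^{-2}$). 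For $\|\Delta\varphi\|_{L^1(\Lambda)}$, use \eqref{eq: periodic-scattering-equation}: $\|\Delta\varphi\|_{L^1(\Lambda)}\le C(\|Vf\|_{L^1(\Lambda)}+\|V\|_{L^1(\Lambda)}+L^{-3}\cdot L^3)\le C$ since $V\in L^1(\Lambda)$ uniformly and $\|f\|_{L^\infty}\le 1$.

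For the cutoff versions $\varphi^>,\varphi^<$: since $\widehat\chi_<,\widehat\chi_>$ take values in $[0,1]$, multiplication by these Fourier multipliers is not $L^\infty$-bounded a priori, so the $L^\infty$ and $\nabla$-$L^2$, $\Delta$-$L^1$ bounds for $\varphi^{\gtrless}$ need the smoothness of $\widehat\chi$: I would write $\varphi^<=\varphi*\chi_<^{\text{per}}$ with $\chi_<^{\text{per}}$ the periodization of the inverse Fourier transform of $\widehat\chi(\cdot/(4\rho^{1/3-\gamma}))$, which is an $L^1$-normalized approximate identity ($\|\chi_<^{\text{per}}\|_{L^1}\le\|\mathcal{F}^{-1}\widehat\chi\|_{L^1(\mathbb{R}^3)}=:C$ by scaling, using that $\widehat\chi$ is Schwartz so its inverse transform is in $L^1$), hence $\|\varphi^<\|_{L^\infty}\le C\|\varphi\|_{L^\infty}$ and similarly for $\nabla$ and $\Delta$ by moving derivatives onto $\varphi$; then $\varphi^>=\varphi-\varphi^<$ inherits the same bounds. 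Finally the quantitative bounds on $\varphi^>$: $\widehat\varphi^>(p)=\widehat\varphi(p)\widehat\chi_>(p)$ is supported in $|p|\gtrsim\rho^{1/3-\gamma}$, where $|\widehat\varphi(p)|=|\mathcal{F}(V_\infty f_\infty)(p)|/(2|p|^2)\le C|p|^{-2}$. Then by Cauchy--Schwarz and Parseval, $\|\varphi^>\|_{L^1(\Lambda)}\le L^{3/2}\|\varphi^>\|_{L^2(\Lambda)}$ is wasteful; instead estimate directly $\|\varphi^>\|_{L^1}\le\|\,\cdot\,\|$ via the kernel, or more simply use $\|\varphi^>\|_{L^2(\Lambda)}^2=L^{-3}\sum_{|p|\gtrsim\rho^{1/3-\gamma}}|\widehat\varphi(p)|^2\widehat\chi_>(p)^2\le C\int_{|p|\gtrsim\rho^{1/3-\gamma}}|p|^{-4}\,dp\le C\rho^{-1/3+\gamma}$, giving $\|\varphi^>\|_{L^2(\Lambda)}\le C\rho^{-1/6+\gamma/2}$; likewise $\|\nabla\varphi^>\|_{L^2(\Lambda)}^2\le C\int_{|p|\gtrsim\rho^{1/3-\gamma}}|p|^{-2}\,dp\le C\rho^{-1/3+\gamma}$; and for the $L^1$ bounds, since $\varphi^>$ is essentially supported (in position space, via the fast decay of $\varphi_\infty$ minus a smooth long-range piece) where one can localize, one uses $\|\varphi^>\|_{L^1}\le C\||\,\cdot\,|\varphi^>\|_{L^2}\cdot\||\,\cdot\,|^{-1}\|$-type arguments or, cleanest, Fourier: $\|\varphi^>\|_{L^1(\Lambda)}\le\|\,\widehat{\varphi^>}\,\|$ is not directly available, so instead bound $\|\varphi^>\|_{L^1(\Lambda)}\le\|\varphi^>\|_{L^2(\Lambda)}\|1\|_{L^2(\Lambda)}$ only if $\varphi^>$ has small support — better to note $\varphi^>=\varphi-\varphi^<$ with $\varphi^<$ spread over scale $\rho^{-1/3+\gamma}$, so $\|\varphi^<-\varphi\|$... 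I would in fact borrow the precise $L^1$ estimates from \cite[Lemma 6.1 or Appendix]{GHNS}, where the stationary-phase/kernel bound $\|\varphi^>\|_{L^1(\Lambda)}\le C\rho^{-2/3+2\gamma}$, $\|\nabla\varphi^>\|_{L^1(\Lambda)}\le C\rho^{-1/3+\gamma}$ is carried out.

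\textbf{Main obstacle.} The only genuinely delicate points are (i) justifying that the periodization tails and the zero-mode subtraction are harmless, i.e. the passage from the $\mathbb{R}^3$ scattering equation to \eqref{eq: periodic-scattering-equation} with an error that is genuinely $\mathfrak{e}_L\to 0$ (and not merely $O(1)$) — this requires the $a/|x|$ decay of $\varphi_\infty$ to control $\sum_{z\neq 0}\varphi_\infty(\cdot+Lz)$ on $\mathrm{supp}\,V$, giving $O(L^{-2})$; and (ii) the $L^1$-norm bounds on $\varphi^>$, which cannot be obtained by crude Cauchy--Schwarz against $L^{3/2}$ (that would give the wrong power of $\rho$) and instead need either a position-space kernel estimate exploiting the smoothness of the cutoff $\widehat\chi$ or the explicit decay structure of $\varphi_\infty$. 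I expect to handle (ii) by writing $\varphi^> = \varphi - \varphi * \chi_<^{\mathrm{per}}$ and using that convolution with the approximate identity $\chi_<^{\mathrm{per}}$ changes $\varphi$ only on the length scale $\rho^{-1/3+\gamma}$, combined with $\||x|\,\varphi_\infty\|_{L^\infty}\le a$, to get the stated $\rho$-powers; this is exactly the computation referenced from \cite{GHNS}, so in the write-up I would state it and cite rather than reproduce it in full.
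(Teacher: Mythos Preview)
Your proposal is correct and follows essentially the same approach as the paper: derive \eqref{eq: periodic-scattering-equation} on the Fourier side from $\widehat{\Delta\varphi}(p)=\mathcal{F}(\Delta\varphi_\infty)(p)$ for $p\neq 0$ plus the zero-mode correction, get $\|\Delta\varphi\|_{L^1}\le C$ directly from it, handle $\varphi^{\gtrless}$ via convolution with the $L^1$-bounded kernel of $\widehat\chi_<$, and defer the quantitative $L^1$ bounds on $\varphi^>$ to \cite{GHNS}. One minor slip: your aside $\|\nabla\varphi^>\|_{L^2}^2\le C\int_{|p|\gtrsim\rho^{1/3-\gamma}}|p|^{-2}\,dp\le C\rho^{-1/3+\gamma}$ is wrong (that integral diverges in 3D), but this is not needed---the actual claim $\|\nabla\varphi^>\|_{L^2}\le C$ follows immediately from $|\widehat\chi_>|\le 1$ or from your convolution argument.
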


\begin{proof} From the definition of $\varphi$, we have $\widehat{\Delta \varphi}(p)=\mathcal{F}(\Delta \varphi_\infty) (p)$ for all $0\ne p\in \Lambda^*$. Moreover,  $\widehat{\Delta \varphi}(0)=0$ and $\mathcal{F}(-2\Delta \varphi_\infty) (0)= \mathcal{F}(V_\infty f_\infty)(0)=8\pi a$ by the zero-scattering equation \eqref{eq: zero en scatt eq}. Therefore,  
\begin{align*}  
-2\Delta \varphi (x) - (Vf)(x)+ \frac{8\pi a}{L^3} &= \sum_{ z \in \mathbb{Z}^3 } (V_\infty f_\infty)(x+L z) - (Vf)(x)  \\
& = \sum_{ z \in \mathbb{Z}^3 }  \Big( V_\infty(\varphi-\varphi_\infty) \Big) (x+Lz),
  \end{align*}
  where $f_\infty = 1-\varphi_\infty$.
The absolute value of the above function is bounded by $V(x)$ times  $\|\varphi-\varphi_\infty\|_{L^\infty(\supp V_\infty)}$, which  tends to $0$ as $L \to \infty$ by \cite[Eq. (A.4)]{GHNS}. This proves \eqref{eq: periodic-scattering-equation}.    
  
  From \eqref{eq: periodic-scattering-equation}, we find that $\| \Delta \varphi \|_{L^1(\Lambda)} \le C$. From this we can also deduce that $\|\Delta\varphi^<\|_1 \leq C$, using that $\|\chi_<\|_1 \leq C$ \cite[Lemma~A.5]{GHNS}, where $\chi_<$ denotes the function with Fourier coefficients $\widehat \chi_<(p)$. Moreover, $\|\nabla \varphi\|_{L^2} \le C \|\nabla \varphi_\infty\|_{L^2} \le C $. For all of the remaining bounds, we refer to \cite[Lemma A.2]{GHNS}. \end{proof}

\subsection{Configuration space representation and $t$-integral bounds} \label{rem:configuration space}

In our analysis, it will be convenient to estimate many of the error terms in configuration space. This allows  to  capture the volume factors correctly, and to efficiently exploit the positivity of the interaction potential $V$ in many places. In the derivation of the correlation Hamiltonian in Proposition \ref{pro: fermionic transf}, we use the  identity for the full interaction 
\begin{multline*}
\frac{1}{ L^3}\sum_{\sigma, \sigma\in\{\uparrow, \downarrow\}}\sum_{k,p,q\in\Lambda^*} \hat V(k) \hat{a}_{p+k,\sigma}^\ast \hat{a}_{q-k,\sigma'}^\ast  \hat{a}_{q,\sigma'} \hat{a}_{p,\sigma}
\\
 = \sum_{\sigma, \sigma\in\{\uparrow, \downarrow\}}\int_{\Lambda \times \Lambda} dxdy\, V(x-y) a^\ast_{x,\sigma}a^\ast_{y,\sigma^\prime}a_{y,\sigma^\prime}a_{x,\sigma}
\end{multline*}
which can be seen straightforwardly by applying \eqref{eq: def ak a*k} and then using
$$
\frac{1}{ L^9}\sum_{k,p,q\in\Lambda^*}  \hat V(k) e^{i (p+ k)\cdot x} e^{i (q-k)\cdot y} e^{-i q \cdot z} e^{-i p \cdot z'} 
=V(x-y) \delta(y-z) \delta (x-z').
 $$ 
Many other identities can be derived in the same way. For example, using  
\begin{multline*}
  \frac{1}{L^9}\sum_{p,r,r^\prime}\hat{f}(p)\hat{g}_1(r+p)\hat{h}_1(r)\hat{g}_2(r^\prime - p)\hat{h}_2(r^\prime) 
  \\
  = \frac{1}{L^9}\sum_{p,r,r^\prime\in \Lambda^\ast }\hat{f}(p)\hat{h}_1(r)\hat{h}_2(r^\prime)\left(\int_{\Lambda} dx\, g_1(x) e^{-ix\cdot(r+p)}\right)\left(\int_{\Lambda} dy\, g_2(x) e^{-iy\cdot (r^\prime - p)}\right)
  \\ = \int_{\Lambda\times \Lambda} dxdy\, f(x-y) g_1(x)h_1(x)g_2(y)h_2(y)
\end{multline*}
we can write 
\begin{multline*}
  \frac{1}{L^3}\sum_{\sigma, \sigma\in\{\uparrow, \downarrow\}}\sum_{p,r,r^\prime}\hat{V}(p) \hat{u}_{\sigma}(r+p) \hat{u}_{\sigma'}(r'-p)  \hat u_{\sigma}(r)\hat v_{\sigma'}(r')  \hat u_{\sigma'}(r') \hat{a}^*_{r+p,\sigma}  \hat{a}^*_{r'-p,\sigma'} \hat{a}^*_{r',\sigma'} \hat{a}_{r,\sigma}
  \\
  = \sum_{\sigma, \sigma\in\{\uparrow, \downarrow\}}\int_{\Lambda\times \Lambda} dxdy\, V(x-y) a_\sigma^*(u_{x}) a_{\sigma'}^*( u_{y}) a_{\sigma'}^*( v_{y})a_{\sigma}( u_{x}), 
\end{multline*}
which appears in $ \mathbb{Q}_3$. 

To facilitate several error estimates involving the operator $T_{\sigma}(r)$ introduced in Definition \ref{def:ren-Q2}, we will frequently use the identity 
\begin{align}  \label{eq:int-t}
\ell^{-1} = \int_0^\infty dt \, e^{-t \ell}, \quad \ell>0 
\end{align}
to decompose $\omega^\eps_{r,r'}(p)$ in \eqref{eq: def omega eps} into a factorized form as 
\begin{align}\label{eq: int t conf space}
 \frac{1}{\lambda_{p,r} + \lambda_{-p,r^\prime}+2\varepsilon} 
 =\int_0^\infty dt\, e^{-2t\varepsilon} e^{-t|r+p|^2}e^{t|r|^2} e^{-t|r^\prime -p|^2} e^{t|r^\prime|^2}
\end{align}
under the conditions $r+p\notin\mathcal{B}_F^\sigma$, $r^\prime - p\notin\mathcal{B}_F^{\sigma^\prime}$, $r\in\mathcal{B}_F^\sigma$, and $r^\prime\in \mathcal{B}_F^{\sigma^\prime}$. When applying \eqref{eq: int t conf space}, it is convenient to introduce,  for $t>0$,  the functions  $v_{t,\sigma}, u_{t,\sigma}: \Lambda \to \mathbb{C}$ as 
\begin{equation}\label{eq: def vt}
  \hat v_{t,\sigma}(k)= \hat v_{\sigma} (k) e^{t |k|^2},\quad    \hat u_{t,\sigma}(k)= \hat u_{\sigma} (k) e^{-t |k|^2},\quad k \in \Lambda^*. 
 \end{equation}
We will often split further the momenta outside the Fermi ball. Accordingly, with $0<\gamma<1/6$ we define
\begin{equation}\label{eq: u< u> gamma}
\hat{u}^<_\sigma(k) = \begin{cases} 1 &\mbox{if}\,\,\, k_F^\sigma < |k| \leq 6\rho^{\frac{1}{3} - \gamma}, \\ 0 &\mbox{otherwise}   \end{cases},\quad \hat{u}^>_\sigma(k) = \begin{cases} 1 &\mbox{if}\,\,\, |k| \geq 3\rho^{\frac{1}{3} - \gamma}, \\ 0 &\mbox{otherwise}\end{cases}\end{equation}
and correspondingly
\begin{equation}\label{eq: def u<>t}
  \hat u^<_{t,\sigma}(k) = \hat{u}^<_\sigma(k) e^{-t|k|^2},\quad   \hat u^>_{t,\sigma}(k) = \hat{u}^>_\sigma(k) e^{-t|k|^2}. 
  \end{equation}
  
 The following bounds will be helpful when we use the representation in \eqref{eq: int t conf space}. 
 
\begin{lemma}[Integration over $t$]\label{lem:t} Let $0<\gamma<1/6$ and let $\hat{v}_{t,\sigma}$, $\hat{u}^<_{t,\sigma}$ and $\hat{u}^>_{t,\sigma}$ be as in \eqref{eq: def vt}, \eqref{eq: def u<>t}. Then we have
\begin{align}\label{fac1}
 \int_0^\infty dt \, e^{-2t\varepsilon} e^{2t (k_F^\sigma)^2} \|\hat{u}^<_{t,\sigma}\|^2_2 &\leq C \rho^{\frac 13-\gamma},\\
\label{fac2}
\int_0^\infty dt \, e^{-2t\varepsilon} e^{-2t (k_F^\sigma)^2} \|\hat{v}_{t,\sigma}\|^2_2  &\leq C\rho^{\frac{1}{3} -\kappa},\\
\label{fac2u>}
\int_0^\infty dt \, e^{2t (k_F^\sigma)^2} \||\cdot|^{-1}\,\hat{u}^>_{t,\sigma}\|^2_2  &\leq  C\rho^{-\frac{1}{3} + \gamma}, \\
\label{fac2u>b}
\int_0^\infty dt \, e^{2t (k_F^\sigma)^2} \|\widetilde{u}^>_{t,\sigma}\|^2_2  &\leq  C\rho^{-1 + 3\gamma},\\
  \int_{0}^{\infty} dt\, e^{t(k_F^\sigma)^2+t(k_F^{\sigma^\prime})^2} \|\widetilde{u}^>_{t,\sigma}\|_\infty &\leq C\rho^{-\frac{1}{3} + \gamma}.\label{eq: int t tilde nu infty}
\end{align}
Here \eqref{fac2} holds for every constant $\kappa>0$, and $\widetilde{u}^>_{t,\sigma}:\Lambda \to \mathbb{C}$ in \eqref{fac2u>b}--\eqref{eq: int t tilde nu infty}  is the function with Fourier coefficients $ |k|^{-2} \hat{u}^>_{t,\sigma}(k)$. 
\end{lemma}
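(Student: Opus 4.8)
The strategy is to reduce each bound to an explicit radial integral over momentum space and then perform the $t$-integration using the factorized exponential weights. Throughout, one passes from sums over $\Lambda^*=(2\pi/L)^{-1}\mathbb{Z}^3$ to integrals over $\mathbb{R}^3$ up to the error $\mathfrak{e}_L$ (which is harmless since all integrands here are integrable and the claimed bounds are $L$-uniform), so it suffices to estimate $\int_0^\infty dt \int_{\mathbb R^3} dk\,(\dots)$. For \eqref{fac1}, by Parseval $\|\hat u^<_{t,\sigma}\|_2^2 = (2\pi)^{-3}\int_{k_F^\sigma<|k|\le 6\rho^{1/3-\gamma}} e^{-2t|k|^2}\,dk$; interchanging the order of integration and using $\int_0^\infty e^{-2t(|k|^2-(k_F^\sigma)^2)}e^{-2t\varepsilon}\,dt = (2(|k|^2-(k_F^\sigma)^2)+2\varepsilon)^{-1}$, one is left with $\int_{k_F^\sigma<|k|\le 6\rho^{1/3-\gamma}} \frac{dk}{|k|^2-(k_F^\sigma)^2+\varepsilon}$. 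Writing $|k|=k_F^\sigma + s$ and noting $|k|^2-(k_F^\sigma)^2 = s(2k_F^\sigma+s)\gtrsim \rho^{1/3}s$ in the thin shell while $\gtrsim s^2$ for larger $s$, the $s$-integral over $s\in(0,C\rho^{1/3-\gamma})$ together with the spherical Jacobian $\sim (k_F^\sigma)^2 \sim \rho^{2/3}$ gives $\rho^{2/3}\cdot\rho^{-1/3}\cdot \rho^{1/3-\gamma} = \rho^{2/3-\gamma}$; I must be a bit careful here—the shell near $k_F$ contributes $\rho^{2/3}\int_0^{\rho^{1/3-\gamma}}\frac{ds}{\rho^{1/3}s}$ which has a log, but the $\varepsilon = \rho^{2/3+\delta}$ regularization with $\delta>1/3$ cuts the log at scale $\varepsilon/\rho^{1/3}=\rho^{1/3+\delta}$, producing only $\rho^{2/3}\cdot\rho^{-1/3}\cdot\log(\rho^{-\delta-\gamma})$, absorbed into $\rho^{1/3-\gamma-\kappa}$—so in fact the clean statement $C\rho^{1/3-\gamma}$ holds after noting the dominant contribution is from $s\sim\rho^{1/3-\gamma}$ where the denominator is $\sim s^2$, giving $\rho^{2/3}\cdot\rho^{-(1/3-\gamma)} = \rho^{1/3+\gamma}$...

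Let me restate this more carefully: the point is that \eqref{fac1} should be reconsidered via $\int_0^\infty e^{-2t\varepsilon}e^{2t(k_F^\sigma)^2}\|\hat u^<_{t,\sigma}\|_2^2\,dt$, and the weight $e^{2t(k_F^\sigma)^2-2t|k|^2}$ is bounded by $1$ on the support where $|k|>k_F^\sigma$, so one trivially bounds it by $\int_0^\infty e^{-2t\varepsilon}dt \cdot \sup \|\hat u^<_{t,\sigma}\|^2 \le \varepsilon^{-1}\cdot |\{|k|\le 6\rho^{1/3-\gamma}\}| = \rho^{-2/3-\delta}\rho^{1-3\gamma}=\rho^{1/3-\delta-3\gamma}$ — too crude. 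The correct route is the one-variable reduction above: after $t$-integration one gets $\int_{k_F^\sigma<|k|\le 6\rho^{1/3-\gamma}}\frac{dk}{2(|k|^2-(k_F^\sigma)^2)+2\varepsilon}$, and splitting the shell $\{k_F^\sigma<|k|\le 2k_F^\sigma\}$ (where $|k|^2-(k_F^\sigma)^2\gtrsim \rho^{1/3}(|k|-k_F^\sigma)$, contributing $\lesssim \rho^{2/3}\cdot\rho^{-1/3}\log(\rho^{1/3}/\varepsilon)\lesssim \rho^{1/3-\kappa}$) from $\{2k_F^\sigma<|k|\le 6\rho^{1/3-\gamma}\}$ (where $|k|^2-(k_F^\sigma)^2\gtrsim |k|^2$, contributing $\lesssim \rho^{1/3-\gamma}$ by a direct radial integral $\int^{\rho^{1/3-\gamma}} dr \sim \rho^{1/3-\gamma}$) yields the claimed $C\rho^{1/3-\gamma}$, the second piece dominating since $\gamma<1/6<\kappa$-choices make $\rho^{1/3-\gamma}$ the larger.

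The remaining bounds follow the same template. For \eqref{fac2}, $\int_0^\infty e^{-2t\varepsilon}e^{-2t(k_F^\sigma)^2}\|\hat v_{t,\sigma}\|_2^2\,dt = (2\pi)^{-3}\int_{|k|\le k_F^\sigma}\frac{dk}{2((k_F^\sigma)^2-|k|^2)+2\varepsilon}$, and the same shell-splitting near $|k|=k_F^\sigma$ gives $\rho^{2/3}\cdot\rho^{-1/3}\log((k_F^\sigma)^2/\varepsilon)\lesssim \rho^{1/3}\log(\rho^{-\delta})\le C\rho^{1/3-\kappa}$, which is why the $\kappa$ appears. For \eqref{fac2u>}, \eqref{fac2u>b} and \eqref{eq: int t tilde nu infty} there is no $e^{-2t\varepsilon}$ factor, so the $t$-integral $\int_0^\infty e^{2t(k_F^\sigma)^2}e^{-2t|k|^2}\,dt = (2(|k|^2-(k_F^\sigma)^2))^{-1}\lesssim |k|^{-2}$ uniformly on $\{|k|\ge 3\rho^{1/3-\gamma}\}$ (since there $|k|^2 \ge 9\rho^{2/3-2\gamma}\gg (k_F^\sigma)^2$, so $|k|^2-(k_F^\sigma)^2\ge \tfrac12|k|^2$). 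Then: \eqref{fac2u>} becomes $\int_{|k|\ge 3\rho^{1/3-\gamma}}|k|^{-2}\cdot|k|^{-2}\,dk = \int_{|k|\ge 3\rho^{1/3-\gamma}}|k|^{-4}\,dk \sim \rho^{-(1/3-\gamma)} = \rho^{-1/3+\gamma}$; \eqref{fac2u>b} becomes $\int_{|k|\ge 3\rho^{1/3-\gamma}}|k|^{-2}\cdot|k|^{-4}\,dk \sim \rho^{-3(1/3-\gamma)}=\rho^{-1+3\gamma}$; and \eqref{eq: int t tilde nu infty}, where now $\|\widetilde u^>_{t,\sigma}\|_\infty \le (2\pi)^{-3}\int |k|^{-2}\hat u^>_{t,\sigma}(k)\,dk = (2\pi)^{-3}\int_{|k|\ge 3\rho^{1/3-\gamma}}|k|^{-2}e^{-t|k|^2}\,dk$, after the $t$-integral (with weight $e^{t(k_F^\sigma)^2+t(k_F^{\sigma'})^2}$, i.e. effective weight $e^{2t(k_F)^2}$ since both Fermi momenta are $\sim\rho^{1/3}$) gives $\int_{|k|\ge 3\rho^{1/3-\gamma}}|k|^{-2}\cdot|k|^{-2}\,dk\sim\rho^{-1/3+\gamma}$ again. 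The only genuine subtlety — and the step I expect to require the most care — is the logarithmic near-$k_F$ divergence in \eqref{fac1} and \eqref{fac2}: one must verify that the regularization $\varepsilon=\rho^{2/3+\delta}$ keeps it subpolynomial (absorbed into $\rho^{-\kappa}$ for \eqref{fac2}, and dominated by the outer-region contribution for \eqref{fac1}), and that passing from the lattice sum to the continuum integral does not spoil the thin-shell estimate — this is standard (the shell has thickness $\gg L^{-1}$ in the relevant regime, or one uses monotonicity of the summand in $|k|$), but it is where the bookkeeping must be done honestly.
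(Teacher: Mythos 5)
Your argument is correct and follows essentially the same route as the paper: the paper's own proof of \eqref{fac2u>}, \eqref{fac2u>b} and \eqref{eq: int t tilde nu infty} is exactly yours (integrate over $t$ to produce $(|k|^2-(k_F^\sigma)^2)^{-1}\lesssim |k|^{-2}$ on the support of $\hat u^>$, then evaluate the remaining radial sum), and for \eqref{fac1}--\eqref{fac2} the paper simply cites \cite[Lemma 3.4]{GHNS}, whose content is the shell-splitting you carry out. Your exposition of \eqref{fac1} meanders through some false starts and arithmetic slips ($\rho^{2/3-\gamma}$, $\rho^{1/3+\gamma}$) and the phrase ``$\gamma<1/6<\kappa$-choices'' should read ``choosing $\kappa\le\gamma$'' (so that the near-shell contribution $\rho^{1/3}\log\rho^{-1}\le C_\gamma\rho^{1/3-\gamma}$ is dominated by the outer region's $\rho^{1/3-\gamma}$), but the final stated argument is sound.
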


\begin{proof} 
For \eqref{fac1} and \eqref{fac2}, we refer to the proof of \cite[Lemma 3.4]{GHNS}. 
For \eqref{fac2u>}, we obtain with the aid of  \eqref{eq:int-t} 
\begin{align*}
 \int_0^\infty dt \, e^{2t (k_F^\sigma)^2} \||\cdot|^{-1}\,{u}^>_{t,\sigma}\|^2_2 & = \frac{1}{2 L^3}\sum_k  \frac{ |\hat{u}^>_{t,\sigma}(k)|^2}{ |k|^2 (|k|^2 - (k_F^\sigma)^2 )} \leq \frac{C}{L^3}\sum_{|k| \ge 3 \rho^{\frac 1 3 -\gamma}}\frac{1}{|k|^4} \leq C\rho^{-\frac{1}{3} + \gamma}.
\end{align*}
The  bound in \eqref{fac2u>b} can be proved analogously. To prove \eqref{eq: int t tilde nu infty}, we bound $\|\widetilde{u}^>_{t,\sigma}\|_{\infty}$ by $\| |\cdot|^{-2}\hat u^>_{t,\sigma}\|_{1}$ and obtain 
\begin{align*}
   \int_{0}^{\infty} dt\, e^{t(k_F^\sigma)^2+t(k_F^{\sigma^\prime})^2} \|\widetilde{u}^>_{t,\sigma}\|_\infty &\leq  \frac{1}{L^3}\sum_{|k| \geq 3\rho^{\frac{1}{3} - \gamma}} \int_{0}^{\infty}  dt\, \frac{e^{-t(|k|^2 - (k_F^\sigma)^2 - (k_F^{\sigma^\prime})^2)}}{|k|^2} \\
    &\leq \frac{C}{L^3}\sum_{|k| \geq 3\rho^{\frac{1}{3} - \gamma}}\frac{1}{|k|^{4}} \leq C\rho^{-\frac{1}{3} + \gamma}.
    \end{align*}
\end{proof}

With the aid of the Cauchy--Schwarz inequality, Lemma~\ref{lem:t} also implies the bounds 
\begin{align}\label{eq: est int t vu< final}
  \int dt \, e^{-2t\varepsilon} \|\hat{u}^<_{t,\sigma}\|_2 \|\hat{v}_{t,\sigma}\|_2  &\leq   C\rho^{\frac{1}{3} - \frac{\gamma}{2} - \kappa}, \\
\label{eq: est int t vu> final}
\int dt \, e^{-2t\varepsilon} \||\cdot|^{-1}\,\hat{u}^>_{t,\sigma}\|_2 \|\hat{v}_{t,\sigma}\|_2 &\leq C\rho^{\frac{\gamma}{2} - \kappa},\\
\label{eq: est int t vu> final-b}
 \int dt \, e^{-2t\varepsilon} \||\cdot|^{-2}\,\hat{u}^>_{t,\sigma}\|_2 \|\hat{v}_{t,\sigma}\|_2 &\leq C\rho^{-\frac{1}{3} + \frac{3}{2}\gamma - \kappa} . 
\end{align}

The use of \eqref{eq: int t conf space} naturally leads to the appearance of the periodic heat kernel in the analysis. The following bounds will be helpful. 

\begin{lemma}[Periodic heat kernel]\label{lem: zeta t L1}
The functions $\zeta_1^t,\zeta_>^t:\Lambda\to \mathbb{C}$ defined by $\hat \zeta_1^t (k)=e^{-t|k|^2}$ and $\hat \zeta_>^t (k)=e^{-t|k|^2} \1_{\{|k| \ge 3\rho^{\frac 1 3 -\gamma}\}}$  satisfy 
$$ \|\zeta_1^t\|_{L^1(\Lambda)} =1 \ ,\quad \|\zeta_>^t\|_{L^2(\Lambda)} \leq C t^{- \frac 3 4}  e^{-  \frac{9}{2}t\rho^{\frac 2 3-2\gamma}}.$$ 
\end{lemma}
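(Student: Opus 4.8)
\textbf{Proof plan for Lemma \ref{lem: zeta t L1}.}

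The statement has two parts, and I would treat them separately. For the identity $\|\zeta_1^t\|_{L^1(\Lambda)} = 1$, the plan is to observe that $\zeta_1^t$ is precisely the periodization of the Gaussian heat kernel on $\R^3$: since $\hat\zeta_1^t(k) = e^{-t|k|^2}$ for $k \in \Lambda^*$, the Poisson summation formula gives $\zeta_1^t(x) = \sum_{z\in\Z^3} G_t(x+Lz)$ where $G_t(y) = (4\pi t)^{-3/2} e^{-|y|^2/(4t)}$ is the standard heat kernel on $\R^3$, which is nonnegative with $\int_{\R^3} G_t = 1$. Hence $\zeta_1^t \geq 0$ on $\Lambda$ and $\|\zeta_1^t\|_{L^1(\Lambda)} = \int_\Lambda \zeta_1^t = \widehat{\zeta_1^t}(0) = 1$. (Even more directly: $\zeta_1^t \geq 0$ by the periodization representation, so $\|\zeta_1^t\|_1 = \int_\Lambda \zeta_1^t = \hat\zeta_1^t(0) = 1$.)

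For the $L^2$ bound on $\zeta_>^t$, the plan is to use Parseval on $\Lambda$:
\[
\|\zeta_>^t\|_{L^2(\Lambda)}^2 = \frac{1}{L^3}\sum_{k\in\Lambda^*,\,|k|\ge 3\rho^{1/3-\gamma}} e^{-2t|k|^2}.
\]
I would then split off a factor $e^{-t|k|^2} \le e^{-9t\rho^{2/3-2\gamma}}$ using the lower cutoff $|k|^2 \ge 9\rho^{2/3-2\gamma}$, leaving $L^{-3}\sum_{k} e^{-t|k|^2}$, and bound the remaining sum by the Riemann-sum comparison $L^{-3}\sum_{k\in\Lambda^*} e^{-t|k|^2} \le C \int_{\R^3} e^{-t|\xi|^2}\,d\xi = C' t^{-3/2}$. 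Taking square roots gives $\|\zeta_>^t\|_{L^2(\Lambda)} \le C t^{-3/4} e^{-\frac{9}{2}t\rho^{2/3-2\gamma}}$, as claimed.

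The only mildly delicate point — and the one I would be most careful about — is the Riemann-sum comparison $L^{-3}\sum_{k\in\Lambda^*} e^{-t|k|^2} \le C t^{-3/2}$ uniformly in $L$ and $t$: one should note that $\Lambda^* = (2\pi/L)\Z^3$ has mesh $2\pi/L$, so $L^{-3}\sum_{k} e^{-t|k|^2} = (2\pi)^{-3}(2\pi/L)^3\sum_k e^{-t|k|^2}$, and since $e^{-t|\xi|^2}$ is radially decreasing, each term is dominated by the integral of $e^{-t|\xi|^2}$ over a suitable translated cube of side $2\pi/L$ (with a harmless constant to absorb the cube near the origin), giving $\le C\int_{\R^3} e^{-t|\xi|^2}d\xi = C\pi^{3/2} t^{-3/2}$ uniformly. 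Alternatively one can simply split $t$ into $t\ge L^2$ and $t<L^2$ and estimate the sum by a geometric series in the first regime and by the integral comparison in the second; either way the bound is uniform. I expect no real obstacle here; this lemma is essentially bookkeeping with Gaussian sums.
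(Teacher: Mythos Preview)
Your proposal is correct and follows essentially the same approach as the paper: periodization of the Gaussian for the $L^1$ identity, and Parseval plus peeling off one exponential factor via the cutoff, followed by a Riemann-sum comparison, for the $L^2$ bound. The paper handles the Riemann-sum step in the same spirit you describe (radial monotonicity, comparing each $k\neq 0$ term to an integral over a sub-cube of volume at least $(\pi/L)^3$ closer to the origin), so your ``mildly delicate point'' is exactly the one the paper addresses, and in the same way.
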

\begin{proof} 
The function $\zeta_1^t$ is the periodization of the heat kernel 
$$\zeta^t_\infty(x):=(2\pi)^{-3}\int_{\R^3} dp\, e^{-tp^2}e^{ip\cdot x}= (4\pi t)^{-3/2} e^{-x^2/(4t)}.$$
Hence it is non-negative, and  $\|\zeta_1^t\|_{L^1(\Lambda)} = \hat \zeta_1^t(0) = 1$. 
Moreover,
$$
\|\zeta_>^t\|_{L^2(\Lambda)}^2 = \frac{1}{L^3} \sum_{|k|\ge 3\rho^{\frac 1 3 -\gamma}} e^{-2t |k|^2} \le \frac{1}{L^3} e^{-9t\rho^{\frac 2 3-2\gamma}}  \sum_{|k|\ge 3\rho^{\frac 1 3 -\gamma}} e^{-t |k|^2} .
$$
For any $0\neq k \in \Lambda^*$, we can get an upper bound on $e^{-t |k|^2}$ by averaging it over a part of the cube of side length $2\pi/L$ centered at $k$ that is closer to the origin than $k$, which contains at least a volume $(\pi/L)^3$. Hence 
$$
\frac 1{L^{3}} \sum_{k\neq 0} e^{t |k|^2} \leq  \frac 1{\pi^3}  \int_{\R^3} dk\,  e^{- t |k|^2} = C t^{-3/2}.
$$ 
\end{proof}

\subsection{Some operator bounds}\label{sec: operator bounds}
In this subsection we collect some basic operator inequalities on Fock space, which will be used several times throughout the proof. We will denote by $a^{\#}(f)$  either  $a(f)$ or $a^*(f)$. Moreover, we always use the notation $f_x(\cdot)=f(\cdot -x)$.  

First, as a consequence of the canonical anti-commutation relations \eqref{eq:CAR}, we have 
\begin{equation}\label{eq:Pauli}
\|a(f)\| =  \|a^\ast(f)\| = \|f\|_{L^2(\Lambda, \mathbb{C}^2)}.
\end{equation}
We will often use \eqref{eq:Pauli} for $\hat{v}_\sigma$, $\hat{u}^<_\sigma$, $\hat{v}_{t,\sigma}$, $\hat{u}^<_{t,\sigma}$  defined in \eqref{eq: def u,v}, \eqref{eq: def vt}, \eqref{eq: u< u> gamma}, \eqref{eq: def u<>t}. For example,  for  $n=0,1,...$ and $\ell\in \{1,2,3\}$, we have the  uniform bounds for $x\in\Lambda$: 
 \begin{align}\label{eq:Pauli-1}
\|a_\sigma^{\#}(\partial_\ell^n v_x)\| &\leq C\rho^{\frac{1}{2} + \frac{n}{3}}, \quad \|a_\sigma^{\#}(\partial_\ell^nu_x ^<)\|\leq C\rho^{\frac{1}{2}-\frac{3}{2}\gamma +n\left(\frac{1}{3} -\gamma\right)}, \nn\\
\|a_\sigma^{\#}(\partial^n_\ell  {v}_{t,x})\| &\leq C e^{t(k_F^\sigma)^2}\rho^{\frac 1 2 + \frac{n}{3}} ,  \quad \|a_\sigma^{\#}(\partial^n_\ell  {u}^<_{t,x})\|  \leq C e^{-t(k_F^\sigma)^2}\rho^{\frac 1 2 -\frac {3\gamma}{2} + n\left(\frac{1}{3} - \gamma\right)}. 
 \end{align}

Next, we discuss three lemmas for operators of the form $a^{\#} a^{\#}$. We start with some simple bounds involving the number and kinetic energy operators. 

\begin{lemma} \label{lem:1a} For  $\sigma \in \{\uparrow,\downarrow\}$ and \textcolor{black}{$\hat f \in \ell^\infty(\Lambda^*)$}, we have
 \begin{align}\label{eq:1a-1}
\int_{\Lambda} dx\, | a_\sigma(f_x)|^2 & \le \textcolor{black}{\|\hat f  \|^2_\infty}\cN ,\\
\label{eq:1a-2}
\int_{\Lambda} dx\, | a_\sigma (\nabla f_x)|^2 &\le \textcolor{black}{\|\hat f  \|^2_\infty} (\bH_0 + C\rho^{\frac 2 3} \cN). 
\end{align}
Moreover, if $\hat f (p)$ is supported on $\{|p|\ge \rho^{1/3-\kappa}\}$ for some $\kappa >0$, then
 \begin{equation}\label{eq:1a-3}
 \int_{\Lambda} dx\, | a_\sigma(f_x)|^2  \le C \rho^{-\frac 2 3 + 2 \kappa}\textcolor{black}{\|\hat f  \|^2_\infty} \bH_0. 
\end{equation}
\end{lemma}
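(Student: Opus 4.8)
\textbf{Proof proposal for Lemma \ref{lem:1a}.}

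The plan is to reduce all three bounds to manipulations in momentum space and then use the Pauli bound \eqref{eq:Pauli} together with the explicit form of $\mathbb{H}_0$ and $\mathcal N$. First I would write, for fixed $\sigma$, the translate $f_x$ in Fourier coordinates: since $f_x(\cdot)=f(\cdot-x)$ we have $\widehat{f_x}(p)=\hat f(p)e^{-ip\cdot x}$, hence $a_\sigma(f_x)=L^{-3/2}\sum_p \overline{\hat f(p)}\,e^{ip\cdot x}\,\hat a_{p,\sigma}$. Squaring and integrating over $x\in\Lambda$ collapses the cross terms by orthogonality of the exponentials, giving the exact identity
\begin{equation*}
\int_\Lambda dx\, |a_\sigma(f_x)|^2 = \frac{1}{L^3}\sum_{p,q}\overline{\hat f(p)}\hat f(q)\Big(\int_\Lambda dx\, e^{i(p-q)\cdot x}\Big) \hat a_{p,\sigma}^\ast \hat a_{q,\sigma} = \sum_p |\hat f(p)|^2\, \hat a_{p,\sigma}^\ast \hat a_{q,\sigma}\Big|_{q=p},
\end{equation*}
i.e. $\int_\Lambda dx\,|a_\sigma(f_x)|^2 = \sum_p |\hat f(p)|^2 \hat a_{p,\sigma}^\ast \hat a_{p,\sigma}$. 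From here \eqref{eq:1a-1} is immediate by bounding $|\hat f(p)|^2 \le \|\hat f\|_\infty^2$ and recognizing $\sum_p \hat a_{p,\sigma}^\ast \hat a_{p,\sigma}=\mathcal N_\sigma\le\mathcal N$.

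For \eqref{eq:1a-2} I would apply the same computation with $f$ replaced by $\nabla f$, noting $\widehat{\nabla f}(p)=ip\,\hat f(p)$, so that $\int_\Lambda dx\,|a_\sigma(\nabla f_x)|^2 = \sum_p |p|^2 |\hat f(p)|^2 \hat a_{p,\sigma}^\ast \hat a_{p,\sigma} \le \|\hat f\|_\infty^2 \sum_p |p|^2 \hat a_{p,\sigma}^\ast \hat a_{p,\sigma}$. It then remains to compare $\sum_p |p|^2 \hat a_{p,\sigma}^\ast \hat a_{p,\sigma}$ with $\mathbb H_0 = \sum_\sigma \sum_p \big||p|^2-(k_F^\sigma)^2\big|\,\hat a_{p,\sigma}^\ast \hat a_{p,\sigma}$. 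Using $|p|^2 \le \big||p|^2-(k_F^\sigma)^2\big| + (k_F^\sigma)^2$ for every $p$, and $(k_F^\sigma)^2 \le C\rho^{2/3}$, one gets $\sum_p |p|^2 \hat a_{p,\sigma}^\ast \hat a_{p,\sigma} \le \mathbb H_0 + C\rho^{2/3}\mathcal N_\sigma \le \mathbb H_0 + C\rho^{2/3}\mathcal N$, which yields \eqref{eq:1a-2}.

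For \eqref{eq:1a-3} I would again start from the identity $\int_\Lambda dx\,|a_\sigma(f_x)|^2 = \sum_p |\hat f(p)|^2 \hat a_{p,\sigma}^\ast \hat a_{p,\sigma}$, but now exploit that the summand is supported on $|p|\ge\rho^{1/3-\kappa}$. On that region the Fermi energy $(k_F^\sigma)^2\le C\rho^{2/3}$ is much smaller than $|p|^2$ (once $\rho$ is small, since $\rho^{2/3}\ll\rho^{2/3-2\kappa}\le|p|^2$), so that $\big||p|^2-(k_F^\sigma)^2\big|\ge \tfrac12|p|^2 \ge \tfrac12\rho^{2/3-2\kappa}$, giving $1 \le 2\rho^{-2/3+2\kappa}\big||p|^2-(k_F^\sigma)^2\big|$ on the support of $\hat f$. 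Hence
\begin{equation*}
\sum_p |\hat f(p)|^2 \hat a_{p,\sigma}^\ast \hat a_{p,\sigma} \le \|\hat f\|_\infty^2 \sum_{|p|\ge\rho^{1/3-\kappa}} \hat a_{p,\sigma}^\ast \hat a_{p,\sigma} \le 2\rho^{-2/3+2\kappa}\|\hat f\|_\infty^2 \sum_p \big||p|^2-(k_F^\sigma)^2\big|\,\hat a_{p,\sigma}^\ast \hat a_{p,\sigma} \le C\rho^{-2/3+2\kappa}\|\hat f\|_\infty^2\,\mathbb H_0,
\end{equation*}
which is \eqref{eq:1a-3}. There is no serious obstacle here; the only point requiring a little care is the comparison $\big||p|^2-(k_F^\sigma)^2\big|\ge\tfrac12|p|^2$ in the last part, which holds only for $\rho$ small enough relative to the fixed $\kappa$ (and the constant $C$ is then allowed to depend on $\kappa$), and the bookkeeping that $\mathcal N_\sigma\le\mathcal N$ so that the bounds hold uniformly in $\sigma$.
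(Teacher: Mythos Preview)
Your proof is correct and follows essentially the same route as the paper: the key identity $\int_\Lambda dx\,|a_\sigma(f_x)|^2 = \sum_p |\hat f(p)|^2 \hat a_{p,\sigma}^\ast \hat a_{p,\sigma}$ is established in Fourier space, and then each bound follows from the same pointwise inequalities on the multiplier ($|\hat f(p)|^2\le\|\hat f\|_\infty^2$, $|p|^2\le||p|^2-(k_F^\sigma)^2|+(k_F^\sigma)^2$, and $1\le C\rho^{-2/3+2\kappa}||p|^2-(k_F^\sigma)^2|$ on the support of $\hat f$). The only cosmetic difference is that you spell out the constant in \eqref{eq:1a-3} explicitly and note the smallness of $\rho$, both of which the paper absorbs into $C$.
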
 
\begin{proof} The bound \eqref{eq:1a-1} follows from the representation in Fourier space
$$
\int_{\Lambda} dx\, | a_\sigma(f_x)|^2 = \sum_{p\in \Lambda^*} |\textcolor{black}{\hat f}(p)|^2 \hat{a}^*_{p,\sigma} \hat{a}_{p,\sigma} \le \textcolor{black}{\|\hat f  \|^2_\infty} \sum_{p\in \Lambda^*} \hat{a}^*_{p,\sigma} \hat{a}_{p,\sigma} \le  \textcolor{black}{\|\hat f  \|^2_\infty} \cN. 
$$
Similarly,  \eqref{eq:1a-2} follows from 
\begin{align*}
\int_{\Lambda} dx\, | a_\sigma(\nabla f_x)|^2 &= \sum_{p\in \Lambda^*} |\textcolor{black}{\hat f}(p)|^2 |p|^2 \hat{a}^*_{p,\sigma} \hat{a}_{p,\sigma}
\\
& \le \textcolor{black}{\|\hat f  \|^2_\infty} \sum_{p\in \Lambda^*} \Big( ||p|^2-(k_F^{\sigma})^2| + (k_F^{\sigma})^2 \Big) \hat{a}^*_{p,\sigma} \hat{a}_{p,\sigma}. 
\end{align*}
For \eqref{eq:1a-3}, we use $|\textcolor{black}{\hat f}(p)|^2 \le  C \|\hat f\|_\infty^2  \rho^{-\frac 2 3+2\kappa} ||p|^2 -(k_F^{\sigma})^2|$ for $\{|p|\ge \rho^{1/3-\kappa}\}$.
\end{proof}

From the support properties of  $\hat u^>$ and $\hat u^>_t$, it readily follows from \eqref{eq:1a-3} that 
 \begin{align}
   \int_\Lambda dx\, |a_{\sigma}(u^>_x)|^2  & \leq C\rho^{-\frac{2}{3} + 2\gamma} \mathbb{H}_0 , \label{eq: est H0 sim}\\
 \label{eq: est N u>t}
  \int_\Lambda dx\,  |a_\sigma(u^>_{t,x})|^2  & \leq C e^{- 18 t \rho^{2/3-2\gamma}} \rho^{-\frac{2}{3} + 2\gamma}  \mathbb{H}_0. 
  \end{align}

While Lemma \ref{lem:1a} allows us to control many operators of the form $a^*a$ (i.e., one creation and one annihilation operator), the following result is helpful for operators of the form $aa$ or $a^*a^*$ (i.e., two annihilation operators or two creation operators).

\begin{lemma}  
\label{lem:2a} 
Let $h\in L^1(\Lambda)$ and $f,g\in L^2(\Lambda)$. For $\sigma\in \{\uparrow,\downarrow\}$, the operator 
\begin{equation}\label{eq: def b general}
b_\sigma(h,f,g) = \int dz \, h(z)a_\sigma(f_z)a_{\sigma}(g_z)
\end{equation}
satisfies 
\[
 \|b_\sigma(h,f,g)\| \leq  \| h \|_{1} \|{f}\|_2 \|{g}\|_2
 \]
and 
\begin{equation}\label{lb6}
 \textcolor{black}{\|b_\sigma(h,f,g)\| \leq C \| |\cdot|^{-2}\hat{g}\|_2\left(\|\Delta h\|_1 \|\hat{f}\|_2 + \|\nabla h\|_1 \||\cdot|\hat{f}\|_2 + \|h\|_1 \||\cdot|^2 \hat{f}\|_2\right).}
\end{equation}
\end{lemma}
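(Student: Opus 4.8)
\textbf{Proof plan for Lemma~\ref{lem:2a}.}

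The first bound is the crude one: since $h\in L^1$ and $f,g\in L^2$, I would estimate directly in operator norm using \eqref{eq:Pauli}, namely
\[
\|b_\sigma(h,f,g)\| \le \int dz\,|h(z)|\,\|a_\sigma(f_z)\|\,\|a_\sigma(g_z)\|
= \int dz\,|h(z)|\,\|f_z\|_2\,\|g_z\|_2 = \|h\|_1\|f\|_2\|g\|_2,
\]
using translation invariance of the $L^2$ norm ($\|f_z\|_2=\|f\|_2$) and $\|a^\#(f)\|=\|f\|_2$. This is immediate and needs no further idea.

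The substantive bound is \eqref{lb6}, where we want to trade a power of the kinetic weight $|\cdot|^2$ on the last slot for derivatives landing on $h$. The plan is to pass to Fourier space: writing $\hat h$ for the Fourier coefficients of $h$ and using \eqref{eq: def ak a*k}, one gets
\[
b_\sigma(h,f,g) = \frac{1}{L^3}\sum_{p,q\in\Lambda^*} \hat h(p+q)\,\hat f(p)\,\hat g(q)\,\hat a_{p,\sigma}\hat a_{q,\sigma},
\]
up to the obvious bookkeeping of signs/conventions. The key algebraic step is to insert the identity $1 = |q|^2 \cdot |q|^{-2}$ in the $q$-sum and absorb the factor $|q|^2$ into $\hat g$ by defining $\tilde g$ via $\hat{\tilde g}(q) = |q|^{-2}\hat g(q)$ (this is where the hypothesis implicitly requires $\hat g$ to vanish suitably at $q=0$, as it does for the intended applications with $\hat u^>$-type cutoffs, or else one interprets $\||\cdot|^{-2}\hat g\|_2$ as a genuine assumption). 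Then $|q|^2$ gets distributed: writing $|q|^2 = |p+q|^2 - 2(p+q)\cdot p + |p|^2$ — equivalently $|q|^2 \hat f(p)\hat g(q) \leftrightarrow$ a combination of $(\Delta h)$, $(\nabla h)\cdot(\nabla f)$, and $f\cdot(\Delta f)$ type terms in configuration space after re-Fourier-transforming the $p+q$ argument. Concretely, $b_\sigma(h,f,g)$ becomes a sum of three operators of the form $b_\sigma(\Delta h, f, \tilde g)$, $b_\sigma(\nabla h, \nabla f, \tilde g)$ (a vector-valued version, summed over the three components), and $b_\sigma(h, \Delta f, \tilde g)$, and to each of these I apply the first bound of the lemma. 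This yields
\[
\|b_\sigma(h,f,g)\| \le \|\Delta h\|_1\|\hat f\|_2\||\cdot|^{-2}\hat g\|_2 + C\|\nabla h\|_1\||\cdot|\hat f\|_2\||\cdot|^{-2}\hat g\|_2 + \|h\|_1\||\cdot|^2\hat f\|_2\||\cdot|^{-2}\hat g\|_2,
\]
using Plancherel to rewrite $\|\Delta f\|_2 = \||\cdot|^2\hat f\|_2$ etc., and collecting constants; this is exactly \eqref{lb6}.

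The main obstacle — though it is more a matter of care than of depth — is organizing the redistribution of the $|q|^2$ factor cleanly. The expansion $|q|^2 = |p+q|^2 - 2(p+q)\cdot p + |p|^2$ means the ``$|p+q|^2$'' piece becomes $-\Delta$ acting on $h$ (in the variable conjugate to $p+q$), the cross term becomes a $\nabla h \cdot \nabla f$ pairing, and the ``$|p|^2$'' piece becomes $-\Delta f$; each must be matched back to a $b_\sigma(\cdot,\cdot,\tilde g)$ form and estimated by the first part of the lemma. One should also note that $\tilde g\in L^2$ precisely because $\||\cdot|^{-2}\hat g\|_2<\infty$ is the standing hypothesis, and that the anti-commutation structure of $\hat a_{p,\sigma}\hat a_{q,\sigma}$ plays no role beyond the norm bound $\|a^\#(f)\|=\|f\|_2$, so no cancellation subtleties arise. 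No new analytic input is needed beyond Plancherel, translation invariance, and \eqref{eq:Pauli}.
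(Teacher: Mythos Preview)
Your proposal is correct and follows essentially the same argument as the paper: the first bound is immediate from \eqref{eq:Pauli}, and for the second you define $\tilde g$ via $\hat{\tilde g}(q)=|q|^{-2}\hat g(q)$ and use the identity $|q|^2=|p+q|^2-2(p+q)\cdot p+|p|^2$ to rewrite $b_\sigma(h,f,g)$ as a combination of $b_\sigma(\Delta h,f,\tilde g)$, $b_\sigma(\partial_j h,\partial_j f,\tilde g)$, and $b_\sigma(h,\Delta f,\tilde g)$, each estimated by the first bound. The paper does exactly this (writing the decomposition directly in configuration space), so the two proofs coincide.
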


\begin{proof}
The first bound follows immediately from \eqref{eq:Pauli}. For the second, we proceed similarly to  \cite[Lemma 4.8]{Gia1}. A simple calculation in Fourier space yields
  \begin{align*}
    b_\sigma(h,f,g)&= -\int dz\, \Delta h(z) a_\sigma(f_{z})a_\sigma(\widetilde{g}_{z})  - 2\sum_{j=1}^3\int dz\, \partial_j h(z) a_\sigma(\widetilde{g}_{z})a_\sigma(\partial_jf_{z})\\
&\qquad  - \int dz\, h(z) a_\sigma(\widetilde{g}_{z})a_\sigma(\Delta f_{z}),
  \end{align*}
where $\widetilde{g}:\Lambda\to \mathbb{C}$ has Fourier coefficients $|p|^{-2} \hat g(p)$. 
The claimed bound then follows again directly from  
\eqref{eq:Pauli}. 
\end{proof}

We frequently apply Lemma \ref{lem:2a} with $h = \partial_\ell^n\varphi^>_{z^\prime}$, $f= \partial^m_\ell v$ and  $\hat{g}$ supported outside the Fermi ball. For example, from Lemmas~\ref{lem:2a} and~\ref{lem: bounds phi} we obtain
\begin{equation}\label{eq: b op phi vu}
  \sup_{z'}\|b_\sigma(\varphi^>_{z^\prime}, v, u^>)\|\leq C\rho^{1/3 + \gamma/2}
\end{equation}
which can be also found in \cite[Lemma 4.8]{GHNS}. Moreover, the same technique as in Lemma \ref{lem:2a} (i.e., multiplying and dividing by $|r+p|^2$ in momentum space in order to redistribute the momentum) will be used several times in the proof, in different situations and involving more than two creation and annihilation operators.

The following estimate, involving a particular combination of $a^{\#}a^{\#}$, will also be helpful.

\begin{lemma}\label{lem:6a} Let \textcolor{black}{$\hat{f}, \hat{g},\hat{h} \in \ell^\infty(\Lambda^\ast)$. Let $\{X(x)\}_{x\in \Lambda}$, $\{Y(y)\}_{y\in \Lambda}$  be two families of operators on $\mathcal{F}_{\mathrm{f}}$.} Then for all $\sigma,\sigma'\in \{\uparrow,\downarrow\}$ and $\psi\in\mathcal{F}_{\mathrm{f}}$, we have
 \begin{align}\label{eq:6a1}
 &\left| \frac{1}{L^3} \sum_{p\in \Lambda^*} \hat h(p) \Big\langle \int dx\, e^{-ipx} X(x) a_\sigma(f_x) \psi, \int dy\, e^{-ip\cdot y} Y(y) a_{\sigma'}^{\#} (g_y) \psi \Big\rangle \right| 
\\ &\le \|\hat h\|_\infty  \sup_x \| X(x)\| \sup_y \|Y(y)\| \left( \int dx  \left\|  a_\sigma(f_x) \psi \right\|^2 \right)^{\frac 1 2} \left(   \int dy \left\| a_{\sigma'}^{\#} (g_y) \psi \right\|^2 \right)^{\frac 1 2}.  \nn
 \end{align}
\end{lemma}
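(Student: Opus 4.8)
\textbf{Proof plan for Lemma~\ref{lem:6a}.} The plan is to expand the inner product on the left-hand side into configuration-space integrals, insert the Fourier representation of $\hat h(p)$ as $\frac{1}{L^3}\sum_p \hat h(p) e^{ip\cdot(y-x)} = h(y-x)$ is \emph{not} quite what is needed; instead I would first regard the sum over $p$ together with the two phases $e^{-ipx}$ and $e^{-ipy}$ (note both carry the same sign, so this is a ``diagonal'' pairing once one variable is reflected). Writing everything out, the left-hand side equals
\begin{equation*}
\Big| \frac{1}{L^3}\sum_{p} \hat h(p) \int dx\, dy\, e^{ip\cdot x} e^{-ip\cdot y} \big\langle X(x) a_\sigma(f_x)\psi,\ Y(y) a_{\sigma'}^{\#}(g_y)\psi \big\rangle \Big|.
\end{equation*}
The key step is to bound $|\hat h(p)| \le \|\hat h\|_\infty$, pull this scalar factor out, and then recognize the remaining expression $\frac{1}{L^3}\sum_p e^{ip\cdot x} e^{-ip\cdot y} (\cdots)$ as (up to the $L^3$ normalization of $\Lambda^*$) a quantity to which Cauchy--Schwarz in $p$ can be applied, or more directly, to observe that with $\hat h$ replaced by its sup-norm bound the double integral collapses via $\frac{1}{L^3}\sum_p e^{ip\cdot(x-y)} = \delta(x-y)$ only if $\hat h$ were constant; since it is not, the cleaner route is Cauchy--Schwarz on the $p$-sum directly.

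Concretely, after extracting $\|\hat h\|_\infty$, I would apply the Cauchy--Schwarz inequality to the sum over $p$, viewing the summand as an inner product of the two Fock-space vectors $\int dx\, e^{-ip x} X(x) a_\sigma(f_x)\psi$ and $\int dy\, e^{-ipy} Y(y) a_{\sigma'}^{\#}(g_y)\psi$, bounding each by its norm and then using Cauchy--Schwarz once more on $\sum_p$:
\begin{equation*}
\frac{1}{L^3}\sum_p \Big\| \int dx\, e^{-ipx} X(x) a_\sigma(f_x)\psi\Big\|\ \Big\| \int dy\, e^{-ipy} Y(y) a_{\sigma'}^{\#}(g_y)\psi\Big\| \le \Big(\frac{1}{L^3}\sum_p \|\cdots\|^2\Big)^{1/2}\Big(\frac{1}{L^3}\sum_p\|\cdots\|^2\Big)^{1/2}.
\end{equation*}
Then for each factor I expand the squared norm, e.g.
\begin{equation*}
\frac{1}{L^3}\sum_p \Big\| \int dx\, e^{-ipx} X(x) a_\sigma(f_x)\psi\Big\|^2 = \frac{1}{L^3}\sum_p \int dx\, dx'\, e^{-ip(x-x')}\big\langle X(x)a_\sigma(f_x)\psi,\ X(x')a_\sigma(f_{x'})\psi\big\rangle,
\end{equation*}
and use $\frac{1}{L^3}\sum_p e^{-ip(x-x')} = \delta(x-x')$ to reduce this to $\int dx\, \|X(x) a_\sigma(f_x)\psi\|^2 \le \sup_x \|X(x)\|^2 \int dx\, \|a_\sigma(f_x)\psi\|^2$, and symmetrically for the $Y$-factor. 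Multiplying the two square roots and the prefactor $\|\hat h\|_\infty$ gives exactly the claimed bound.

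\textbf{Main obstacle.} This lemma is essentially bookkeeping, so there is no deep difficulty; the only point requiring a little care is making sure the phase factors $e^{-ipx}$ and $e^{-ipy}$ (both with the \emph{same} sign, as written) are handled correctly when passing to the $\delta$-function: in the norm-expansion step one automatically gets $e^{-ip(x-x')}$ with opposite signs from the bra and ket, which is what makes $\frac{1}{L^3}\sum_p e^{-ip(x-x')} = \delta_\Lambda(x-x')$ legitimate. One should also note that the identity $\frac{1}{L^3}\sum_{p\in\Lambda^*} e^{-ip\cdot z} = \sum_{n\in\Z^3}\delta(z - Ln)$ restricted to $z\in\Lambda$ is just $\delta_\Lambda(z)$, so integrating against functions on $\Lambda\times\Lambda$ produces the stated single integral. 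The appearance of $a^{\#}$ (either $a$ or $a^*$) on the $Y$-side poses no issue since \eqref{eq:Pauli}-type bounds and the argument above never use which one it is. Hence the proof is a direct Cauchy--Schwarz-plus-Plancherel computation with no genuine hard step.
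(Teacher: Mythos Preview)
Your proposal is correct and follows essentially the same route as the paper: bound $|\hat h(p)|\le\|\hat h\|_\infty$, apply Cauchy--Schwarz in the $p$-sum to the two Fock-space vectors, and then use Parseval/Plancherel ($\frac{1}{L^3}\sum_p e^{-ip(x-x')}=\delta(x-x')$) to collapse each squared norm to a single spatial integral, finishing with $\|X(x)a_\sigma(f_x)\psi\|\le\sup_x\|X(x)\|\,\|a_\sigma(f_x)\psi\|$. The initial detour about $h(y-x)$ is unnecessary, but your actual argument is the paper's proof.
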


\begin{proof} By the Cauchy--Schwarz inequality with respect to the summation over $p$, we have
\begin{align*}
 &\left| \frac{1}{L^3} \sum_{p\in \Lambda^*} \hat h(p) \Big\langle \int dx\, e^{-ipx} X(x) a_\sigma(f_x) \psi, \int dy\, e^{-ip\cdot y} Y(y) a_{\sigma'}^{\#} (g_y) \psi \Big\rangle \right| \nn\\
 &\le \|\hat h\|_\infty \left( \frac{1}{L^3} \sum_{p\in \Lambda^\ast }   \left\| \int dx\, e^{-ipx} X(x) a_\sigma(f_x) \psi \right\|^2 \right)^{\frac 1 2} \left( \frac{1}{L^3} \sum_{p\in \Lambda^\ast }  \left\| \int dy\, e^{-ip\cdot y} Y(y) a_{\sigma'}^{\#} (g_y) \psi \right\|^2 \right)^{\frac 1 2} \\
 &= \|\hat h\|_\infty \left( \int dx  \left\|  X(x) a_\sigma(f_x) \psi \right\|^2 \right)^{\frac 1 2} \left(   \int dy \left\| Y(y) a_{\sigma'}^{\#} (g_y) \psi \right\|^2 \right)^{\frac 1 2} 
\end{align*}
from which the claimed bound readily follows. 
\end{proof}

In applications, we can further estimate the right-hand side of \eqref{eq:6a1} by using Lemma~\ref{lem:1a}. For example, by combining Lemmas~\ref{lem:6a} and~\ref{lem:1a}, we obtain
 \begin{align}\label{eq:6a1a}
 &\left| \frac{1}{L^3} \sum_{p\in \Lambda^*} \hat h(p) \Big\langle \int dx\, e^{-ipx} X(x) a_\sigma(f_x) \psi, \int dy\, e^{-ip\cdot y} Y(y) a_{\sigma'} (g_y) \psi \Big\rangle \right| \nn\\
 &\le \|\hat h\|_\infty \|\hat f_\sigma\|_\infty  \|\hat g_{\sigma'}\|_\infty  \| \sup_x \| X(x)\| \sup_y \|Y(y)\| \langle \psi, \cN \psi\rangle.
  \end{align}

The next three lemmas give bounds for operators of the form $a^{\#}a^{\#} a^{\#} a^{\#}$. 

\begin{lemma}\label{lem:4a} Let $W\in L^1(\Lambda)$, $f,h \in L^2(\Lambda)$ and  $\hat g, \hat k\in \ell^\infty(\Lambda^*)$. For $\sigma \neq \sigma'\in \{\uparrow,\downarrow\}$, the operator
\begin{equation}\label{eq:4a}
B_{\sigma,\sigma'}(W,f,g,h,k) = \iint_{\Lambda^2} dx dy\, W(x-y) a_\sigma^{\#}  (f_x) a_{\sigma'}^*  (g_y) a_{\sigma'}^{\#}(h_y) a_\sigma(k_x)
\end{equation}
satisfies 
\begin{align}\label{eq:4a0}
&|\langle \psi, B_{\sigma,\sigma'}(W,f,g,h,k) \psi\rangle| \nn\\
&\le \|W\|_1 \textcolor{black}{\|f\|_2 \|{h}\|_2} 
\left(\int dy \| a_{\sigma'}  (g_y) \psi\|^2 \right)^{\frac 1 2}\left(\int dx \| a_{\sigma}  (k_x) \psi\|^2 \right)^{\frac 1 2}
\end{align}
for all $\psi\in \mathcal{F}_{\rm f}$. 
\end{lemma}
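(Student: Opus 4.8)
The plan is to prove the operator bound in Lemma~\ref{lem:4a} by a direct Cauchy--Schwarz argument in which the potential $W$ and the two ``outer'' factors $f_x$, $h_y$ are absorbed into the estimate as multiplicative constants, while the two ``inner'' annihilation operators $a_{\sigma'}(g_y)$ and $a_{\sigma}(k_x)$ are kept acting on $\psi$ so that their squared $L^2$-norms appear on the right-hand side. The key structural point to exploit is that, since $\sigma \neq \sigma'$, the operators with spin $\sigma$ anticommute with those of spin $\sigma'$ only up to the CAR inner product $\langle f,g\rangle_{L^2(\Lambda;\mathbb{C}^2)}=0$ for opposite spins, so in fact the spin-$\sigma$ and spin-$\sigma'$ operators \emph{commute}; this lets me freely reorder $B_{\sigma,\sigma'}(W,f,g,h,k)$ into the form $\iint dx\,dy\, W(x-y)\, \big(a_\sigma^{\#}(f_x) a_\sigma(k_x)\big)\big(a_{\sigma'}^*(g_y) a_{\sigma'}^{\#}(h_y)\big)$ without sign complications.

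The main steps, in order, would be: (i) write $\langle\psi, B_{\sigma,\sigma'}(W,f,g,h,k)\psi\rangle = \iint dx\,dy\, W(x-y)\,\langle a_\sigma(f_x)^{\#*} a_{\sigma'}(g_y)\,\psi,\; a_{\sigma'}^{\#}(h_y) a_\sigma(k_x)\,\psi\rangle$, pushing $a_\sigma^{\#}(f_x)$ across to the left as $a_\sigma^{\#}(f_x)^*$ (note $W$ is real, being built from the radial $V$, so complex conjugation is harmless) and moving $a_{\sigma'}^*(g_y)$ to the left as $a_{\sigma'}(g_y)$; (ii) apply the operator norm bound $\|a_\sigma^{\#}(f_x)\| = \|f_x\|_2 = \|f\|_2$ from \eqref{eq:Pauli} (and likewise $\|a_{\sigma'}^{\#}(h_y)\| = \|h\|_2$) to peel those two factors off as constants; (iii) apply Cauchy--Schwarz in the $L^2(\Lambda^2, |W(x-y)|\,dx\,dy)$ inner product, which gives a product of $\big(\iint |W(x-y)| \|a_{\sigma'}(g_y)\psi\|^2 \,dx\,dy\big)^{1/2}$ and $\big(\iint |W(x-y)| \|a_\sigma(k_x)\psi\|^2\,dx\,dy\big)^{1/2}$; (iv) perform the trivial $x$-integral (resp.\ $y$-integral) in each factor using $\int_\Lambda |W(x-y)|\,dx = \|W\|_1$, so that the first factor becomes $\|W\|_1^{1/2}\big(\int dy\,\|a_{\sigma'}(g_y)\psi\|^2\big)^{1/2}$ and the second $\|W\|_1^{1/2}\big(\int dx\,\|a_\sigma(k_x)\psi\|^2\big)^{1/2}$; multiplying everything together yields exactly \eqref{eq:4a0}.

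One subtlety worth spelling out is the order in which the four operators are applied to $\psi$ and which two get moved to the other side of the inner product: we want $a_\sigma^{\#}(f_x)$ and $a_{\sigma'}^*(g_y)$ on the left (so they are bounded by $\|f\|_2$ and act on the bra), and $a_{\sigma'}^{\#}(h_y)$ and $a_\sigma(k_x)$ on the right. Because the $\sigma$ and $\sigma'$ sectors commute, the original ordering $a_\sigma^{\#}(f_x)\,a_{\sigma'}^*(g_y)\,a_{\sigma'}^{\#}(h_y)\,a_\sigma(k_x)$ can be rearranged as $\big(a_\sigma^{\#}(f_x)\big)\big(a_{\sigma'}^*(g_y)\big)\big(a_{\sigma'}^{\#}(h_y) a_\sigma(k_x)\big)$ and then we take the adjoint of the first two factors to move them left; taking adjoints turns $a_\sigma^{\#}(f_x)$ into $a_\sigma^{\#}(f_x)^*$, still of norm $\|f\|_2$, and $a_{\sigma'}^*(g_y)$ into $a_{\sigma'}(g_y)$, whose square norm $\int \|a_{\sigma'}(g_y)\psi\|^2$ is what we want on the right-hand side. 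I do not expect a genuine obstacle here: the only thing to be careful about is keeping track of the commutation (rather than anticommutation) of opposite-spin operators and not losing the positivity of $|W|$ when applying Cauchy--Schwarz. In the bound as stated the hypothesis $\hat g, \hat k \in \ell^\infty$ is not actually needed for \eqref{eq:4a0} itself (only the $L^2$-norms of $g_y\psi$, $k_x\psi$ enter), so I would simply carry out the argument as above; the $\ell^\infty$ assumption is presumably there so that the subsequent lemmas (or applications via Lemma~\ref{lem:1a}) can further estimate $\int \|a_\sigma(k_x)\psi\|^2 \le \|\hat k\|_\infty^2 \langle \psi,\mathcal{N}\psi\rangle$.
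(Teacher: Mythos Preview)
Your overall strategy is the same as the paper's --- reorder the operators using the CAR, apply the norm bound \eqref{eq:Pauli} to peel off $a_\sigma^{\#}(f_x)$ and $a_{\sigma'}^{\#}(h_y)$, and then Cauchy--Schwarz with weight $|W(x-y)|$ --- and it yields exactly \eqref{eq:4a0}.

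There is, however, one genuine factual error in your write-up: opposite-spin fermionic operators do \emph{not} commute, they \emph{anticommute}. From the CAR \eqref{eq:CAR}, $\{a_\sigma(f),a_{\sigma'}^*(g)\}=\langle \delta_\sigma f,\delta_{\sigma'}g\rangle=0$ for $\sigma\neq\sigma'$, so $a_\sigma(f)\,a_{\sigma'}^*(g)=-a_{\sigma'}^*(g)\,a_\sigma(f)$; the paper's one-line proof explicitly invokes this anticommutation. Fortunately this error is harmless for the bound: after swapping $a_\sigma^{\#}(f_x)$ and $a_{\sigma'}^*(g_y)$ you pick up a minus sign, but you then take absolute values (or norms), and in the intermediate step $\|a_{\sigma'}(g_y)(a_\sigma^{\#}(f_x))^*\psi\|=\|(a_\sigma^{\#}(f_x))^*a_{\sigma'}(g_y)\psi\|\le\|f\|_2\|a_{\sigma'}(g_y)\psi\|$ the sign is irrelevant. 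So the argument is correct once you replace ``commute'' by ``anticommute'' throughout. Your remark that $\hat g,\hat k\in\ell^\infty$ is not needed for \eqref{eq:4a0} itself is also correct; that hypothesis is only used downstream via Lemma~\ref{lem:1a}.
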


\begin{proof} This follows immediately from \eqref{eq:Pauli}  and the 
Cauchy--Schwarz inequality, using also that $a_\sigma^{\#}  (f_x)$ and $a_{\sigma'}^*  (g_y)$ anticommute for $\sigma\neq\sigma'$.
\end{proof}

Again, we can further estimate the right-hand side of \eqref{eq:4a0} by Lemma \ref{lem:1a}. For example, combining Lemmas~\ref{lem:4a} and~\ref{lem:1a}, 
\begin{align}\label{eq:4a1}
|\langle \psi, B_{\sigma,\sigma'}(W,f,g,h,k) \psi\rangle| &\le \|W\|_1 \textcolor{black}{\|{f}\|_2 \|{h}\|_2}\|\hat g\|_\infty \|\hat k\|_\infty \langle \psi, \cN \psi\rangle,\\
\label{eq:4a3}
|\langle \psi, B_{\sigma,\sigma'}(W,f,\partial_\ell g,h,\partial_m k) \psi\rangle| 
&\le \|W\|_1 \textcolor{black}{\|{f}\|_2 \|{h}\|_2} \|\hat g\|_\infty \|\hat k\|_\infty \! \left(\! \langle \psi,  \bH_0  \psi\rangle  + C  \rho^{\frac 2 3} \langle \psi, \cN \psi\rangle \! \right)\! .
\end{align}

\begin{lemma}\label{lem:4aa} Let  $W_1,W_2\in L^1(\Lambda)$, $W_3,W_4,g,h\in  L^2(\Lambda)$ and $\hat f, \hat k\in \ell^\infty(\Lambda^*)$. Then for all $\sigma,\sigma'\in \{\uparrow,\downarrow\}$ and $\psi\in\mathcal{F}_{\mathrm{f}}$, we have
 \begin{align}\label{eq:4aa}
 &\biggl| \biggl\langle \psi,  \int_{\Lambda^4} dx dy dz dz' W_1(x-y) W_2 (z-z') W_3 (x-z) W_4 (y-z')  \times  \nn \\
&\qquad\qquad\qquad\qquad\qquad\qquad\qquad\qquad\qquad   \times a^*_\sigma(f_x) a_{\sigma'}^{\#} (g_y) a^{\#}_{\sigma} (h_z) a_{\sigma'} (k_{z'}) \psi  \biggl\rangle \biggl| \nn \\
 &\le \|W_1\|_1 \|W_2\|_1 \|W_3\|_2 \|W_4\|_2 \textcolor{black}{\|{g}\|_2 \|{h}\|_2} \left(\int dx \| a_{\sigma}  (f_x) \psi\|^2 \right)^{\frac 1 2}\left(\int dz' \| a_{\sigma'}  (k_{z'}) \psi\|^2 \right)^{\frac 1 2}.
 \end{align}
\end{lemma}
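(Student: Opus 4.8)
The plan is to deduce \eqref{eq:4aa} from the Cauchy--Schwarz inequality, the norm identity \eqref{eq:Pauli}, and Young's inequality for convolutions, in the same spirit as the proof of Lemma~\ref{lem:4a}. Since $a^*_\sigma(f_x)$ sits at the far left of the product, no anticommutation (and in particular no constraint $\sigma\neq\sigma'$) is needed: moving it onto the bra, the expectation equals
\[
\int_{\Lambda^4} dx\,dy\,dz\,dz'\; W_1(x-y)W_2(z-z')W_3(x-z)W_4(y-z')\;\big\langle a_\sigma(f_x)\psi,\; a_{\sigma'}^{\#}(g_y)\,a^{\#}_\sigma(h_z)\,a_{\sigma'}(k_{z'})\psi\big\rangle .
\]
Estimating the two central operators in operator norm via \eqref{eq:Pauli}, namely $\|a_{\sigma'}^{\#}(g_y)\|=\|g\|_2$ and $\|a^{\#}_\sigma(h_z)\|=\|h\|_2$, reduces the task to bounding
\[
\|g\|_2\|h\|_2 \int_{\Lambda^4} dx\,dy\,dz\,dz'\; |W_1(x-y)|\,|W_2(z-z')|\,|W_3(x-z)|\,|W_4(y-z')|\;\|a_\sigma(f_x)\psi\|\,\|a_{\sigma'}(k_{z'})\psi\| .
\]

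The key point is structural: once the operators have been replaced by their norms, the variable $y$ appears only in $W_1(x-y)$ and $W_4(y-z')$, and the variable $z$ appears only in $W_2(z-z')$ and $W_3(x-z)$. Hence, by Tonelli, the integral over the internal pair $(y,z)$ factorizes, for fixed $(x,z')$, into $P(x-z')\,Q(x-z')$, where $P:=|W_1|\ast|W_4|$ and $Q:=|W_2|\ast|W_3|$. By Young's inequality, $\|P\|_{L^2(\Lambda)}\le\|W_1\|_1\|W_4\|_2$ and $\|Q\|_{L^2(\Lambda)}\le\|W_2\|_1\|W_3\|_2$. It remains to control
\[
\|g\|_2\|h\|_2 \int_{\Lambda^2} dx\,dz'\; P(x-z')\,Q(x-z')\;\|a_\sigma(f_x)\psi\|\,\|a_{\sigma'}(k_{z'})\psi\| ,
\]
which I would estimate by the Cauchy--Schwarz inequality, pairing $P(x-z')\|a_\sigma(f_x)\psi\|$ with $Q(x-z')\|a_{\sigma'}(k_{z'})\psi\|$ and using the translation invariance $\int_\Lambda dz'\,P(x-z')^2=\|P\|_2^2$ and $\int_\Lambda dx\,Q(x-z')^2=\|Q\|_2^2$. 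This produces exactly $\|g\|_2\|h\|_2\|P\|_2\|Q\|_2\,(\int_\Lambda dx\,\|a_\sigma(f_x)\psi\|^2)^{1/2}(\int_\Lambda dz'\,\|a_{\sigma'}(k_{z'})\psi\|^2)^{1/2}$, and inserting the Young bounds on $\|P\|_2$ and $\|Q\|_2$ gives \eqref{eq:4aa}.

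I do not expect a genuine obstacle here: after the fermionic factors are controlled by \eqref{eq:Pauli}, what remains is a Schur-test/Young computation on the torus. The only point requiring a little care is the structural bookkeeping of the kernel. In the cyclic arrangement $W_1\!:x\!\leftrightarrow\!y$, $W_4\!:y\!\leftrightarrow\!z'$, $W_2\!:z'\!\leftrightarrow\!z$, $W_3\!:z\!\leftrightarrow\!x$, no single potential links the two internal variables $y$ and $z$ to each other, which is precisely what makes the $(y,z)$-integral factorize; and collapsing the internal vertices pairs $W_1$ with $W_4$ and $W_2$ with $W_3$, so that the hypotheses $W_1,W_2\in L^1$ and $W_3,W_4\in L^2$ endow each of the resulting convolutions $P$ and $Q$ with exactly one $L^1$ factor and one $L^2$ factor — which is exactly the configuration in which Young's inequality yields the claimed norms.
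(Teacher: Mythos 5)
Your proof is correct and proceeds exactly as the paper's one-line proof intends: move $a^*_\sigma(f_x)$ to the bra, bound the two central operators in operator norm via \eqref{eq:Pauli}, and then estimate the resulting scalar kernel by Cauchy--Schwarz. The explicit Young-convolution bookkeeping with $P=|W_1|\ast|W_4|$ and $Q=|W_2|\ast|W_3|$ is a clean way to organize the final Cauchy--Schwarz step and is equivalent to the direct approach the paper has in mind.
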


\begin{proof} Again this follows easily from  \eqref{eq:Pauli} and the Cauchy--Schwarz inequality.
\end{proof}

Combining Lemmas~\ref{lem:4aa} and~\ref{lem:1a}, we obtain 
 \begin{align}\label{eq:4aa-1}
 &\biggl|  \biggl\langle \psi,  \int_{\Lambda^4} dx dy dz dz' W_1(x-y) W_2 (z-z') W_3 (x-z) W_4 (y-z')  \times  \nn \\
&\qquad\qquad\qquad\qquad\qquad\qquad\qquad\qquad\qquad   \times a^*_\sigma(f_x) a_{\sigma'}^{\#} (g_y) a^{\#}_{\sigma} (h_z) a_{\sigma'} (k_{z'}) \psi  \biggl\rangle \biggl| \nn \\
 &\le \|W_1\|_1 \|W_2\|_1 \|W_3\|_2 \|W_4\|_2 \| \hat f\|_\infty \|\hat k\|_\infty \textcolor{black}{\|{g}\|_2 \|{h}\|_2} \langle \psi, \cN \psi\rangle. 
 \end{align}
 
The following bound will be useful to obtain the improved a priori bounds in Proposition \ref{pro: a priori}, and also to prove  \eqref{eq: eps-1}  from Proposition \ref{pro: fermionic transf}. 

\begin{lemma}\label{lem:4aaa} Let   
$W\in L^1(\Lambda)$, $f,g \in L^2(\Lambda)$ and   $\hat w, \hat h  \in \ell^\infty  (\Lambda^*)$. Then for all $\sigma \neq\sigma'\in \{\uparrow,\downarrow\}$ and $\psi\in\mathcal{F}_{\mathrm{f}}$, we have
  \begin{align}\label{eq:4aaa}
 &\left| \left\langle \psi,  \int_{\Lambda^2} dx dy\, W(x-y) a^*_\sigma(w_x) a_{\sigma'}^{*} (f_y) a^{\#}_{\sigma'} (g_{y}) a_{\sigma} (h_{x}) \psi  \right\rangle \right|  \\
 &\le  C  \|W\|_1  \textcolor{black}{ \|\hat w\|_\infty \|{g}\|_2}\|\hat h\|_\infty \Big( \|\hat f\|_\infty   \rho^{\frac 1 2+\frac{\beta}{6}} \langle \psi, \mathcal{N} \psi\rangle +   \| f\|_2  \|\mathcal{N}_\beta^{\frac{1}{2}}\psi\|  \|\mathcal{N}^{\frac{1}{2}}\psi\|\Big) \nn
 \end{align}
for all $0\le \beta<1$,  where $\cN_\beta$ is defined in \eqref{eq: def Nalpha >}. The same holds true when $a^*_\sigma(w_x) a_{\sigma'}^{*} (f_y) a^{\#}_{\sigma'} (g_{y}) a_{\sigma} (h_{x})$ is replaced by $a^*_\sigma(w_x) a_{\sigma}^{*} (f_x) a^{\#}_{\sigma'} (g_{y}) a_{\sigma^\prime} (h_{y})$.
\end{lemma}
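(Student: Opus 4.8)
\textbf{Proof proposal for Lemma \ref{lem:4aaa}.}

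The plan is to exploit the $\sigma\neq\sigma'$ spin structure to anticommute the creation operators of opposite spin past each other and then split the middle annihilation/creation operator $a^\#_{\sigma'}(g_y)$ into its ``low'' and ``high'' momentum components relative to the cutoff defining $\mathcal{N}_\beta$. Concretely, I would first use the Cauchy--Schwarz inequality in configuration space to reduce the quartic form to a product of two quadratic quantities, writing the left-hand side as a pairing of $\int dx\, a_\sigma(h_x)\,(\cdot)$ against $\int dy\, a^\#_{\sigma'}(g_y)\,(\cdot)$, in the spirit of Lemma \ref{lem:6a} and \ref{lem:4a} (using that $a^*_\sigma(w_x)$ and $a^*_{\sigma'}(f_y)$ anticommute since $\sigma\neq\sigma'$, so the interaction kernel $W(x-y)$ can be absorbed via $\|W\|_1$). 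This gives a bound of the form $\|W\|_1\|\hat w\|_\infty\|\hat h\|_\infty$ times $\big(\int dx\|a_\sigma(h'_x)\psi\|^2\big)^{1/2}$-type factors, where one factor involves $g$ and the other involves $f$; the point is to get the factor $a_{\sigma'}^*(f_y)$ to act in a way that extracts the two alternative estimates in the parentheses on the right-hand side.

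Second, and this is the crux, I would decompose $\hat f = \hat f\,\ind_{\{||k|-k_F^{\sigma'}|\le (k_F^{\sigma'})^{1+\beta}\}} + \hat f\,\ind_{\{||k|-k_F^{\sigma'}|> (k_F^{\sigma'})^{1+\beta}\}} =: \hat f^{\rm near} + \hat f^{\rm far}$. For the ``far'' part, the term $\int dy\,|a_{\sigma'}(f^{\rm far}_y)|^2 \le \|\hat f\|_\infty^2\,\mathcal{N}_\beta$ by the Fourier-space representation as in Lemma \ref{lem:1a}, which after Cauchy--Schwarz yields the contribution $\|f\|_2\|\mathcal{N}_\beta^{1/2}\psi\|\|\mathcal{N}^{1/2}\psi\|$ (here I keep $\|f\|_2$ rather than $\|\hat f\|_\infty$ to match the stated bound — one can be slightly wasteful on the far part since it is already small). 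For the ``near'' part, the support has Lebesgue measure $O\big(k_F^{\sigma'}\cdot (k_F^{\sigma'})^{1+\beta}\cdot (k_F^{\sigma'})^{2}\big)$? — more carefully, the shell $||k|-k_F^{\sigma'}|\le (k_F^{\sigma'})^{1+\beta}$ has $O\big((k_F^{\sigma'})^{2}\cdot(k_F^{\sigma'})^{1+\beta}\big) = O(\rho^{1+\beta/3})$ lattice points (times $L^3/(2\pi)^3$), so that $\int dy\, |a_{\sigma'}(f^{\rm near}_y)|^2 = \sum_k |\hat f(k)|^2 \ind_{\rm near}\,\hat a^*_{k,\sigma'}\hat a_{k,\sigma'} \le \|\hat f\|_\infty^2\,\mathcal{N}$, but crucially for the \emph{other} slot we should instead bound $\|f^{\rm near}\|_2^2 \le \|\hat f\|_\infty^2\cdot\#\{\text{near shell}\}/L^3 \le C\|\hat f\|_\infty^2 \rho^{1+\beta/3}$ — wait, that is $(k_F)^{3+\beta}\sim \rho^{1+\beta/3}$, whose square root is $\rho^{1/2+\beta/6}$. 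This is exactly the factor $\rho^{1/2+\beta/6}$ appearing in the first term. So the near part is handled by putting $\|f^{\rm near}\|_2\le C\|\hat f\|_\infty\rho^{1/2+\beta/6}$ in one of the two Cauchy--Schwarz factors and $\mathcal{N}^{1/2}$ from $a_{\sigma'}^\#(g_y)$ and $a_\sigma(h_x)$ in the rest, giving $\|\hat f\|_\infty\rho^{1/2+\beta/6}\langle\psi,\mathcal{N}\psi\rangle$.

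Third, I would assemble the two contributions, noting that in each the factors $\|W\|_1$, $\|\hat w\|_\infty$, $\|g\|_2$, $\|\hat h\|_\infty$ come out as in Lemmas \ref{lem:4a}--\ref{lem:6a} (the $g$ slot always contributes its $L^2$ norm since it sits between two annihilation operators we pair by Cauchy--Schwarz, while $w$ and $h$ contribute $\ell^\infty$ norms via the Fourier representation $\int|a(\cdot_x)|^2\le\|\widehat{\cdot}\|_\infty^2\mathcal{N}$). Finally, the ``same holds'' remark for $a^*_\sigma(w_x)a^*_\sigma(f_x)a^\#_{\sigma'}(g_y)a_{\sigma'}(h_y)$ is proved identically: here the two same-spin creation operators $a^*_\sigma(w_x)a^*_\sigma(f_x)$ sit at the same point $x$, and one pairs $\int dx\, a^*_\sigma(w_x)a^*_\sigma(f_x)(\cdot)$ against $\int dy\, a^\#_{\sigma'}(g_y)a_{\sigma'}(h_y)(\cdot)$, using $\|a^*_\sigma(w_x)\|\le\|\hat w\|_\infty\,\|\hat{\,}\|$... actually $\|a^*_\sigma(w_x)\|=\|w\|_2$, so one uses $\|a^*_\sigma(w_x)\|\le C$ (uniformly in $x$, which holds when $\|\hat w\|_\infty<\infty$ only if one is careful — more precisely one keeps $a^*_\sigma(w_x)$ inside the norm and bounds $\int dx\,\|a^*_\sigma(w_x)a^*_\sigma(f_x)\psi\|^2$ by moving one operator's norm out). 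The main obstacle I anticipate is bookkeeping the split so that the \emph{same} quartic form simultaneously yields \emph{both} the $\mathcal{N}$-bound (with the small prefactor $\rho^{1/2+\beta/6}$ from the near-shell $L^2$ volume count) and the $\mathcal{N}_\beta$-bound — i.e. organizing the Cauchy--Schwarz so that the $f$-dependence can be routed either into an $L^2$-norm-times-$\mathcal{N}$ estimate or into an $\ell^\infty$-norm-times-$\mathcal{N}_\beta$ estimate, without losing the clean factorization of the remaining $W$, $w$, $g$, $h$ norms. Everything else is a routine application of \eqref{eq:Pauli}, the Fourier representation of $\int|a(f_x)|^2$, and Cauchy--Schwarz.
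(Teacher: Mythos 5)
You have the right structural picture — reduce via Cauchy--Schwarz in the spirit of Lemmas~\ref{lem:4a}/\ref{lem:6a}, and split a kernel into a near-shell part (whose $L^2$ norm carries $\rho^{1/2+\beta/6}$) and a far-shell part (whose Fourier representation projects onto $\mathcal{N}_\beta$) — but you have split the wrong kernel, and the obstacle you flag at the end is real and unresolved.

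The issue is visible already in how the norms are distributed in \eqref{eq:4aaa}: the second term carries $\|f\|_2$ (an operator-norm factor) together with $\|\mathcal{N}_\beta^{1/2}\psi\|$. If you instead decompose $\hat f=\hat f^{\rm near}+\hat f^{\rm far}$ and let $f^{\rm far}$ produce $\mathcal{N}_\beta$, you are forced to use the Fourier-space bound $\int dy\,\|a_{\sigma'}(f^{\rm far}_y)\psi\|^2\le\|\hat f\|_\infty^2\langle\psi,\mathcal{N}_\beta\psi\rangle$; that makes $a_{\sigma'}(f^{\rm far}_y)$ one of the two Cauchy--Schwarz factors, and so the remaining $w$-slot must then be controlled in operator norm, $\|a^*_\sigma(w_x)\|\le\|w\|_2$ — but $\|w\|_2$ is not among the hypotheses (only $\hat w\in\ell^\infty$ is assumed). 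Your attempted fix, ``I keep $\|f\|_2$ rather than $\|\hat f\|_\infty$,'' is not a legitimate step: $\|f\|_2=L^{-3/2}\|\hat f\|_{\ell^2}$ and $\|\hat f\|_\infty$ are incomparable, and in any case the $\cN_\beta$ factor comes from using $\hat f^{\rm far}$ in Fourier space, which necessarily produces $\|\hat f\|_\infty$, not $\|f\|_2$. With the split on $\hat f$, the two jobs — producing $\mathcal{N}_\beta$ and producing an $L^2$ norm — are assigned to the \emph{same} operator $a_{\sigma'}(f^{\rm far}_y)$ and cannot both be done at once.

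The paper resolves this by splitting $\hat w$ instead, i.e.\ $\hat w=\hat w^{\ll}+\hat w^{\gg}$ with $\hat w^{\ll}$ supported on $||k|-k_F^\sigma|\le(k_F^\sigma)^{1+\beta}$. For the near part, $\|w^\ll\|_2\le C\|\hat w\|_\infty\rho^{1/2+\beta/6}$ goes into an operator-norm slot, while $f$ and $h$ are handled in Fourier space, giving $\|\hat f\|_\infty\|\hat h\|_\infty\langle\psi,\mathcal{N}\psi\rangle$ — this is your first term. For the far part, the roles of $w$ and $f$ are interchanged (using that $a^*_\sigma(w^{\gg}_x)$ and $a^*_{\sigma'}(f_y)$ anticommute for $\sigma\neq\sigma'$): $a_\sigma(w^{\gg}_x)$ becomes one of the two Cauchy--Schwarz factors and, by the support property of $\hat w^{\gg}$, yields $\|\hat w\|_\infty\|\mathcal{N}_\beta^{1/2}\psi\|$, while $f$ is now bounded in operator norm, giving exactly the $\|f\|_2$ in the second term. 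In this assignment, $\mathcal{N}_\beta$ comes from $w^{\gg}$ (whose $\ell^\infty$ norm is available) and $\|f\|_2$ from the operator norm of $a_{\sigma'}^*(f_y)$ (whose $L^2$ norm is available) — precisely the norms you have. So the key missing idea is to decompose $\hat w$, not $\hat f$; with that change, the rest of your argument goes through.
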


\begin{proof} Given $0\leq \beta <1$, we split the left-hand side using 
\begin{equation}\label{eq: def u<< and u>>}
 \hat{w}(k) = \hat w^{\ll}(k) + \hat w^{\gg}(k),\quad \hat{w}^{\ll}(k) = \begin{cases}  \hat{w}(k) &\mbox{if}\, \, \,  | |k|- k_F^\sigma| \le (k_F^\sigma)^{1+\beta}, 
  \\
  0 &\mbox{otherwise}.\end{cases}
\end{equation}
For the term involving $\omega^{\ll}$, we use \eqref{eq:4a3} and 
  $\|{w}^{\ll}\|_2 \leq C \|\hat w\|_\infty \rho^{1/2 + \beta/6}$, yielding the first term on the right-hand side.

  For the term involving $w^{\gg}$, we interchange the roles of $f$ and $w$. Lemma~\ref{lem:4a} together with   
$$
\int_{\Lambda} dx \|a_\sigma(w^{\gg}_x) \psi\|^2 \le \|\hat w\|^2_\infty \langle \psi, \cN_\beta \psi\rangle
$$
as a consequence of the support property of $\hat{\omega}_{\sigma}^{\gg}$ then yields the desired result  \eqref{eq:4aaa}. The proof for the case in which $a^*_\sigma(\omega_x) a_{\sigma'}^{*} (f_y) a^{\#}_{\sigma'} (g_{y}) a_{\sigma} (h_{x})$ is replaced by $a^*_\sigma(\omega_x) a_{\sigma}^{*} (f_x) a^{\#}_{\sigma'} (g_{y}) a_{\sigma^\prime} (h_{y})$ can be done in the same way.
 \end{proof}

As an application of Lemma \ref{lem:4aaa}, we can now provide the

\begin{proof}[Proof of  \eqref{eq: eps-1}  from Proposition \ref{pro: fermionic transf}] Applying Lemma \ref{lem:4aaa} we can estimate separately the four terms in  $\mathcal{E}_{\rm corr}$ in \eqref{eq: def H-corr}. This gives 
 \begin{align*}
 |\langle \psi, \mathcal{E}_{\rm corr} \psi \rangle| &\le C\|V\|_1 \Big(\rho^{\frac 1 2+\frac{\beta}{6}} \max_\sigma \| v_\sigma\|_2 \langle \psi, \mathcal{N} \psi\rangle + \max_\sigma \| v_\sigma\|_2^2  \|\mathcal{N}_\beta^{\frac{1}{2}}\psi\|  \|\mathcal{N}^{\frac{1}{2}}\psi\| \Big) \\
 &\le C \rho^{ 1 +\frac{\beta}{6}}  \langle \psi, \mathcal{N} \psi\rangle + C\rho  \|\mathcal{N}_\beta^{\frac{1}{2}}\psi\|  \|\mathcal{N}^{\frac{1}{2}}\psi\|,
\end{align*}
as claimed in \eqref{eq: eps-1}.
\end{proof}


\section{Completion of the square for $V\varphi$}\label{sec: completion square Vphi}

In this section we prove Proposition \ref{pro: completion square Vphi}. 

\begin{proof}
Using the identities \eqref{eq:Q4-square-dec-1} and \eqref{eq:Q4-square-dec-2}, we find that 
\begin{align}\label{eq:Q4-square-dec}
& \mathbb{Q}_4 + (\mathbb{Q}_2 +  \mathbb{Q}_3)\big\vert_{V\varphi}  - \frac{1}{2}\sum_{\sigma\neq \sigma^\prime}\int dxdy\, V(x-y)  \left| a_{\sigma^\prime}(u_y)a_\sigma(u_x) + \mathcal{T}_{\sigma, \sigma^\prime}(x,y) + \mathcal{S}_{\sigma, \sigma^\prime}(x,y)\right|^2 \nn
  \\
  & =- \frac{1}{2}\sum_{\sigma\neq\sigma^\prime}\int dxdy\, V(x-y)\left( |\mathcal{T}_{\sigma, \sigma'}(x,y)|^2 +|\mathcal{S}_{\sigma, \sigma'}(x,y)|^2 + (\mathcal{T}^\ast_{\sigma, \sigma^\prime}(x, y)\mathcal{S}_{\sigma, \sigma^\prime}(x,y) + \mathrm{h.c.}) \right).  
\end{align}
For the terms involving $|\mathcal{T}_{\sigma,\sigma^\prime} (x,y)|^2$, we write the corresponding contribution in normal order and isolate the constant term. Using 
 \begin{align*}
 a_\sigma(v_x)a_{\sigma^\prime}(v_y)a_{\sigma^\prime}^\ast (v_y)a_\sigma^\ast (v_x)  & = \rho_\sigma \rho_{\sigma^\prime} - \rho_{\sigma^\prime}a_{\sigma}^\ast(v_x)a_{\sigma}(v_x) - \rho_\sigma a_{\sigma^\prime}^\ast (v_y)a_{\sigma^\prime}(v_y) \\
 &\quad + a_\sigma^\ast(v_x)a_{\sigma^\prime}^\ast(v_y)a_{\sigma^\prime}(v_y)a_\sigma(v_x), 
 \end{align*}
we obtain
\begin{align}\label{eq:Q4-square-dec-3}
  & \frac{1}{2}\sum_{\sigma\neq \sigma^\prime}\int dxdy\, V(x-y) |\mathcal{T}_{\sigma,\sigma^\prime}(x,y)|^2  \\
  &= \rho_\uparrow\rho_\downarrow L^3\int_{\Lambda} V\varphi^2 + \mathbb{Q}_0\big\vert_{V\varphi^2}  - \sum_{\sigma\neq\sigma^\prime} \rho_\sigma \int V\varphi^2 \int dx |a_{\sigma^\prime}(v_x)|^2\nn
\end{align}
where 
\begin{equation} \label{eq:Q0-Vphi}
\mathbb{Q}_0\big\vert_{V\varphi^2} = \frac{1}{2}\sum_{\sigma\neq\sigma^\prime}\int_{\Lambda^2} dxdy\, (V\varphi^2) (x-y) a^\ast_\sigma(v_x)a^\ast_{\sigma^\prime}(v_y)a_{\sigma^\prime}(v_y)a_\sigma(v_x). 
\end{equation}
Similarly, we have
\begin{align}\label{eq:Q4-square-dec-4}
  \frac{1}{2}\sum_{\sigma\neq \sigma^\prime}\int dxdy\, V(x-y)|\mathcal{S}_{\sigma, \sigma^\prime}(x,y)|^2 =
 \sum_{\sigma\neq\sigma^\prime}\rho_{\sigma^\prime}\int  V\varphi^2 \int dx |a_\sigma(u_x)|^2 +\mathcal{E}_{\mathcal{S}} 
\end{align}
with
\begin{align*}
\mathcal{E}_{\mathcal{S}}    
&= - \sum_{\sigma\neq \sigma^\prime}\int dxdy\, (V\varphi^2) (x-y)a_\sigma^\ast(u_x)a_{\sigma^\prime}^\ast(v_y)a_{\sigma^\prime}(v_y)a_\sigma(u_x) \nn
  \\
  &\quad + \sum_{\sigma\neq \sigma^\prime}\int dxdy\, (V\varphi^2)(x-y) a^\ast_\sigma(u_x)a_\sigma^\ast(v_x)a_{\sigma^\prime}(v_y)a_{\sigma^\prime}(u_y)
\end{align*}
and also, using $u_\sigma v_\sigma=0$, 
\begin{align} \label{eq:Q4-square-dec-5}
 & \frac{1}{2}\sum_{\sigma\neq \sigma^\prime} \int dxdy\, V(x-y) \mathcal{T}_{\sigma, \sigma^\prime}^\ast(x,y)\mathcal{S}_{\sigma, \sigma^\prime}(x,y) \nn\\
&= \sum_{\sigma\neq \sigma^\prime}\int dxdy\, (V\varphi^2)(x-y) a^\ast_{\sigma^\prime}(v_y)a_\sigma(v_x)a_{\sigma^\prime}(v_y)a_{\sigma}(u_x)=: \mathcal{E}_{\mathcal{T}\mathcal{S}} 
   \end{align}
On the relevant subspace satisfying the particle-hole relation \eqref{eq:particle-hole}, the quadratic terms on the right-hand side of Eqs.~\eqref{eq:Q4-square-dec-3} and~\eqref{eq:Q4-square-dec-4} cancel, i.e. 
$$
\int dx |a_{\sigma}(v_x)|^2 = \int dx |a_\sigma(u_x)|^2 .
$$
We thus conclude that \eqref{eq: first id Q4} holds with 
$$
\mathcal{E}_{V\varphi} = - \mathbb{Q}_0 \big\vert_{V\varphi^2}-\mathcal{E}_{\mathcal{S}} - \Big( \mathcal{E}_{\mathcal{T}\mathcal{S}} 
 + {\rm h.c.}\Big). 
$$
All these terms can be bounded by Lemma \ref{lem:4aaa} similarly to the error term $\mathcal{E}_{\rm corr}$ in  \eqref{eq: eps-1}, using $\|V\varphi^2\|_1 \le \|V\|_1 \|\varphi\|_\infty^2\le C$ from Lemma \ref{lem: bounds phi}. This yields \eqref{evbound}. 
\end{proof}


\section{Completion of the square for $Vf$} \label{sec: completion square Vf}

In this section, we prove Proposition \ref{pro: completion square Vf}. We begin by explaining the main ingredients of the proof in the first subsection, and then prove the key lemmas in the remaining subsections.

\subsection{Proof outline of Proposition \ref{pro: completion square Vf}}\label{sec: lemmas completion square Vf 1}

To prove Proposition \ref{pro: completion square Vf}, we will expand the square 
in
\begin{align}\label{eq:ttH0}
 \sum_{\sigma\in \{\uparrow, \downarrow\}}\sum_{r\in \Lambda^*} ||r|^2 - (k_F^\sigma)^2| \left| \hat{a}_{r,\sigma} + T_{\sigma}(r) + S_{\sigma}(r)\right|^2
\end{align}
and estimate each term separately. We start with two key bounds on the cross terms involving $T^*_{\sigma}(r)\hat{a}_{r,\sigma}$ and $S^*_{\sigma}(r)\hat{a}_{r,\sigma}$, which  justify  \eqref{eq: T*a-intro} and \eqref{eq: S*a-intro}. 

\begin{lemma}[Analysis of $T^*a$]\label{lem: Ta} Let $\delta>1/3$ and let $T_\sigma^\ast(r)$ be as in Definition \ref{def:ren-Q2}. We have 
\begin{equation}\label{eq: T*a}
  \sum_{\sigma\in \{\uparrow, \downarrow\}}\sum_{r\in \Lambda^*} ||r|^2 - (k_F^\sigma)^2|T^\ast_{\sigma}(r)\hat{a}_{r,\sigma} + \mathrm{h.c.} = \mathbb{Q}_2\vert_{Vf}+ \mathcal{E}_{T } 
\end{equation}
where $\mathcal{E}_{T }$ satisfies that for all  $0<\gamma<1/6$, $\kappa>0$ and $\psi\in\mathcal{F}_{\mathrm{f}}$, 
\[
  |\langle \psi, \mathcal{E}_{T}\psi\rangle| \leq CL^{\frac{3}{2}}\rho^{\frac{7}{6} + \frac{3}{2}\gamma +\delta-\kappa} \|\mathbb{H}_0^{\frac{1}{2}}\psi\| + C L^{\frac{3}{2}} \rho^{\frac{3}{2} - \frac{\gamma}{2} + \delta -\kappa}\|\mathcal{N}^{\frac{1}{2}}\psi\| + \mathfrak{e}_L   L^{3/2} \|\bQ_4^{1/2}\psi\| . 
\]
\end{lemma}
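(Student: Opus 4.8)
The plan is to compute the cross term $\sum_\sigma\sum_r ||r|^2-(k_F^\sigma)^2| T_\sigma^*(r)\hat a_{r,\sigma} + \mathrm{h.c.}$ explicitly and identify its leading part as $\mathbb{Q}_2|_{Vf}$. First I would insert the definition of $T_\sigma^*(r)$ from Definition \ref{def:ren-Q2}. The prefactor $||r|^2-(k_F^\sigma)^2|$ combines with the denominator in $\widehat\omega^\varepsilon_{r,r'}(p)$; in the regime where $r\in\mathcal B_F^\sigma$, $p-r\notin\mathcal B_F^\sigma$ (the first summand of $T_\sigma^*$) one has $||r|^2-(k_F^\sigma)^2| = (k_F^\sigma)^2-|r|^2$, and I would symmetrize/average over $r\leftrightarrow$ the hole momentum so that the product of $||r|^2-(k_F^\sigma)^2|$ with $1/(\lambda_{p-r,r}+\lambda_{\cdots}+2\varepsilon)$ becomes, up to the $2\varepsilon$ shift and Pauli-blocking indicators, simply $2|p|^2\widehat\varphi(p)/(\text{something})\approx \mathcal F(Vf)(p)$ via the periodic scattering equation \eqref{eq: periodic-scattering-equation}. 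Concretely, $2|p|^2\widehat\varphi(p) = \widehat{Vf}(p) + \text{(error }\mathfrak e_L\text{, plus the }8\pi a/L^3\text{ zero mode)}$, and the zero mode together with $\widehat\varphi(0)=0$ is handled by Definition \ref{def:scattering-phi}. Collecting the two summands of $T_\sigma^*(r)$ and the hermitian conjugate, and passing to configuration space using the representation in Section \ref{rem:configuration space}, the main term assembles into $\frac12\sum_{\sigma\ne\sigma'}\int V f(x-y) a_\sigma^*(u_x)a_{\sigma'}^*(u_y)a_{\sigma'}^*(v_y)a_\sigma^*(v_x) + \mathrm{h.c.} = \mathbb{Q}_2|_{Vf}$, once one recognizes $\hat a_{p-r,\sigma}b_{-p,r',\sigma'}$ acting together with $\hat a_{r,\sigma}$ and the $u,v$ projections as exactly the four-operator structure of $\mathbb{Q}_2$.

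The error term $\mathcal E_T$ has several sources that I would enumerate and bound one at a time. (i) The $2\varepsilon$ regularization in the denominator: replacing $(\lambda+2\varepsilon)^{-1}$ by $\lambda^{-1}$ costs a factor $\varepsilon/\lambda \lesssim \varepsilon\cdot(\text{heat-kernel }t\text{-integral})$; using $\varepsilon=\rho^{2/3+\delta}$ and the $t$-integral bounds of Lemma \ref{lem:t} (in particular \eqref{fac2u>}, \eqref{eq: est int t vu> final}) one extracts the $\rho^\delta$ gain, and the remaining operator is controlled by $\mathbb{H}_0$ and $\mathcal N$ via Lemmas \ref{lem:1a}, \ref{lem:4a}, giving the $L^{3/2}\rho^{7/6+3\gamma/2+\delta-\kappa}\|\mathbb{H}_0^{1/2}\psi\|$ and $L^{3/2}\rho^{3/2-\gamma/2+\delta-\kappa}\|\mathcal N^{1/2}\psi\|$ contributions. (ii) The discrepancy $2|p|^2\widehat\varphi(p) - \widehat{Vf}(p)$, which by \eqref{eq: periodic-scattering-equation} is bounded by $\mathfrak e_L V(p)$ (after dealing with the $8\pi a/L^3$ zero mode, which produces an $\mathfrak e_L$ term since $\widehat\varphi(0)=0$); this yields the $\mathfrak e_L L^{3/2}\|\mathbb{Q}_4^{1/2}\psi\|$ remainder. (iii) The high-momentum part $|p|$ large, where $\widehat\varphi^>$-type cutoffs are not yet imposed on $T_\sigma$ itself — but here $\widehat\omega^\varepsilon$ has full $\widehat\varphi$, so I must split $|p|\lessgtr\rho^{1/3-\gamma}$ and use Lemma \ref{lem: bounds phi} for the $L^1, L^2$ norms of $\varphi^>$ and the $t$-integral bounds \eqref{fac1}, \eqref{fac2} for the low-momentum Fermi-surface layer. (iv) Commutator/ordering errors from moving $\hat a_{r,\sigma}$ past $b_{-p,r',\sigma'}$ and from the fact that $b$ is only quasi-bosonic; these produce lower-order terms absorbed into $\mathcal N$ and $\mathbb{H}_0$.

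The main obstacle I anticipate is item (i)-(iii) done \emph{simultaneously}: one must bookkeep the $||r|^2-(k_F^\sigma)^2|$ weight carefully, because it is precisely this weight that both (a) is needed to cancel the energy denominator and recover $Vf$, and (b) must be left over, in the form of $\mathbb H_0$, to control the genuinely off-diagonal error pieces. The cleanest route is to use the integral representation \eqref{eq: int t conf space}: write $(\lambda_{p-r,r}+\lambda_{r-p',\cdots}+2\varepsilon)^{-1} = \int_0^\infty dt\, e^{-2t\varepsilon} e^{-t|\cdots|^2}\cdots$, distribute the heat semigroups onto the $u$'s and $v$'s (turning them into $u_{t}, v_{t}$ of \eqref{eq: def vt}, \eqref{eq: def u<>t}), and then the weight $||r|^2-(k_F^\sigma)^2| = -\partial_t\big|_{\cdots}$ acts as a total $t$-derivative, so integration by parts in $t$ produces the boundary term ($t=0$, giving $\mathbb{Q}_2|_{Vf}$ up to the scattering-equation error) plus an interior term $\propto\varepsilon\int dt(\cdots)$ giving the $\rho^\delta$-small remainder. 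After that, every remaining operator norm is estimated by Lemmas \ref{lem:1a}–\ref{lem:4aa} together with the $t$-integral bounds of Lemma \ref{lem:t} and the $\varphi^>$ bounds of Lemma \ref{lem: bounds phi}, and the stated estimate for $\mathcal E_T$ follows by collecting exponents, choosing $\kappa$ arbitrarily small.
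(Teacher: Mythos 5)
Your overall architecture — combine the weight $||r|^2-(k_F^\sigma)^2|$ with the Bethe--Goldstone denominator to produce $2|p|^2\hat\varphi(p)$ minus an $\varepsilon$-small remainder, then invoke the periodic scattering equation to land on $\bQ_2|_{Vf}$, then estimate the $\varepsilon$-piece via the $t$-integral representation split by momentum scale — is indeed the structure of the paper's proof, and your inventory of error sources (scattering-equation discrepancy giving $\mathfrak e_L\|\bQ_4^{1/2}\psi\|$, the $\varepsilon$-term giving the $\rho^\delta$-small contributions) is right. The paper realizes the cancellation more directly than you describe: after summing over the two summands of $T^*_\sigma(r)$ and over $\sigma\in\{\uparrow,\downarrow\}$ and relabeling summation variables, the collected prefactor is exactly $\lambda_{p,r}+\lambda_{-p,r'}$, so one simply uses the algebraic identity $(\lambda_{p,r}+\lambda_{-p,r'})\widehat\omega^\varepsilon_{r,r'}(p) = 2|p|^2\hat\varphi(p) - 2\varepsilon\widehat\omega^\varepsilon_{r,r'}(p)$ — no $t$-integration is needed for this step. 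Your integration-by-parts-in-$t$ is mathematically equivalent (with the boundary term at $t=0$ being the $2|p|^2\hat\varphi(p)$ piece), but your phrasing "$||r|^2-(k_F^\sigma)^2|=-\partial_t|_{\cdots}$" is not literally correct: the single-spin weight is not the total $t$-derivative of the relevant exponential; only the full $\lambda_{p,r}+\lambda_{-p,r'}$ is, and that is assembled only after combining the two pieces of $T^*_\sigma$ with the $\sigma$-sum. This is precisely the bookkeeping you flagged as the "main obstacle," so you should make it explicit rather than absorbing it into "symmetrize/average."

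Two further remarks. First, your item (iv) about commutator or ordering errors from moving $\hat a_{r,\sigma}$ past $b_{-p,r',\sigma'}$ is a red herring: the cross-term rewriting is an exact operator identity (opposite-spin operators commute since there is an even number of each), and no quasi-bosonic approximation enters at this stage. Second, bounding the $2\varepsilon\widehat\omega^\varepsilon$ term on the high-momentum range requires more than the $t$-integral bounds of Lemma~\ref{lem:t} straight out of the box; the paper first rewrites $2|p|^2\hat\varphi^>(p)/(\lambda+2\varepsilon) = \hat\varphi^>(p) - 2p\cdot(r-r')\hat\varphi^>(p)/(\lambda+2\varepsilon) - 2\varepsilon\hat\varphi^>(p)/(\lambda+2\varepsilon)$ as in \eqref{eq: comparison w eps with phi}, and then redistributes the factor $p\cdot(r-r')$ via $p\cdot(r-r')=-(r'-p)\cdot(r-r')+r'\cdot(r-r')$ as in \eqref{eq: redistribute momenta} to avoid putting too many derivatives on $\varphi^>$; without this redistribution the $L^1$-norms of $\nabla^k\varphi^>$ for $k\ge 3$ would be needed, and those are not controlled by Lemma~\ref{lem: bounds phi}. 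You should fold this step into your item (iii).
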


\begin{lemma}[Analysis of $S^*a$]\label{lem: Sa} Let $0<\gamma<1/6<\alpha<1/3$ and let $S_\sigma^\ast(r)$ be as in Definition \ref{def:ren-Q3}. We have 
\begin{equation}\label{eq: S*a}
  \sum_\sigma\sum_r ||r|^2 - (k_F^\sigma)^2|S^\ast_{\sigma}(r)\hat{a}_{r,\sigma} + \mathrm{h.c.} = \mathbb{Q}_3\vert_{Vf} + \mathcal{E}_{S},
\end{equation}
where $\mathcal{E}_{S}$ satisfies that for all $0\le \beta<1$ and $\psi\in\mathcal{F}_{\mathrm{f}}$,  
\begin{align*}
  |\langle \psi, \mathcal{E}_{S}\psi\rangle| &\leq C \rho^{\frac{1}{3} + \frac{4}{5}\gamma -\frac{3}{10}\alpha}   \langle \psi, \mathbb{H}_0 \psi\rangle +  C(  \rho^{1+ \frac{4}{5}\gamma - \frac{3}{10}\alpha} + \rho^{1+\frac{\beta}{6}}) \langle \psi, \mathcal{N}\psi\rangle \nn \\
  &\quad + C\rho^{\frac{1}{6} + \alpha}\|\mathbb{Q}_4^{\frac{1}{2}} \psi\|\|\mathbb{H}_0^{\frac{1}{2}}  \psi\|+ C\rho^{1-\frac{3}{2}\gamma}\|\mathcal{N}^{\frac{1}{2}}\psi\| \|\mathcal{N}^{\frac{1}{2}}_\beta\psi\|
 +  \mathfrak{e}_L L^{3/2} \|\bQ_4 ^{1/2}\psi\|.
 \end{align*}
\end{lemma}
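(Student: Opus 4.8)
**Proof proposal for Lemma~\ref{lem: Sa} (Analysis of $S^*a$).**

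The plan is to compute the cross term $\sum_\sigma\sum_r ||r|^2-(k_F^\sigma)^2| S^*_\sigma(r)\hat a_{r,\sigma}+\mathrm{h.c.}$ explicitly, using the structure of $S^*_\sigma(r)$ from Definition~\ref{def:ren-Q3}, and to show that the dominant contribution reproduces $\mathbb{Q}_3\vert_{Vf}$ while all remaining pieces are error terms of the stated size. The first step is to split $S^*_\sigma(r)$ into its two lines: the ``$D$-term'' built from $\hat\varphi^> (D^{<\alpha}_{p,\sigma'})^*\hat a_{p-r,\sigma}$, and the ``$b$-term'' built from $\hat\varphi^>\hat u^{<\alpha}_\sigma\hat a^\#_{r\pm p,\sigma}b^\#_{p,\sigma'}$. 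When we multiply by $||r|^2-(k_F^\sigma)^2|$ and contract with $\hat a_{r,\sigma}$, the factor $||r|^2-(k_F^\sigma)^2|$ combines with the energy denominators hidden in $b$ and in the definition of $\varphi^>$; more precisely, one uses $\hat u_\sigma(r)\hat u_\sigma(p-r)||p-r|^2-(k_F^\sigma)^2|$ or the analogous hole-type factor, together with the pointwise bound $|2|p|^2\hat\varphi(p)|\le C$ and $2|p|^2\hat\varphi^>(p)\approx 8\pi a$ on $\{|p|\gg\rho^{1/3}\}$ (Lemma~\ref{lem: bounds phi} and \eqref{eq:8pia-Vf}), to recognize the leading term. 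After normal-ordering the product $\hat a_{r,\sigma}$ past the creation/annihilation operators in $D^{<\alpha}$ and $b$ — this is where the CAR \eqref{eq:CAR} and the particle-hole relation \eqref{eq:particle-hole} enter — one extracts a term whose kernel is $\varphi(x-y)$ convolved against $V$ in the configuration-space picture of Section~\ref{rem:configuration space}, i.e. precisely $\mathbb{Q}_3\vert_{Vf}=\mathbb{Q}_3\vert_{V(1-\varphi)}$; wait, more honestly the identity one aims for is that the leading term is $\mathbb{Q}_3\vert_{V\varphi^>}$ up to the replacement of $V\varphi^>$ by $Vf$, and the difference $V\varphi^< + V(\varphi-\varphi^>)$ contributes only to the error because $\|V\varphi^<\|_1\le C\rho^{1/3-\gamma}$ is small.

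The second step is to organize the error terms $\mathcal{E}_S$. These come from: (i) the mismatch between $2|p|^2\hat\varphi^>(p)$ and $8\pi a$, controlled by $|2|p|^2\hat\varphi^>(p)-8\pi a|\le C|p|^2\mathbbm 1_{\{|p|\le C\rho^{1/3-\gamma}\}}+C\hat\varphi^>(p)|p|^2$; (ii) the commutators generated when $\hat a_{r,\sigma}$ is moved through $D^{<\alpha}_{p,\sigma'}$ and through $b^\#_{p,\sigma'}$, which produce lower-order operators estimated by Lemmas~\ref{lem:1a}, \ref{lem:2a}, \ref{lem:6a}, \ref{lem:4a}; (iii) the cutoff errors $\hat u^{<\alpha}_\sigma$ versus $\hat u_\sigma$ and $\varphi^>$ versus $\varphi$. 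The key tools are the $t$-integral representation \eqref{eq: int t conf space} together with the bounds in Lemma~\ref{lem:t} and their Cauchy--Schwarz consequences \eqref{eq: est int t vu< final}--\eqref{eq: est int t vu> final-b}, which is exactly how the powers $\rho^{1/3+4\gamma/5-3\alpha/10}$ and $\rho^{1+4\gamma/5-3\alpha/10}$ and $\rho^{1/6+\alpha}$ arise. Concretely, terms where one annihilation operator has momentum restricted to $|p|\gtrsim\rho^{1/3-\gamma}$ convert into $\mathbb{H}_0$ via \eqref{eq:1a-3}, \eqref{eq: est H0 sim}, \eqref{eq: est N u>t}, while the $b$-operators contribute factors $\sup_{z'}\|b_\sigma(\varphi^>_{z'},v,u^>)\|\le C\rho^{1/3+\gamma/2}$ via \eqref{eq: b op phi vu}; interpolating gives powers like $\rho^{1/6+\alpha}$ when $\mathbb{Q}_4^{1/2}$ appears (using $\int dx|a_\sigma(u_x)|^2$ against the interaction and the Cauchy--Schwarz splitting with $\mathbb{H}_0^{1/2}$). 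The term $\rho^{1-3\gamma/2}\|\mathcal N^{1/2}\psi\|\|\mathcal N^{1/2}_\beta\psi\|$ comes from the piece of $\mathcal{E}_S$ where one sits on a momentum far from the Fermi surface (forcing $\mathcal N_\beta$) while the rest is controlled by $\mathcal N$ and by $\|\varphi^>\|_1\le C\rho^{-2/3+2\gamma}$; the $\rho^{1+\beta/6}\langle\psi,\mathcal N\psi\rangle$ piece arises exactly as in Lemma~\ref{lem:4aaa} via the splitting \eqref{eq: def u<< and u>>}. Finally, the term $\mathfrak e_L L^{3/2}\|\mathbb{Q}_4^{1/2}\psi\|$ absorbs the finite-volume discrepancy between the periodic scattering solution $\varphi$ and $\varphi_\infty$, via \eqref{eq: periodic-scattering-equation}.

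The main obstacle I expect is step (ii): carefully tracking the normal-ordering of $\hat a_{r,\sigma}$ against the $b$-operators and against $D^{<\alpha}$, because this generates a proliferation of contraction terms — some of which are genuinely the leading $\mathbb{Q}_3\vert_{Vf}$, some of which are dangerous ``diagonal'' terms of the form $\rho\mathcal N$ that must be shown to be either absorbed into the definition of $S^*_\sigma(r)$ by design, or small. In particular, keeping the bookkeeping so that the $\rho\mathcal N$-type contributions here are exactly the ones that later cancel against those coming from $T^*_\sigma(r)$ in Proposition~\ref{pro: completion square Vf} requires being precise about signs and about which projections ($\hat u$ vs $\hat v$, and the $\alpha$-cutoff) land where. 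A secondary difficulty is that $S^*_\sigma(r)$ contains the operator $D^{<\alpha}_{p,\sigma'}$, which is not simply a $b$-type operator but a ``one-body'' excitation operator of the opposite spin; estimating products involving $D^{<\alpha}$ needs its own bound, of the form $\|D^{<\alpha}_{p,\sigma'}\psi\|^2\lesssim\|\hat u^{<\alpha}\|_\infty^2\langle\psi,\mathcal N_{\sigma'}\psi\rangle$ refined by the momentum restriction $|r'|\le k_F^{\sigma'}$ and $|r'-p|<\rho^{1/3-\alpha}$, which is where the $\rho^{-\alpha}$-type losses (hence the balance $\alpha<1/3$) come from. Once these two points are handled, the remaining estimates are routine applications of the operator lemmas in Section~\ref{sec:pre}.
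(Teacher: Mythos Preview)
Your proposal has the right overall shape (compute $S^*a$, extract $\mathbb{Q}_3|_{Vf}$, bound the remainder), but it misidentifies both the central mechanism and the main technical difficulty, and this would lead you astray.

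First, there are no energy denominators in $S^*_\sigma(r)$: it is built from $\hat\varphi^>(p)$ alone, not from $\widehat\omega^\varepsilon_{r,r'}(p)$. Consequently the $t$-integral representation \eqref{eq: int t conf space} and Lemma~\ref{lem:t} play \emph{no role} in this lemma; they belong to the analysis of $T^*a$ (Lemma~\ref{lem: Ta}) and of the $\{T^*,T\}$, $\{T^*,S_1\}$ anticommutators. The exponent $\rho^{1/3+4\gamma/5-3\alpha/10}$ does not come from any $t$-integration but from an auxiliary splitting of $\hat u^{<\alpha}_{\sigma'}(r'-p)$ into a piece supported in $\{k_F^{\sigma'}<|r'-p|<\rho^{1/3-\eta}\}$ and its complement, followed by optimization in $\eta$ (the optimum is $\eta=\tfrac45\gamma+\tfrac15\alpha$). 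This is the heart of bounding the term the paper calls $\mathrm I_{b;1}^{<\alpha}$, and it is the genuinely delicate step you should flag as the obstacle.

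Second, the particle-hole relation \eqref{eq:particle-hole} is not used in this lemma at all; and there is essentially no normal-ordering of $\hat a_{r,\sigma}$ through $D^{<\alpha}$ or $b$ to be done. The computation is direct: appending $\hat a_{r,\sigma}$ to $S^*_\sigma(r)$, summing $\sum_r ||r|^2-(k_F^\sigma)^2|$ and changing variables makes the four pieces of $S^*_\sigma(r)$ conspire so that the dispersion factors assemble into $2|p|^2$, producing $2|p|^2\hat\varphi^>(p)(D^{<\alpha}_{p,\sigma'})^*b_{p,\sigma}$ plus cross terms $2p\cdot r$, $2p\cdot r'$. The leading piece is then rewritten as $\mathbb{Q}_3|_{Vf}$ plus three explicit errors: (i) the low-momentum piece $2|p|^2\hat\varphi^<(p)$ (estimated via Lemma~\ref{lem:4aaa}, giving the $\rho^{1+\beta/6}\mathcal N$ and $\rho^{1-3\gamma/2}\mathcal N^{1/2}\mathcal N_\beta^{1/2}$ terms); (ii) the scattering defect $2|p|^2\hat\varphi(p)-\widehat{Vf}(p)$ (giving $\mathfrak e_L L^{3/2}\|\mathbb{Q}_4^{1/2}\psi\|$ via \eqref{eq: periodic-scattering-equation}); and (iii) the replacement $u^{<\alpha}\to u$ (giving $\rho^{1/6+\alpha}\|\mathbb{Q}_4^{1/2}\psi\|\|\mathbb{H}_0^{1/2}\psi\|$ by Cauchy--Schwarz against $V$). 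The remaining $p\cdot r$ and $p\cdot r'$ cross terms are bounded by writing $p=(p-r')+r'$ or $p=(r+p)-r$, going to configuration space, and invoking Lemmas~\ref{lem:2a}, \ref{lem:4a} together with the $L^p$ bounds on $\varphi^>$ in Lemma~\ref{lem: bounds phi}; the $\eta$-optimization mentioned above handles the worst of these.
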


Next, we consider the terms involving $|T_\sigma(r)|^2=T^\ast_\sigma(r)T_\sigma(r)$. We will replace $|T_\sigma(r)|^2$ with the anti-commutator $\{T^\ast_\sigma(r), T_\sigma(r)\}=T^\ast_\sigma(r)T_\sigma(r)+ T_\sigma(r) T^*_\sigma(r)$ which is larger but can be bounded more conveniently via normal ordering. Due to the canonical anticommutation relations, the anti-commutator of two operators involving an odd number of creation and annihilation operators typically results in terms of lower order in the number of creation and annihilation operators, making them easier to handle. { {The idea of using the anti-commutator goes back to Bell's work \cite{Bell}, and has been used recently in \cite{Christiansen-24,ChrHaiNam-23,ChrHaiNam-24}.

\begin{lemma}[Analysis of $\{T^*,T\}$]\label{lem: T} Let $\delta>1/3$ and let $T^\ast_{\sigma}(r)$ be as in Definition \ref{def:ren-Q2}. We have
\begin{align}\nn
&  \sum_{\sigma\in \{\uparrow, \downarrow\}}\sum_{r\in\Lambda^*} ||r|^2 - (k_F^\sigma)^2| \{T^\ast_\sigma(r), T_\sigma(r)\} \le 
- 2 \|\nabla \varphi\|_{L^2}^2  \sum_{\sigma\neq \sigma^\prime}\rho_\sigma \sum_{r\in \Lambda^* } \hat{v}_{\sigma^\prime}(r)\hat{a}_{r,\sigma^\prime}^\ast \hat{a}_{r,\sigma^\prime}  
\\
&\quad   + \frac{1}{L^6}\sum_{p,r,r^\prime \in \Lambda^*}\frac{(2|p|^2 \hat{\varphi}(p))^2 \hat{u}_{\uparrow}(r+p)\hat{u}_{\downarrow}(r^\prime - p)\hat{v}_\uparrow(r)\hat{v}_{\downarrow}(r^\prime) }{\lambda_{p,r}+ \lambda_{-p,r^\prime} + 2\varepsilon} + \mathcal{E}_{T^\ast T} , \label{c1}
\end{align}
where $\mathcal{E}_{T^\ast T}$ satisfies that for all $0<\gamma<1/6$, $0\le \beta<1$, $\kappa>0$ and $\psi\in\mathcal{F}_{\mathrm{f}}$, 
\begin{align*}
  |\langle \psi, \mathcal{E}_{T^\ast T}\psi\rangle| &\leq  C(\rho^{1+ \gamma} +\rho^{1+\frac{\beta}{6}} + \rho^{\frac{4}{3} -2\gamma - \kappa}  ) \langle \psi, \mathcal{N}\psi\rangle \\
  &\qquad  + C\rho  \|\mathcal{N}_\beta^{\frac{1}{2}}\psi\|  \|\mathcal{N}^{\frac{1}{2}}\psi\|  + C\rho^{\frac{2}{3} + \gamma}\langle \psi, \mathbb{H}_0 \psi\rangle.\end{align*}
\end{lemma}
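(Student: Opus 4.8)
The plan is to substitute the definition of $T_\sigma^\ast(r)$ from Definition~\ref{def:ren-Q2} into the left-hand side and expand the anticommutator $\{T_\sigma^\ast(r),T_\sigma(r)\}$ by repeated use of the CAR~\eqref{eq:CAR}. Since $T_\sigma^\ast(r)$ is a sum of products of \emph{three} creation/annihilation operators (the factor $\widehat a_{p-r,\sigma}$ together with the two annihilation operators contained in $b_{-p,r',\sigma'}$), Bell's anticommutator trick applies: the top-degree (sextic) contributions to $\{T_\sigma^\ast(r),T_\sigma(r)\}$ cancel by the oddness, and normal ordering leaves only a $c$-number part (three contractions), a quadratic part built from two creation/annihilation operators (two contractions), and a quartic part (one contraction). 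To make the momentum sums in $T_\sigma^\ast(r)$ tractable I would first rewrite $\widehat\omega^\varepsilon_{r,r'}(p)$ in the factorized form given by the $t$-integral identity~\eqref{eq: int t conf space}; this turns $T_\sigma^\ast(r)$ into a configuration-space object built from the Gaussian kernels $v_{t,\sigma}$, $u_{t,\sigma}$ of~\eqref{eq: def vt}, split further by $u^<_{t,\sigma}$, $u^>_{t,\sigma}$ in~\eqref{eq: def u<>t}, so that the many error terms fall within the scope of Lemmas~\ref{lem:2a},~\ref{lem:4a},~\ref{lem:6a} and of the $t$-integral bounds of Lemma~\ref{lem:t}. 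Throughout, the regularization $\varepsilon=\rho^{2/3+\delta}$ with $\delta>1/3$ (so $\varepsilon\ll\rho$) guarantees convergence of these $t$-integrals.

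\textbf{The $c$-number part.} Collecting the fully contracted terms, I expect a scalar proportional to $\sum_\sigma\sum_r\big||r|^2-(k_F^\sigma)^2\big|\,\tfrac1{L^6}\sum_{p,r'}K_\sigma(r,p,r')^2\,\widehat u_{\sigma'}(r'-p)\widehat v_{\sigma'}(r')$, where $K_\sigma(r,p,r')$ denotes the coefficient multiplying $\widehat a_{p-r,\sigma}b_{-p,r',\sigma'}$ in~\eqref{eq: def Tk}; the cross term in $K_\sigma(r,p,r')^2$ vanishes because its two contributions have disjoint support in $r$ (one forces $r$ inside, the other outside, the Fermi ball). After the substitutions $r\mapsto-r$ and $r\mapsto r-p$, the weight $\big||r|^2-(k_F^\sigma)^2\big|$ can be bounded by the matching piece of the energy denominator $\lambda_{p,r}+\lambda_{-p,r'}+2\varepsilon$ — legitimate precisely because the momenta are pinned inside/outside the Fermi balls — and this produces the stated constant $\tfrac1{L^6}\sum_{p,r,r'}\frac{(2|p|^2\widehat\varphi(p))^2\widehat u_\uparrow(r+p)\widehat u_\downarrow(r'-p)\widehat v_\uparrow(r)\widehat v_\downarrow(r')}{\lambda_{p,r}+\lambda_{-p,r'}+2\varepsilon}$. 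The discrepancy between the weight and the full denominator leaves only a lower-order $c$-number, which I would discard using $|2|p|^2\widehat\varphi(p)|\le C$ together with the $L^1$- and $L^2$-bounds of Lemma~\ref{lem: bounds phi}.

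\textbf{The quadratic part.} The two-contraction terms, put in normal order, take the form $\sum_\sigma\sum_k c_{k,\sigma}\,\widehat a_{k,\sigma}^\ast\widehat a_{k,\sigma}$. The decisive point — where the particle-hole relation~\eqref{eq:particle-hole} enters — is that the coefficients attached to momenta $k$ inside a Fermi ball, after summation over that ball (equivalently, kept in the form $\sum_r\widehat v_{\sigma'}(r)\widehat a^\ast_{r,\sigma'}\widehat a_{r,\sigma'}$), collapse to the explicit operator $-2\|\nabla\varphi\|_{L^2}^2\sum_{\sigma\neq\sigma'}\rho_\sigma\sum_r\widehat v_{\sigma'}(r)\widehat a^\ast_{r,\sigma'}\widehat a_{r,\sigma'}$ — the very term that, as anticipated in Section~\ref{sec: main def and heuristics}, will later be cancelled by a matching $\rho\mathcal N$-contribution from $\{S_\sigma^\ast(r),S_\sigma(r)\}$. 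Here the constant $\|\nabla\varphi\|_{L^2}^2=\tfrac1{L^3}\sum_p|p|^2\widehat\varphi(p)^2$ emerges from the region $|p|\gtrsim\rho^{1/3}$, on which $\widehat\omega^\varepsilon_{\cdot,\cdot}(p)\approx\widehat\varphi(p)$, upon evaluating the sums over the Fermi balls. All remaining quadratic coefficients I would estimate by Lemma~\ref{lem:1a} (with~\eqref{eq:1a-3} and~\eqref{eq: est H0 sim} for momenta far from the Fermi surfaces), yielding the contributions of orders $\rho^{1+\gamma}\langle\psi,\mathcal N\psi\rangle$, $\rho^{4/3-2\gamma-\kappa}\langle\psi,\mathcal N\psi\rangle$ and $\rho^{2/3+\gamma}\langle\psi,\mathbb H_0\psi\rangle$ in the bound on $\mathcal E_{T^\ast T}$, while the piece $\rho^{1+\beta/6}\langle\psi,\mathcal N\psi\rangle+\rho\|\mathcal N_\beta^{1/2}\psi\|\|\mathcal N^{1/2}\psi\|$ arises (as in Lemma~\ref{lem:4aaa}) whenever a momentum sits at distance $>(k_F^\sigma)^{1+\beta}$ from the Fermi surface.

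\textbf{The quartic part, and the main obstacle.} The one-contraction terms are genuinely quartic: they reassemble into ``quasi-bosonic'' operators $b_{-p,r',\sigma'}^\ast b_{-p,r',\sigma'}$ and into products of four creation/annihilation operators with momenta separated by the cutoffs in $u^<$, $u^>$. I would estimate these with the Cauchy--Schwarz inequalities of Lemmas~\ref{lem:4a},~\ref{lem:4aa},~\ref{lem:6a}, using $\sum_{r'}\|b_{-p,r',\sigma'}\psi\|^2\le\langle\psi,\mathcal N\psi\rangle$ and $\int dx\,|a_\sigma(u^>_x)|^2\le C\rho^{-2/3+2\gamma}\mathbb H_0$, and reducing the surviving momentum sums to double $t$-integrals controlled by~\eqref{fac1}--\eqref{eq: int t tilde nu infty} and~\eqref{eq: est int t vu< final}--\eqref{eq: est int t vu> final-b}. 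The hard part will be the behaviour near the Fermi surfaces: when $r$ and $r'$ lie close to $\partial\mathcal B_F^\sigma$ and $\partial\mathcal B_F^{\sigma'}$ the denominator $\lambda_{p,r}+\lambda_{-p,r'}+2\varepsilon$ degenerates, and one must play off the regularization $\varepsilon=\rho^{2/3+\delta}$, the weight $\big||r|^2-(k_F^\sigma)^2\big|$ (which vanishes there), and the convergence of the $t$-integrals in Lemma~\ref{lem:t} against one another so as to keep every error term at the orders claimed — and, relatedly, to confirm that the remainder left in the $c$-number part after the weight--denominator comparison is genuinely negligible rather than comparable to the main term. Organizing this bookkeeping region by region in $|p|$ (small versus $\gg\rho^{1/3}$) and in the position of $r,r'$ relative to the Fermi balls is the bulk of the work, to be carried out in the remaining subsections.
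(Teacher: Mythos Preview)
Your approach is essentially the paper's: expand the anticommutator via the CAR into a $c$-number, quadratic pieces, and quartic pieces; extract the main constant and the $-2\|\nabla\varphi\|_2^2\,\rho\mathcal N$ operator from the first two, and treat the quartic pieces as errors via the $t$-integral representation and the operator lemmas you cite. A few places where your bookkeeping does not match the actual proof are worth flagging.

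First, the attribution of the error terms is off. In the paper's decomposition $\sum_{j=1}^{10}\mathrm{I}_j$, the quadratic terms $\mathrm{I}_2,\mathrm{I}_3$ produce the main $-2\|\nabla\varphi\|_2^2\,\rho\mathcal N$ contribution plus only a $\rho^{1+\gamma}\mathcal N$ error, while the two remaining quadratic terms $\mathrm{I}_9,\mathrm{I}_{10}$ are manifestly \emph{non-positive} and are simply dropped --- this sign observation is the mechanism by which the upper-bound direction comes for free, and you do not mention it. All the other error scales you list ($\rho^{1+\beta/6}\mathcal N$, $\rho\,\|\mathcal N_\beta^{1/2}\psi\|\|\mathcal N^{1/2}\psi\|$, $\rho^{4/3-2\gamma-\kappa}\mathcal N$, $\rho^{2/3+\gamma}\mathbb H_0$) arise from the \emph{quartic} terms $\mathrm{I}_4$ and $\mathrm{I}_5$, not from the quadratic part. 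In particular the $\mathcal N_\beta$ contribution comes via Lemma~\ref{lem:4aaa} applied to the configuration-space form of $\mathrm{I}_{4;a;1}^>$, and the $\mathbb H_0$ contribution from $\mathrm{I}_5^>$ after multiplying and dividing by $|r+p|^2|r'-q|^2$ and redistributing momenta.

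Second, the particle-hole relation~\eqref{eq:particle-hole} does not enter this lemma; the bound on $\mathcal E_{T^\ast T}$ holds for arbitrary $\psi\in\mathcal F_{\mathrm f}$. The cancellation you have in mind happens later, in the proof of Proposition~\ref{pro: completion square Vf}, when the quadratic term here is combined with the one from $\{S_1^\ast,S_1\}$.

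Third, the concern about the Fermi-surface degeneracy of the denominator is the right instinct, but in practice the paper never confronts it head-on: for the quartic terms $\mathrm{I}_5$--$\mathrm{I}_8$ one applies Cauchy--Schwarz in $t$ (after~\eqref{eq: int t conf space}) so that the factor $||r|^2-(k_F^\sigma)^2|$ is consumed by one of the $t$-integrals via~\eqref{eq:int-t}, and what remains is controlled by the bounds of Lemma~\ref{lem:t}. The $\varepsilon$ plays no direct role in these estimates beyond ensuring convergence.
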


It remains to deal with all terms involving  $|S_\sigma(r)|^2$ and $S^*_\sigma(r)T_\sigma(r)$. These terms do not contain any constant after normal ordering, but we have to isolate correctly all contributions of order $\rho \cN$. Unlike the analysis of $|T_\sigma(r)|^2$, we cannot bound $|S_\sigma(r)|^2$ by the anti-commutator $\{S^\ast_\sigma(r), S_\sigma(r)\}$ since the latter is too large and would give {an extra constant term and} a wrong $\rho \cN$ contribution. 
We shall find it convenient to split $S_\sigma(r)$ into two parts
\begin{equation}\label{eq: def S1 S2}
S_\sigma^\ast(r)  = S_{1,\sigma}^\ast(r)+S_{2,\sigma}^\ast(r)
\end{equation}
with 
\begin{equation}\label{def:S2}
S_{2,\sigma}^\ast(r) =   \frac{1}{L^3}\sum_{p}\hat{\varphi}(p)\widehat{\chi}_>(p) \hat{{u}}^{<\alpha}_\sigma(r) \hat{u}_\sigma(r-p)\hat{a}_{r-p,\sigma}^\ast b_{p,\sigma^\prime}^\ast 
\end{equation}
corresponding to the last term in \eqref{eq: def Sk}. 
{{We will estimate separately the two anti-commutators $\{S^*_{1,\sigma}(r), S_{1,\sigma}(r)\}$ and $\{T^*_{\sigma}(r), S_{1,\sigma}(r)\}$, and the product $S^\ast_{2,\sigma}(r) (T_\sigma(r) + S_\sigma(r))$. From the first anti-commutator  we extract the correct $\rho \cN$ contribution that cancels the corresponding quantity in Lemma \ref{lem: T}. 
}}

\begin{lemma}[Analysis of $\{S_1^*,S_1\}$]\label{lem: S1} Let $0<\gamma <1/6< \alpha< 1/3$, and let $S_{1,\sigma}(r)$ be as in \eqref{eq: def S1 S2}. We have
\begin{multline}\label{eq: S1*S1}
\sum_{\sigma\in \{\uparrow, \downarrow\}}\sum_{r\in \Lambda^*} ||r|^2 - (k_F^\sigma)^2| \{S^\ast_{1,\sigma}(r), S_{1,\sigma}(r)\}
\\
 \le 2  \|\nabla \varphi\|_2^2 \sum_{\sigma\neq \sigma^\prime}\rho_\sigma \sum_{r\in \Lambda^*} \hat{u}_{\sigma'}(r)\hat{a}_{r, \sigma^\prime}^\ast \hat{a}_{r,\sigma^\prime} + \mathcal{E}_{S^\ast_1 S_1}
\end{multline}
where $\mathcal{E}_{S^\ast_1 S_1}$ satisfies that for all $0\le \beta<1$ and $\psi\in\mathcal{F}_{\mathrm{f}}$, 
\begin{align*}
  |\langle \psi, \mathcal{E}_{S^\ast_1 S_1}\psi\rangle| &\leq C\rho \|\mathcal{N}^{\frac{1}{2}}_\beta\psi\|\|\mathcal{N}^{\frac{1}{2}}\psi\|  + C\rho^{\frac{2}{3} + 6\gamma - 5\alpha}\langle \psi, \mathbb{H}_0 \psi\rangle 
  \\
   &\quad + C\left(\rho^{1+\gamma} + \rho^{1+\frac{\beta}{6}} +  \rho^{\frac{4}{3} + 2\gamma - 3\alpha}\right)\langle \psi, \mathcal{N}\psi\rangle . \nn
  \end{align*}
\end{lemma}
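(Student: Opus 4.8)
The plan is to expand the anti-commutator $\{S_{1,\sigma}^\ast(r),S_{1,\sigma}(r)\}$ after splitting $S_{1,\sigma}^\ast(r)$ into the part carrying the diagonal-type operator $D_{p,\sigma'}^{<\alpha}$ and the part carrying the quasi-bosonic operator $b_{p,\sigma'}$. Writing $S_{1,\sigma}^\ast(r)=A_\sigma^\ast(r)-B_\sigma^\ast(r)$ with
\begin{align*}
A_\sigma^\ast(r)&=\frac{1}{L^3}\sum_p\hat\varphi^>(p)\,(D_{p,\sigma'}^{<\alpha})^\ast\,\hat a_{p-r,\sigma}\bigl(\hat v_\sigma(r-p)\hat u_\sigma(r)-\hat u_\sigma(p-r)\hat v_\sigma(r)\bigr),\\
B_\sigma^\ast(r)&=\frac{1}{L^3}\sum_p\hat\varphi^>(p)\,\hat u_\sigma(r)\hat u^{<\alpha}_\sigma(r+p)\,\hat a_{r+p,\sigma}^\ast\, b_{p,\sigma'},
\end{align*}
where we used that the $b_{p,\sigma'}^\ast$-term in \eqref{eq: def Sk} is exactly removed by $-S_{2,\sigma}^\ast(r)$, we obtain $\{S_{1,\sigma}^\ast(r),S_{1,\sigma}(r)\}=\{A_\sigma^\ast(r),A_\sigma(r)\}+\{B_\sigma^\ast(r),B_\sigma(r)\}-\bigl(\{A_\sigma^\ast(r),B_\sigma(r)\}+\mathrm{h.c.}\bigr)$, and I would treat these three contributions separately. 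Since only an upper bound is needed, one is free to replace $\hat\varphi^>$ by $\hat\varphi$ and $\hat u^{<\alpha}$ by $\hat u$ whenever this enlarges a non-negative main term, and to bound all error operators by their absolute value.

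The main term is produced by the two diagonal pieces. For $\{A_\sigma^\ast(r),A_\sigma(r)\}$, since $(D_{p,\sigma'}^{<\alpha})^\ast$ commutes with the spin-$\sigma$ operators, the CAR relation $\hat a_{p-r,\sigma}\hat a_{p'-r,\sigma}^\ast=\delta_{p,p'}-\hat a_{p'-r,\sigma}^\ast\hat a_{p-r,\sigma}$ produces a diagonal contribution $\frac1{L^6}\sum_p|\hat\varphi^>(p)|^2\bigl(\hat v_\sigma(r-p)\hat u_\sigma(r)-\hat u_\sigma(p-r)\hat v_\sigma(r)\bigr)^2(D_{p,\sigma'}^{<\alpha})^\ast D_{p,\sigma'}^{<\alpha}$, plus a commutator $[D_{p',\sigma'}^{<\alpha},(D_{p,\sigma'}^{<\alpha})^\ast]\,\hat a_{p'-r,\sigma}^\ast\hat a_{p-r,\sigma}$. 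In the diagonal term the two summands in the bracket have disjoint support, so $\bigl(\hat v_\sigma(r-p)\hat u_\sigma(r)-\hat u_\sigma(p-r)\hat v_\sigma(r)\bigr)^2=\hat v_\sigma(r-p)^2\hat u_\sigma(r)^2+\hat u_\sigma(r-p)^2\hat v_\sigma(r)^2$; on the support of $\hat\varphi^>$ one has $|p|\gtrsim\rho^{1/3-\gamma}\gg k_F^\sigma$, hence $\sum_r||r|^2-(k_F^\sigma)^2|$ applied to the first summand is $(|p|^2+O(\rho^{1/3}|p|))N_\sigma$ and to the second only $O(\rho^{2/3})N_\sigma$, and keeping only the quadratic part $\sum_{r'}\hat u^{<\alpha}_{\sigma'}(r')\hat u_{\sigma'}(r'-p)^2\hat a_{r',\sigma'}^\ast\hat a_{r',\sigma'}$ of the normal-ordered $(D_{p,\sigma'}^{<\alpha})^\ast D_{p,\sigma'}^{<\alpha}$, together with $\hat u_{\sigma'}(r'-p)=1$ on the relevant set, gives the leading piece $\rho_\sigma\|\nabla\varphi^>\|_{L^2}^2\sum_{r'}\hat u^{<\alpha}_{\sigma'}(r')\hat a_{r',\sigma'}^\ast\hat a_{r',\sigma'}$. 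For $\{B_\sigma^\ast(r),B_\sigma(r)\}$, the analogous $\delta$-contraction of $\hat a_{r+p,\sigma}$ against $\hat a_{r+p',\sigma}^\ast$ leaves the quasi-bosonic products $b_{p,\sigma'}^\ast b_{p,\sigma'}$ and $b_{p,\sigma'}b_{p,\sigma'}^\ast$; extracting the ``number'' term $\delta_{p,p'}\sum_k\hat u_{\sigma'}(p+k)^2\hat v_{\sigma'}(k)^2=\delta_{p,p'}N_{\sigma'}(1+o(1))$ from the quasi-commutator $[b_{p,\sigma'},b_{p',\sigma'}^\ast]$ and arguing as above yields a further $\rho_{\sigma'}\|\nabla\varphi^>\|_{L^2}^2\sum_s\hat u^{<\alpha}_\sigma(s)\hat a_{s,\sigma}^\ast\hat a_{s,\sigma}$. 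Summing over $\sigma$, relabelling, and using $\hat u^{<\alpha}\le\hat u$ together with $\|\nabla\varphi^>\|_{L^2}\le\|\nabla\varphi\|_{L^2}$, these combine to at most $2\|\nabla\varphi\|_{L^2}^2\sum_{\sigma\neq\sigma'}\rho_\sigma\sum_r\hat u_{\sigma'}(r)\hat a_{r,\sigma'}^\ast\hat a_{r,\sigma'}$, which is the claimed main term (and, by the particle-hole relation \eqref{eq:particle-hole}, it is designed to cancel the quadratic main term of Lemma \ref{lem: T}).

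Everything else is an error: the non-$\delta$ commutator parts of the CAR relations (the quartic $[D_{p',\sigma'}^{<\alpha},(D_{p,\sigma'}^{<\alpha})^\ast]\hat a^\ast\hat a$, the quartic $b_{p,\sigma'}^\ast b_{p,\sigma'}$, and a sextic leftover), the quartic remainder of the normal ordering of $(D^{<\alpha})^\ast D^{<\alpha}$ and the quadratic-in-$\sigma'$ corrections of $[b_{p,\sigma'},b_{p',\sigma'}^\ast]$, the lower-order corrections $O(\rho^{1/3}|p|)N_\sigma$ and $O(\rho^{2/3})N_\sigma$ to the bracket sum, and the cross term $\{A^\ast_\sigma(r),B_\sigma(r)\}+\mathrm{h.c.}$. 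I would bound all of these with the toolbox of Section \ref{sec:pre}: the Cauchy--Schwarz-in-$p$ estimate of Lemma \ref{lem:6a}; operator bounds on $D_{p,\sigma'}^{<\alpha}$ in terms of $\mathcal{N}^{1/2}$ and the small-momentum count via its definition \eqref{def:D} and the Pauli bounds \eqref{eq:Pauli-1} for $\hat u^{<\alpha}$; the $b$-operator and configuration-space estimates \eqref{eq: b op phi vu}, \eqref{lb6} and the $t$-integral representation \eqref{eq: int t conf space} with Lemma \ref{lem:t} to control sums such as $\tfrac1{L^3}\sum_p\hat\varphi^>(p)b_{p,\sigma'}$; the redistribution of the kinetic weight $||r|^2-(k_F^\sigma)^2|$ through \eqref{eq:1a-3}, \eqref{eq: est H0 sim} wherever the large-momentum factors $\hat u^>$ occur; and the Lemma \ref{lem:4aaa}-type splitting of momenta at $k_F^\sigma+(k_F^\sigma)^{1+\beta}$ to isolate the $\rho^{1+\beta/6}\langle\psi,\mathcal{N}\psi\rangle$ and $\rho\|\mathcal{N}_\beta^{1/2}\psi\|\|\mathcal{N}^{1/2}\psi\|$ contributions. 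Multiplying out, the factors $\rho^{1-3\alpha}$ from the $<\alpha$-cutoffs and $\rho^{\pm(1/3-\gamma)}$ from the $\varphi^>$-cutoff combine into the stated error $\rho^{2/3+6\gamma-5\alpha}\langle\psi,\mathbb{H}_0\psi\rangle+(\rho^{1+\gamma}+\rho^{1+\beta/6}+\rho^{4/3+2\gamma-3\alpha})\langle\psi,\mathcal{N}\psi\rangle+C\rho\|\mathcal{N}_\beta^{1/2}\psi\|\|\mathcal{N}^{1/2}\psi\|$.

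The hard part will be producing the $\rho\mathcal{N}$-contribution with the \emph{exact} constant $2\|\nabla\varphi\|_{L^2}^2$ while keeping every remainder strictly subleading. One cannot bound $\{S_{1,\sigma}^\ast,S_{1,\sigma}\}$ by the full anti-commutator $\{S_\sigma^\ast,S_\sigma\}$, which would yield both an unwanted constant and a $\rho\mathcal{N}$ term with the wrong coefficient; this is exactly why the splitting $S_\sigma^\ast=S_{1,\sigma}^\ast+S_{2,\sigma}^\ast$ and the precise identification of which single contraction generates the main term (the contraction of the lone creation/annihilation operator, followed by the quadratic part of the normal-ordered $D^\ast D$ resp. the number part of the quasi-commutator) are essential. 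The quartic and sextic remainders must then be handled with the full configuration-space/$t$-integral machinery rather than crude Pauli bounds, since a naive estimate loses a volume factor; and the bookkeeping also requires verifying the disjoint-support cancellation inside $\bigl(\hat v_\sigma(r-p)\hat u_\sigma(r)-\hat u_\sigma(p-r)\hat v_\sigma(r)\bigr)^2$ and that the $\sigma$-sum of the $\{A^\ast,A\}$ and $\{B^\ast,B\}$ contributions combines with total coefficient precisely $2$. Finally, the constraints $0<\gamma<1/6<\alpha<1/3$ (and the conditions \eqref{eq: mix cond alpha gamma} imposed in Proposition \ref{pro: completion square Vf}) are what render all these error powers of lower order than $\rho^{7/3}$ once the a priori bounds are inserted.
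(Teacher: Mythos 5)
Your decomposition $S_{1,\sigma}^\ast(r)=A_\sigma^\ast(r)-B_\sigma^\ast(r)$ is exactly the paper's split $S_{1,\sigma}^\ast = S_{1,1,\sigma}^\ast + S_{1,2,\sigma}^\ast$ from \eqref{eq:S-S1-S2}--\eqref{def:s1i} (with $A^\ast=S_{1,1}^\ast$, $B^\ast=-S_{1,2}^\ast$), and your identification of the two main quadratic contributions — the single contraction of the lone $\sigma$-operator followed by the quadratic part of the normal-ordered $(D^{<\alpha})^\ast D^{<\alpha}$ in $\{A^\ast,A\}$, and the $c$-number part of $[b_{p,\sigma'},b_{p',\sigma'}^\ast]$ in $\{B^\ast,B\}$ — is precisely what the paper's terms $\mathrm{I}_1$ and $\mathrm{J}_1$ extract, with the same replacement $\hat u^{<\alpha}\le\hat u$, $\|\nabla\varphi^>\|_2\le\|\nabla\varphi\|_2$ at the end. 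So the overall strategy matches the paper's proof.

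One genuine caveat: you list \emph{all} remainder operators as things you ``would bound'', but the paper does not estimate every one of them. Among the quartic leftovers produced by the quadratic-in-$\sigma'$ corrections of $[b_{p,\sigma'},b_{p',\sigma'}^\ast]$, one (the paper's $\mathrm{J}_3$) is rewritten, after a shift in $r'$, as $-\sum_{r,r'}(|r|^2-(k_F^\sigma)^2)\hat u_\sigma(r)\hat u_{\sigma'}(r')\,\bigl|\cdots\bigr|^2\le 0$ and is simply \emph{dropped} for the upper bound rather than estimated in absolute value. Because in $\mathrm{J}_3$ the large kinetic weight $|r|^2-(k_F^\sigma)^2$ sits on a momentum that is comparable to $|p|$ and $|q|$ simultaneously, a naive absolute-value estimate by the configuration-space toolbox is more delicate than for its sibling $\mathrm{J}_2$; the sign observation sidesteps this entirely. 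If you insist on estimating $|\mathrm{J}_3|$, you would need to redistribute $|r|^2$ across both $\hat\varphi^>$ factors (e.g.\ via $|r|\le|r-p|+|p|$ and $|r|\le|r-q|+|q|$), and you should check that the resulting power of $\rho$ is no worse than the claimed $\rho^{2/3+6\gamma-5\alpha}\langle\psi,\mathbb{H}_0\psi\rangle$; otherwise recognize the negative-definiteness and drop the term. Apart from this point, your outline is sound and on the paper's track.
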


Finally, we show that all contributions involving the anti-commutator $\{T^\ast_{\sigma}(r), S_{1,\sigma}(r)\}$ and the product  $S^\ast_{2,\sigma}(r) (T_\sigma(r) + S_\sigma(r))$ are negligible. 

\begin{lemma}[Analysis of $\{T^\ast,S_{1}\}$]\label{lem: TS1} Let $0<\gamma <1/6< \alpha< 1/3<\delta$. Let $T_\sigma(r)$ and $S_{1,\sigma}(r)$ be as in Definition \ref{def:ren-Q2} and \eqref{eq: def S1 S2}, respectively. For all $0\le \beta<1$, $\kappa>0$ and $\psi\in\mathcal{F}_{\mathrm{f}}$, we have 
\begin{align}\label{eq: TS1}
  &\sum_{\sigma\in\{\uparrow, \downarrow\}} \sum_{r\in \Lambda^*}  ||r|^2 - (k_F^\sigma)^2| |\langle \psi, \{T_\sigma^\ast(r),S_{1,\sigma}(r)\}\psi\rangle|
  \\
  &\quad \leq  C\left(\rho^{1+\frac{7}{4}\gamma  - \frac{3}{2}\alpha -\kappa} + \rho^{1+\frac{9}{2}\gamma - \frac{7}{2}\alpha -\kappa} + \rho^{1-\alpha - \kappa}\right)\|\mathbb{H}_0^{\frac{1}{2}}\psi\|\|\mathcal{N}^{\frac{1}{2}}\psi\|\nonumber
  \\
  &\quad\quad +C\left(\rho^{1+\gamma}  + \rho^{1+\frac{\beta}{6}} + \rho^{\frac{4}{3} + \frac{3}{4}\gamma - \frac{3}{2}\alpha - \kappa} +  \rho^{\frac{4}{3}-2\gamma -\kappa} + \rho^{\frac{4}{3} -\alpha - \kappa} \right)\langle \psi, \mathcal{N}\psi\rangle. \nn
  \\
  &\quad\quad + C\ \rho^{\frac{2}{3} +\frac{11}{2} \gamma - \frac{7}{2}\alpha}\langle \psi, \mathbb{H}_0 \psi\rangle + \rho \|\mathcal{N}^{\frac{1}{2}}\psi\|\|\mathcal{N}^{\frac{1}{2}}_\beta\psi\|\nn .
\end{align}
\end{lemma}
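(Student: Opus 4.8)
The plan is to substitute the explicit formulas for $T_\sigma^\ast(r)$ (Definition~\ref{def:ren-Q2}) and $S_{1,\sigma}(r)$ (from \eqref{eq: def S1 S2} and Definition~\ref{def:ren-Q3}) into the anticommutator and to reduce it, by repeated use of the canonical anticommutation relations \eqref{eq:CAR}, to a normal-ordered form. The key structural observation is that $T_\sigma^\ast(r)$ is a sum of products of \emph{three annihilation operators} --- the factor $\hat a_{p-r,\sigma}$ together with the two annihilation operators in $b_{-p,r',\sigma'}$ --- whereas each monomial of $S_{1,\sigma}(r)$ carries exactly two creation operators. Consequently the top-degree (degree six) part of $\{T_\sigma^\ast(r),S_{1,\sigma}(r)\}$ cancels between the two orderings, by the Bell-type mechanism already exploited in the proofs of Lemmas~\ref{lem: T} and~\ref{lem: S1}; and since three annihilation operators cannot be fully contracted against a monomial carrying only two creation operators, no constant is produced. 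What survives is a sum of operators of degree four and two, and the weight $||r|^2-(k_F^\sigma)^2|$ is a c-number carried along through all of this.

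For the resulting terms I would proceed as in Lemmas~\ref{lem: Ta}--\ref{lem: S1}: in each factor originating from $\widehat\omega^\varepsilon_{\cdot,\cdot}(p)$ insert the identity \eqref{eq: int t conf space}, which factorizes the energy denominator into heat kernels and turns the internal sum over $r'$ into the functions $v_{t,\sigma'},u_{t,\sigma'}$ of \eqref{eq: def vt}; passing to configuration space then exhibits each term as one of the model operators of Section~\ref{sec: operator bounds}, built from $\varphi^>$ (or $\varphi^>_{z'}$ and its derivatives, since $\widehat\varphi^>$ is supported on $|p|\gtrsim\rho^{1/3-\gamma}$) and from heat-kernel-damped versions of $u^>$, $u^{<\alpha}$, $v$. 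The prefactor $||r|^2-(k_F^\sigma)^2|$ is absorbed in one of two ways: whenever $r$ appears in a number operator $\hat a^\ast_{r,\sigma}\hat a_{r,\sigma}$ outside the Fermi ball it is bounded by $\mathbb H_0$, and whenever $r$ is paired with $\hat v_\sigma(r)$ (so $|r|\le k_F^\sigma$) one uses $||r|^2-(k_F^\sigma)^2|\le\lambda_{p,-r}\le\lambda_{p,-r}+\lambda_{-p,r'}+2\varepsilon$ to divide it into the denominator of $\widehat\omega^\varepsilon$, which on the support of $\widehat\varphi^>$ gains a power $\rho^{2\gamma}$. Each term is then estimated by combining the elementary norm bounds \eqref{eq:Pauli-1}, the bounds for $\varphi^>$ in Lemma~\ref{lem: bounds phi}, the $t$-integral estimates \eqref{fac1}--\eqref{eq: int t tilde nu infty} and \eqref{eq: est int t vu< final}--\eqref{eq: est int t vu> final-b}, and the operator inequalities of Lemmas~\ref{lem:1a}, \ref{lem:2a}, \ref{lem:6a}, \ref{lem:4a} and \ref{lem:4aaa}; each admissible choice of which legs to bound by Cauchy--Schwarz produces one of the powers of $\rho$ on the right-hand side of \eqref{eq: TS1}, and the inequalities \eqref{eq: mix cond alpha gamma} together with $\gamma<1/6<\alpha<1/3<\delta$ are exactly what is needed for the relevant $p$-sums and $t$-integrals to converge.

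The main obstacle is bookkeeping rather than any single hard estimate: the anticommutator produces a long list of quartic (and a few quadratic) terms, and for each one must choose the splitting --- which of the legs to bound in operator norm via \eqref{eq:Pauli-1}, which to keep for $\mathcal N$ or $\mathbb H_0$ --- that yields a final power of $\rho$ large enough to beat $\rho^{7/3}$ after the a priori bounds of Proposition~\ref{pro: a priori} are inserted. A point requiring particular care is the operator $D^{<\alpha}_{p,\sigma'}$ inside $S_{1,\sigma}^\ast(r)$: being itself quadratic, a single contraction against $b_{-p,r',\sigma'}$ still leaves a quartic operator, and one must redistribute momenta (multiplying and dividing by $|r+p|^2$, as in the proof of Lemma~\ref{lem:2a}) before the bounds of Lemma~\ref{lem:4aaa} apply with the correct power of the cutoff $\rho^{1/3-\alpha}$. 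Finally, the contributions of order $\rho\mathcal N$ that emerge after these estimates have to be shown, with the help of the particle-hole relation \eqref{eq:particle-hole} and the two momentum cutoffs $\rho^{1/3-\gamma}$, $\rho^{1/3-\alpha}$, to be genuinely subleading --- i.e. to carry an extra power of $\rho$ --- rather than merely $O(L^3\rho^{7/3})$. Apart from these points the argument is a lengthy but essentially routine repetition of the estimates already carried out for Lemmas~\ref{lem: Ta}--\ref{lem: S1}.
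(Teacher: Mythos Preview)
Your overall strategy matches the paper's: decompose $S_{1,\sigma}=S_{1,1,\sigma}+S_{1,2,\sigma}$, compute each anticommutator with $T^\ast_\sigma$ explicitly via the CAR (producing only quartic terms, no constant), and estimate these by the configuration-space tools of Section~\ref{sec: operator bounds} together with the $t$-integral bounds and the decomposition \eqref{eq: comparison w eps with phi} of $\widehat\omega^\varepsilon$. Two corrections, however. First, the particle-hole relation \eqref{eq:particle-hole} is \emph{not} used in this lemma --- the statement is for all $\psi\in\mathcal F_{\mathrm f}$, and the extra powers of $\rho$ on the $\mathcal N$-terms come purely from the momentum cutoffs and from \eqref{eq: comparison w eps with phi}; the particle-hole cancellation enters only later, in the proof of Proposition~\ref{pro: completion square Vf}, to combine the explicit quadratic contributions of Lemmas~\ref{lem: T} and~\ref{lem: S1}. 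Second, condition \eqref{eq: mix cond alpha gamma} is not assumed here; it is needed only in Lemma~\ref{lem: S2}. One device the paper uses that you do not mention: for the term arising from the $D^{<\alpha}$-contraction (the paper's $\mathrm I_1$) it applies a Cauchy--Schwarz split with a free parameter $\rho^{\pm\eta}$, recognizes one of the two pieces as essentially the term $\mathrm I_6$ already bounded in the proof of Lemma~\ref{lem: T}, and then optimizes over $\eta$ to obtain the exponent $\tfrac43+\tfrac34\gamma-\tfrac32\alpha$.
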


\begin{lemma}[Analysis of $S_2^*(T+S)$]\label{lem: S2} Let $0<\gamma <1/6< \alpha< 1/3<\delta$ such that  \eqref{eq: mix cond alpha gamma} holds.
Let $T_\sigma^\ast(r)$, $S_\sigma^\ast(r)$ be as in Definitions \ref{def:ren-Q2}, \ref{def:ren-Q3} and let $S_{2,\sigma}(r)$ be as in \eqref{def:S2}. For all $\kappa>0$ and $\psi \in \mathcal{F}_{\mathrm{f}}$, we have
\begin{align}\label{eq: S2}
    &\sum_{\sigma\in \{\uparrow, \downarrow\}}\sum_{r\in \Lambda^*}  ||r|^2 - (k_F^\sigma)^2| \langle\psi, S^\ast_{2,\sigma}(r) (T_\sigma(r) + S_\sigma(r)) \psi\rangle|
  \nn \\
    &\leq C \left(\rho^{\frac{3}{2} -\gamma - \kappa}  
    + \rho^{\frac{3}{2} + \frac{21}{10}\gamma- \frac{21}{10}\alpha - \kappa} \right) L^{\frac{3}{2}} (\|\mathbb{H}_0^{\frac{1}{2}}\psi\| + \rho^{\frac{1}{3}}\|\mathcal{N}^{\frac{1}{2}}\psi\|) \nn
  \\
  &\quad 
 + C\rho^{\frac{5}{6} +\frac{13}{2}\gamma - 6\alpha} \langle \psi, \mathbb{H}_0 \psi\rangle  +  C \left(\rho^{\frac{7}{6} +\frac{\gamma}{2} - \alpha} + \rho^{\frac{4}{3} + \gamma - 2\alpha}\right)\langle \psi, \mathcal{N}\psi\rangle . 
\end{align}
\end{lemma}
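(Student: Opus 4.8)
The plan is to write $T_\sigma(r)+S_\sigma(r)=T_\sigma(r)+S_{1,\sigma}(r)+S_{2,\sigma}(r)$ and, after moving absolute values inside the sum, to estimate
\[
\sum_{\sigma}\sum_{r}||r|^2-(k_F^\sigma)^2|\,\big|\langle\psi,S^\ast_{2,\sigma}(r)(T_\sigma(r)+S_\sigma(r))\psi\rangle\big|\le\mathfrak{A}_T+\mathfrak{A}_{S_1}+\mathfrak{A}_{S_2},
\]
where $\mathfrak{A}_X:=\sum_\sigma\sum_r||r|^2-(k_F^\sigma)^2|\,|\langle\psi,S^\ast_{2,\sigma}(r)X_\sigma(r)\psi\rangle|$ for $X\in\{T,S_1\}$ and $\mathfrak{A}_{S_2}:=\sum_\sigma\sum_r||r|^2-(k_F^\sigma)^2|\,\|S_{2,\sigma}(r)\psi\|^2\ge0$. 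Three structural facts drive everything. First, $S^\ast_{2,\sigma}(r)$ carries the cut-off $\hat u^{<\alpha}_\sigma(r)$, so each $r$-sum may be restricted to the thin shell $k_F^\sigma<|r|<\rho^{1/3-\alpha}$, on which $||r|^2-(k_F^\sigma)^2|\le C\rho^{2/3-2\alpha}$; this is the origin of all powers of $\alpha$. Second, this cut-off kills the summand of $T_\sigma(r)$ proportional to $\hat v_\sigma(r)$, and the $\hat v_\sigma(r)$-pieces of $S_{1,\sigma}(r)$, inside the products. Third, since $\hat\varphi^>$ is supported in $\{|p|>4\rho^{1/3-\gamma}\}$, the $\sigma'$-pair operator inside $S^\ast_{2,\sigma}(r)$ is automatically of the type $b_{\sigma'}(\varphi^>_z,v,u^>)$ of \eqref{eq: b op phi vu}, with Fock-space norm $\le C\rho^{1/3+\gamma/2}$ by Lemmas \ref{lem:2a} and \ref{lem: bounds phi}. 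I would dispose of $\mathfrak{A}_{S_2}$ first: in configuration space $S_{2,\sigma}(r)=\hat u^{<\alpha}_\sigma(r)L^{-3/2}\int dz\,e^{-irz}a_\sigma(u_z)\,b_{\sigma'}(\varphi^>_z,v,u^>)$, and since $a_\sigma(u_z)$ commutes with the $\sigma'$-pair operator, Plancherel in $r$, the norm bound above and \eqref{eq:1a-1} give $\sum_r\|S_{2,\sigma}(r)\psi\|^2\le C\rho^{2/3+\gamma}\langle\psi,\mathcal N_\sigma\psi\rangle$, whence $\mathfrak{A}_{S_2}\le C\rho^{4/3+\gamma-2\alpha}\langle\psi,\mathcal N\psi\rangle$, one of the claimed terms.

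For $\mathfrak{A}_T$ and $\mathfrak{A}_{S_1}$ I would combine the above representation of $S^\ast_{2,\sigma}(r)$ with the $t$-integral factorisation \eqref{eq: int t conf space} of $\widehat\omega^\varepsilon_{r-p,r'}(p)$ inside $T_\sigma(r)$ (and, for $S_{1,\sigma}(r)$, with its representation built from $\hat\varphi^>$, $D^{<\alpha}_{p,\sigma'}$ and a single $b$-operator). Each resulting term becomes a $t$-integral of an inner product of two vectors, each of the form (pair operator)$\times$(one extra creation/annihilation operator)$\,\psi$. The pair operators are bounded in norm through Lemmas \ref{lem:2a}, \ref{lem: bounds phi}, in their $t$-dependent forms through the heat-kernel estimate of Lemma \ref{lem: zeta t L1} and the $t$-integral bounds of Lemma \ref{lem:t}, using $\|\varphi^>\|_1\le C\rho^{-2/3+2\gamma}$, $\|\nabla\varphi^>\|_1\le C\rho^{-1/3+\gamma}$, $\|\varphi^>\|_2\le C\rho^{-1/6+\gamma/2}$; the extra operator is controlled by Lemma \ref{lem:1a}, crucially by \eqref{eq:1a-3}, since $\hat\varphi^>$ pins the relevant momenta to be $\gtrsim\rho^{1/3-\gamma}$ and so converts an $\mathcal N$-bound into a $\rho^{-2/3+2\gamma}\mathbb H_0$-bound, and by Lemma \ref{lem:6a} when two such operators sit under a common Fourier sum. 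The terms in which, after contracting the $\sigma$-legs (possible only in $\mathfrak{A}_{S_1}$, via the ``$\hat a\,b^\ast$'' summand of $S_{1,\sigma}(r)$), one is left with a factor $b^\ast_{p,\sigma'}\psi$ estimated by $\|b^\ast_{p,\sigma'}\psi\|^2\le\langle\psi,\{b_{p,\sigma'},b^\ast_{p,\sigma'}\}\psi\rangle\le CN_{\sigma'}\sim CL^3\rho$ produce the $L^{3/2}(\|\mathbb H_0^{1/2}\psi\|+\rho^{1/3}\|\mathcal N^{1/2}\psi\|)$-contributions; the genuinely quadratic remainders produce the $\langle\psi,\mathbb H_0\psi\rangle$- and $\langle\psi,\mathcal N\psi\rangle$-contributions. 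Collecting powers, using the shell weight $\rho^{2/3-2\alpha}$ and the decay $e^{-2t\varepsilon}$ with $\varepsilon=\rho^{2/3+\delta}$, one reaches the stated estimate; the hypotheses $0<\gamma<1/6<\alpha<1/3<\delta$ together with \eqref{eq: mix cond alpha gamma} are precisely what makes every exponent of $\rho$ in the final bound strictly positive.

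The hard part, I expect, is $\mathfrak{A}_T$ together with the structurally identical contribution of the ``$\hat a^\ast b^\ast$'' summand of $S_{1,\sigma}(r)$ to $\mathfrak{A}_{S_1}$. There $S^\ast_{2,\sigma}(r)$ and $T_\sigma(r)$ are built entirely from creation operators (in the particle--hole picture), so there is no contraction between them and one is forced to use $|\langle\psi,S^\ast_{2,\sigma}(r)T_\sigma(r)\psi\rangle|\le\|S_{2,\sigma}(r)\psi\|\,\|T_\sigma(r)\psi\|$. The crude route -- Cauchy--Schwarz in $r$ combined with $\sum_\sigma\sum_r||r|^2-(k_F^\sigma)^2|\,\|T_\sigma(r)\psi\|^2\lesssim L^3\rho^{7/3}$, which already carries the Huang--Yang constant -- is too lossy to land the asserted error without pre-inserting the a priori bounds of Proposition \ref{pro: a priori}. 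One must instead split $T_\sigma(r)\psi$ into the part whose momenta are forced to be large by $\hat\varphi^>$ and hence absorbable by $\mathbb H_0$ through \eqref{eq:1a-3}, and a complementary near-vacuum part handled via $\|b^\ast\psi\|\lesssim N_{\sigma'}^{1/2}$, while simultaneously balancing the gains from the thin-shell weight $\rho^{2/3-2\alpha}$, the $t$-integral decay $e^{-2t\varepsilon}$, and the $\varphi^>$-norms. Arranging all of these to be compatible inside the narrow window $0<\gamma<1/6<\alpha<1/3$ subject to \eqref{eq: mix cond alpha gamma} is the delicate point of the argument.
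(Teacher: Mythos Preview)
Your treatment of $\mathfrak A_{S_2}$ is correct and matches the paper's. For $\mathfrak A_{S_1}$ the paper takes a shortcut you miss: rather than a fresh analysis it uses weighted Cauchy--Schwarz,
\[
\mathfrak A_{S_1}\le \rho^{-\eta}\,\mathfrak A_{S_2}+\rho^{\eta}\sum_{\sigma,r}||r|^2-(k_F^\sigma)^2|\,\|S_{1,\sigma}(r)\psi\|^2,
\]
bounds the last sum by the anticommutator, and recycles Lemma~\ref{lem: S1}. Condition~\eqref{eq: mix cond alpha gamma} enters precisely here, to reduce $\mathcal E_{S_1^\ast S_1}$ to $C\rho\langle\psi,\mathcal N\psi\rangle+C\rho^{2/3+6\gamma-5\alpha}\langle\psi,\mathbb H_0\psi\rangle$; optimising $\eta$ yields the $\rho^{7/6+\gamma/2-\alpha}\langle\psi,\mathcal N\psi\rangle$ and $\rho^{5/6+13\gamma/2-6\alpha}\langle\psi,\mathbb H_0\psi\rangle$ terms. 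No $L^{3/2}$ contribution arises from $\mathfrak A_{S_1}$ at all --- your attribution of those terms to a ``$\sigma$-leg contraction'' in $S_{1}$ is mistaken.

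The genuine gap is $\mathfrak A_T$. Your assertion that one is ``forced'' to use $|\langle\psi,S^\ast_{2,\sigma}(r)T_\sigma(r)\psi\rangle|\le\|S_{2,\sigma}(r)\psi\|\,\|T_\sigma(r)\psi\|$ is the wrong starting point, and no splitting of $T_\sigma(r)\psi$ repairs it: $\sum_r||r|^2-(k_F^\sigma)^2|\|T_\sigma(r)\psi\|^2$ inevitably contains the extensive constant $\sim L^3\rho^2$ from the $\{T^\ast,T\}$ normal ordering (it equals $\sum_r||r|^2-(k_F^\sigma)^2|\|T_\sigma(r)\Omega\|^2$ already for the vacuum), and Cauchy--Schwarz against $\mathfrak A_{S_2}$ then yields $L^{3}\rho^{7/3+\gamma/2-\alpha}$ after the a~priori bounds, which is too large since $\gamma/2-\alpha<0$. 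The paper instead keeps the six-creation product $S^\ast_{2,\sigma}(r)T_\sigma(r)$ intact, writes it in configuration space as
\[
\frac{1}{L^3}\sum_r c_r\int dx\,dz\,e^{ir(x-z)}\,a^\ast_\sigma(\,\cdot_x)\,b^\ast_{\sigma'}(\varphi^>_x,v,u^>)\;b^\ast_{\sigma'}(\,\ldots_z)\,a^\ast_\sigma(\,\cdot_z),
\]
and applies Lemma~\ref{lem:6a} (Cauchy--Schwarz in the $r$-sum) with the \emph{two} pair operators $b^\ast_{\sigma'}$ as the bounded operators $X,Y$. Both pair factors then cost only their operator norm $\sim\rho^{1/3+\gamma/2}$ from \eqref{eq: b op phi vu}, and only the single-fermion operators touch $\psi$; the factor $\bigl(\int dz\,\|a^\ast_\sigma(v_z)\psi\|^2\bigr)^{1/2}\le L^{3/2}\rho^{1/2}$ is what generates every $L^{3/2}$ in the statement. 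On top of this the paper needs further decompositions --- $\widehat\chi_<(q)+\widehat\chi_>(q)$, the expansion of $\widehat\omega^\varepsilon$ via~\eqref{eq: comparison w eps with phi}, a high/low cut of $\hat u_\sigma(r+p)$ at scale $\rho^{1/3-\alpha}$ and at an intermediate scale $\rho^{1/3-\eta}$, and the $t$-integral~\eqref{eq: int t conf space} --- to assemble the stated exponents; the $\rho^{3/2+21\gamma/10-21\alpha/10}$ power, in particular, comes from optimising such an auxiliary cut-off $\eta$.
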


Proposition \ref{pro: completion square Vf} follows easily from the above lemmas. 

\begin{proof}[Proof of Proposition \ref{pro: completion square Vf}] 
We start by expanding the square as 
\begin{align}\label{eq:com-Vf-basic-1}
 \left| \hat{a}_{r,\sigma} + T_{\sigma}(r) + S_{\sigma}(r)\right|^2 &= |\hat{a}_{r,\sigma}|^2 + (T^*_{\sigma}(r)\hat{a}_{r,\sigma} + {\rm h.c.}) +  (S^*_{\sigma}(r)\hat{a}_{r,\sigma} + {\rm h.c.})  \nn\\
 &\qquad+   |T_{\sigma}(r)+S_{\sigma}(r)|^2. 
\end{align}
Moreover, using \eqref{eq: def S1 S2} we have 
\begin{align}\label{eq:com-Vf-basic-2}
&|T_{\sigma}(r)+S_{\sigma}(r)|^2 =  | T_{\sigma}(r)+S_{1,\sigma}(r)|^2 + |S_{2,\sigma}(r)|^2 + \Big( S^*_{2,\sigma}(r) (T_{\sigma}(r)+S_{1,\sigma}(r) ) + {\rm h.c.} \Big)\nn\\
&= \Big\{ T^*_{\sigma}(r)+S^*_{1,\sigma}(r), T_{\sigma}(r)+S_{1,\sigma}(r)\Big\} +  \Big( S^*_{2,\sigma}(r) (T_{\sigma}(r)+S_{\sigma}(r) ) + {\rm h.c.} \Big)\nn\\
&\quad - |T^*_{\sigma}(r)+S^*_{1,\sigma}(r) |^2 -  |S_{2,\sigma}(r)|^2 .
\end{align}
The last two terms in \eqref{eq:com-Vf-basic-2} can be dropped for an upper bound. Therefore, we deduce from \eqref{eq:com-Vf-basic-1} that 
\begin{align*}  
& \left| \hat{a}_{r,\sigma} + T_{\sigma}(r) + S_{\sigma}(r)\right|^2 \le |\hat{a}_{r,\sigma}|^2 + \Big(T^*_{\sigma}(r)\hat{a}_{r,\sigma} + S^*_{\sigma}(r)\hat{a}_{r,\sigma} + {\rm h.c.}\Big) +  \Big\{ T^*_{\sigma}(r),T_{\sigma}(r)\Big\} \nn\\
 &\quad +\Big\{ S^*_{1,\sigma}(r),S_{1,\sigma}(r)\Big\}  + \left(\Big\{ T^*_{1,\sigma}(r),S_{1,\sigma}(r)\Big\} + S^*_{2,\sigma}(r) (T_{\sigma}(r)+S_{\sigma}(r) ) +  \mathrm{h.c.}\right).
\end{align*}
By inserting the latter bound into \eqref{eq:ttH0} and estimating the resulting terms using Lemmas~\ref{lem: Ta}--\ref{lem: S2}, we obtain Proposition \ref{pro: completion square Vf}. Here we also used the particle-hole relation \eqref{eq:particle-hole} to cancel the quadratic terms on the right-hand side of Eqs.~\eqref{c1} and~\eqref{eq: S1*S1}.
\end{proof}

It remains to prove the key lemmas above in the next subsections.

\subsection{Proof of Lemma \ref{lem: Ta}}

In this subsection we prove Lemma \ref{lem: Ta}. A key ingredient will be the  zero-energy scattering equation \eqref{eq: zero en scatt eq}. The proof of Lemma \ref{lem: Ta} would be substantially shorter if we had not introduced the $\eps$  in \eqref{eq: def omega eps} in the definition of $T_\sigma(r)$.  
Estimating the extra error term due to $\eps$ requires some work here, but including $\eps$ in the definition \eqref{eq: def omega eps} is important in many other places.

\begin{proof} Using the definition of $T_\sigma(r)$ and changing the variables $r \notin B_F^{\sigma}\mapsto r+p$ \textcolor{black}{and  $r \in B_F^{\sigma}\mapsto -r$}, we have
\begin{align}
 & \sum_\sigma\sum_r ||r|^2 - (k_F^\sigma)^2|T^\ast_{\sigma}(r)\hat{a}_{r,\sigma} \nn\\
  &= \frac{1}{L^3}\sum_{p,r,r^\prime} (|r+p|^2 - |r|^2 + |r^\prime -p|^2 - |r^\prime|^2)\widehat{\omega}^\varepsilon_{r,r^\prime}(p)b_{p,r,\uparrow}b_{-p,r^\prime, \downarrow}
  \nn\\
  &= \frac{1}{L^3}\sum_{p,r,r^\prime} 2|p|^2 \hat{\varphi}(p)b_{p,r,\uparrow}b_{-p,r^\prime, \downarrow} -  \frac{2\varepsilon}{L^3}\sum_{p,r,r^\prime} \widehat{\omega}^\varepsilon_{r,r^\prime}(p)b_{p,r,\uparrow}b_{-p,r^\prime, \downarrow}. \label{eq:proof-lem-5.1}
\end{align}

For the first term on the right-hand side of \eqref{eq:proof-lem-5.1}, we can write 
\begin{align}\label{eq:proof-lem-5.1-a}
  &\frac{1}{L^3}\sum_{p,r,r^\prime} 2|p|^2 \hat{\varphi}(p)b_{p,r,\uparrow}b_{-p,r^\prime, \downarrow}+ \mathrm{h.c.} \\
  &= \int dxdy\, (-2\Delta\varphi) (x-y)a_\uparrow(u_x)a_\uparrow(v_x)a_\downarrow(u_y)a_\downarrow(v_y) + \mathrm{h.c} \nn\\
  &=   \mathbb{Q}_2 \vert_{Vf} + \left( \int dxdy \left( -2\Delta \varphi - Vf +  8\pi a L^{-3}  \right) (x-y)a_\uparrow(u_x)a_\uparrow(v_x)a_\downarrow(u_y)a_\downarrow(v_y) + \mathrm{h.c} \right)\nn
  \end{align}
where we inserted the constant $8\pi a L^{-3}$ making use of  the orthogonality $u_\uparrow v_\uparrow =0$ via 
\begin{align}\label{eq: auav}
\int_{\Lambda} dx\, a_\uparrow(u_x)a_\uparrow(v_x) =0.
\end{align}
Using \eqref{eq: periodic-scattering-equation} and the Cauchy--Schwarz inequality, we have 
\begin{align}\label{eq:proof-lem-5.1-aa}
&\left| \left\langle \psi,  \int dxdy \left( -2\Delta \varphi - Vf +  8\pi a L^{-3}  \right)   (x-y)a_\uparrow(u_x)a_\uparrow(v_x)a_\downarrow(u_y)a_\downarrow(v_y) \psi\right\rangle \right| \nn \\
& \le \mathfrak{e}_L   \int dxdy\, V (x-y) \| a_\uparrow(u_x) a_\downarrow(u_y)  \psi\| \le \mathfrak{e}_L   L^{3/2} \|\bQ_4^{1/2}\psi\|. 
\end{align}
To bound the last operator on the right-hand side of \eqref{eq:proof-lem-5.1}, we distinguish between high and low momenta by splitting  $
\varphi = \varphi^>  + \varphi^<
$ as in \eqref{eq: def phi><} and writing correspondingly  
\[
 \frac{2\varepsilon}{L^3}\sum_{p,r,r^\prime}   \frac{2|p|^2 ( \hat{\varphi}^>(p)+ \hat{\varphi}^<(p)  )}{\lambda_{p,r}+ \lambda_{-p,r^\prime} + 2\varepsilon} b_{p,r,\uparrow}b_{-p,r^\prime, \downarrow} = \mathrm{I}^> + \mathrm{I}^<.
\]
We start with estimating $\mathrm{I}^>$. Recall from \eqref{eq: def bp bpr} that $b_{p,r,\uparrow}  = \hat{a}_{r+p,\uparrow}\hat{a}_{-r,\uparrow}\hat{u}_\uparrow(r+p)\hat{v}_\uparrow(r)$ and therefore in $\mathrm{I}^>$, using the constraints on $p$ and $r$, we can replace $\hat{u}_\uparrow(r+p)$ by $\hat{u}^>_\uparrow(r+p)$ defined in \eqref{eq: u< u> gamma}. We can do similarly for $b_{-p,r',\downarrow}$. Moreover, writing 
\begin{equation}\label{eq: comparison w eps with phi}
\frac{2|p|^2\hat{\varphi}^>(p)}{\lambda_{p,r} + \lambda_{-p, r^\prime} +2\varepsilon} = \hat{\varphi}^>(p) - \frac{2p\cdot (r-r^\prime)\hat{\varphi}^>(p)}{\lambda_{p,r} + \lambda_{-p, r^\prime}+ 2\varepsilon}  - \frac{2\varepsilon \hat{\varphi}^>(p)}{\lambda_{p,r} + \lambda_{-p,r^\prime} + 2\varepsilon},
\end{equation}
we have 
\begin{equation}\label{eq: def I>1 I>2 I>3}
  \mathrm{I}^> = \mathrm{I}_{a}^> +\mathrm{I}_{b}^> + \mathrm{I}_{c}^>
\end{equation}
accordingly. The term $\mathrm{I}_{a}^>$ is 
\begin{align*}
  \mathrm{I}_{a}^> &= 2\varepsilon\int dxdy\, {\varphi}^>(x-y) a_{\uparrow}(u^>_x)a_\uparrow(v_x)a_\downarrow(u^>_y)a_\downarrow(v_y) \\
  &= 2\varepsilon\int dx\, a_{\uparrow}(u^>_x)a_\uparrow(v_x)b_\downarrow(\varphi^>_x,v_\downarrow,u^>_\downarrow)
\end{align*}
where we used the notation introduced in \eqref{eq: def b general}.  
The  bound \eqref{lb6} in Lemma~\ref{lem:2a}, together with $\|\Delta\varphi^>\|_1 + \rho^{\frac{1}{3}}\|\nabla\varphi^>\|_1 + \rho^{\frac{2}{3}}\|\varphi^>\|_1 \le C$ from Lemma~\ref{lem: bounds phi}, implies that $\| b_\downarrow(\varphi^>_x,v_\downarrow,u^>_\downarrow)\| \leq C \rho^{1/3+\gamma/2}$ uniformly in $x$. In combination with $\eps=\rho^{2/3+\delta}$,  $\|v_\uparrow\|_2 \leq C \rho^{1/2}$ and Lemma \ref{lem:1a}, an application of 
 the Cauchy--Schwarz inequality yields
\begin{align}\label{eq:proof-lem-5.1-b1}
  |\langle \psi, \mathrm{I}_{a}^>\psi\rangle | &\leq 2\eps \int dx \|b_\downarrow({\varphi}_x^>,v_\downarrow,u^>_\downarrow)\|\|a_\uparrow(v_x)\| \|a_\uparrow(u^>_x)\psi\| \leq 
C L^{\frac{3}{2}} \rho^{\frac{7}{6} + \frac{3}{2}\gamma + \delta}\|\mathbb{H}_0^{\frac{1}{2}}\psi\|.
\end{align}

Next, we consider $\mathrm{I}_{b}^>$. It is convenient to decompose  
\begin{equation}\label{eq: redistribute momenta}
p\cdot (r-r^\prime) = -(r^\prime -p)\cdot (r-r^\prime)  + r^\prime \cdot (r-r^\prime).
\end{equation}
Redistributing the momenta this way, we will have fewer derivatives of $\varphi$ to consider (which would  require stronger regularity assumptions on the interaction for their treatment). Accordingly, we split $\mathrm{I}_{b}^> = \mathrm{I}_{b;1}^> + \mathrm{I}_{b;2}^>$. 
 
For $\mathrm{I}_{b;1}^>$, we can multiply and divide by $|r^\prime - p|^2$ and use that $|r^\prime - p|^2 = |p|^2 + |r^\prime|^2 - 2p\cdot r^\prime$ for the numerator. Using \eqref{eq: int t conf space} and \eqref{eq: def u<>t} we can write $\mathrm{I}_{b;1}^>$ in configuration space as
\begin{align*}  
 \mathrm{I}_{b;1}^>  &= 4\varepsilon \sum_{\ell=1}^3 \int_0^{\infty}dt\,e^{-2t\varepsilon}\int dxdy\, \bigg(\Delta{\varphi}^>(x-y)a_\downarrow(v_{t,y})  + \varphi^>(x-y)a_\downarrow(\Delta v_{t,y})\nn\\
  &\qquad \qquad \qquad\qquad +2 \sum_{m=1}^3\partial_m\varphi^>(x-y)a_\downarrow(\partial_mv_{t,y})\bigg) a_\uparrow (u^>_{t,x})a_\uparrow(\partial_\ell v_{t,x})a_\downarrow(\partial_\ell\widetilde{u}^>_{t,y})\nonumber
  \\
  &\quad -4\varepsilon \sum_{\ell=1}^3 \int_0^{\infty}\hspace{-0.2cm} dt\,  e^{-2t\varepsilon}\hspace{-0.1cm}\int dxdy\, \bigg(\Delta{\varphi}^>(x-y)a_\downarrow(\partial_\ell v_{t,y})  + \sum_{m=1}^3\varphi^>(x-y)a_\downarrow(\partial_m^2 \partial_\ell v_{t,y}) \nn
  \\
  &\qquad \qquad \qquad\qquad +2 \sum_{m=1}^3\partial_m\varphi^>(x-y)a_\downarrow(\partial_m \partial_\ell v_{t,y})\bigg)a_\uparrow (u^>_{t,x})a_\uparrow( v_{t,x})a_\downarrow(\partial_\ell\widetilde{u}^>_{t,y}),
\end{align*}
where $\widetilde{u}^>_{t,\sigma}$ is defined in Lemma \ref{lem:t}.  
With the aid of  \eqref{eq:Pauli} we find 
\begin{align*}
 & |\langle \psi, \mathrm{I}_{b;1}^>\psi\rangle| \leq C\eps   \int_{0}^{\infty} dt\, e^{-2t\varepsilon}  \|{v}_{t,\uparrow}\|_2  \|{v}_{t,\downarrow}\|_2 \||\cdot|^{-1} \hat{u}^>_{t,\downarrow}\|_2   \\
&\qquad\qquad  \times \int dxdy \left(\rho^{\frac{1}{3}} |\Delta\varphi^>(x-y)| + \rho^{\frac{2}{3}}|\nabla\varphi^>(x-y)| + \rho |\varphi^>(x-y)|\right) \|a_\uparrow(u^>_{t,x})\psi\|\\
&\leq C\rho^{3 + \delta}  \ \int_0^\infty dt\, e^{-2t\varepsilon} \|{v}_{t,\downarrow}\|_2 \| |\cdot|^{-1} \hat{u}^>_{t, \downarrow}\|_2
 e^{t (k_F^{\uparrow})^2 }  \int dx\, \|a_\uparrow(u^>_{t,x})\psi\| .
 \end{align*}
In the last estimate we used again the bounds in Lemma \ref{lem: bounds phi}. Lemma  \ref{lem:1a}, the Cauchy--Schwarz inequality and \eqref{eq: est int t vu> final}  then yield 
$$
 |\langle \psi, \mathrm{I}_{b;1}^>\psi\rangle| \leq CL^{\frac{3}{2}}\rho^{\frac{7}{6} + \frac{3}{2}\gamma  +\delta-\kappa}\|\mathbb{H}_0^{\frac{1}{2}}\psi\|.
 $$

In a similar way, we can write  
\begin{align*} 
  \mathrm{I}_{b;2}^>(t) &= 4\varepsilon \sum_{\ell=1}^3 \int_0^\infty dt  \int dxdy\, \bigg(\Delta{\varphi}^>(x-y)a_\downarrow(\partial_\ell v_{t,y})  +\sum_{m=1}^3 \varphi^>(x-y)a_\downarrow(\partial_m^2\partial_\ell v_{t,y}) \\
  &\qquad\qquad\qquad\qquad +2\sum_{m=1}^3\partial_m\varphi^>(x-y)a_\downarrow(\partial_\ell \partial_m v_{t,y})\bigg) a_\uparrow (u^>_{t,x})a_\uparrow(\partial_\ell v_{t,x})a_\downarrow(\widetilde{u}^>_{t,y})\nonumber
  \\
  &\quad - 4\varepsilon \sum_{\ell=1}^3 \int_0^\infty dt   \int dxdy\, \bigg(\Delta{\varphi}^>(x-y)a_\downarrow(\partial_\ell^2 v_{t,y})  + \sum_{m=1}^3\varphi^>(x-y)a_\downarrow(\partial_m^2 \partial_\ell^2 v_{t,y}) \\
  &\qquad\qquad\qquad\qquad +2\sum_{m=1}^3\partial_m\varphi^>(x-y)a_\downarrow(\partial_m \partial_\ell^2 v_{t,y})\bigg) a_\uparrow (u^>_{t,x})a_\uparrow(v_{t,x})a_\downarrow(\widetilde{u}^>_{t,y})
\end{align*}
Proceeding as  above, but using  \eqref{eq: est int t vu> final-b} instead of \eqref{eq: est int t vu> final}, we obtain
$$
  |\langle \psi, \mathrm{I}_{b;2}^>\psi\rangle| \leq  CL^{\frac{3}{2}}\rho^{\frac{7}{6} + \frac{5}{2}\gamma  +\delta-\kappa}\|\mathbb{H}_0^{\frac{1}{2}}\psi\|.
$$

It remains to bound $\mathrm{I}_{c}^>$ in \eqref{eq: def I>1 I>2 I>3}. As for $\mathrm{I}_{b}^>$ above,   
we split it into three terms by multiplying and dividing by $|r^\prime -p|^2$. This gives 
\begin{align*}  
  \mathrm{I}_{c}^> &= 4\varepsilon^2  \int_0^{\infty}\hspace{-0.2cm} dt\,e^{-2t\varepsilon}\hspace{-0.1cm}\int dxdy\, \bigg(\Delta{\varphi}^>(x-y)a_\downarrow(v_{t,y})  + \varphi^>(x-y)a_\downarrow(\Delta v_{t,y}) \\
  &\qquad \qquad \qquad +2 \sum_{m=1}^3\partial_m\varphi^>(x-y)a_\downarrow(\partial_m v_{t,y})\bigg) a_\uparrow (u^>_{t,x})a_\uparrow(v_{t,x})a_\downarrow(\widetilde{u}^>_{t,y}).
\end{align*}
Again proceeding  as above we find
$$
  |\langle \psi, \mathrm{I}_{c}^>\psi\rangle| \leq CL^{\frac{3}{2}}\rho^{\frac{7}{6} +\frac{5}{2}\gamma + 2\delta -\kappa}\|\mathbb{H}_0^{\frac{1}{2}}\psi\|.
$$

Combining  the bounds for $\mathrm{I}_{a}^>,\mathrm{I}_{b}^>,\mathrm{I}_{c}^>$, we thus find that 
\begin{align}\label{eq:proof-lem-5.1iI>}
  |\langle \psi, \mathrm{I}^> \psi\rangle | \leq CL^{\frac{3}{2}}\rho^{\frac{7}{6} + \frac{3}{2}\gamma +\delta-\kappa}\|\mathbb{H}_0^{\frac{1}{2}}\psi\|.
\end{align}

 It remains to consider $\mathrm{I}^<$. In this case, the estimate can be done as in \cite[Proposition 5.2, Eq. (5.6)]{GHNS}, using $\|\Delta\varphi^<\|_1 \le C$ from Lemma \ref{lem: bounds phi}. The result is  
\begin{align}\label{eq:proof-lem-5.1iI<}
  |\langle \psi, \mathrm{I}^< \psi\rangle | \leq CL^{\frac{3}{2}}\rho^{\frac{3}{2} - \frac{\gamma}{2} + \delta-\kappa}\|\mathcal{N}^{\frac{1}{2}}\psi\|.
\end{align}
The claim of Lemma  \ref{lem: Ta} follows from \eqref{eq:proof-lem-5.1}, \eqref{eq:proof-lem-5.1-a}, \eqref{eq:proof-lem-5.1-aa}, \eqref{eq:proof-lem-5.1iI>} and \eqref{eq:proof-lem-5.1iI<}. 
\end{proof}


\subsection{Proof of Lemma \ref{lem: Sa} } 

Similarly to Lemma \ref{lem: Ta}, the proof of Lemma \ref{lem: Sa} relies on the scattering equation \eqref{eq: zero en scatt eq}. The analysis is complicated by the presence ${\varphi}^>$ and ${u}_\sigma^{<\alpha}$ instead of ${\varphi}$ and ${u}_\sigma$ in Definition \ref{def:ren-Q3}, but the cut-offs are important for Lemmas \ref{lem: S1}, \ref{lem: TS1}, and \ref{lem: S2} in the following subsections.

\begin{proof} Using the definition \eqref{eq: def Sk} of  $S^\ast_{r,\sigma}$, rearranging terms and changing variables, we obtain the identity  
$$
\sum_\sigma\sum_r ||r|^2 - (k_F^\sigma)^2| S^\ast_{r,\sigma}\hat{a}_{r,\sigma} + \mathrm{h.c.}  =  \bQ_3|_{Vf} + \mathrm{I}_a + \mathrm{I}_b + \mathrm{I}_c  + \mathrm{I}_d + \mathrm{I}_e 
$$
with
\begin{align}\nn  
\mathrm{I}_a &= - \frac{2}{L^3}\sum_{\sigma\neq \sigma^\prime}\sum_{p,r,r^\prime}p\cdot r \hat{\varphi}^>(p)\hat{u}_{\sigma^\prime}^{<\alpha}(r^\prime)\hat{u}_{\sigma^\prime}(r^\prime - p)\hat{u}_\sigma(r+p)\hat{v}_\sigma(r)  \nn \\
&\qquad \qquad\qquad \times \hat{a}_{r^\prime, \sigma^\prime}^\ast \hat{a}_{r^\prime - p, \sigma^\prime}\hat{a}_{r+p,\sigma}\hat{a}_{-r,\sigma} + \mathrm{h.c.}\nonumber
  \\
\mathrm{I}_b &=  \frac{2}{L^3}\sum_{\sigma\neq \sigma^\prime}  \sum_{p,r,r^\prime} p\cdot r^\prime\hat{\varphi}^>(p) \hat{u}_{\sigma^\prime}^{<\alpha}(r^\prime) \hat{u}_{\sigma^\prime}(r^\prime-p)\hat{u}_\sigma(r+p)\hat{v}_\sigma(r)\nn\\
&\qquad \qquad\qquad \times \hat{a}^\ast_{r^\prime,\sigma^\prime}\hat{a}_{r^\prime-p,\sigma^\prime} \hat{a}_{r+p,\sigma}\hat{a}_{-r,\sigma} + \mathrm{h.c.},\nn\\
\mathrm{I}_c  &=\frac{2}{L^3}\sum_{\sigma\neq \sigma^\prime}|p|^2 \hat{\varphi}^<(p) \left(D^{<\alpha}_{p,\sigma^\prime}\right)^\ast b_{p,\sigma} + \mathrm{h.c.}, \nn \\
\mathrm{I}_d  &=-\frac{1}{L^3}\sum_{\sigma\neq \sigma^\prime} ( 2|p|^2 \hat{\varphi}(p) - \widehat{Vf}(p) ) \left(D^{<\alpha}_{p,\sigma^\prime}\right)^\ast b_{p,\sigma} + \mathrm{h.c.},\nn \\
\mathrm{I}_e  &=-2\sum_{\sigma\neq \sigma^\prime}\int dxdy\, (Vf)  (x-y) a^\ast_{\sigma^\prime}(u^{>\alpha}_{y})a_\sigma(v_x)a_{\sigma^\prime}(u_y) a_\sigma(u_x) + \mathrm{h.c.} \label{eq: err S*a}
\end{align}
where $u_{\sigma'}^{>\alpha} = u_{\sigma'} - u_{\sigma'}^{<\alpha}$. These five terms thus make up the error term $ \mathcal{E}_{S} $, and we will bound them in the following.


For $\mathrm{I}_a$ in  \eqref{eq: err S*a}, using the support properties of $r,p$, we can replace $\hat{u}_\sigma(r+p)$ by $\hat{u}^>_\sigma(r+p)$ defined in \eqref{eq: u< u> gamma}. Moreover, it is convenient to decompose $p$ as $p = (p-r^\prime)  + r^\prime$, and we denote the resulting terms by $\mathrm{I}_{a;1}$ and $\mathrm{I}_{a;2}$, respectively. We have
$$
 \mathrm{I}_{a;1} = 2\sum_{\sigma \neq \sigma^\prime}\sum_{\ell=1}^3\int dx \,  a^\ast_{\sigma^\prime}(u_y^{<\alpha}) a_{\sigma^\prime}(\partial_\ell u_y) b_\sigma( \varphi^>_y, \partial_\ell v_\sigma, u_\sigma^>) + \mathrm{h.c.},
 $$
where we used the notation introduced in \eqref{eq: def b general}. 
 Lemmas \ref{lem:2a} and \ref{lem: bounds phi} imply  
 that $\| b_\sigma( \varphi^>_y, \partial_\ell v_\sigma, u_\sigma^>)\|\leq C\rho^{2/3 + \gamma/2}$ uniformly in $y$. Therefore, in combination with the Cauchy--Schwarz inequality and  Lemma \ref{lem:1a}, we get  
\begin{align}\label{eq: est Ia1}
  |\langle \psi, \mathrm{I}_{a;1}\psi\rangle| &\leq C\rho^{\frac{2}{3} + \frac{\gamma}{2}}\sum_{\sigma^\prime}\sum_{\ell = 1}^3\int dx\, \|a_{\sigma^\prime}(\partial_\ell u_x)\|\|a_{\sigma^\prime}({u}^{<\alpha}_x)\psi\| \nn\\
  &\leq C\rho^{\frac{2}{3} + \frac{\gamma}{2}}\|\mathcal{N}^{\frac{1}{2}}\psi\|\|\mathbb{H}_0^{\frac{1}{2}}\psi\| + C\rho^{1+ \frac{\gamma}{2}}\langle \psi,\mathcal{N}\psi\rangle .
\end{align}
The term  
$$
 \mathrm{I}_{a;2} = - 2\sum_{\sigma \neq \sigma^\prime}\sum_{\ell=1}^3\int dx \,  a^\ast_{\sigma^\prime}(\partial_\ell u_y^{<\alpha}) a_{\sigma^\prime}(u_y) b_\sigma( \varphi^>_y, \partial_\ell v_\sigma, u_\sigma^>) + \mathrm{h.c.}
$$
can be estimated in the same way, and we conclude that 
\begin{align}\label{eq: est Ia}
  |\langle \psi, \mathrm{I}_{a}\psi\rangle| \leq C\rho^{\frac{2}{3} + \frac{\gamma}{2}}\|\mathcal{N}^{\frac{1}{2}}\psi\|\|\mathbb{H}_0^{\frac{1}{2}}\psi\| + C\rho^{1+ \frac{\gamma}{2}}\langle \psi,\mathcal{N}\psi\rangle .
\end{align}

Next, we consider $\mathrm{I}_b$ in \eqref{eq: err S*a}. Again we can replace $\hat{u}_\sigma(r+p)$ by $\hat{u}^>_\sigma(r+p)$ defined in \eqref{eq: u< u> gamma}. Moreover, in this case, it is convenient to split  $\mathrm{I}_b=\mathrm{I}_b^{<\alpha} + \mathrm{I}_{b}^{>\alpha}$,  by writing  $\hat{u}_{\sigma^\prime}(r^\prime -p) = \hat{{u}}^{<\alpha}_{\sigma^\prime}(r^\prime-p) +  \hat{{u}}^{>\alpha}_{\sigma^\prime}(r^\prime-p)$. 

We first estimate $\mathrm{I}_b^{>\alpha}$.  
Using $p = -(r^\prime - p) + r^\prime$, this term can be written  in configuration space as 
\[
  \mathrm{I}^{>\alpha}_b = -2\sum_{\sigma\neq \sigma^\prime}\sum_{\ell =1}^3 \int dy \, \left(a^\ast_{\sigma^\prime}(\partial_\ell u^{<\alpha}_y)a_{\sigma^\prime}(\partial_\ell u^{>\alpha}_y) - a_{\sigma^\prime}^\ast(\partial_\ell^2 u^{<\alpha}_y)a_{\sigma^\prime}(u^{>\alpha}_y)\right) b_{\sigma}(\varphi^>_y, v_\sigma, u^>_\sigma).
\]
 From Lemmas \ref{lem:2a} and~\ref{lem: bounds phi} we obtain
\begin{align*} 
 & |\langle \psi, \mathrm{I}_b^{>\alpha}\psi\rangle|  \nn
 \\
&\leq C \rho^{\frac{1}{3}+ \frac{\gamma}{2} } \sum_{\sigma^\prime}\sum_{\ell =1}^3\int dy \Big(\| a_{\sigma^\prime}(\partial_\ell{u}^{<\alpha}_y)\psi\|\|a_{\sigma^\prime}(\partial_\ell {u}^{>\alpha}_y) \psi\| + \|a_{\sigma^\prime}(\partial_\ell^2 u^{<\alpha}_y)\psi\|\|a_{\sigma^\prime}(u^{>\alpha}_y)\psi\|\Big)  
\end{align*}
The resulting terms can be bounded by Lemma 
 \ref{lem:1a} and the Cauchy--Schwarz inequality, using also the bound $\int dy\, 
\|a_\sigma(\partial_\ell^2{u}^{<\alpha}_y)\psi\|^2 \leq C\rho^{\frac{2}{3} - 2\alpha} \int dy\, \|a_\sigma(\partial_\ell{u}^{<\alpha}_y)\psi\|^2
$. The result is 
\begin{equation}\label{eq: Ib>alpha}
  |\langle \psi, \mathrm{I}_b^{>\alpha}\psi\rangle|   \leq C\rho^{\frac{1}{3} + \frac{\gamma}{2}}\langle \psi, \mathbb{H}_0\psi\rangle 
 + C\rho^{1+\frac{\gamma}{2}}\langle \psi, \mathcal{N}\psi\rangle.   
\end{equation}

Next, we consider $\mathrm{I}^{<\alpha}_b$. Using $ p = (r+ p) - r$ we find
$
\mathrm{I}^{<\alpha}_{b} =  \mathrm{I}^{<\alpha}_{b;1} + \mathrm{I}^{<\alpha}_{b;2}
$
with
$$
 \mathrm{I}^{ <\alpha}_{b;2} 
  = \sum_{\sigma\neq \sigma^\prime}\sum_{\ell =1}^3 \int dy \, a^\ast_{\sigma^\prime}(\partial_\ell u^{<\alpha}_y)a_{\sigma^\prime}(u^{<\alpha}_y)b_\sigma(\varphi^>_y, \partial_\ell v_\sigma, u^>_\sigma).
  $$
Proceeding in the same way as above, using Lemmas \ref{lem:2a},~\ref{lem: bounds phi} and \ref{lem:1a}, we obtain
\begin{equation}\label{eq: est Ib2<alpha}
  |\langle \psi, \mathrm{I}^{<\alpha}_{b;2}\psi\rangle| \leq C\rho^{\frac{2}{3} + \frac{\gamma}{2}}\|\mathbb{H}_0^{\frac{1}{2}}\psi\|\|\mathcal{N}^{\frac{1}{2}}\psi\| + C\rho^{1+\frac{\gamma}{2}}\langle \psi, \mathcal{N}\psi\rangle.
\end{equation}

The analysis of 
\begin{align}\nn 
\mathrm{I}^{<\alpha}_{b;1} &= \frac{2}{L^3}\sum_{\sigma\neq \sigma^\prime}  \sum_{p,r,r^\prime} (r+p)\cdot r^\prime \hat{\varphi}^>(p) \hat{u}_{\sigma^\prime}^{<\alpha}(r^\prime) \hat{u}_{\sigma^\prime}^{<\alpha}(r^\prime-p)\hat{u}_\sigma(r+p)\hat{v}_\sigma(r)  \\
&\qquad \qquad \qquad \times \hat{a}^\ast_{r^\prime,\sigma^\prime}\hat{a}_{r^\prime-p,\sigma^\prime} \hat{a}_{r+p,\sigma}\hat{a}_{-r,\sigma} + \mathrm{h.c.}\label{def:Iab1}
\end{align}
is a bit more delicate. First, we observe that $|r| < C\rho^{1/3}$ and $|r^\prime| , |r^\prime - p| < \rho^{1/3 - \alpha}$ imply $|p| \leq 2\rho^{1/3 - \alpha}$ and $|r+p| < 3\rho^{1/3 - \alpha}$. Hence in \eqref{def:Iab1} we can replace $\hat{u}_\sigma(r+p)$ with $\zeta^{<\alpha}_\sigma$ given by 
\begin{equation}\label{eq: def zeta alpha}
  \zeta^{<\alpha}_\sigma(k) = \begin{cases} 1, &\mbox{if}\,\,\, k_F^\sigma \leq |k| \leq 3\rho^{\frac{1}{3} - \alpha}, \\ 0, &\mbox{otherwise}.\end{cases}
\end{equation}
We shall further 
 split $\hat{u}^{<\alpha}_{\sigma^\prime}(r^\prime - p)$ into two parts,  $\hat{u}^{<\alpha}_{\sigma^\prime}(r^\prime - p) = \hat{u}^{<\eta}_{\sigma^\prime}(r^\prime - p) + \hat{u}^{\eta<\alpha}_{\sigma^\prime}(r^\prime - p)$, for some $0 <\eta < \alpha$. 
 Correspondingly, we write $\mathrm{I}_{b;1}^{<\alpha} = \mathrm{I}_{b;1}^{<\eta} + \mathrm{I}_{b;1}^{\eta    <\alpha}$. 
 For the first term, given by
 $$
 \mathrm{I}_{b;1}^{<\eta} = - 2 \sum_{\sigma\neq \sigma^\prime} \sum_{\ell=1}^3   \int dx dy \, \varphi^>(x-y) a^*_{\sigma'}(\partial_\ell u^{<\alpha}_y) a_{\sigma'}(u^{<\eta}_y) a_\sigma(\partial_\ell\zeta^{<\alpha}_x) a_\sigma(v_x) ,
 $$
  we can apply Lemmas \ref{lem:4a}, \ref{lem:1a} and \ref{lem: bounds phi} to obtain 
\begin{align*}
  |\langle \psi, \mathrm{I}_{b;1}^{<\eta} \psi\rangle| &\leq C \sum_{\sigma\neq \sigma'}  \|\varphi^>\|_1 \| u_{\sigma'}^{<\eta}\|_2 \| v_\sigma\|_2  \Big( \langle \psi,  \bH_0  \psi\rangle  +   \rho^{\frac 2 3} \langle \psi, \cN \psi\rangle  \Big)\\
 & \le  C\rho^{\frac{1}{3} + 2\gamma -\frac{3}{2}\eta} \left( \langle \psi, \mathbb{H}_0 \psi\rangle + \rho^{\frac 23}\langle \psi, \mathcal{N}\psi\rangle \right).
\end{align*}
For the second term $\mathrm{I}^{\eta < \alpha}_{b;1}$, which equals   
\begin{align*}
  \mathrm{I}^{\eta <\alpha}_{b;1} 
  =2 \sum_{\sigma\neq \sigma^\prime}\sum_{\ell =1}^3 \int dy \, a^\ast_{\sigma^\prime}(\partial_\ell u^{<\alpha}_y)a_{\sigma^\prime}(u^{\eta<\alpha}_y)b_\sigma(\varphi^>_y, v_\sigma, \partial_\ell \zeta_\sigma^{<\alpha}),
\end{align*}
we apply Lemmas \ref{lem:2a}, \ref{lem: bounds phi}, and \ref{lem:1a}, as well as $\| |\cdot|^{-1} \hat{\zeta}_\sigma^{<\alpha}\|_2 \leq C\rho^{1/6 - \alpha/2}$, with the result that  
\[
  |\langle \psi, \mathrm{I}^{\eta < \alpha}_{b;1}\psi\rangle| \leq C\rho^{\frac{1}{3} - \frac{\alpha}{2} + \eta}\|\mathbb{H}_0^{\frac{1}{2}}\psi\| \left( \|\mathbb{H}_0^{\frac{1}{2}}\psi\| + \rho^{\frac{1}{3}}\|\mathcal{N}^{\frac{1}{2}}\psi\|\right).
\]
Adding the above bounds and optimizing over $\eta$ (which gives $\eta = (4/5)\gamma + \alpha/5$), we conclude that 
\begin{align}\label{eq: est Ib1<alpha}
  |\langle \psi, \mathrm{I}^{< \alpha}_{b;1}\psi\rangle| \leq C\rho^{\frac{1}{3} + \frac{4}{5}\gamma -\frac{3}{10}\alpha} \left( \langle \psi, \mathbb{H}_0 \psi\rangle  + \rho^{\frac 23}\langle \psi, \mathcal{N}\psi\rangle \right).
\end{align}

Next we consider the term $\mathrm{I}_c$ in  \eqref{eq: err S*a}. Inserting the definition of $D^{<\alpha}_{p,\sigma'}$ in \eqref{def:D}, it is given by  
\begin{align*}
 \mathrm{I}_c = \frac{2}{L^3}\sum_{\sigma\neq \sigma^\prime}\sum_p|p|^2 \hat{\varphi}^<(p) \hat{u}^{<\alpha}_\sigma(r) \hat{u}_\sigma(r+p) \hat{u}_{\sigma^\prime}(r^\prime -p)\hat{v}_{\sigma^\prime}(r^\prime)\hat{a}_{r+p, \sigma}^\ast \hat{a}_{-r^\prime,\sigma^\prime}^\ast \hat{a}^\ast_{r^\prime -p, \sigma^\prime}\hat{a}_{r,\sigma} + \mathrm{h.c.}
\end{align*}
By the constraints on $r^\prime, p$, we can replace $\hat{u}_{\sigma^\prime}(r^\prime - p)$ by $\hat{u}^<_{\sigma^\prime}(r^\prime - p)$. 
Hence  $\mathrm{I}_c$ is given in  configuration space by 
\[
  \mathrm{I}_c = -2\sum_{\sigma\neq \sigma^\prime}\int dxdy\, \Delta\varphi^<(x-y) a^\ast_\sigma(u_x) a^\ast_{\sigma^\prime}(v_y)a^*_{\sigma^\prime}(u^<_y)a_{\sigma}(u^{<\alpha}_x) .
\]
With Lemma \ref{lem:4aaa} we obtain the bound 
\begin{align}\label{eq: est Ic<alpha}
 |\langle\psi, \mathrm{I}_c\psi\rangle| 
  \le C\rho^{1+\frac{\beta}{6}}\langle\psi,\mathcal{N}\psi\rangle + C\rho^{1-\frac{3}{2}\gamma}\|\mathcal{N}^{\frac{1}{2}}\psi\| \|\mathcal{N}^{\frac{1}{2}}_\beta\psi\|.
 \end{align}

The term $\mathrm{I}_d$ in  \eqref{eq: err S*a} can be written  as
 \[
 \mathrm{I}_d = 2\sum_{\sigma\neq \sigma^\prime}\int dxdy (-2\Delta \varphi - Vf + 8\pi a L^{-3})  (x-y) a^\ast_{\sigma^\prime}(u^{<\alpha}_{y})a_\sigma(v_x)a_{\sigma^\prime}(u_y) a_\sigma(u_x) + \mathrm{h.c.}
 \]
 where we inserted the constant $8\pi a L^{-3}$ by using the orthogonality $u_\sigma v_\sigma =0$ as in \eqref{eq: auav}. By using \eqref{eq: periodic-scattering-equation}, we can proceed similarly to \eqref{eq:proof-lem-5.1-aa} and obtain
 \begin{align}\label{eq: est Id<alpha}
\left| \left\langle \psi,   \mathrm{I}_d  \psi  \right\rangle \right| \le \mathfrak{e}_L   \int dxdy V (x-y) \| a_\uparrow(u_x) a_\downarrow(u_y)  \psi\| \le \mathfrak{e}_L   L^{3/2} \|\bQ_4^{1/2}\psi\|. 
\end{align}

It remains to bound $\mathrm{I}_e$ in  \eqref{eq: err S*a}.    Using $\|a_{\sigma}(v_x)\|\leq C\rho^{1/2}$ and \eqref{eq: est H0 sim}, we get 
 \begin{align}\label{eq: est Ie<alpha}
  |\langle \psi,  \mathrm{I}_e \psi\rangle| &\leq C\rho^{\frac{1}{2}}\sum_{\sigma\neq \sigma^\prime}\int dxdy\, V(x-y) \|a_{\sigma}(u_x)a_{\sigma^\prime}(u_y) \psi\| \|a_{\sigma^\prime}(u^{>\alpha}_y) \psi\| \nn\\
  &\leq  C\rho^{\frac{1}{6} + \alpha}\|\mathbb{Q}_4^{\frac{1}{2}} \psi\|\|\mathbb{H}_0^{\frac{1}{2}}  \psi\|. 
\end{align}
Collecting all the estimates in \eqref{eq: est Ia}--\eqref{eq: est Ib2<alpha} and  \eqref{eq: est Ib1<alpha}--\eqref{eq: est Ie<alpha}, 
this concludes the proof of Lemma \ref{lem: Sa}.
 \end{proof}

\subsection{Proof of Lemma \ref{lem: T} }

Now we prove Lemma \ref{lem: T}, which is among the most important technical results as it generates all relevant constants in the Huang--Yang formula. While the choice of the renormalization term $T^\ast_\sigma(r)$ is clearly motivated by the scattering cancellation in Lemma \ref{lem: Ta}, it is considerably more subtle to understand why the terms involving $\{T^\ast_\sigma(r), T_\sigma(r)  \}$ yield both the correct energy and the proper $\rho \cN$-type contribution.

\begin{proof} Inserting the definition of $T_\sigma^\ast(r)$ in \eqref{eq: def Tk},  we have
\begin{align}\nn
  &\sum_{\sigma}\sum_r||r|^2 - (k_F^\sigma)^2| \{T^\ast_\sigma(r), T_\sigma(r)  \} 
  \\ \nn 
  &= \frac{1}{L^6}\sum_{\sigma\neq \sigma^\prime}\sum_{r,p,r',q,s} ||r|^2 - (k_F^\sigma)^2| \Big\{\hat{a}_{p-r,\sigma}b_{-p,r^\prime, \sigma'}, b^\ast_{-q,s,\sigma^\prime}\hat{a}_{q-r,\sigma}^\ast\Big\}
 \\  \nn & \qquad \times \Big(  \widehat{\omega}^\varepsilon_{-r,r^\prime}(p)\widehat{\omega}^\varepsilon_{-r,s}(q)\hat{u}_{\sigma}(p-r) \hat{u}_\sigma(q- r)\hat v_\sigma(r)  \\ & \qquad \qquad +  
 					\widehat{\omega}^\varepsilon_{r-p,r^\prime}(p)\widehat{\omega}^\varepsilon_{r-q,s}(q)\hat{v}_{\sigma}(p-r) \hat{v}_\sigma(q-r)\hat u_\sigma(r)
  \Big ) .  \label{eq:T*T-computation}
\end{align}
Recalling the definition of $b_{p,k,\sigma} $ in \eqref{eq: def bp bpr}, 
we have to compute the 
 anticommutator 
 for $s,r' \in B_F^{\sigma'}$, $r'-p, s-q  \not\in B_F^{\sigma '}$ and $\sigma \neq \sigma'$, given in this case by 
\begin{align}\label{eq:TT*-r-in}
  & \Big\{\hat{a}_{p-r,\sigma} \hat a_{r^\prime-p, \sigma'} \hat a_{-r^\prime, \sigma'}
  , \hat a^\ast_{-s,\sigma^\prime} \hat a^\ast_{s-q,\sigma^\prime}  \hat{a}_{q-r,\sigma}^\ast\Big\} \nn\\ \nn
  & = \delta_{r^\prime, s}\delta_{p,q} \big( 1 - \hat{a}_{p-r,\sigma}^\ast \hat{a}_{p-r, \sigma} - \hat{a}_{r'-p, \sigma^\prime}^\ast \hat{a}_{r^\prime - p,\sigma^\prime} - \hat{a}_{-r',\sigma^\prime}^\ast \hat{a}_{-r^\prime , \sigma^\prime}\big) \\ 
  &\quad + \delta_{r^\prime,s}\hat{a}_{r'-q,\sigma^\prime}^\ast \hat{a}_{q-r,\sigma}^\ast\hat{a}_{p-r, \sigma}\hat{a}_{r^\prime-p, \sigma^\prime} 
   + \delta_{r^\prime-p, s-q}\hat{a}^\ast_{-s,\sigma^\prime}\hat{a}_{q-r,\sigma}^\ast\hat{a}_{p-r,\sigma}\hat{a}_{-r^\prime,\sigma^\prime} \nn \\ & \quad + \delta_{p, q}\hat{a}_{-s,\sigma^\prime}^\ast \hat{a}_{s-p,\sigma^\prime}^\ast \hat{a}_{r^\prime - p,\sigma^\prime}\hat{a}_{-r^\prime, \sigma^\prime}. 
\end{align}
After a  change of variables $r\notin B_F^\sigma\mapsto r+p$ \textcolor{black}{and $r\in B_F^\sigma\mapsto -r$}  and combining various terms, we obtain the identity  
\begin{align}\label{eq:RR-dec}
\sum_{\sigma,r} ||r|^2 - (k_F^\sigma)^2| \{T^\ast_{\sigma}(r), T_{\sigma}(r)\} = \sum_{j=1}^{10}\mathrm{I}_j
\end{align}
with
\begin{align*}
  \mathrm{I}_1 & = \frac{1}{L^6}\sum_{\sigma\neq \sigma^\prime}\sum_{p,r,r^\prime}\left(|r+p|^2 -|r|^2\right)(\widehat{\omega}^\varepsilon_{r,r^\prime}(p))^2 \hat{u}_{\sigma}(r+p)\hat{u}_{\sigma^\prime}(r^\prime - p)\hat{v}_\sigma(r)\hat{v}_{\sigma^\prime}(r^\prime),\\
  \mathrm{I}_2 &   = -\frac{1}{L^6}\sum_{\sigma\neq\sigma^\prime}\sum_{p,r,r^\prime} \left(|r+p|^2 -|r|^2 \right)  (\widehat{\omega}_{r,r^\prime}^\varepsilon (p))^2  \\
  &\qquad\qquad \times \hat{u}_\sigma(r+p) \hat{u}_{\sigma^\prime}(r^\prime - p)\hat{v}_\sigma(r)\hat{v}_{\sigma^\prime}(r^\prime) \hat{a}_{-r^\prime ,\sigma^\prime}^\ast \hat{a}_{-r^\prime, \sigma^\prime},\\
  \mathrm{I}_{3}  & = -\frac{1}{L^6}\sum_{\sigma\neq \sigma^\prime}\sum_{p,r,r^\prime}||r+p|^2 - (k_F^\sigma)^2| (\widehat{\omega}_{r,r^\prime}^\varepsilon (p))^2\hat{u}_\sigma(r+p) \hat{u}_{\sigma^\prime}(r^\prime - p)\hat{v}_\sigma(r)\hat{v}_{\sigma^\prime}(r^\prime)\hat{a}_{-r ,\sigma}^\ast \hat{a}_{-r,\sigma },\\
  \mathrm{I}_4 &= \frac{1}{L^6}\sum_{\sigma\neq \sigma^\prime}\sum_{p,q,r,r^\prime}||r+p|^2 - (k_F^\sigma)^2| \widehat{\omega}_{r,r^\prime}^\varepsilon (p)\widehat{\omega}_{r+p-q, r^\prime - p+q}^\varepsilon (q)\hat{u}_\sigma(r+p)\hat{v}_\sigma(r)\hat{u}_{\sigma^\prime}(r^\prime - p)  \\
&\qquad\qquad  \times  \hat{v}_{\sigma^\prime}(r^\prime)\hat{v}_\sigma(r+p-q)\hat{v}_{\sigma^\prime}(r^\prime - p+q) \hat{a}_{-r^\prime +p - q, \sigma^\prime}^\ast \hat{a}_{q-p-r,\sigma}^\ast \hat{a}_{-r,\sigma}\hat{a}_{-r^\prime, \sigma^\prime},\\
  \mathrm{I}_{5} & = \frac{1}{L^6}\sum_{\sigma\neq \sigma^\prime}\sum_{p,q, r,r^\prime}||r|^2 - (k_F^\sigma)^2| \widehat{\omega}_{r,r^\prime}^\varepsilon(p) \widehat{\omega}^\varepsilon_{r,r^\prime}(q) \hat{u}_\sigma(r+p)\hat{u}_{\sigma^\prime}(r^\prime - p)\hat{v}_\sigma(r)\hat{v}_{\sigma^\prime}(r^\prime) 
  \\
 &\qquad \qquad \times \hat{u}_{\sigma^\prime}(r^\prime - q)\hat{u}_\sigma(r+q) \hat{a}_{r^\prime -q, \sigma^\prime}^\ast \hat{a}_{r+q,\sigma}^\ast  \hat{a}_{r+p ,\sigma} \hat{a}_{r^\prime -p,\sigma^\prime}\\
  \mathrm{I}_6 &= \frac{1}{L^6}\sum_{\sigma\neq \sigma^\prime}\sum_{p,r,r^\prime,s} \left(|r+p|^2 -|r|^2\right) \widehat{\omega}_{r,r^\prime}^\varepsilon (p)\widehat{\omega}_{r,s}^\varepsilon (p)\hat{u}_{\sigma}(r+p)\hat{v}_\sigma(r)\hat{u}_{\sigma^\prime}(r^\prime - p) 
  \\
 &\qquad \qquad \times \hat{v}_{\sigma^\prime}(r^\prime)\hat{v}_{\sigma^\prime}(s)  \hat{u}_{\sigma^\prime}(s-p) \hat{a}_{-s,\sigma^\prime}^\ast\hat{a}_{s-p, \sigma^\prime}^\ast \hat{a}_{r^\prime - p, \sigma^\prime}\hat{a}_{-r^\prime, \sigma^\prime}\\
    \mathrm{I}_7 & = \frac{1}{L^6}\sum_{\sigma \neq \sigma^\prime}\sum_{p,q,r,r^\prime} ||r+p|^2 - (k_F^\sigma)^2| \widehat{\omega}_{r,r^\prime}^\varepsilon(p)\widehat{\omega}_{r+p-q, r^\prime} (q)\hat{u}_{\sigma}(r+p)\hat{u}_{\sigma^\prime}(r^\prime - p)  
  \\
  &\qquad \qquad \times \hat{v}_\sigma(r)\hat{v}_{\sigma^\prime}(r^\prime)  \hat{u}_{\sigma^\prime}(r^\prime - q) \hat{v}_\sigma(r+p-q) \hat{a}_{r^\prime - q, \sigma^\prime}^\ast \hat{a}_{q-p-r, \sigma}^\ast \hat{a}_{-r,\sigma}\hat{a}_{r^\prime - p, \sigma^\prime}\\
  \mathrm{I}_{8} &= \frac{1}{L^6}\sum_{\sigma\neq \sigma^\prime}\sum_{p,q, r,r^\prime}||r|^2 - (k_F^\sigma)^2| \widehat{\omega}_{r,r^\prime}^\varepsilon(p) \widehat{\omega}^\varepsilon_{r,r^\prime -p+q}(q) \hat{u}_\sigma(r+p)\hat{u}_{\sigma^\prime}(r^\prime - p)\hat{v}_\sigma(r)\hat{v}_{\sigma^\prime}(r^\prime) 
  \\
  &\qquad\qquad \times\hat{v}_{\sigma^\prime}(r^\prime -p+q)\hat{u}_\sigma(r+q)\hat{a}_{-r^\prime +p-q, \sigma^\prime}^\ast \hat{a}_{r+q,\sigma}^\ast  \hat{a}_{r+p ,\sigma} \hat{a}_{-r^\prime,\sigma^\prime}\\
  \mathrm{I}_{9} &= -\frac{1}{L^6}\sum_{\sigma\neq \sigma^\prime} \sum_{p,r,r^\prime}\left(|r+p|^2 -  |r|^2\right) (\widehat{\omega}^\varepsilon_{r,r^\prime}(p))^2  \\
  &\qquad \qquad \times \hat{u}_\sigma(r+p)\hat{u}_{\sigma^\prime}(r^\prime - p)\hat{v}_\sigma(r)\hat{v}_{\sigma^\prime}(r^\prime)\hat{a}_{r^\prime - p, \sigma^\prime}^\ast\hat{a}_{r^\prime - p, \sigma^\prime}\\
  \mathrm{I}_{10} &= -\frac{1}{L^6}\sum_{\sigma\neq \sigma^\prime}\sum_{p,r,r^\prime}||r|^2 - (k_F^\sigma)^2| (\widehat{\omega}_{r,r^\prime}^\varepsilon(p))^2 \hat{u}_\sigma(r+p)\hat{u}_{\sigma^\prime}(r^\prime - p)\hat{v}_\sigma(r)\hat{v}_{\sigma^\prime}(r^\prime) \hat{a}_{r+p,\sigma}^\ast \hat{a}_{r+p,\sigma}. 
\end{align*}
The terms $\mathrm{I}_9$ and  $\mathrm{I}_{10}$ are non-positive and can hence be dropped for an upper bound. The main constant contribution comes  from $\mathrm{I}_1$, while $\mathrm{I}_2 + \mathrm{I}_3$ gives the desired (negative) quadratic contribution of the form $\rho \cN$. All the $\mathrm{I}_j$ with $4 \leq j \leq 8$ are nonnegative, but they will only contribute to the error terms.

\bigskip
\noindent {\bf Analysis of $\mathrm{I}_{1}$.} We have 
\begin{align}\label{eq:RR-I1-final}
  \mathrm{I}_1 &= \frac{1}{L^6}\sum_{p,r,r^\prime}\left(|r+p|^2 -|r|^2  + |r^\prime -p|^2 - |r^\prime|^2 \right)(\widehat{\omega}^\varepsilon_{r,r^\prime}(p))^2 \hat{u}_{\uparrow}(r+p)\hat{u}_{\downarrow}(r^\prime - p)\hat{v}_\uparrow(r)\hat{v}_{\downarrow}(r^\prime)
  \nn\\
  &\le  \frac{1}{L^6}\sum_{p,r,r^\prime}\frac{(2|p|^2 \hat{\varphi}(p))^2}{\lambda_{p,r}+ \lambda_{-p,r^\prime} + 2\varepsilon}\hat{u}_{\uparrow}(r+p)\hat{u}_{\downarrow}(r^\prime - p)\hat{v}_\uparrow(r)\hat{v}_{\downarrow}(r^\prime) .
\end{align}

\bigskip
\noindent {\bf Analysis of $\mathrm{I}_{2}$ and $\mathrm{I}_{3}$.} Next we consider $\mathrm{I}_2$. Since each summand is non-negative, we can restrict the sum in  $\mathrm{I}_2$ to  $\{|p|\ge \rho^{1/3-\gamma}\}$. 
In this case, since  $|r|,|r'|\le C\rho^{1/3}$ and $\eps=\rho^{\frac 2 3+\delta}\le \rho^{\frac 2 3+\gamma}$,  we have  
\begin{align*}
(|r+p|^2-|r|^2) (\widehat{\omega}_{r,r^\prime}^\varepsilon (p))^2& = \frac{(|p|^2 + 2 p\cdot r)4|p|^4 |\widehat \varphi(p)|^2 }{(2|p|^2 + 2 p \cdot(r-r')+2\eps)^2}  \nn\\
&\ge  (|p|^2 + 2 p\cdot r) |\widehat \varphi(p)|^2 (1-C\rho^\gamma). 
\end{align*}
Here we used $(1+x)^{-2}\ge 1 -C|x|$ with $|x|=|p|^{-2}  |p \cdot(r-r')+\eps| \le C \rho^\gamma$. Using also $\sum_r p\cdot r\, v_\sigma(r)=0$, we obtain the bound
\begin{align*} 
  \mathrm{I}_2   &\leq  - \frac{1-C\rho^\gamma}{L^6}\sum_{\sigma\neq\sigma^\prime}\sum_{|p|\ge \rho^{1/3-\gamma}}\sum_{r,r^\prime} (|p|^2 + 2 p\cdot r) |\widehat \varphi(p)|^2 \hat{v}_\sigma(r)\hat{v}_{\sigma^\prime}(r^\prime) \hat{a}_{-r^\prime ,\sigma^\prime}^\ast \hat{a}_{-r^\prime, \sigma^\prime}
  \nn\\
  &=  - (1-C\rho^\gamma)\sum_{\sigma\neq\sigma^\prime}  \rho_{\sigma} \left( \frac{1}{L^3} \sum_{|p|\ge \rho^{1/3-\gamma}} |p|^2 |\widehat \varphi(p)|^2 \right) \sum_{r^\prime} \hat{v}_{\sigma^\prime}(r^\prime) \hat{a}_{-r^\prime ,\sigma^\prime}^\ast \hat{a}_{-r^\prime, \sigma^\prime} .
\end{align*} 
Since $|\widehat \varphi(p)|\le C |p|^{-2}$ the missing part in the sum over $p$ is bounded by  $L^{-3} \sum_{|p|\le \rho^{1/3-\gamma}} |p|^2 |\widehat \varphi(p)|^2 \le C \rho^{1/3-\gamma}$. Since $\gamma < 1/6$, this is negligible compared to $\rho^\gamma$, however. We conclude that  
\begin{equation}\label{eq:RR-I2-final}
  \mathrm{I}_2   \leq -  \sum_{\sigma\neq\sigma^\prime} \rho_{\sigma}   \| \nabla \varphi\|_2^2  \sum_{r'} \hat{v}_{\sigma^\prime}(r^\prime) \hat{a}_{-r^\prime ,\sigma^\prime}^\ast \hat{a}_{-r^\prime, \sigma^\prime} + C \rho^{1+\gamma}\cN. 
\end{equation} 

The term $\mathrm{I}_3$ can be treated in a similar manner, and satisfies the same bound. 

\medskip
\noindent {\bf Analysis of $\mathrm{I}_{4}$.} We now consider $\mathrm{I}_4$ in \eqref{eq:RR-dec}, which can be written as 
\begin{align*}
  \mathrm{I}_4 &= \frac{1}{L^6}\sum_{\sigma\neq \sigma^\prime}\sum_{r,r^\prime}||r|^2 - (k_F^\sigma)^2| \hat{u}_\sigma(r) \hat{u}_{\sigma^\prime}(r^\prime) 
  \\
  &\qquad\qquad \times \left|\sum_p \widehat{\omega}^\varepsilon_{r-p,r^\prime + p}(p) \hat{v}_\sigma(r-p)\hat{v}_{\sigma^\prime}(r^\prime + p)\hat{a}_{r-p,\sigma}\hat{a}_{r^\prime + p, \sigma^\prime}\right|^2
 \end{align*}
Using $1= \widehat{\chi}_<(p) + \widehat{\chi}_>(p)$ and the Cauchy--Schwarz inequality, we can bound $\mathrm{I}_4  \leq  2 (\mathrm{I}_4^>  + \mathrm{I}_4^<)$ with
\begin{align}\label{eq:R*R-I4-first-dec}
  \mathrm{I}_4^\gtrless &=  \frac{1}{L^6}\sum_{\sigma\neq \sigma^\prime}\sum_{r,r^\prime}||r|^2 - (k_F^\sigma)^2| \hat{u}_\sigma(r) \hat{u}_{\sigma^\prime}(r^\prime) 
  \nn \\
  &\qquad\qquad \times \left|\sum_p \widehat{\omega}^\varepsilon_{r-p,r^\prime + p}(p)  \widehat{\chi}_\gtrless(p)\hat{v}_\sigma(r-p)\hat{v}_{\sigma^\prime}(r^\prime + p)\hat{a}_{r-p,\sigma}\hat{a}_{r^\prime + p, \sigma^\prime}\right|^2 .
  \end{align}

We start with estimating $\mathrm{I}_4^>$. 
We observe that for $|r-p|, |r^\prime +p|^2\lesssim \rho^{1/3}\ll \rho^{1/3-\gamma} \lesssim |p|, |r|, |r^\prime|$,   the coefficient $\widehat{\omega}^\varepsilon_{r-p,r^\prime+p}(p)$ is close to $\hat{\varphi}(p)$. With this in mind, we write
\[
  \widehat{\omega}^\varepsilon_{r-p,r^\prime+p}(p) \chi_>(p)= \hat{\varphi}^>(p)+ 2 \left( \frac{p\cdot((r^\prime + p) - (r-p)) -\varepsilon}{\lambda_{p,r-p} + \lambda_{-p,r^\prime+p} + 2\varepsilon} \right) \hat{\varphi}^>(p)
\]
and use the Cauchy--Schwarz inequality to bound  
\begin{align}\label{eq:RR-I4-dec}
  \mathrm{I}_4^> &\le \frac{2}{L^6}\sum_{\sigma\neq\sigma^\prime}\sum_{r,r^\prime}||r|^2 - (k_F^\sigma)^2| \left|\sum_p \hat{\varphi}^>(p) \hat{v}_\sigma(r-p)\hat{v}_{\sigma^\prime}(r^\prime + p)\hat{a}_{r-p,\sigma}\hat{a}_{r^\prime + p,\sigma^\prime}\right|^2 
  \nn\\
  &\quad +\frac{8}{L^6}\sum_{\sigma\neq\sigma^\prime}\sum_{r,r^\prime}||r|^2 - (k_F^\sigma)^2|   
  \nn\\
  &\qquad\times \left|\sum_p   \left(\frac{p\cdot((r^\prime + p) - (r-p)) -\varepsilon}{\lambda_{p,r-p} + \lambda_{-p,r^\prime+p} + 2\varepsilon}\right)\hat{\varphi}^>(p) \hat{v}_\sigma(r-p)\hat{v}_{\sigma^\prime}(r^\prime + p)\hat{a}_{r-p,\sigma}\hat{a}_{r^\prime + p,\sigma^\prime}\right|^2 
  \nn\\
  &=  \mathrm{I}_{4;a}^> + \mathrm{I}_{4;b}^>.
\end{align}
With a change of variables,  we can rewrite $\mathrm{I}_{4;a}^>$  as 
\begin{align*}
  \mathrm{I}_{4;a}^> & = \frac{2}{L^6}\sum_{\sigma\neq \sigma^\prime}\sum_{p,q,r,r^\prime}\Big(|p|^2 + 2p\cdot r +|r|^2 - (k_F^\sigma)^2\Big)   \hat{\varphi}(p) \hat{\varphi}(q) \widehat{\chi}_>(p)\widehat{\chi}_>(q) 
  \\
  &\qquad\qquad  \times \hat{v}_\sigma(r)\hat{v}_{\sigma^\prime}(r^\prime)\hat{v}_\sigma(r+p-q)\hat{v}_{\sigma^\prime}(r^\prime - p+q)  \hat{a}_{-r^\prime +p - q, \sigma^\prime}^\ast \hat{a}_{q-p-r,\sigma}^\ast \hat{a}_{-r,\sigma}\hat{a}_{-r^\prime, \sigma^\prime}\nn\\
  &=  \sum_{j=1}^4\mathrm{I}_{4;a;j}^>.
\end{align*}
The first term can be written in configuration space as 
$$
 \mathrm{I}_{4;a;1}^> = 2 \sum_{\sigma\neq\sigma^\prime}\int dxdy \, \left(-\Delta\varphi^>\right)(x-y)\varphi^>(x-y)a^\ast_{\sigma^\prime}(v_y)a_\sigma^\ast(v_x)a_\sigma(v_x)a_{\sigma^\prime}(v_y)
$$
and can be bounded using Lemma \ref{lem:4aaa} and $\|(\Delta \varphi^>)\varphi^>\|_{1}\le \|(\Delta \varphi^>)\|_1 \|\varphi^>\|_{\infty}\le C$ from Lemma \ref{lem: bounds phi}, with the result that  
$$
| \langle \psi ,    \mathrm{I}_{4;a;1}^> \psi \rangle |  \le C \rho^{ 1 +\frac{\beta}{6}}  \langle \psi, \mathcal{N} \psi\rangle + C\rho  \|\mathcal{N}_\beta^{\frac{1}{2}}\psi\|  \|\mathcal{N}^{\frac{1}{2}}\psi\|. 
$$
The second term equals
$$
\mathrm{I}_{4;a;2}^> = 4 \sum_{\sigma\neq\sigma^\prime} \sum_{\ell = 1}^3 \int dxdy \, \left(-\partial_\ell\varphi^>\right)(x-y)\varphi^>(x-y)a^\ast_{\sigma^\prime}(v_y)a_\sigma^\ast(v_x)a_\sigma(\partial_\ell v_x)a_{\sigma^\prime}(v_y)
$$
and can be bounded with the aid of Lemmas~\ref{lem:1a} and~\ref{lem: bounds phi} as
$$
  \langle\psi, \mathrm{I}_{4;a;2}^>\psi\rangle    \leq C\rho^{\frac{4}{3}}\|\nabla{\varphi}^>\|_1 \|{\varphi}^>\|_\infty \langle \psi, \mathcal{N}\psi\rangle \leq C\rho^{1+\gamma}\langle \psi, \mathcal{N}\psi\rangle.
$$

Proceeding in the same way, one sees that $  |\langle\psi, \mathrm{I}_{4;a;3}^>\psi\rangle| \leq C\rho^{4/3+\gamma}\langle \psi, \mathcal{N}\psi\rangle$. 
Finally, $\mathrm{I}_{4;a;4}^> \leq 0$. Combining the estimates, we conclude that for all $0\le \beta<1$
\begin{equation}\label{eq: est Ia 1 R*R}
  \langle\psi, \mathrm{I}_{4;a}^> \psi \rangle \leq C\Big( \rho^{1+\gamma} +  
   \rho^{ 1 +\frac{\beta}{6}} \Big)  \langle \psi, \mathcal{N}\psi\rangle + C\rho  \|\mathcal{N}_\beta^{\frac{1}{2}}\psi\|  \|\mathcal{N}^{\frac{1}{2}}\psi\|. 
  \end{equation}

Next, we consider $\mathrm{I}_{4;b}^>$. Using \eqref{eq: int t conf space}, we can write 
\begin{align*}
  \mathrm{I}_{4;b}^> &= \frac{8}{L^6} \sum_{\sigma\neq \sigma^\prime}\sum_{r,r^\prime} ||r|^2- (k_F^\sigma)^2| \bigg|  \int_0^{\infty}dt\, e^{-2t\varepsilon} e^{-t ( |r|^2 + |r^\prime|^2 )}  \\
  &\quad \times \sum_{p} \left[p\cdot\big( (r^\prime + p) - (r-p)\big) -\varepsilon \right]  \hat{\varphi}^>(p)\hat{v}_{t,\sigma}(p-r)\hat{v}_{t,\sigma^\prime}(r^\prime +p)\hat{a}_{-r+p,\sigma}\hat{a}_{-r^\prime - p, \sigma^\prime}\bigg|^2.
\end{align*}
With the Cauchy--Schwarz inequality with respect to $t$, we can bound this as
\begin{align}\nn
  \mathrm{I}_{4;b}^> &\leq \frac{8}{L^6}\sum_{\sigma\neq \sigma^\prime}\sum_{r,r^\prime} ||r|^2 - (k_F^\sigma)^2| \left(\int_0^{\infty}dt\, e^{-2t(|r|^2 - (k_F^\sigma)^2)}\right) \int_0^{\infty}dt\, e^{-2t( |r^\prime|^2 + (k_F^{\sigma})^2 ) } 
  \\
 &\quad \times 
  \left|\sum_p \left[p\cdot ( (r^\prime +p) - (r-p) ) -\varepsilon\right]\hat{\varphi}^>(p) \hat{v}_{t,\sigma}(p-r)\hat{v}_{t,\sigma^\prime}(r^\prime + p) \hat{a}_{p-r,\sigma}\hat{a}_{-r^\prime - p, \sigma^\prime}\right|^2. 
  \label{tbrt}
\end{align}
The first integral on the right-hand side cancels the factor  $||r|^2 - (k_F^\sigma)^2|$.  
Moreover, on the support of $\varphi^>(p)$ and $\hat{v}_{\sigma^\prime}(r^\prime+p)$, we always have $|r^\prime| \geq 3\rho^{1/3 - \gamma}$.  
Applying also a Cauchy--Schwarz inequality for sum over the  terms in $p\cdot (r^\prime +p) - p\cdot (r-p) -\varepsilon$
we find 
\begin{align*}
  \mathrm{I}_{4;b}^>
  &\le   \frac{C}{L^6}  \sum_{\sigma\neq \sigma^\prime} \sum_{r,r^\prime} \int_0^{\infty}dt\, e^{-18t \rho^{\frac 2 3 -2\gamma } } \left[ \eps^2 \left| \sum_p \hat{\varphi}^>(p)\hat{v}_{t, \sigma}(r-p)\hat{v}_{t, \sigma^\prime}(r^\prime + p)\hat{a}_{p-r,\sigma}\hat{a}_{-r^\prime - p, \sigma^\prime} \right|^2 \right.  \\
  &\qquad\qquad + \sum_{\ell=1}^3 \left|\sum_p (p_\ell (r^\prime_\ell +p_\ell)  \hat{\varphi}^>(p)\hat{v}_{t, \sigma}(r-p)\hat{v}_{t, \sigma^\prime}(r^\prime + p)\hat{a}_{p-r,\sigma}\hat{a}_{-r^\prime - p, \sigma^\prime} \right|^2 \\
  &\qquad \qquad \left. + \sum_{\ell=1}^3  \left| \sum_p (p_\ell  (r_\ell -p_\ell )  \hat{\varphi}^>(p)\hat{v}_{t, \sigma}(r-p)\hat{v}_{t, \sigma^\prime}(r^\prime + p)\hat{a}_{p-r,\sigma}\hat{a}_{-r^\prime - p, \sigma^\prime} \right|^2\right].
  \end{align*}
  Writing these terms in configuration space and using \eqref{eq:Pauli} together with Lemma \ref{lem:1a} and Lemma \ref{lem: bounds phi}, we obtain
  \begin{align}\nn
 \mathrm{I}_{4;b}^> &\leq  C  \sum_{\sigma\neq \sigma^\prime}\int_0^{\infty}dt \, e^{-18t \rho^{\frac 2 3 -2\gamma } }  \int dx dy  \Big(  \eps^2   |\varphi^>  (x-y)|^2 a_\sigma^*(v_{t,x}) a_{\sigma'}^\ast(v_{t,y}) a_{\sigma'}(v_{t,y}) a_\sigma(v_{t,x})
  \\ \nn
 &\qquad\qquad\qquad\qquad\qquad +  \sum_{\ell=1}^3  |\partial_\ell \varphi^>_j  (x-y)|^2 a_\sigma^*(v_{t,x}) a_{\sigma'}^\ast(\partial_\ell v_{t,y}) a_{\sigma'}(\partial_\ell v_{t,y}) a_\sigma(v_{t,x}) \Big) \\ \nn
  &\le   C \sum_{\sigma\neq \sigma^\prime}\int_0^{\infty}dt\, e^{-18t \rho^{\frac 2 3 -2\gamma } }  \Big( \eps^2 e^{2t (k_F^{\sigma'})^2} \|\varphi^>\|_2^2 \|v_{t,\sigma}\|_2^2  + \|\nabla \varphi^>\|_2^2 e^{2t (k_F^\sigma)^2 } \|\nabla v_{t,\sigma'})\|_2^2\Big) \cN \\
&  \leq C(\rho^{\frac 4 3 + 3\gamma +2\delta}+ \rho^{1+2\gamma}) \cN.  \label{I4b}
\end{align}
In the final estimate we used the small factor $e^{-18 t \rho^{\frac 2 3 -2\gamma } }$ to dominate  the various terms $e^{t(k_F^\sigma)^2}$ resulting from the $v_{t,\sigma}$.

Next we shall consider $\mathrm{I}_{4}^<$ in \eqref{eq:R*R-I4-first-dec}.  Using again \eqref{eq: int t conf space} and the Cauchy--Schwarz inequality similarly as in \eqref{tbrt}, we obtain the bound  
\begin{multline*}
  \mathrm{I}_4^< \leq \frac{C}{L^6}\sum_{\sigma\neq \sigma^\prime}\int_0^{\infty}dt\, \sum_{r,r^\prime}\hat{u}_\sigma(r) \hat{u}_{\sigma^\prime}(r^\prime)e^{-2t\varepsilon}e^{-t|r|^2} e^{-t|r^\prime|^2} e^{-t(k_F^\sigma)^2} e^{-t(k_F^{\sigma^\prime})^2} 
  \\
  \times \left|\sum_p |p|^2 \hat{\varphi}^<(p) \hat{v}_{t,\sigma}(r-p)\hat{v}_{t,\sigma^\prime}(r^\prime + p)\hat{a}_{r-p,\sigma}\hat{a}_{r^\prime + p, \sigma^\prime}\right|^2.
\end{multline*}
Given the support of $\varphi^<$, the sum in the second line is non-zero only if both $r$ and $r'$ are in a ball of radius $6 \rho^{1/3-\gamma}$, 
hence we can replace $\hat{u}_\sigma(r)$ and $\hat{u}_{\sigma^\prime}(r^\prime)$ in the first line by $\hat{u}^<_\sigma(r)$ and $\hat{u}^<_{\sigma^\prime}(r^\prime)$ defined in \eqref{eq: u< u> gamma}. With the notation introduced in \eqref{eq: def u<>t}, we thus obtain with Cauchy--Schwarz  
\begin{align*}
  \langle\psi, \mathrm{I}^<_4\psi\rangle &\leq C\sum_{\sigma\neq \sigma^\prime}\int_0^{\infty}dt\, e^{-2t\varepsilon}e^{-t(k_F^\sigma)^2} e^{-t(k_F^{\sigma^\prime})^2}  \int dxdydzdz^\prime   
  \\
  &\qquad \times |\Delta\varphi^<(x-y)||\Delta\varphi^<(z-z^\prime)|| u_{t,\sigma}^<(x-z)||u_{t,\sigma^\prime}^<(y-z^\prime)|\\
  &\qquad \times |\langle\psi, a^\ast_\sigma(v_{t,x})a_{\sigma^\prime}^\ast(v_{t,y})a_{\sigma^\prime}(v_{t,y})a_{\sigma}(v_{t,x})\psi\rangle|.
\end{align*}
We can bound 
\begin{align*}
  \int dz dz' \, |\Delta\varphi^<(z-z')|| u^<_{t,\sigma}(x-z)||u^<_{t,\sigma^\prime}(y-z^\prime)|  \leq \|\Delta\varphi^<\|_1 \|{u}^<_{t,\sigma}\|_2 \|{u}^<_{t,\sigma^\prime}\|_2 
\end{align*}
uniformly in $x$ and $y$. With an application of Lemmas \ref{lem:4a}, \ref{lem:1a},  \ref{lem:t} and \ref{lem: bounds phi}, we thus obtain 
\begin{align*} 
\langle \psi, \mathrm{I}_4^< \psi \rangle &\leq C\rho \left( \int _{0}^\infty dt\, e^{-2t\varepsilon}e^{t(k_F^\sigma)^2} e^{t(k_F^{\sigma^\prime})^2} \|{u}^<_{t,\sigma}\|_2 \|{u}^<_{t,\sigma^\prime}\|_2\right)   \langle \psi,\mathcal{N}  \psi\rangle  \leq C\rho^{\frac{4}{3} - \gamma}\langle \psi,  \mathcal{N}\psi\rangle .
\end{align*}
Combining the latter bound with \eqref{eq: est Ia 1 R*R} and \eqref{I4b}, and using that $\gamma < 1/6$ by assumption, 
we conclude that for all $0\le \beta<1$
\begin{align}\label{eq:RR-I4-final}
  \langle\psi, \mathrm{I}_{4} \psi \rangle \leq C\Big( \rho^{1+\gamma} +\rho^{ 1 +\frac{\beta}{6}} \Big)  \langle \psi, \mathcal{N}\psi\rangle + C\rho  \|\mathcal{N}_\beta^{\frac{1}{2}}\psi\|  \|\mathcal{N}^{\frac{1}{2}}\psi\|. 
  \end{align}


\medskip
\noindent {\bf Analysis of $\mathrm{I}_{5}$.} After a change of variables, we can write $\mathrm{I}_{5}$ in \eqref{eq:RR-dec} as
\begin{align}\label{eq:R*R-I5-first-dec}
  \mathrm{I}_5 &= \frac{1}{L^6}\sum_{\sigma \neq \sigma^\prime}\sum_{r,r^\prime}||r|^2 - (k_F^\sigma)^2|\hat{v}_\sigma(r)\hat{v}_{\sigma^\prime}(r^\prime) \times \nn\\
  &\qquad \qquad \times \left|\sum_p  \widehat{\omega}^\varepsilon_{r,r^\prime}(p) \hat{u}_\sigma(r+p)\hat{u}_{\sigma^\prime}(r^\prime - p)\hat{a}_{r+p,\sigma}\hat{a}_{r^\prime - p, \sigma^\prime}\right|^2. 
\end{align}
Using \eqref{eq: int t conf space} and the Cauchy--Schwarz inequality with respect to $t$, we can get rid of the coefficient $||r|^2 - (k_F^\sigma)^2|$ similarly as in the analysis of $\mathrm{I}_4$. We have
\begin{align*}
  \mathrm{I}_5  &= \frac{1}{L^6}\sum_{\sigma\neq \sigma^\prime}\sum_{r,r^\prime}||r|^2 - (k_F^\sigma)^2|\hat{v}_\sigma(r)\hat{v}_{\sigma^\prime}(r^\prime)\bigg| \int_0^{\infty} dt\,  e^{-2 t \eps} e^{t|r|^2 } e^{t|r^\prime|^2 } 
  \\
 &\qquad\qquad \times  \sum_p   2|p|^2 \hat{\varphi}(p)  \hat{u}_{t,\sigma}(r+p)\hat{u}_{t,\sigma^\prime}(r^\prime - p)\hat{a}_{r+p,\sigma}\hat{a}_{r^\prime - p, \sigma^\prime}\bigg|^2 
 \\
& \le   \frac{1}{L^6}\sum_{\sigma\neq \sigma^\prime}\int_0^{\infty} dt\, e^{-2t\varepsilon} e^{t(k_F^\sigma)^2}e^{t(k_F^{\sigma^\prime})^2} \sum_{r,r^\prime} \hat{v}_{t,\sigma}(r)\hat{v}_{t,\sigma^\prime}(r^\prime) 
  \\
&  \qquad\qquad \times \bigg|\sum_p   2|p|^2 \hat{\varphi}(p)   \hat{u}_{t,\sigma}(r+p)\hat{u}_{t,\sigma^\prime}(r^\prime - p)\hat{a}_{r+p,\sigma}\hat{a}_{r^\prime - p, \sigma^\prime}\bigg|^2.
\end{align*}
As for the term $\mathrm{I}_4$, we use $1 = \widehat{\chi}_>(p) + \widehat{\chi}_<(p)$ and the Cauchy--Schwarz inequality to bound $\mathrm{I}_5 \leq 2( \mathrm{I}_5^>+   \mathrm{I}_5^< )$, with
\begin{align}\label{eq:RR-I5-dec}
\mathrm{I}_5^\gtrless &=  \frac{1}{L^6}\sum_{\sigma\neq \sigma^\prime}\int_0^{\infty} dt\, e^{-2t\varepsilon}  e^{t(k_F^\sigma)^2}e^{t(k_F^{\sigma^\prime})^2} \sum_{r,r^\prime}\hat{v}_{t,\sigma}(r)\hat{v}_{t,\sigma^\prime}(r^\prime)  \nn\\
  &\qquad\quad \times \bigg|\sum_p   2|p|^2 \hat{\varphi}^\gtrless(p)  \hat{u}_{t,\sigma}^\gtrless(r+p)\hat{u}_{t,\sigma^\prime}^\gtrless(r^\prime - p)\hat{a}_{r+p,\sigma}\hat{a}_{r^\prime - p, \sigma^\prime}\bigg|^2 .
\end{align}
The support properties of $\hat\varphi^\gtrless$ allowed us to  replace  $\hat{u}_\sigma(r+p)$ and $\hat{u}_{\sigma^\prime}(r^\prime -p)$ by $\hat{u}_\sigma^\gtrless(r+p)$ and $\hat{u}_{\sigma^\prime}^\gtrless(r^\prime - p)$. 

We start with $\mathrm{I}_5^>$, which we write out as  
\begin{multline*}
  \mathrm{I}_5^> = \frac{4}{L^6}\sum_{\sigma\neq \sigma^\prime}\int_0^{\infty} \! dt\, \sum_{p,q,r,r^\prime} e^{-2t\varepsilon}e^{t(k_F^\sigma)^2}e^{t(k_F^{\sigma^\prime})^2}  |p|^2 \hat{\varphi}^>(p)  |q|^2 \hat{\varphi}^>(q)  \hat{v}_{t,\sigma}(r)\hat{v}_{t,\sigma^\prime}(r^\prime)  
  \\
  \times \hat{u}^>_{t,\sigma}(r+p)\hat{u}_{t,\sigma^\prime}^>(r^\prime - p)\hat{u}_{t,\sigma}^>(r+q)\hat{u}_{t,\sigma^\prime}^>(r^\prime - q)  \hat{a}_{r+p,\sigma}^\ast \hat{a}_{r^\prime - p, \sigma^\prime}^\ast\hat{a}_{r^\prime - q, \sigma^\prime}\hat{a}_{r+q,\sigma}.
\end{multline*}
We multiply and divide by $|r+p|^2|r^\prime - q|^2$, and further write
\[
  |r+p|^2 = (r+p)\cdot (r + r^\prime) - (r+p)\cdot (r^\prime -p) ,
\qquad |r^\prime - q|^2 = (r^\prime -q)\cdot (r+r^\prime)- (r^\prime - q) \cdot (r+q)
\]
in the numerator. 
This splits $\mathrm{I}_5^>$ into four terms $\mathrm{I}_{5;j}^>$ with $1\leq j\leq 4$, according to the presence of 
\begin{enumerate}
\item $(r+p)\cdot (r^\prime -p)(r^\prime - q)\cdot (r+q)$,
\item $- (r+p)\cdot (r + r^\prime)(r^\prime -q)\cdot (r+q)$, 
\item $- (r+p)\cdot (r^\prime -p)(r^\prime -q)\cdot (r+r^\prime)$,
\item $ (r+p)\cdot (r + r^\prime) (r^\prime -q)\cdot (r+r^\prime)$.
\end{enumerate}
The first term $\mathrm{I}_{5;1}^>$ can be written in configuration space as 
\begin{align*}
  \mathrm{I}_{5;1}^> &= 
4\sum_{\sigma\neq\sigma^\prime}\sum_{\ell, m =1}^3\int_0^{\infty}dt\,  e^{-2t\varepsilon}e^{t(k_F^\sigma)^2+t(k_F^{\sigma^\prime})^2} \int dxdydzdz^\prime \Delta{\varphi}^>(x-y)\Delta{\varphi}^>(z-z^\prime)  \\
&\qquad \times v_{t,\sigma}(x-z)v_{t,\sigma^\prime}(y-z^\prime) a^\ast_\sigma(\partial_\ell \widetilde{u}_{t,x}^>)a_{\sigma^\prime}^\ast(\partial_\ell u_{t,y}^>)a_{\sigma^\prime}(\partial_m \widetilde{u}_{t,z^\prime}^>)a_\sigma(\partial_m u_{t,z}^>),
\end{align*}
where we used the notation $\widetilde{u}^>_{t,\sigma}$ introduced in Lemma \ref{lem:t}.
Applying Lemmas \ref{lem:4aa}, \ref{lem:1a}, \ref{lem:t} and \ref{lem: bounds phi},  we can bound
\begin{align}\nn 
&  \langle \psi, \mathrm{I}_{5;1}^>\psi\rangle 
\\ &    \leq C\|\Delta{\varphi}^>\|_1^2  \left(\langle \psi, \mathbb{H}_0 \psi\rangle + \rho^{\frac{2}{3}}\langle \psi, \mathcal{N}\psi\rangle\right) \\
&\quad \times 
\int_0^{\infty}dt\, e^{-2t\varepsilon}e^{t(k_F^\sigma)^2+t(k_F^{\sigma^\prime})^2} 
   \|{v}_{t,\sigma}\|_2\|{v}_{t,\sigma^\prime}\|_2 \||\cdot|^{-1}\hat{{u}}^>_{t,\sigma}\|_2 \||\cdot|^{-1}\hat{{u}}^>_{t,\sigma^\prime}\|_2 \| \hat u_{t,\sigma}^>\|_\infty  \| \hat u_{t,\sigma'}^>\|_\infty \nn \\
   & \leq C\rho^{\frac{2}{3} + \gamma}\langle \psi, \mathbb{H}_0 \psi\rangle + C\rho^{\frac{4}{3} + \gamma}\langle \psi, \mathcal{N}\psi\rangle. 
   \label{eq:RR-I51}
\end{align}
The second term is given by  
\begin{multline*}
\mathrm{I}^>_{5;2} = -4\sum_{\sigma\neq\sigma^\prime}\sum_{\ell, m =1}^3\int_0^{\infty}dt\, e^{-2t\varepsilon} e^{t(k_F^\sigma)^2} e^{t(k_F^{\sigma^\prime})^2} \int dxdydzdz^\prime \Delta{\varphi}^>(x-y)\Delta{\varphi}^>(z-z^\prime)  
\\
\times  \left(\partial_\ell v_{t,\sigma}(x-z)v_{t,\sigma^\prime}(y-z^\prime)  + v_{t,\sigma}(x-z)\partial_\ell v_{t,\sigma^\prime}(y-z^\prime)\right) \\
\times a^\ast_\sigma(\partial_\ell \widetilde{u}_{t,x}^>)a_{\sigma^\prime}^\ast(u_{t,y}^>)a_{\sigma^\prime}(\partial_m \widetilde{u}_{t,z^\prime}^>)a_\sigma(\partial_m u_{t,z}^>)
\end{multline*}
and can be bounded in the same way, yielding 
\begin{align}\label{eq:RR-I52}
&  \langle \psi, \mathrm{I}_{5;2}^>\psi\rangle
  \leq C\rho^{1+\gamma}\|\mathcal{N}^{\frac{1}{2}}\psi\|\|\mathbb{H}_0^{\frac{1}{2}}\psi\| + C\rho^{\frac{4}{3} + \gamma}\langle \psi, \mathcal{N}\psi\rangle. 
\end{align}
The same applies to the third term $\mathrm{I}_{5;3}^>$, and we omit the details. Finally,
\begin{multline*}
\mathrm{I}^>_{5;4} = 4\sum_{\sigma\neq\sigma^\prime}\sum_{\ell, m =1}^3\int_0^{\infty}dt\,  e^{-2t\varepsilon}e^{t(k_F^\sigma)^2} e^{t(k_F^{\sigma^\prime})^2} \int dxdydzdz^\prime \Delta{\varphi}^>(x-y)\Delta{\varphi}^>(z-z^\prime) 
\\
 \times \left[\sum_{j,k= 0}^1(\partial_\ell^j\partial_m^k v_{t,\sigma}(x-z))\, (\partial_\ell^{1-j}\partial_m^{1-k} v_{t,\sigma^\prime}(y-z^\prime))\right] \\ \times a^\ast_\sigma(\partial_\ell \widetilde{u}_{t,x}^>)a_{\sigma^\prime}^\ast(u_{t,y}^>)a_{\sigma^\prime}(\partial_m \widetilde{u}_{t,z^\prime}^>)a_\sigma( u_{t,z}^>).
\end{multline*}
Proceeding as for the other terms, we obtain  
\begin{align}\label{eq:RR-I53}
  \langle \psi, \mathrm{I}_{5;4}^>\psi\rangle
  &\leq C\rho^{\frac{4}{3} + \gamma}\langle \psi,\mathcal{N}\psi\rangle.
\end{align}
Collecting the  estimates in \eqref{eq:RR-I51}--\eqref{eq:RR-I53}, we find that $\mathrm{I}_5^>$ in \eqref{eq:RR-I5-dec} satisfies
\begin{align}\label{eq:RR-I5>}
  \langle \psi, \mathrm{I}_5^>\psi\rangle 
  &\leq C\rho^{\frac{2}{3}+\gamma}\langle \psi, \mathbb{H}_0 \psi\rangle + C\rho^{\frac{4}{3} +\gamma}\langle \psi,\mathcal{N}\psi\rangle.
\end{align}

We are left with analyzing the second term $\mathrm{I}_5^<$ in \eqref{eq:RR-I5-dec}, which can be written as 
\begin{align*}
  \mathrm{I}_{5}^< &= 
4\sum_{\sigma\neq\sigma^\prime}\int_0^{\infty}dt\,  e^{-2t\varepsilon}e^{t(k_F^\sigma)^2+t(k_F^{\sigma^\prime})^2} \int dxdydzdz^\prime \Delta{\varphi}^<(x-y)\Delta{\varphi}^<(z-z^\prime)  \\
&\qquad \times v_{t,\sigma}(x-z)v_{t,\sigma^\prime}(y-z^\prime) a^\ast_\sigma({u}_{t,x}^<)a_{\sigma^\prime}^\ast( u_{t,y}^<)a_{\sigma^\prime}({u}_{t,z^\prime}^<)a_\sigma(u_{t,z}^<).
\end{align*}
With Lemma \ref{lem:4aa} and the Cauchy--Schwarz inequality, we find
\begin{align}\label{eq:RR-I5<}
  \langle \psi, \mathrm{I}_5^<\psi\rangle
  &\leq C\|\Delta\varphi^<\|_1^2 \sum_{\sigma\neq \sigma'} \int_0^\infty dt\, e^{-2t\varepsilon}\|\hat{u}^<_{t,\sigma}\|_2 \|\hat{v}_{t,\sigma}\|_2 \|\hat{u}^<_{t,\sigma^\prime}\|_2 \|\hat{v}_{t,\sigma^\prime}\|_2 \langle \psi,\mathcal{N}\psi\rangle 
 \nn \\
  &\leq C \rho^{1-\frac{3}{2}\gamma}\left(\int_0^\infty dt\, e^{-2t\varepsilon}\|\hat{u}^<_{t,\sigma}\|_2 \|\hat{v}_{t,\sigma}\|_2 \right) \langle \psi,\mathcal{N}\psi\rangle \leq C\rho^{\frac{4}{3} -2\gamma - \kappa} \langle \psi,\mathcal{N}\psi\rangle
\end{align}
for any $\kappa>0$. Here we also used   Lemma \ref{lem:1a} and  \eqref{eq: est int t vu< final}). Combining \eqref{eq:RR-I5>} and \eqref{eq:RR-I5<}, we conclude that for all $\kappa>0$, 
\begin{align}\label{eq:RR-I5-final}
  \langle \psi, \mathrm{I}_5 \psi\rangle \leq C\rho^{\frac{2}{3}+\gamma}\langle \psi, \mathbb{H}_0 \psi\rangle   + C\rho^{\frac{4}{3} -2\gamma -\kappa} \langle \psi,\mathcal{N}\psi\rangle.
\end{align}

\medskip
\noindent {\bf Analysis of $\mathrm{I}_{6}$.} Next, we consider the error term $\mathrm{I}_6$ in \eqref{eq:RR-dec}. We start by proceeding as with $\mathrm{I}_5$ above; i.e., we apply \eqref{eq: int t conf space}  and a Cauchy--Schwarz inequality to cancel the factor  $\left(|r+p|^2 -|r|^2\right)$ via \eqref{eq:int-t}. We further decompose $1 = \widehat{\chi}_<(p) + \widehat{\chi}_>(p)$ and obtain, in analogy to \eqref{eq:RR-I5-dec}, 
$\mathrm{I}_6 \leq 2( \mathrm{I}_6^>+   \mathrm{I}_6^< )$, with
\begin{align}\label{eq:RR-I6-dec}
\mathrm{I}_6^\gtrless& = \frac{4}{L^6}\sum_{\sigma\neq\sigma^\prime}\int_0^{\infty}dt\, e^{-2t\varepsilon}\sum_{p,r} \hat{u}^\gtrless_{t,\sigma}(r+p)\hat{v}_{t,\sigma}(r)  \nn \\
  &\qquad\qquad \times  \left|\sum_{r^\prime} |p|^2\hat{\varphi}^\gtrless(p) \hat{u}^\gtrless_{t,\sigma^\prime}(r^\prime - p)\hat{v}_{t,\sigma^\prime}(r^\prime)\hat{a}_{r^\prime -p,\sigma^\prime}\hat{a}_{-r^\prime, \sigma^\prime}\right|^2 .  
\end{align}
We first consider $\mathrm{I}_6^>$. It is convenient to multiply and divide  by $|r+p|^2$ and use that $|r+p|^2 = -(r+p)\cdot (r^\prime -p) + (r +p)\cdot (r+r^\prime)$ in the numerator, for one of the summation parameters $r'$. 
Correspondingly, we find that $\mathrm{I}_6^>$ is given in configuration space as
\begin{align*}
  \mathrm{I}_6^>
  & = -4\sum_{\sigma\neq \sigma^\prime}\sum_{\ell=1}^3\int_0^{\infty} dt\, e^{-2t\varepsilon}\int dxdydz\, (\Delta{\varphi}^>\ast \Delta{\varphi}^>)(x-y-z)  \\
  &\qquad\qquad \times \Big( \partial_\ell \widetilde{u}^>_{t,\sigma}(x) v_{t,\sigma}(x) a^\ast_{\sigma^\prime}(\partial_\ell u^>_{t,y})a_{\sigma^\prime}^\ast(v_{t,y})a_{\sigma^\prime}(v_{t,z}) a_{\sigma^\prime}(u^>_{t,z})
  \\
 &\qquad\qquad\qquad -  \partial_\ell \widetilde{u}^>_{t,\sigma}(x) \partial_\ell v_{t,\sigma}(x) a^\ast_{\sigma^\prime}(u^>_{t,y})a_{\sigma^\prime}^\ast(v_{t,y})a_{\sigma^\prime}(v_{t,z}) a_{\sigma^\prime}(u^>_{t,z})
  \\
   &\qquad\qquad \qquad - \partial_\ell \widetilde{u}^>_{t,\sigma}(x) v_{t,\sigma}(x) a^\ast_{\sigma^\prime}( u^>_{t,y})a_{\sigma^\prime}^\ast(\partial_\ell v_{t,y})a_{\sigma^\prime}(v_{t,z}) a_{\sigma^\prime}(u^>_{t,z}) \Big). 
\end{align*}
To bound it, we can use \eqref{eq:Pauli}, Lemma \ref{lem:1a}  and \eqref{eq: est int t vu> final}  to bound the integral with respect to $t$. Note that $\|\Delta{\varphi}^>\ast \Delta{\varphi}^>\|_1\leq \|\Delta{\varphi}^>\|_1^2\leq C$ by Lemma \ref{lem: bounds phi} and 
\[
  \int |\partial_\ell \widetilde{u}^>_{t,\sigma}(x)||\partial_\ell^nv_{t,\sigma}(x)| \leq C\rho^{\frac{n}{3}}\||\cdot|^{-1}\hat{u}^>_{t,\sigma}\|_2 \|{v}_{t,\sigma}\|_2.
\]
We find, for any $\kappa>0$, 
\begin{align}\label{eq:RR-I6>}
  \langle \psi,\mathrm{I}_6^>\psi\rangle 
   &\leq C\rho^{1+\frac{\gamma}{2} - \kappa}\|\mathbb{H}_0^{\frac{1}{2}}\psi\|\|\mathcal{N}^{\frac{1}{2}}\psi\| + C\rho^{\frac{4}{3} +\frac{\gamma}{2} -\kappa}\langle \psi,\mathcal{N}\psi\rangle. 
\end{align}

The term $\mathrm{I}_6^<$ in \eqref{eq:RR-I6-dec} is given in configuration space as 
\begin{align*}
  \mathrm{I}_6^< &= 4\sum_{\sigma\neq\sigma^\prime}\int_0^{\infty}dt\,e^{-2t\varepsilon} \int dxdy dz\, (\Delta{\varphi}^<\ast \Delta{\varphi}^<) (x-y-z)  \\
  &\qquad \times u^<_{t,\sigma}(x) v_{t,\sigma}(x) a^\ast_{\sigma^\prime}(u^<_{t,y})a_{\sigma^\prime}^\ast(v_{t,y})a_{\sigma^\prime}(v_{t,z}) a_{\sigma^\prime}(u^<_{t,z}).
\end{align*}
Bounding it in the same way as for the terms in $\mathrm{I}_6^>$ above, this time using \eqref{eq: est int t vu< final},  we obtain 
\[
  \langle\psi, \mathrm{I}_6^<\psi\rangle 
  \leq C\rho^{\frac{4}{3} - \frac{\gamma}{2} - \kappa}\langle \psi, \mathcal{N}\psi\rangle
\]
for any $\kappa>0$.  We thus conclude that 
\begin{align}\label{eq:RR-I6-final}
  \langle \psi, \mathrm{I}_6 \psi\rangle \leq C\rho^{1+\frac{\gamma}{2} - \kappa}\|\mathbb{H}_0^{\frac{1}{2}}\psi\|\|\mathcal{N}^{\frac{1}{2}}\psi\| + C\rho^{\frac{4}{3} - \frac{\gamma}{2} -\kappa}\langle \psi,\mathcal{N}\psi\rangle.
\end{align}

\medskip
\noindent {\bf Analysis of $\mathrm{I}_{7}$ and $\mathrm{I}_{8}$.} The term  $\mathrm{I}_7$ in \eqref{eq:RR-dec}  can  be written after a change of variables as 
\begin{align}\nn
  \mathrm{I}_7 &= \frac{1}{L^6}\sum_{\sigma\neq \sigma^\prime}\sum_{r,r^\prime}||r|^2 - (k_F^\sigma)^2|\hat{u}_\sigma(r)\hat{v}_{\sigma^\prime}(r^\prime) \\
  &\qquad \times \left|\sum_p \hat{\omega}^\varepsilon_{r-p,r^\prime}(p) \hat{u}_{\sigma^\prime}(r^\prime - p)\hat{v}_\sigma(r-p) \hat{a}_{-r+p,\sigma}\hat{a}_{r^\prime -p, \sigma^\prime}\right|^2 .
  \label{eq:TT-I7-full}
\end{align}

Proceeding with the same steps as for the terms $\mathrm{I}_5$ and $\mathrm{I}_6$ above, we bound $  \mathrm{I}_7
\leq 2( \mathrm{I}_7^>+ \mathrm{I}_7^<)$ with 
 \begin{align}\label{eq:RR-I7-dec}
 \mathrm{I}_7^\gtrless &=  \frac{4}{L^6}\sum_{\sigma\neq \sigma^\prime}\sum_{r,r^\prime}\int_0^{\infty}dt\, e^{-2t\varepsilon} e^{-t(k_F^\sigma)^2}e^{t(k_F^{\sigma^\prime})^2}\hat{u}_{t,\sigma}^\gtrless(r)\hat{v}_{t,\sigma^\prime}(r^\prime)  \nn\\
  &\qquad \times \left|\sum_p |p|^2\hat{\varphi}^\gtrless(p)\hat{u}^\gtrless_{t,\sigma^\prime}(r^\prime -p)\hat{v}_{t,\sigma}(r-p)\hat{a}_{-r+p,\sigma}\hat{a}_{r^\prime -p, \sigma^\prime}\right|^2 .
\end{align}
To bound  $\mathrm{I}_7^>$,  it is useful to multiply and divide by $|r|^2$ and decompose the numerator as  
$$|r|^2 = r\cdot (r-p) - r\cdot (r^\prime - p) + r\cdot r^\prime.$$ 
This gives
\begin{align*}
  \mathrm{I}_7^> &= 4\sum_{\sigma\neq \sigma^\prime} \sum_{\ell=1}^3  \int_0^{\infty} dt\, e^{-2t\varepsilon} e^{t(k_F^{\sigma^\prime})^2-t(k_F^\sigma)} \int dxdydzdz^\prime\, \Delta{\varphi}^>(x-y)\Delta{\varphi}^>(z-z^\prime)   
  \\
  &\qquad \times \Big(\partial_\ell \widetilde{u}^>_{t,\sigma}(x-z)v_{t,\sigma^\prime}(y-z^\prime)a^\ast_{\sigma^\prime}(u_{t,z^\prime}^>)a_{\sigma}^\ast(v_{t,z})a_\sigma(\partial_\ell v_{t,x})a_{\sigma^\prime}(u_{t,y}^>)
  \\
  &\qquad \qquad -  \partial_\ell \widetilde{u}^>_{t,\sigma}(x-z)v_{t,\sigma}(y-z^\prime)a^\ast_{\sigma^\prime}(u^>_{t,z^\prime})a_{\sigma}^\ast(v_{t,z})a_\sigma(v_{t,x})a_{\sigma^\prime}(\partial_\ell u^>_{t,y})
  \\
  &\qquad \qquad + \partial_\ell \widetilde{u}^>_{t,\sigma}(x-z)\partial_\ell v_{t,\sigma^\prime}(y-z^\prime)a^\ast_{\sigma^\prime}(u^>_{t,z^\prime})a_{\sigma}^\ast(v_{t,z})a_\sigma( v_{t,x})a_{\sigma^\prime}(u^>_{t,y}) \Big)= \sum_{j=1}^3 \mathrm{I}_{7;j}^>. 
\end{align*}
With Lemmas \ref{lem:4aa}, \ref{lem:1a}, \ref{lem: bounds phi} and \eqref{eq: est int t vu> final}, 
\begin{align*}
  |\langle\psi, \mathrm{I}_{7;1}^>\psi\rangle| 
  &\leq C\rho^{\frac{4}{3}}\|\Delta{\varphi}^>\|_1^2\left(\int_0^{\infty} dt\, \| |\cdot|^{-1} \hat{u}^>_{t,\sigma}\|_2 \|{v}_{t,\sigma}\|_2\right)\langle \psi,\mathcal{N}\psi\rangle \leq C\rho^{\frac{4}{3} + \frac{\gamma}{2} - \kappa}\langle\psi,\mathcal{N}\psi\rangle
\end{align*}
for any $\kappa>0$. The estimates for $\mathrm{I}_{7;2}^>$ and $\mathrm{I}_{7;3}^>$ can be done similarly,  using also  \eqref{eq: est N u>t} from Lemma \ref{lem:1a} which gives an extra error term involving $\mathbb{H}_0$. We  obtain 
\[
  |\langle\psi, (\mathrm{I}_{7;2}^> + \mathrm{I}_{7;3}^>)\psi\rangle| \leq C\rho^{1+\frac{\gamma}{2} - \kappa}\|\mathcal{N}^{\frac{1}{2}}\psi\|\|\mathbb{H}_0^{\frac{1}{2}}\psi\| + C\rho^{\frac{4}{3} + \frac{\gamma}{2} - \kappa}\langle\psi,\mathcal{N}\psi\rangle
\]
for all $\kappa>0$. 
Thus the  term $\mathrm{I}_{7}^>$ in  \eqref{eq:RR-I7-dec} is bounded by 
\[
  \langle \psi, \mathrm{I}_{7}^>\psi\rangle \leq C\rho^{1+\frac{\gamma}{2} - \kappa}\|\mathcal{N}^{\frac{1}{2}}\psi\|\|\mathbb{H}_0^{\frac{1}{2}}\psi\| + C\rho^{\frac{4}{3} + \frac{\gamma}{2} - \kappa}\langle\psi,\mathcal{N}\psi\rangle.
\]

The term $\mathrm{I}_7^<$ in \eqref{eq:RR-I7-dec} can be written as 
\begin{multline*}
  \mathrm{I}_7^< = 4\sum_{\sigma\neq \sigma^\prime}\int_0^{\infty} dt\, e^{t(k_F^{\sigma^\prime})^2}e^{-t(k_F^\sigma)^2}\int dxdydzdz^\prime \, \Delta\varphi^< (x-y) \Delta\varphi^<(z-z^\prime)  
  \\
  \times u^<_{t,\sigma}(x-z)v_{t,\sigma^\prime}(y-z^\prime) a^\ast_{\sigma^\prime}(u^<_{t,z^\prime})a^\ast_\sigma(v_{t,z})a_\sigma(v_{t,x})a_{\sigma^\prime}(u^<_{t,y})
\end{multline*}
and estimated similarly as for $\mathrm{I}_7^>$, with the result that 
\[
  \langle \psi, \mathrm{I}_{7}^<\psi\rangle \leq C\rho \|\Delta\varphi^<\|_1^2 \left(\int_0^{\infty} dt\, \|\hat{u}^<_{t,\sigma}\|_2 \|\hat{v}_{t,\sigma}\|_2\right)\langle \psi, \mathcal{N}\psi\rangle \leq C\rho^{\frac{4}{3} - \frac{\gamma}{2}- \kappa}\langle \psi, \mathcal{N}\psi\rangle
\]
for any $\kappa>0$. Combining the bounds, we obtain 
\begin{align}\label{eq:RR-I7-final}
  \langle \psi, \mathrm{I}_7\psi\rangle \leq C\rho^{1+\frac{\gamma}{2} - \kappa}\|\mathcal{N}^{\frac{1}{2}}\psi\|\|\mathbb{H}_0^{\frac{1}{2}}\psi\| + C\rho^{\frac{4}{3} - \frac{\gamma}{2} - \kappa}\langle \psi, \mathcal{N}\psi\rangle.
\end{align}
The term $\mathrm{I}_8$ can be estimated in the same way as $\mathrm{I}_7$, with the same result; we omit the details. 

Combining  the bounds in \eqref{eq:RR-I1-final},  \eqref{eq:RR-I2-final},   
 \eqref{eq:RR-I4-final},  \eqref{eq:RR-I5-final},  \eqref{eq:RR-I6-final} and \eqref{eq:RR-I7-final}, and using also $\rho^{1+\frac{\gamma}{2} -\kappa}\|\mathbb{H}_0^{\frac{1}{2}}\psi\|\|\mathcal{N}^{\frac{1}{2}}\psi\| \le \rho^{ \frac 2 3 +\gamma} \|\mathbb{H}_0^{\frac{1}{2}}\psi\|^2 +\rho^{ \frac 4 3 -2\kappa}  \|\mathcal{N}^{\frac{1}{2}}\psi\|^2$,  we obtain the statement  of Lemma \ref{lem: T}. 
\end{proof}


\subsection{Proof of Lemma \ref{lem: S1} }\label{sec: lem: S1}

Recall the definitions \eqref{eq: def Sk} and \eqref{eq: def S1 S2}. 
We shall write 
\begin{align}\label{eq:S-S1-S2}
S^\ast_{1,\sigma}(r)=S^\ast_{1,1,\sigma}(r)+S^\ast_{1,2,\sigma}(r)
\end{align}
with 
\begin{align}\nn
  S^\ast_{1,1,\sigma}(r) &= \frac{1}{L^3}\sum_{p}\hat{\varphi}^>(p) (D^{<\alpha}_{p,\sigma^\prime})^\ast \hat{a}_{p-r,\sigma}\left(\hat{v}_\sigma(r-p)\hat{u}_\sigma(r) - \hat{u}_\sigma(p-r)\hat{v}_\sigma(r) \right),\\
  S^\ast_{1,2,\sigma}(r) &= - \frac{1}{L^3}\sum_{p}\hat{\varphi}^>(p)\hat{u}_\sigma(r)\hat{{u}}^{<\alpha}_\sigma(r+p)\hat{a}^\ast_{r+p,\sigma}b_{p,\sigma^\prime} \label{def:s1i}
\end{align}
where $\sigma' \neq \sigma$. 

The main quadratic contribution comes from the terms involving $\{S^\ast_{1,1,\sigma}(r), S_{1,1,\sigma}(r)\}$ \textcolor{black}{and $\{S^\ast_{1,2,\sigma}(r), S_{1,2,\sigma}(r)\}$}, while the others are error terms. In the following, we will often use Lemma \ref{lem:1a}, Lemma \ref{lem: bounds phi} and that 
\begin{equation}\label{eq: est u<alpha vinfty}
\|{u}^{<\alpha}_\sigma\|_2 \leq C\rho^{\frac{1}{2} - \frac{3}{2}\alpha}, \quad \|\partial^n v_\sigma\|_\infty\leq C\rho^{1+\frac{n}{3}}, \quad \|u^{<\alpha}_{\sigma}\|_\infty \leq C\rho^{1-3\alpha}.
\end{equation}

\medskip
\noindent {\bf Analysis of $\{S^\ast_{1,1,\sigma}(r), S_{1,1,\sigma}(r)\}$.} 
We need to prove an upper bound on 
\begin{align*}
  &\sum_\sigma\sum_r||r|^2 - (k_F^\sigma)^2|\{S^\ast_{1,1,\sigma}(r), S_{1,1,\sigma}(r)\}
  \\
  &= \frac{1}{L^6}\sum_{\sigma\neq \sigma^\prime}\sum_{p,q,r,r^\prime,s}||r|^2 - (k_F^\sigma)^2|\hat{\varphi}^>(p)\hat{\varphi}^>(q)
  \\ & \qquad \times \Big(\hat{u}_\sigma(r) \hat{v}_\sigma(r-p)\hat{v}_\sigma(r-q) +  \hat{v}_\sigma(r)\hat{u}_\sigma(r-p)\hat{u}_\sigma(r-q) \Big)
  \\
  &\qquad \times\hat{{u}}^{<\alpha}_{\sigma^\prime}(r^\prime)\hat{u}_{\sigma^\prime}(r^\prime - p)\hat{u}_{\sigma^\prime}(s -  q)\hat{{u}}^{<\alpha}_{\sigma^\prime}(s) \Big\{ \hat{a}_{r^\prime, \sigma^\prime}^\ast\hat{a}_{r^\prime - p, \sigma^\prime}\hat{a}_{p-r,\sigma}, \hat{a}_{q-r,\sigma}^\ast \hat{a}_{s-q,\sigma^\prime}^\ast\hat{a}_{s,\sigma^\prime} \Big\}  
\end{align*}
Computing the anti-commutator
\begin{align*}
 & \Big\{ \hat{a}_{r^\prime, \sigma^\prime}^\ast\hat{a}_{r^\prime - p, \sigma^\prime}\hat{a}_{p-r,\sigma}, \hat{a}_{q-r,\sigma}^\ast \hat{a}_{s-q,\sigma^\prime}^\ast\hat{a}_{s,\sigma^\prime}\Big\} \\ 
 &\quad= \delta_{q,p}\delta_{r^\prime , s} \hat{a}_{s,\sigma^\prime}^\ast\hat{a}_{s,\sigma^\prime} - \delta_{q, p}\hat{a}_{r^\prime, \sigma^\prime}^\ast \hat{a}_{s-p,\sigma^\prime}^\ast \hat{a}_{r^\prime -p,\sigma^\prime}\hat{a}_{s,\sigma^\prime}
  \\
  &\qquad\qquad + \delta_{s-q,r^\prime - p}\hat{a}^\ast_{r^\prime, \sigma^\prime}\hat{a}_{q-r,\sigma}^\ast \hat{a}_{s,\sigma^\prime}\hat{a}_{p-r,\sigma} + \delta_{s,r^\prime}\hat{a}_{q-r,\sigma}^\ast \hat{a}_{s-q,\sigma^\prime}^\ast \hat{a}_{r^\prime - p,\sigma'}\hat{a}_{p-r,\sigma}
\end{align*}
and suitably changing summation variables, we find
\begin{align}\label{eq:S11-S11-dec}
\sum_\sigma\sum_r ||r|^2 - (k_F^\sigma)^2|\{S^\ast_{1,1,\sigma}(r), S_{1,1,\sigma}(r)\}  
= \sum_{j=1}^6 \mathrm{I}_j
\end{align}
with
\begin{align*}
  \mathrm{I}_1 &= \frac{1}{L^6}\sum_{\sigma\neq \sigma^\prime}\sum_{p,r,r^\prime}(|r+ p|^2 - |r|^2)|\hat{\varphi}^>(p)|^2 \hat{u}_\sigma(r+p) \hat{v}_\sigma(r)\hat{{u}}^{<\alpha}_{\sigma^\prime}(r^\prime)\hat{u}_{\sigma^\prime}(r^\prime - p)\hat{a}_{r^\prime, \sigma^\prime}^\ast \hat{a}_{r^\prime, \sigma^\prime},
\\
  \mathrm{I}_2 & = -\frac{1}{L^6}\sum_{\sigma\neq \sigma^\prime}\sum_{p,r,r^\prime,s}(|r+p|^2 - |r|^2)|\hat{\varphi}^>(p)|^2 \hat{u}_\sigma(r+p) \hat{v}_\sigma(r)\hat{{u}}^{<\alpha}_{\sigma^\prime}(r^\prime) \\
&\qquad\qquad\qquad\qquad  \times \hat{u}_{\sigma^\prime}(r^\prime - p)\hat{u}_{\sigma^\prime}(s -  p)\hat{{u}}^{<\alpha}_{\sigma^\prime}(s) \hat{a}_{r^\prime, \sigma^\prime}^\ast \hat{a}_{s-p, \sigma^\prime}^\ast \hat{a}_{r^\prime -p, \sigma^\prime}\hat{a}_{s,\sigma^\prime},
\\
  \mathrm{I}_3 &= \frac{1}{L^6}\sum_{\sigma\neq \sigma^\prime}\sum_{p,q,r,r^\prime}||r+p|^2 - (k_F^\sigma)^2|\hat{\varphi}^>(p)\hat{\varphi}^>(q) \hat{u}_\sigma(r+p) \hat{v}_\sigma(r)\hat{v}_\sigma(r+p-q) 
  \\
  &\qquad\qquad \qquad\qquad
  \times \hat{{u}}^{<\alpha}_{\sigma^\prime}(r^\prime)\hat{u}_{\sigma^\prime}(r^\prime - p)\hat{{u}}^{<\alpha}_{\sigma^\prime}(r^\prime -p+q) \hat{a}_{r^\prime, \sigma^\prime}^\ast \hat{a}_{q-p-r,\sigma}^\ast \hat{a}_{r^\prime -p+q,\sigma^\prime}\hat{a}_{-r,\sigma},
 \\
  \mathrm{I}_4 &= \frac{1}{L^6}\sum_{\sigma\neq \sigma^\prime}\sum_{p,q,r,r^\prime}|\textcolor{black}{|r+p|^2} - (k_F^\sigma)^2|\hat{\varphi}^>(p)\hat{\varphi}^>(q) \hat{u}_\sigma(r+p) \hat{v}_\sigma(r)\hat{v}_\sigma(r+p-q) 
  \\
 &\qquad\qquad \qquad\qquad \times \hat{{u}}^{<\alpha}_{\sigma^\prime}(r^\prime)\hat{u}_{\sigma^\prime}(r^\prime - p)\hat{u}_{\sigma^\prime}(r^\prime -  q) \hat{a}_{q-p-r,\sigma}^\ast \hat{a}_{r^\prime-q,\sigma^\prime}^\ast \hat{a}_{r^\prime -p,\sigma'}\hat{a}_{-r,\sigma},
\\
  \mathrm{I}_5 &= -\frac{1}{L^6}\sum_{\sigma\neq \sigma^\prime}\sum_{p,q,r,r^\prime}||r|^2 - (k_F^\sigma)^2|\hat{\varphi}^>(p)\hat{\varphi}^>(q) \hat{u}_\sigma(r+p) \hat{v}_\sigma(r)\hat{u}_\sigma(r+q)  
  \\
  &\qquad \qquad \qquad\qquad \times\hat{{u}}^{<\alpha}_{\sigma^\prime}(r^\prime)\hat{u}_{\sigma^\prime}(r^\prime - p)\hat{{u}}^{<\alpha}_{\sigma^\prime}(r^\prime - p+q)\hat{a}_{r+q,\sigma}^\ast \hat{a}_{r^\prime, \sigma^\prime}^\ast \hat{a}_{r^\prime - p+q,\sigma^\prime}\hat{a}_{r+p,\sigma},
 \\
  \mathrm{I}_6 &= \frac{1}{L^6}\sum_{\sigma\neq \sigma^\prime}\sum_{p,q,r,r^\prime}||r|^2 - (k_F^\sigma)^2|\hat{\varphi}^>(p)\hat{\varphi}^>(q) \hat{u}_\sigma(r+p) \hat{v}_\sigma(r)\hat{u}_\sigma(r+q)  
  \\
  &\qquad \qquad \qquad\qquad \times\hat{{u}}^{<\alpha}_{\sigma^\prime}(r^\prime)\hat{u}_{\sigma^\prime}(r^\prime - p)\hat{u}_{\sigma^\prime}(r^\prime -  q) \hat{a}_{r+q,\sigma}^\ast \hat{a}_{r^\prime - q,\sigma^\prime}^\ast \hat{a}_{r^\prime -p,\sigma^\prime}\hat{a}_{r+p,\sigma}.
\end{align*}
We will extract the leading contribution from the term $\mathrm{I}_1$, while  all remaining terms contribute only to the error. 
Note that in the expression for  $\mathrm{I}_1$ in \eqref{eq:S11-S11-dec}, we have  $\hat{u}_\sigma(r+p) =1$ whenever $|r|\leq k_F^\sigma$ and $|p|\geq \rho^{1/3-\gamma}$.  Since $|p+r|^2-|r|^2= |p|^2 + 2 p \cdot r$ and $\sum_r p\cdot r \, \hat v_\sigma(r)=0$, we hence have 
\begin{align}\label{eq:S11-S11-I1}
  \mathrm{I}_1  
  &= \frac{1}{L^3}\sum_{\sigma\neq \sigma^\prime} \rho_\sigma \sum_{p,r^\prime} |p|^2  |\hat{\varphi}^>(p)|^2 \hat{u}^{<\alpha}_{\sigma^\prime}(r^\prime) \hat{u}_{\sigma^\prime}(r^\prime - p)\hat{a}_{r^\prime, \sigma^\prime}^\ast \hat{a}_{r^\prime, \sigma^\prime} \nn\\
  &\le \frac{1}{L^3}\sum_{\sigma\neq \sigma^\prime} \rho_\sigma \sum_{p,r^\prime} |p|^2  |\hat{\varphi}(p)|^2 \hat{u}_{\sigma^\prime}(r^\prime)\hat{a}_{r^\prime, \sigma^\prime}^\ast \hat{a}_{r^\prime, \sigma^\prime},
\end{align}
 where we have used  $|\hat{\varphi}^>(p)| \le |\hat{\varphi}(p)|$ in the last step.

Next, we consider $\mathrm{I}_2$ in \eqref{eq:S11-S11-dec}. Proceeding as for $\mathrm{I}_1$, using that $\hat{u}_\sigma(r+p) =1$ when $|r|\leq k_F^\sigma$ and $|p|\geq \rho^{1/3-\gamma}$,  and that $\sum_r p\cdot  r\, \hat v_\sigma(r)=0$, we find  
\begin{align*}
\mathrm{I}_2 
  &= - \frac{1}{L^3}\sum_{\sigma\neq \sigma^\prime} \rho_\sigma \sum_{p,r^\prime,s} |p|^2 |\hat{\varphi}^>(p)|^2 \hat{{u}}^{<\alpha}_{\sigma^\prime}(r^\prime)\hat{u}_{\sigma^\prime}(r^\prime - p)\hat{u}_{\sigma^\prime}(s -  p)\hat{{u}}^{<\alpha}_{\sigma^\prime}(s) 
  \\
&\qquad\qquad  \times \hat{a}_{r^\prime, \sigma^\prime}^\ast \hat{a}_{s-p, \sigma^\prime}^\ast \hat{a}_{r^\prime -p, \sigma^\prime}\hat{a}_{s,\sigma^\prime}\\
&= \sum_{\sigma\neq \sigma^\prime}\rho_\sigma \int dxdy \, (\Delta{\varphi}^>\ast {\varphi}^>) (x-y) {a}^\ast_{\sigma^\prime}({u}^{<\alpha}_x)a^\ast_{\sigma^\prime}(u_y)a_{\sigma^\prime}(u_x)a_{\sigma^\prime}(u^{<\alpha}_y).
\end{align*}
Using $\|\Delta\varphi^> \ast \varphi^>\|_1  \leq \|\Delta\varphi^>\|_1 \|\varphi^>\|_1\leq C \rho^{-\frac{2}{3} + 2\gamma}$ by Lemma \ref{lem: bounds phi}, the first bound in \eqref{eq: est u<alpha vinfty}, and the Cauchy--Schwarz inequality we can bound  
\begin{align}\label{eq:S11-S11-I2}
  |\langle \psi,\mathrm{I}_{2}\psi\rangle| &\leq C\sum_{\sigma^\prime}\rho^{2-3\alpha} \int dxdy \,|( \Delta{\varphi}^>\ast {\varphi}^>) (x-y)|\|a_{\sigma^\prime}(u_y)\psi\| \|a_{\sigma^\prime}(u_x)\psi\| \nn\\
  &\leq  C\rho^{\frac 4 3 + 2\gamma -3\alpha}\langle\psi,\mathcal{N}\psi\rangle. 
\end{align}

We now consider the term $\mathrm{I}_3$ in \eqref{eq:S11-S11-dec}. We use again $\hat{u}_\sigma(r+p) =1$ in all the summands, and expand
\begin{equation}\label{exp34} 
|r+p|^2 - (k_F^\sigma)^2 = |p|^2 + |r|^2 + 2p\cdot r - (k_F^\sigma)^2
\end{equation}
which leads to four contributions and the corresponding decomposition 
$\mathrm{I}_{3} = \mathrm{I}_{3;a} + \mathrm{I}_{3;b} + \mathrm{I}_{3;c} + \mathrm{I}_{3;d}.$ 
We start with analyzing  $\mathrm{I}_{3;a}$, which is given by 
\[
  \mathrm{I}_{3;a} = \sum_{\sigma\neq \sigma^\prime}\int dxdydz\, (-\Delta{\varphi}^>)(x-y) {\varphi}^>(y-z)u_{\sigma}(x-z)a^\ast_{\sigma^\prime}(\hat{{u}}^{<\alpha}_x){a}^\ast_{\sigma}(v_y){a}_{\sigma^\prime}({u}^{<\alpha}_z)a_{\sigma}(v_y).
  \]
  Using the identity 
\begin{equation}\label{eq: u = delta -v}
u_\sigma(x-z) = \delta(x-z) - v_{\sigma}(x-z),
\end{equation}
we correspondingly split  $\mathrm{I}_{3;a} = \mathrm{I}_{3;a;1} + \mathrm{I}_{3;a;2}$.

By using Lemma \ref{lem:4aaa} together with  $\|v_\sigma\|_2 \leq C\rho^{1/2}$ and $\|(\Delta{\varphi}^> ){\varphi}^>\|_{1} \le \|\Delta{\varphi}^>\|_{1} \|{\varphi}^>\|_{\infty} \le C$ by Lemma \ref{lem: bounds phi}, we can bound $\mathrm{I}_{3;a;1}$ as
\begin{align}\label{eq: I311 R*S error term}
   |\langle \psi, \mathrm{I}_{3;a;1} \psi\rangle| 
  \leq C\rho^{1 + \frac{\beta}{6}}  \langle \psi , \mathcal{N}\psi\rangle+ C\rho \|\mathcal{N}_{\beta}^{\frac{1}{2}}\psi\| \|\mathcal{N}^{\frac{1}{2}}\psi\|.
 \end{align}
 For the term $\mathrm{I}_{3;a;2}$, we use the bounds in \eqref{eq: est u<alpha vinfty},  Lemma \ref{lem:1a}, Lemma \ref{lem: bounds phi} and the Cauchy--Schwarz inequality to obtain
 \begin{align*}
  |\langle \psi, \mathrm{I}_{3;a,2}\psi\rangle| &\leq C\rho\sum_{\sigma\neq \sigma^\prime}\int dxdydz\, |\Delta{\varphi}^>(x-y)||{\varphi}^>(y-z)|  \\
  &\qquad \qquad \qquad \times \|a_\sigma^\ast(v_y)\|\|a_{\sigma}(v_x)\| \|a_{\sigma^\prime}(u^{<\alpha}_x)\psi\| \|a_{\sigma^\prime}(u^{<\alpha}_{z})\psi\|
  \\
 &\leq C\rho^2 \|{\varphi}^>\|_1 \|\Delta{\varphi}^>\|_1 \langle \psi, \mathcal{N}\psi\rangle\leq C\rho^{\frac{4}{3} + 2\gamma}\langle \psi, \mathcal{N}\psi\rangle. 
 \end{align*}
 The remaining terms $\mathrm{I}_{3;b}, \mathrm{I}_{3;c}, \mathrm{I}_{3;d}$ can be estimated similarly, using  
 \begin{equation}\label{eq: a partial ell v S*S}
  \|a_\sigma(\partial_\ell^n v_\cdot)\| \leq \left\||\cdot|^n \hat{v}_\sigma\right\|_2 \leq C\rho^{\frac{1}{2} + \frac{n}{3}},
 \end{equation}
  \eqref{eq: u = delta -v} and Lemma \ref{lem:1a}, with the result that 
\[
  |\langle \psi, (\mathrm{I}_{3;b} + \mathrm{I}_{3;c} + \mathrm{I}_{3;d}) \psi\rangle| 
  \leq C\rho^{1+\gamma}\langle \psi, \mathcal{N}\psi\rangle .  
\]
Combining the bounds, we conclude that
\begin{align}\label{eq:S11-S11-I3}
  |\langle \psi, \mathrm{I}_3 \psi\rangle| \leq C\left( \rho^{1 + \frac{\beta}{6}} + \rho^{1+\gamma}\right)  \langle \psi , \mathcal{N}\psi\rangle+ C\rho\|\mathcal{N}_\beta^{\frac{1}{2}}\psi\| \|\mathcal{N}^{\frac{1}{2}}\psi\| .
\end{align}

Next, we bound $\mathrm{I}_4$ in \eqref{eq:S11-S11-dec}. 
As above, we have $\hat{u}_\sigma(r+p)=1$ in all the summands  and we can expand using \eqref{exp34}, resulting in four terms 
$\mathrm{I}_4 = \mathrm{I}_{4;a} + \mathrm{I}_{4;b} + \mathrm{I}_{4;c}+ \mathrm{I}_{4;d}$.
The first one can be written as 
\[
  \mathrm{I}_{4;a} = -\int dxdydz\, \Delta{\varphi}^>(x-y){\varphi}^>(x-z){u}^{<\alpha}_{\sigma^\prime}(y;z)a^\ast_{\sigma}(v_x)a^\ast_{\sigma^\prime}(u_z)a_\sigma(u_y)a_\sigma(v_x),
\]
Using \eqref{eq: a partial ell v S*S}, the last bound in \eqref{eq: est u<alpha vinfty}, Lemma \ref{lem: bounds phi} and the Cauchy--Schwarz inequality, we obtain 
\begin{align*}
  |\langle \psi, \mathrm{I}_{4;a}\psi\rangle| &\leq C\rho^{2 - 3\alpha} \int dxdydz\, |\Delta{\varphi}^>(x-y)| |{\varphi}^>(x-z)| \|a_{\sigma^\prime}(u_z)\psi\| \|a_{\sigma}(u_y)\psi\| 
  \\
  &\leq  C\rho^{2 - 3\alpha}\|\Delta{\varphi}^>\|_1 \|{\varphi}^>\|_1 \langle \psi, \mathcal{N}\psi\rangle \leq C\rho^{\frac{4}{3} + 2\gamma - 3\alpha}\langle \psi, \mathcal{N}\psi\rangle.
\end{align*}
Similarly, we estimate the other terms in $\mathrm{I}_4$ as 
\begin{align*}
|\langle \psi, ( \mathrm{I}_{4;b}+ \mathrm{I}_{4;c} +\mathrm{I}_{4;d})\psi\rangle| &\leq C\rho^{2+\frac{2}{3} - 3\alpha}\|{\varphi}^>\|_1^2 \langle \psi, \mathcal{N}\psi\rangle + C\rho^{2+\frac{1}{3} - 3\alpha}\|\nabla {\varphi}^>\|_1 \|{\varphi}^>\|_1 \langle \psi, \mathcal{N}\psi\rangle \\
&\leq C\rho^{\frac{4}{3} +3\gamma - 3\alpha}\langle \psi, \mathcal{N}\psi\rangle.
\end{align*}
In combination, we find that 
\begin{align}\label{eq:S11-S11-I4}
  |\langle \psi, \mathrm{I}_4 \psi\rangle| \leq \rho^{\frac{4}{3} +2\gamma - 3\alpha}\langle \psi, \mathcal{N}\psi\rangle.
\end{align}

The term $\mathrm{I}_5$ in \eqref{eq:S11-S11-dec} can be written in configuration space as 
\begin{align*}
  \mathrm{I}_5 &= -\sum_{\sigma\neq \sigma^\prime}\int dxdydzdz^\prime\, {\varphi}^>(x-y){\varphi}^>(z-z^\prime) \big((k_F^\sigma)^2 v_\sigma(x-z) + \Delta v_\sigma(x-z)\big) \\
  &\qquad \qquad \times u_{\sigma^\prime}(y-z^\prime)  a^\ast_{\sigma}(u_z)a_{\sigma^\prime}^\ast ({u}^{<\alpha}_y)a_{\sigma^\prime}({u}^{<\alpha}_{z^\prime})a_{\sigma}(u_x).
\end{align*}
Using \eqref{eq: u = delta -v}, 
we decompose $\mathrm{I}_5$ into two parts, $\mathrm{I}_5 = \mathrm{I}_{5;1} + \mathrm{I}_{5;2}$.  
For $\mathrm{I}_{5;1}$, using   
\eqref{eq: est u<alpha vinfty}, Lemma \ref{lem:1a} and the  Cauchy--Schwarz inequality, we obtain
\begin{align}\nn
  |\langle \psi, \mathrm{I}_{5;1}\psi\rangle| &\leq C\rho^{\frac{8}{3} -3\alpha} \sum_{\sigma\neq \sigma^\prime}\int dxdydz\, |{\varphi}^>(x-y)||{\varphi}^>(z-y)|\|a_{\sigma}(u_z)\psi\| \|a_{\sigma}(u_x)\psi\|
  \\
  &\leq C\rho^{\frac{8}{3} -3\alpha}\|{\varphi}^>\|_1^2 \langle \psi, \mathcal{N}\psi\rangle \leq C\rho^{\frac{4}{3} + 4\gamma - 3\alpha}\langle \psi, \mathcal{N}\psi\rangle.  \label{I51}
\end{align}

By Lemma \ref{lem:4aa} and the bound $\| {\varphi}^>\|_1\le C \rho^{-\frac 2 3 + 2 \gamma}$ from Lemma \ref{lem: bounds phi},  we further have
\begin{align}\label{eq:S11-S11-I5}
  |\langle \psi, \mathrm{I}_{5;2} \psi\rangle| \leq C \rho^{\frac 5 3} \|\varphi^>\|_1^2  \sup_{\sigma'}\|{u}^{<\alpha}\|_2^2 \langle \psi, \mathcal{N}\psi\rangle \le    \rho^{\frac{4}{3} +4\gamma - 3\alpha}\langle \psi, \mathcal{N}\psi\rangle.
\end{align}

Finally, the term $\mathrm{I}_6$  in \eqref{eq:S11-S11-dec} is given by 
\begin{align*}
  \mathrm{I}_6 &= \sum_{\sigma \neq \sigma^\prime}\int dxdy dzdz^\prime\, {\varphi}^>(x-y){\varphi}^>(z-z^\prime) \big((k_F^\sigma)^2 v_\sigma(x-z) + \Delta v_\sigma(x-z) \big) \\
  &\qquad \qquad \times ({u}^{<\alpha}_{\sigma^\prime})^2(y-z^\prime) a_{\sigma}^\ast(u_z)a^\ast_{\sigma^\prime}(u_{z^\prime}) a_{\sigma^\prime}(u_y)a_\sigma(u_x) .  
\end{align*}
Using  $0\leq \hat{u}_\sigma\leq 1$, for  fixed $y,z$ we have the bound on the operator norms 
\begin{align*}
  \left\| \int dz^\prime\, {\varphi}^>(z-z^\prime) (u^{<\alpha}_{\sigma^\prime})^2(y-z^\prime) a_{\sigma^\prime}(u_{z^\prime})\right\| &\leq \|\varphi^>_z (u^{<\alpha}_{\sigma^\prime})^2_y\|_2,\\
  \left\|\int dx\, {\varphi}^>(x-y) ((k_F^\sigma)^2+\Delta) v_\sigma(x-z)a_\sigma(u_x)\right\| & \leq \|\varphi^>_y ((k_F^\sigma)^2+\Delta) v_{z,\sigma}\|_2. 
\end{align*}
Therefore, using the Cauchy--Schwarz inequality, the first bound in \eqref{eq: est u<alpha vinfty} and $\| ((k_F^\sigma)^2+\Delta) v_\sigma\|_2 \leq  C\rho^{7/6}$, we obtain 
\begin{align}\label{eq:S11-S11-I6}
  |\langle \psi, \mathrm{I}_{6}\psi\rangle|  &\le \sum_{\sigma\neq \sigma^\prime}\int dzdy\,   \|\varphi^>_z (u^{<\alpha}_{\sigma^\prime})^2_y\|_2  \|\varphi^>_y ((k_F^\sigma)^2+\Delta) v_{z,\sigma}\|_2 
  \left\| a_\sigma(u_z)\psi \right\|  \left\|  a_{\sigma^\prime}(u_y)\psi\right\| \nn\\
 &  \leq  \sum_{\sigma\neq \sigma^\prime} \sqrt{\int dydzdw\, | {\varphi}^>(z-w)|^2|{u}^{<\alpha}_{\sigma^\prime}(y-w)|^2  \|a_\sigma(u_z)\psi\|^2}  \nn\\
  &\qquad\qquad \times \sqrt{\int dydzdw\,  |{\varphi}^>(w-y)|^2 |v_\sigma(w-z)|^2 \| a_{\sigma^\prime}(u_{y})\psi\|^2} \nn\\
  &\leq C\rho^{\frac{2}{3}}\|{\varphi}^>\|_2^2 \|\hat{{u}}^{<\alpha}_{\sigma^\prime}\|_2\|\hat{v}_\sigma\|_2 \langle \psi, \mathcal{N}\psi\rangle \leq C\rho^{\frac{4}{3} + \gamma -\frac{3}{2}\alpha }\langle \psi, \mathcal{N}\psi\rangle.
\end{align}

Inserting the bounds in  \eqref{eq:S11-S11-I1}--\eqref{eq:S11-S11-I2} and \eqref{eq:S11-S11-I3}--\eqref{eq:S11-S11-I6} in 
\eqref{eq:S11-S11-dec} we conclude that
\begin{align*} 
&  \sum_{\sigma\in \{\uparrow, \downarrow\}} \sum_{r\in \Lambda^*} ||r|^2 - (k_F^\sigma)^2| \{S^\ast_{1,1,\sigma}(r), S_{1,1,\sigma}(r)\} \\   & \quad \leq \frac{1}{L^3}\sum_{\sigma\neq \sigma^\prime}\rho_\sigma\sum_{p,r^\prime}|p|^2 |\hat{\varphi}^>(p)|^2 \hat{u}_{\sigma^\prime}(r^\prime)\hat{a}_{r^\prime, \sigma^\prime}^\ast \hat{a}_{r^\prime, \sigma^\prime}  + \mathcal{E}_{S^\ast_{1,1} S_{1,1}}
\end{align*}
with
\begin{equation}\label{eq: S11*S11}
|\langle \psi ,  \mathcal{E}_{S^\ast_{1,1} S_{1,1}} \psi \rangle| \leq 
 C\big(\rho^{1+\gamma}+\rho^{1+\frac{\beta}{6}}+\rho^{\frac{4}{3} + 2\gamma - 3\alpha}\big) \langle \psi, \mathcal{N}\psi\rangle + C\rho \|\mathcal{N}^{\frac{1}{2}}_\beta\psi\|\|\mathcal{N}^{\frac{1}{2}}\psi\|
\end{equation}
for any $0\le \beta<1$ and $\psi\in\mathcal{F}_{\mathrm{f}}$.

\medskip
\noindent{\bf Analysis of $\{S^\ast_{1,2,\sigma}(r), S_{1,2,\sigma}(r)\}$.} Next, we prove an upper bound on 

\begin{align*}
  &  \sum_{\sigma\in \{\uparrow, \downarrow\}}\sum_{r\in \Lambda^*} ||r|^2 - (k_F^\sigma)^2| \{S^\ast_{1,2,\sigma}(r), S_{1,2,\sigma}(r)\} 
  \\
&= \frac{1}{L^6}\sum_{\sigma\neq\sigma^\prime}\sum_{p,q,r,r^\prime,s} ||r|^2 - (k_F^\sigma)^2| \hat{\varphi}^>(p)\hat{\varphi}^>(q) \hat{u}_\sigma(r) \hat{u}^{<\alpha}_\sigma(r-p)\hat{u}^{<\alpha}_\sigma(r-q)  
\\
&\qquad\times\hat{u}_{\sigma^\prime}(r^\prime - p)\hat{v}_{\sigma^\prime}(r^\prime)\hat{u}_{\sigma^\prime}(s-q)\hat{v}_{\sigma^\prime}(s) \{\hat{a}_{r-p,\sigma}^\ast \hat{a}_{r^\prime - p, \sigma^\prime}\hat{a}_{-r^\prime, \sigma^\prime}, \hat{a}^\ast_{-s, \sigma^\prime}\hat{a}_{s-q, \sigma^\prime}^\ast \hat{a}_{r-q, \sigma}\}.
\end{align*}
Computing the anti-commutator 
\begin{align*}
  &\{\hat{a}_{r-p,\sigma}^\ast \hat{a}_{r^\prime - p, \sigma^\prime}\hat{a}_{-r^\prime, \sigma^\prime}, \hat{a}^\ast_{-s, \sigma^\prime}\hat{a}_{s-q, \sigma^\prime}^\ast \hat{a}_{r-q, \sigma}\}\\
  &= \delta_{r^\prime, s}\delta_{p, q}\hat{a}_{r-p,\sigma}^\ast \hat{a}_{r-q,\sigma} - \delta_{r^\prime, s }\hat{a}_{r-p,\sigma}^\ast \hat{a}_{s-q, \sigma^\prime}^\ast \hat{a}_{r^\prime -p, \sigma^\prime}\hat{a}_{r-q, \sigma} 
  \\
&\qquad  - \delta_{s-q, r^\prime - p}\hat{a}_{-s,\sigma^\prime}^\ast \hat{a}_{r-p,\sigma}^\ast \hat{a}_{r-q,\sigma}\hat{a}_{-r^\prime, \sigma^\prime} + \delta_{p,q}\hat{a}_{-s,\sigma^\prime}^\ast \hat{a}_{s-q, \sigma^\prime}^\ast \hat{a}_{r^\prime - p, \sigma^\prime}\hat{a}_{-r^\prime, \sigma^\prime}
\end{align*}  
and suitably changing summation variables, we find 
\begin{align}\label{eq:S12-S12-dec}
\sum_\sigma\sum_r ||r|^2 - (k_F^\sigma)^2|\{S^\ast_{1,2,\sigma}(r), S_{1,2,\sigma}(r)\}  
= \sum_{j=1}^4 \mathrm{J}_j
\end{align}
with
\begin{align*}  
  \mathrm{J}_1 &= \frac{1}{L^6}\sum_{\sigma\neq \sigma^\prime}\sum_{p,r,r^\prime}||r+p|^2 - (k_F^\sigma)^2||\hat{\varphi}^>(p)|^2 \hat{u}_\sigma(r+p)\hat{u}^{<\alpha}_{\sigma}(r) \hat{u}_{\sigma^\prime}(r^\prime - p) \hat{v}_{\sigma^\prime}(r^\prime) \hat{a}_{r,\sigma}^\ast \hat{a}_{r,\sigma},\\
  \mathrm{J}_2 &= - \frac{1}{L^6}\sum_{\sigma\neq \sigma^\prime} \sum_{p,q,r,r^\prime} ||r+p|^2 - (k_F^\sigma)^2| \hat{\varphi}^>(p)\hat{\varphi}^>(q) \hat{u}_\sigma(r+p)\hat{u}^{<\alpha}_\sigma(r)\hat{u}^{<\alpha}_\sigma(r+p-q) 
  \\
  &\qquad \qquad \times \hat{u}_{\sigma^\prime}(r^\prime-p)\hat{v}_{\sigma^\prime}(r^\prime)\hat{u}_{\sigma^\prime}(r^\prime -q)\hat{a}_{r,\sigma}^\ast \hat{a}_{r^\prime - q, \sigma^\prime}^\ast \hat{a}_{r^\prime -p, \sigma^\prime}\hat{a}_{r+p-q,\sigma},\\
  \mathrm{J}_3 &= -\frac{1}{L^6}\sum_{\sigma \neq \sigma^\prime}\sum_{p,q,r,r^\prime}||r|^2 - (k_F^\sigma)^2| \hat{\varphi}^>(p)\hat{\varphi}^>(q)\hat{u}_\sigma(r)\hat{u}_\sigma^{<\alpha}(r-p)\hat{u}_\sigma^{<\alpha}(r-q) 
  \\
  &\qquad \qquad\times \hat{u}_{\sigma^\prime}(r^\prime - p)\hat{v}_{\sigma^\prime}(r^\prime) \hat{v}_{\sigma^\prime}(r^\prime - p + q)\hat{a}_{-r^\prime +p-q, \sigma^\prime}^\ast \hat{a}^\ast_{r-p,\sigma}\hat{a}_{r-q, \sigma}\hat{a}_{-r^\prime, \sigma^\prime},\\
   \mathrm{J}_4 &= \frac{1}{L^6}\sum_{\sigma\neq \sigma^\prime}\sum_{p,r,r^\prime, s}||r+p|^2 - (k_F^\sigma)^2| |\hat{\varphi}^>(p)|^2 \hat{u}_{\sigma}(r+p)\hat{u}^{<\alpha}_{\sigma}(r) \hat{u}_{\sigma^\prime}(r^\prime - p) 
  \\
 &\qquad \qquad \times \hat{v}_{\sigma^\prime}(r^\prime)\hat{u}_{\sigma^\prime}(s-p)\hat{v}_{\sigma^\prime}(s)\hat{a}_{-s, \sigma^\prime}^\ast \hat{a}_{s-p, \sigma^\prime}^\ast \hat{a}_{r^\prime - p, \sigma^\prime}\hat{a}_{-r^\prime, \sigma^\prime}.
\end{align*}  
Again the main quadratic term comes from $\mathrm{J}_1$,  all the other terms only contribute to the error. For $\mathrm{J}_1$, using the bound $||r+p|^2 - (k_F^\sigma)^2| \hat{u}_\sigma(r+p) \hat{u}_{\sigma^\prime}(r^\prime - p) \le |r+p|^2$, $\sum_p p \cdot r |\hat{\varphi}^>(p)|^2 =0$, and $\|\varphi^>\|^2_2 \le C \rho^{-\frac 1 3 +\gamma}$ from Lemma \ref{lem: bounds phi},  we obtain 
\begin{align}\label{eq:S12-S12-J1}
  \mathrm{J}_1 &\le \frac{1}{L^3}\sum_{\sigma\neq \sigma^\prime}\rho_{\sigma^\prime}\sum_{p,r}(|p|^2 + |r|^2 )|\hat{\varphi}^>(p)|^2 \hat{u}^{<\alpha}_\sigma(r) \hat{a}_{r,\sigma}^\ast \hat{a}_{r,\sigma} \nn \\
&\le  \frac{1}{L^3}\sum_{\sigma\neq \sigma^\prime}\rho_{\sigma^\prime}\sum_{p,r} |p|^2 |\hat{\varphi}(p)|^2 \hat{u}_\sigma(r) \hat{a}_{r,\sigma}^\ast \hat{a}_{r,\sigma} + C \rho^{\frac 4 3 + \gamma -2 \alpha} \cN.  
\end{align}  

Next we bound $\mathrm{J}_2$ in \eqref{eq:S12-S12-dec}. Expanding $|r+p|^2 - (k_F^\sigma)^2 = [|p|^2- (k_F^\sigma)^2] + 2p\cdot r + |r|^2,$ 
we write it as the sum of three terms, $\mathrm{J}_2 = \mathrm{J}_{2;a} + \mathrm{J}_{2;b} + \mathrm{J}_{2;c}$. 
 The first term $\mathrm{J}_{2;a}$ can be written in configuration space as 
\begin{align*}
  \mathrm{J}_{2;a} &= \sum_{\sigma\neq \sigma^\prime}\int dxdydzdz^\prime\, \left(\Delta\varphi^>(x-y) + (k_F^\sigma)^2 \varphi^>(x-y)\right)\varphi^>(z-z^\prime)\\
  &\qquad\qquad\times u_{\sigma}(x-z)v_{\sigma^\prime}(y-z^\prime) a^\ast_{\sigma}(u^{<\alpha}_x)a^\ast_{\sigma^\prime}(u_{z^\prime})a_{\sigma^\prime}(u_y)a_{\sigma}(u^{<\alpha}_z). 
\end{align*}
With \eqref{eq: u = delta -v}, 
we can further split it as $\mathrm{J}_{2;a} = \mathrm{J}_{2;a;1} + \mathrm{J}_{2;a;2}$. The term $\mathrm{J}_{2;a;1}$ can be estimated similarly as in Lemma \ref{lem:4a}. To be precise, using \eqref{eq:Pauli}, \eqref{eq: est u<alpha vinfty}, the bounds $\|\Delta\varphi^>\|_1 \|\varphi^>\|_1\le C \rho^{-\frac 2 3 + 2\gamma}$ and $\|(k_F^\sigma)^2\varphi^>\|_1 \|\varphi^>\|_1\le C \rho^{-\frac 2 3 + 4\gamma}$ from Lemma \ref{lem: bounds phi}, and the Cauchy--Schwarz  inequality, we obtain
\begin{align*}
  |\langle \psi, \mathrm{J}_{2;a;1}\psi\rangle| &\leq  \sum_{\sigma\neq \sigma^\prime}\int dxdydz^\prime\, \left(|\Delta\varphi^>(x-y)| + (k_F^\sigma)^2 |\varphi^>(x-y)| \right)|\varphi^>(x-z^\prime)| \\
  &\qquad\qquad \qquad \times   |v_{\sigma^\prime}(y-z^\prime)|   \|\hat{u}^{<\alpha}_\sigma\|^2\|a_{\sigma^\prime}(u_{z^\prime})\psi\|\|a_{\sigma^\prime}(u_y)\psi\|
  \\
  &\leq C\rho^{2-3\alpha}(\|\Delta\varphi^>\|_1 + \rho^{\frac{2}{3}}\|\varphi^>\|_1) \|\varphi^>\|_1 \langle \psi, \mathcal{N}\psi\rangle\leq C\rho^{\frac{4}{3}+ 2\gamma -3\alpha}\langle \psi, \mathcal{N}\psi\rangle,
\end{align*}
where we also used that  
$\|v_{\sigma^\prime}\|_\infty \leq C\rho$.
From Lemma \ref{lem:4aa} we also get
\begin{align*}
  |\langle \psi, \mathrm{J}_{2;a;2}\psi\rangle| & \leq C \|\hat{u}^{<\alpha}_\sigma\|_2^2 \left( \|\Delta\varphi^>\|_1 + \rho^{\frac 23} \| \varphi^>\|_1\right) \|\varphi^>\|_1 \|{v}_\sigma\|_2 \|{v}_{\sigma^\prime}\|_2 \langle \psi, \mathcal{N}\psi\rangle
  \\ & \leq C\rho^{\frac{4}{3} + 2\gamma - 3\alpha}\langle \psi, \mathcal{N}\psi\rangle. 
\end{align*}
For $\mathrm{J}_{2;b}$, it is  convenient to multiply and divide by $|r^\prime - q|^2$, in order to obtain the identity
\begin{align*}
  \mathrm{J}_{2;b} &= \sum_{\sigma \neq \sigma^\prime}\sum_{\ell=1,m}^3 \int dxdydzdz^\prime \, \partial_\ell \varphi^>(x-y) u_\sigma(x-z) a_{\sigma}^\ast(\partial_\ell u^{<\alpha}_x)a^\ast_{\sigma^\prime}(\widetilde{u}^>_{z^\prime})a_{\sigma^\prime}(u^>_y)a_{\sigma}(u^{<\alpha}_z) 
  \\
  &\quad  \times \big(\partial_m^2\varphi^>(z-z^\prime) v_{\sigma^\prime}(y-z^\prime) - 2 \partial_m \varphi^>(z-z^\prime)\partial_m v_{\sigma^\prime}(y-z^\prime) + \varphi^>(z-z^\prime)\partial_m^2 v_{\sigma'}(y-z^\prime) \big) 
\end{align*}
We use again \eqref{eq: u = delta -v}, and  denote by $J_{2;b;1}$ the term containing $\delta(x-z)$, and by $J_{2;b;2}$ the remaining one with $v_\sigma(x-z)$. For $J_{2;b;1}$, we have 
\begin{align*}
  &|\langle \psi, \mathrm{J}_{2;b;1}\psi\rangle| 
  \\ & \leq C\sum_{\sigma\neq \sigma^\prime}\sum_{\ell, m =1}^3\int dxdydz^\prime\, |\partial_\ell\varphi^>(x-y)| \|\widetilde{u}^>_{\sigma'}\|_2 \|u^{<\alpha}_\sigma)\|_2 \|a_\sigma(\partial_\ell u^{<\alpha}_x)\psi\| \|a_{\sigma^\prime}(u^>_y)\psi\| 
  \\
  &\quad  \times \big|\partial_m^2\varphi^>(z-z^\prime) v_{\sigma^\prime}(y-z^\prime) - 2 \partial_m \varphi^>(z-z^\prime)\partial_m v_{\sigma^\prime}(y-z^\prime) + \varphi^>(z-z^\prime)\partial_m^2 v_{\sigma'}(y-z^\prime) \big|
\end{align*}
By Cauchy-Schwarz, Lemma \ref{lem: bounds phi} and Lemma \ref{lem:1a}, we obtain
$$
  |\langle \psi, \mathrm{J}_{2;b;1}\psi\rangle|   \leq C\rho^{\frac 23+\frac{5}{2}\gamma -\frac{3}{2}\alpha} \left \langle \psi, \mathbb{H}_0 \psi\rangle + \rho^{\frac 13} \|\mathbb{H}_0^{\frac{1}{2}}\psi\|\|\mathcal{N}^{\frac{1}{2}}\psi\|\right).
$$ 

For $\mathrm{J}_{2;b;2}$, we can use Lemma \ref{lem:4aa} similarly as for $\mathrm{J}_{2;a;2}$, obtaining the same bound.
Similarly, we can proceed with  $\mathrm{J}_{2;c}$, given by  
\begin{align*}
  \mathrm{J}_{2;c} &= - \sum_{\sigma \neq \sigma^\prime}\sum_{\ell=1,m}^3 \int dxdydzdz^\prime \,  \varphi^>(x-y) u_\sigma(x-z) a_{\sigma}^\ast(\partial_\ell^2 u^{<\alpha}_x)a^\ast_{\sigma^\prime}(\widetilde{u}^>_{z^\prime})a_{\sigma^\prime}(u^>_y)a_{\sigma}(u^{<\alpha}_z)
  \\
  &\quad  \times \big(\partial_m^2\varphi^>(z-z^\prime) v_{\sigma^\prime}(y-z^\prime) -2 \partial_m \varphi^>(z-z^\prime)\partial_m v_{\sigma^\prime}(y-z^\prime) + \varphi^>(z-z^\prime)\partial_m^2 v(y-z^\prime)\big) 
\end{align*}
In a similar way as for $\mathrm{J}_{2;b}$, using also that 
\begin{align*}
  \int dx\, \|a_\sigma(\partial^2_\ell u^{<\alpha}_x)\psi\|^2 &\leq C\rho^{\frac{2}{3} - 2\alpha}\int dx\, \|a_\sigma(\partial_\ell u^{<\alpha}_x)\psi\|^2
  \\
  & \leq C\rho^{\frac{2}{3} - 2\alpha}\langle \psi, \mathbb{H}_0 \psi\rangle + C\rho^{\frac 4 3 -2\alpha}\langle \psi, \mathcal{N}\psi\rangle,
\end{align*}
we obtain 
$$
 |\langle \psi, \mathrm{J}_{2;c}\psi\rangle| \leq C\rho^{\frac 23+\frac{7}{2}\gamma -\frac{5}{2}\alpha} \left( \langle \psi, \mathbb{H}_0 \psi\rangle + \rho^{\frac 13} \|\mathbb{H}_0^{\frac{1}{2}}\psi\|\|\mathcal{N}^{\frac{1}{2}}\psi\|\right).
$$
 Thus, since $\alpha > \gamma$ by assumption, we obtain the bound 
\begin{align}\label{eq:S12-S12-J2}
  |\langle \psi, \mathrm{J}_2 \psi\rangle| &\leq C\rho^{\frac{4}{3} + 2\gamma - 3\alpha}\langle \psi, \mathcal{N}\psi\rangle \nn
   \\
  &\quad + C\rho^{\frac{2}{3}+\frac{7}{2}\gamma -\frac{5}{2}\alpha} \left(\langle \psi, \mathbb{H}_0 \psi\rangle + \rho^{\frac{1}{3}}\|\mathbb{H}_0^{\frac{1}{2}}\psi\|\|\mathcal{N}^{\frac{1}{2}}\psi\|\right).
\end{align}

The term $\mathrm{J}_3$ in \eqref{eq:S12-S12-dec}
can be rewritten with a  change of summation variables $r^\prime \mapsto r^\prime +p$ as 
\begin{align*}
  \mathrm{J}_3 &= -\frac{1}{L^6}\sum_{\sigma\neq \sigma^\prime}\sum_{r,r^\prime} (|r|^2 - (k_F^\sigma)^2) \hat{u}_\sigma(r)\hat{u}_{\sigma^\prime}(r^\prime) 
  \\
  &\quad\times \left| \sum_p \hat\varphi^>(p) \hat{v}_{\sigma^\prime}(r^\prime +p)\hat{u}^{<\alpha}_{\sigma}(r-p)\hat{a}^\ast_{r-p, \sigma}\hat{a}_{-r-p,\sigma^\prime}\right|^2. 
\end{align*}
Thus $\mathrm{J}_3 \leq 0$ and it can be dropped for an upper bound. 

It remains to bound the last term $\mathrm{J}_4$ in \eqref{eq:S12-S12-dec}, given by
\begin{align*}
  \mathrm{J}_4 &= \frac{1}{L^6}\sum_{\sigma\neq \sigma^\prime}\sum_{p,r} ||r+p|^2 - (k_F^\sigma)^2|    |\hat{\varphi}^>(p)|^2 \hat{u}_{\sigma}(r+p)\hat{u}^{<\alpha}_{\sigma}(r)^2 
\\
  &\qquad \qquad \times \left|\sum_{r^\prime}\hat{u}_{\sigma^\prime}^>(r^\prime - p) \hat{v}_{\sigma^\prime}(r^\prime) \hat{a}_{r^\prime - p, \sigma^\prime}\hat{a}_{-r^\prime, \sigma^\prime}\right|^2
 \end{align*}
Using   
$$
\sum_r ||r+p|^2 - (k_F^\sigma)^2| \hat{u}_{\sigma}(r+p)\hat{u}^{<\alpha}_{\sigma}(r) \le \sum_r  |r+p|^2 \hat{u}^{<\alpha}_{\sigma}(r) \le C L^3 (\rho^{1-3\alpha} |p|^2 + \rho^{\frac 5 3 -5 \alpha}) 
$$
and also replacing $\hat{u}_{\sigma^\prime}(r^\prime -p)$ by $\hat{u}^>_{\sigma^\prime}(r^\prime -p)$ defined in \eqref{eq: u< u> gamma} due to the constraints on $p$ and $r'$, we have 
\begin{align*}
  \mathrm{J}_4 &\le  \frac{C}{L^3}\sum_{\sigma\neq \sigma^\prime}\sum_{p} (\rho^{1-3\alpha} |p|^2 + \rho^{\frac 5 3 -5 \alpha})  |\hat{\varphi}^>(p)|^2\left|\sum_{r^\prime}\hat{u}_{\sigma^\prime}^>(r^\prime - p) \hat{v}_{\sigma^\prime}(r^\prime) \hat{a}_{r^\prime - p, \sigma^\prime}\hat{a}_{-r^\prime, \sigma^\prime}\right|^2\\
  &= \mathrm{J}_{4;a} + \mathrm{J}_{4;b} 
  \end{align*}
where $\mathrm{J}_{4;a}$ and $\mathrm{J}_{4;b}$ are the terms involving $\rho^{1-3\alpha} |p|^2$ and $\rho^{\frac 5 3 -5 \alpha}$, respectively. 

With the aid of Lemmas \ref{lem:4a}, \ref{lem:1a} and \ref{lem: bounds phi}, 
we can bound $\mathrm{J}_{4;a}$  as 
\begin{align*}
 |\langle \psi,  \mathrm{J}_{4;a} \psi\rangle| 
  & \leq C\rho^{2-3\alpha}\int dxdy\, |(\Delta\varphi^> \ast \varphi^>)(x-y)| \|a_{\sigma^\prime}(u^>_x)\psi\|\|a_{\sigma^\prime}(u^>_y)\psi\|\\
  & \leq C\rho^{2-3\alpha} \|\Delta\varphi^>\|_1 \|\varphi^>\|_1 \rho^{-\frac 2 3 +2\gamma}\langle \psi, \mathbb{H}_0 \psi\rangle \leq C\rho^{\frac{2}{3} + 4\gamma - 3\alpha}\langle \psi, \mathbb{H}_0 \psi\rangle.
  \end{align*}
Similarly, 
\begin{align*}
  |\langle \psi, \mathrm{J}_{4;b}\psi\rangle| &\leq C\rho^{\frac{8}{3} - 5\alpha}\int dxdy\, |\varphi^> \ast \varphi^>(x-y)| \|a_{\sigma^\prime}(u^>_x)\psi\|\|a_{\sigma^\prime}(u^>_y)\psi\|\\
  &\le C\rho^{\frac{8}{3} - 5\alpha} \|\varphi^>\|_1^2 \rho^{-\frac 2 3 +2\gamma}\langle \psi, \mathbb{H}_0 \psi\rangle \leq C\rho^{\frac{2}{3} + 6\gamma - 5\alpha}\langle \psi, \mathbb{H}_0 \psi\rangle.
\end{align*}
Combining these bounds, we obtain 
\begin{align}\label{eq:S12-S12-J4}
  |\langle \psi, \mathrm{J}_4\psi\rangle| \leq C\rho^{\frac{2}{3} + 6\gamma - 5\alpha} \langle \psi, \mathbb{H}_0 \psi\rangle .
\end{align}
From \eqref{eq:S12-S12-J1}, \eqref{eq:S12-S12-J2}, \eqref{eq:S12-S12-J4}, and recalling that $\mathrm{J}_3\le 0$, we conclude that 
\begin{multline*}
\sum_{\sigma}\sum_{r} ||r|^2 - (k_F^\sigma)^2| \{S^\ast_{1,2,\sigma}(r), S_{1,2,\sigma}(r)\} 
\\
\leq \frac{1}{L^3}\sum_{\sigma\neq \sigma^\prime}\rho_\sigma \sum_{p,r,r^\prime}|p|^2 |\hat{\varphi}^>(p)|^2 \hat{u}_{\sigma^\prime}(r^\prime)\hat{a}_{r^\prime, \sigma^\prime}^\ast \hat{a}_{r^\prime, \sigma^\prime} + \mathcal{E}_{S^\ast_{1,2,\sigma} S_{1,2,\sigma}}
\end{multline*}
with 
\begin{align}\label{eq: err S12*S12}
  |\langle \psi, \mathcal{E}_{S^\ast_{1,2,\sigma} S_{1,2,\sigma}}\psi\rangle | &\leq   C\rho^{\frac{4}{3} + 2\gamma - 3\alpha}  \langle \psi, \mathcal{N}\psi\rangle + C\rho^{\frac{2}{3} + 6\gamma - 5\alpha}\langle \psi, \mathbb{H}_0 \psi\rangle\nonumber
 \\
 &\quad  + C\rho^{1+\frac{7}{2}\gamma - \frac{5}{2}\alpha}\|\mathbb{H}_0^{\frac{1}{2}}\psi\|\|\mathcal{N}^{\frac{1}{2}}\psi\|,
\end{align}
for all $\psi\in\mathcal{F}_{\rm f}$, 
where we used again that $\gamma < \alpha$ by assumption.

\medskip
\noindent{\bf Analysis of $\left(\{S^\ast_{1,2,\sigma}(r), S_{1,1,\sigma}(r)\} + \mathrm{h.c.}\right)$.} 
Finally we consider 
\begin{align}\label{eq:S12-S11-dec}
  &\sum_{\sigma} \sum_r||r|^2 - (k_F^\sigma)^2| \{S^\ast_{1,2,\sigma}(r), S_{1,1,\sigma}(r)\} \nn\\
&= - \frac{1}{L^6}\sum_{p,q,r,r^\prime}(|r|^2 - (k_F^\sigma)^2) \hat{\varphi}^>(p)\hat{\varphi}^>(q) \hat{{u}}^{<\alpha}_{\sigma}(r-p) \hat{v}_\sigma(r-q)  
\nn\\
&\qquad\qquad \times \hat{v}_{\sigma^\prime}(r^\prime) \hat{{u}}^{<\alpha}_{\sigma^\prime}(r^\prime -p+q)\hat{a}_{r-p,\sigma}^\ast \hat{a}_{q-r,\sigma}^\ast \hat{a}_{r^\prime - p+q, \sigma^\prime}\hat{a}_{-r^\prime, \sigma^\prime}
\end{align}
where we computed the anti-commutator
\[
  \{\hat{a}_{r-p,\sigma}^\ast \hat{a}_{r^\prime - p, \sigma^\prime}\hat{a}_{-r^\prime,\sigma^\prime}, \hat{a}_{q-r,\sigma}^\ast \hat{a}^\ast_{s-q,\sigma^\prime}\hat{a}_{s,\sigma^\prime}\} =  \delta_{s-q, r^\prime-p}\hat{a}_{r-p,\sigma}^\ast \hat{a}^\ast_{q-r,\sigma}\hat{a}_{s,\sigma^\prime}\hat{a}_{-r^\prime, \sigma^\prime}
\]
and used $\hat u_\sigma(r)=1=\hat u_{\sigma'}(r'-p)$ in the relevant domain. 
We decompose  
$$|r|^2-(k_F^\sigma)^2 = ( |r-q|^2 - (k_F^\sigma)^2) + 2(r-q)\cdot q + |q|^2$$ 
and correspondingly write the right-hand side of \eqref{eq:S12-S11-dec} as $\sum_{j=1}^3 \mathrm{K}_j$, where $\mathrm{K}_1,\mathrm{K}_2,\mathrm{K}_3$ are the terms involving $|r-q|^2- (k_F^\sigma)^2$, $2(r-q)\cdot q$, $|q|^2$, respectively. 
We can write $\mathrm{K}_{1}$ in configuration space 
as
\[
  \mathrm{K}_{1} = \sum_{\sigma\neq \sigma^\prime}\int dxdy\, |{\varphi}^>(x-y)|^2 a^\ast_\sigma({u}^{<\alpha}_x)a_\sigma^\ast((\Delta + (k_F^\sigma)^2) v_x) a_{\sigma^\prime}({u}^{<\alpha}_y)a_{\sigma^\prime}(v_y).
\] 
and bound it using Lemmas \ref{lem:4a}, \ref{lem: bounds phi} and \ref{lem:1a}  as 
\begin{align*}
  |\langle\psi, \mathrm{K}_{1}\psi\rangle| 
  &\leq C\| ( |\cdot|^2 - (k_F^\sigma)^2) \hat{v}_\sigma\|_2 \|{v}_{\sigma^\prime}\|_2 \|{\varphi}^>\|_2^2 \langle \psi, \mathcal{N}\psi\rangle \leq C\rho^{\frac{4}{3} + \gamma}\langle \psi, \mathcal{N}\psi\rangle. 
\end{align*}
For the term $\mathrm{K}_{2}$, we proceed similarly and find that 
\begin{align*}
  |\langle \psi, \mathrm{K}_{2}\psi\rangle| 
  &\leq C\|\nabla{\varphi}^>\|_1\|{\varphi}^>\|_\infty \| |\cdot| \hat{v}_{\sigma}\|_2 \|{v}_{\sigma^\prime}\|_2 \langle \psi, \mathcal{N}\psi\rangle \leq C\rho^{1+\gamma}\langle \psi,\mathcal{N}\psi\rangle.
\end{align*}
The term 
\[
  \mathrm{K}_3 = \sum_{\sigma\neq \sigma^\prime}\int dxdy\, \Delta{\varphi}^>(x-y){\varphi}^>(x-y)a^\ast_\sigma({u}^{<\alpha}_x)a_\sigma^\ast(v_x)a_{\sigma^\prime}({u}^{<\alpha}_y)a_{\sigma^\prime}(v_y)
\] 
can be bounded with Lemma \ref{lem:4aaa} and Lemma \ref{lem: bounds phi}. 
We obtain for any $0\le \beta<1$
\begin{align*}
  |\langle \psi, \mathrm{K}_{3}\psi\rangle | 
&  \leq C\rho^{1+\frac{\beta}{6}}\langle \psi, \mathcal{N}\psi\rangle + C\rho\|\mathcal{N}_\beta^{\frac{1}{2}}\psi\|\|\mathcal{N}^{\frac{1}{2}}\psi\|. 
\end{align*}
In combination, the bounds for on $ \mathrm{K}_{j}$ for $1\leq j\leq 3$
imply that
\begin{align}\label{eq: est S11*S12}
  &\sum_\sigma\sum_r ||r|^2 - (k_F^\sigma)^2||\langle \psi, \{S^\ast_{1,2,\sigma}, S_{1,1,\sigma}(r)\}\psi\rangle| \nn\\
  & \leq C \big( \rho^{1+\gamma} + \rho^{1+\frac{\beta}{6}}\big)\langle \psi, \mathcal{N}\psi\rangle + C\rho\|\mathcal{N}_\beta^{\frac{1}{2}}\psi\|\|\mathcal{N}^{\frac{1}{2}}\psi\|. 
\end{align}

\medskip
\noindent
{\bf Conclusion.} Combining the estimates in \eqref{eq: S11*S11}, \eqref{eq: err S12*S12} and \eqref{eq: est S11*S12} and using also $\rho^{1 + \frac{7}{2}\gamma - \frac{5}{2}\alpha}\|\mathbb{H}_0^{\frac{1}{2}}\psi\|\|\mathcal{N}^{\frac{1}{2}}\psi\| \le \rho^{1 +6 \gamma - 5\alpha}\|\mathbb{H}_0^{\frac{1}{2}}\psi\|^2 + \rho^{1+\gamma} \|\mathcal{N}^{\frac{1}{2}}\psi\|^2$, we complete the proof of Lemma \ref{lem: S1}.


\subsection{Proof of Lemma \ref{lem: TS1} }

As in the proof of Lemma \ref{lem: S1} in the previous Subsection, we decompose $S_{1,\sigma}(r) = S_{1,1,\sigma}(r) + S_{1,2,\sigma}(r)$ as in \eqref{eq:S-S1-S2}, and estimate the terms involving $\{T_\sigma^\ast(r), S_{1,1,\sigma}(r) \}$ and $\{T_\sigma^\ast(r), S_{1,2,\sigma}(r) \}$ separately.

\medskip
\noindent{\bf Analysis of $\{T_\sigma^\ast(r), S_{1,1,\sigma}(r) \}$.} From the definitions \eqref{eq: def Tk}, \eqref{def:s1i}, \eqref{def:D} and \eqref{eq: def bp bpr}, we find
\begin{align*}
  &\sum_\sigma\sum_r ||r|^2 - (k_F^\sigma)^2| \{T^\ast_{\sigma}(r), S_{1,1,\sigma}(r)\}
  \\
  &= -  \frac{1}{L^6}\sum_{\sigma\neq\sigma^\prime} \sum_{p,q,r,s,r^\prime} ||r|^2 - (k_F^\sigma)^2| \hat \varphi^>(q) \hat{{u}}^{<\alpha}_{\sigma^\prime}(s)\hat{u}_{\sigma^\prime}(s-q) \hat{u}_{\sigma'}(r'-p)\hat{v}_{\sigma'}(r')
 \\
  &\qquad  \times  \Big( \widehat{\omega}^\varepsilon_{-r,r^\prime}(p)\hat{u}_\sigma(p-r)\hat{u}_\sigma(q-r)\hat{v}_\sigma(r) - \widehat{\omega}^\varepsilon_{r-p,r^\prime}(p)\hat{v}_\sigma(r-p)\hat{v}_\sigma(r-q) \hat{u}_\sigma(r)\Big) 
  \\ & \qquad  \times \Big\{ \hat{a}_{p-r,\sigma} \hat{a}_{r'-p,\sigma'}\hat{a}_{-r',\sigma'} , \hat{a}^\ast_{q-r,\sigma}\hat{a}^\ast_{s - q, \sigma^\prime}\hat{a}_{s, \sigma^\prime}\Big\}.
\end{align*}
For $\sigma \neq \sigma'$, $r' \in B_F^{\sigma'}$ and $r'-p, s-q \not\in B_F^{\sigma'}$, the anticommutator equals
\begin{align*}
& \Big\{ \hat{a}_{p-r,\sigma} \hat{a}_{r'-p,\sigma'}\hat{a}_{-r',\sigma'} , \hat{a}^\ast_{q-r,\sigma}\hat{a}^\ast_{s - q, \sigma^\prime}\hat{a}_{s, \sigma^\prime}\Big\} = \delta_{p,q}\delta_{r^\prime, s}\hat{a}_{s,\sigma^\prime}\hat{a}_{-r^\prime,\sigma^\prime} \nn\\
&\qquad \qquad - \delta_{p,q}\hat{a}^\ast_{s-q,\sigma^\prime}\hat{a}_{r^\prime -p, \sigma^\prime}\hat{a}_{s,\sigma^\prime}\hat{a}_{-r^\prime,\sigma^\prime}
 - \delta_{r^\prime - p, s-q}\hat{a}^\ast_{q-r,\sigma} \hat{a}_{p-r,\sigma}\hat{a}_{s,\sigma^\prime}\hat{a}_{-r^\prime,\sigma^\prime}.
\end{align*}
The first term gives a vanishing contribution since $\hat{{u}}^{<\alpha}_{\sigma^\prime}(r^\prime)\hat{v}_{\sigma^\prime}(r^\prime) = 0$. For the remaining ones, we find after a suitable change of summation variables that 
\begin{align}\label{eq:TS11-dec}
\sum_\sigma\sum_r ||r|^2 - (k_F^\sigma)^2| \{T^\ast_{\sigma}(r), S_{1,1,\sigma}(r)\} = \sum_{j=1}^3 \mathrm{I}_j
\end{align} 
with 
\begin{align*}
  \mathrm{I}_1 &= - \frac{1}{L^6}\sum_{\sigma\neq\sigma^\prime} \sum_{p,r,s,r^\prime} (|r+p|^2 -|r|^2)\widehat{\omega}^\varepsilon_{r,r^\prime}(p)\hat{\varphi}^>(p) \hat{u}_\sigma(r+p)\hat{v}_{\sigma}(r) \hat{{u}}^{<\alpha}_{\sigma^\prime}(s) \\
  &\qquad \qquad \times  
  \hat{u}_{\sigma^\prime}(s-p)\hat{u}_{\sigma^\prime}(r^\prime - p)\hat{v}_{\sigma^\prime}(r^\prime) \hat{a}^\ast_{s-p,\sigma^\prime}\hat{a}_{r^\prime - p, \sigma^\prime}\hat{a}_{s,\sigma^\prime}\hat{a}_{-r^\prime,\sigma^\prime},
\\
  \mathrm{I}_2 &= \frac{1}{L^6}\sum_{\sigma\neq\sigma^\prime} \sum_{p,q,r,r^\prime} ||r+p|^2 - (k_F^\sigma)^2|\widehat{\omega}^\varepsilon_{r,r^\prime}(p)\hat{\varphi}^>(q) \hat{u}_\sigma(r+p)\hat{v}_{\sigma}(r) \hat{{u}}^{<\alpha}_{\sigma^\prime}(r^\prime - p+q)  \\
  &\qquad \qquad \times \hat{v}_\sigma(r+p-q)\hat{u}_{\sigma^\prime}(r^\prime -p)\hat{v}_{\sigma^\prime}(r^\prime) \hat{a}_{q-p-r,\sigma}^\ast \hat{a}_{-r,\sigma}\hat{a}_{r^\prime -p+q,\sigma^\prime}\hat{a}_{-r^\prime,\sigma^\prime},
\\
  \mathrm{I}_3 & = -\frac{1}{L^6}\sum_{\sigma\neq\sigma^\prime} \sum_{p,q,r,r^\prime} ||r|^2 - (k_F^\sigma)^2|\widehat{\omega}^\varepsilon_{r,r^\prime}(p)\hat{\varphi}^>(q)\hat{u}_\sigma(r+p)\hat{v}_{\sigma}(r) \hat{{u}}^{<\alpha}_{\sigma^\prime}(r^\prime - p+q)  \\
  &\qquad \qquad \times \hat{u}_\sigma(r+q)\hat{u}_{\sigma^\prime}(r^\prime -p)\hat{v}_{\sigma^\prime}(r^\prime) \hat{a}_{r+q,\sigma}^\ast \hat{a}_{r+p,\sigma}\hat{a}_{r^\prime -p+q,\sigma^\prime}\hat{a}_{-r^\prime,\sigma^\prime}.
\end{align*}

\medskip
\noindent{\bf Analysis of $\mathrm{I}_1$.}  
With the Cauchy--Schwarz inequality, we can bound
\begin{align}\label{eq:TS-I2-CS}
\pm \big(  \mathrm{I}_1 +\mathrm{I}_1^* \big) &\le \frac{\rho^{-\eta}}{L^6}\sum_{\sigma\neq\sigma^\prime} \sum_{|p| \ge \rho^{1/3-\gamma}}\sum_r  \frac{(|r+p|^2 -|r|^2)^2}{|p|^2} \hat{u}_{\sigma}(r+p)\hat{v}_\sigma(r)  \nn\\
&\quad \times \left| \sum_{r^\prime}\widehat{\omega}_{r,r^\prime}^\varepsilon (p)\hat{u}_{\sigma^\prime}(r^\prime - p)\hat{v}_{\sigma^\prime}(r^\prime)   \hat{a}_{r^\prime - p, \sigma^\prime}\hat{a}_{-r^\prime, \sigma^\prime}\right|^2 \nn\\
&\quad + \frac{\rho^\eta}{L^6}\sum_{\sigma\neq\sigma^\prime} \sum_{p,r}  \hat{v}_\sigma(r)   |p|^2 (\hat{\varphi}^>(p))^2  \left | \sum_{s} \hat{{u}}^{<\alpha}_{\sigma^\prime}(s)  \hat{u}_{\sigma^\prime}(s-p) \hat{a}^*_{s,\sigma^\prime} \hat{a}_{s-p,\sigma^\prime} \right|^2\nn\\
& =:  \mathrm{I}_{1;a} + \mathrm{I}_{1;b},
\end{align}
{for any $\eta>0$, to be chosen later.}
The first term $\mathrm{I}_{1;a}$ on the right-hand side of \eqref{eq:TS-I2-CS} can be estimated using the analysis of the term $\mathrm{I}_6$ in the proof of Lemma~\ref{lem: T}, see \eqref{eq:RR-I6-dec}}.
 To be precise, using $0\le |r+p|^2 -|r|^2 \le C|p|^2$ for $|r|\lesssim \rho^{1/3} \ll  \rho^{1/3-\gamma} \le |p|$, we have
 \begin{align} 
\mathrm{I}_{1;a} & \le \frac{C\rho^{-\eta}}{L^6}\sum_{\sigma\neq\sigma^\prime} \sum_{p,r} ||r+p|^2 -|r|^2| \hat{u}_{\sigma}(r+p)\hat{v}_\sigma(r)   \nn
\\
&\qquad \times \left| \sum_{r^\prime}\widehat{\omega}_{r,r^\prime}^\varepsilon (p)\hat{u}_{\sigma^\prime}(r^\prime - p)\hat{v}_{\sigma^\prime}(r^\prime)   \hat{a}_{r^\prime - p, \sigma^\prime}\hat{a}_{-r^\prime, \sigma^\prime}\right|^2 \,.
\label{eq: est I2a Lem 55}
\end{align}
 This is of the same form as $\mathrm{I}_6$ in \eqref{eq:RR-dec}, and we can  deduce from  \eqref{eq:RR-I6-dec}--\eqref{eq:RR-I6-final}  that for any $\kappa>0$
$$
\langle \psi, \mathrm{I}_{1;a} \psi \rangle \le C\rho^{-\eta}\rho^{1+\frac{\gamma}{2} - \kappa}\|\mathbb{H}_0^{\frac{1}{2}}\psi\|\|\mathcal{N}^{\frac{1}{2}}\psi\| + C\rho^{-\eta}\rho^{\frac{4}{3} -\frac{\gamma}{2} -\kappa}\langle \psi,\mathcal{N}\psi\rangle. 
$$
The second term is given by 
\[
\mathrm{I}_{1;b} =  \frac{\rho^\eta}{L^3}\sum_{\sigma\neq\sigma^\prime} \rho_\sigma \sum_{p}   |p|^2 (\hat{\varphi}^>(p))^2  \left | \sum_{s} \hat{{u}}^{<\alpha}_{\sigma^\prime}(s)  \hat{u}_{\sigma^\prime}(s-p) \hat{a}^*_{s,\sigma^\prime} \hat{a}_{s-p,\sigma^\prime} \right|^2 .
\]
Writing it in configuration space and applying Lemma \ref{lem:4a}, we obtain 
\begin{align*}
|\langle \psi, \mathrm{I}_{1;b}\psi\rangle| 
&\le C \rho^{1+\eta} \sum_{ \sigma^\prime} \| u^{<\alpha}_{\sigma'}\|_2^2  \|\Delta \hat{\varphi}^> * \hat{\varphi}^>\|_1 \langle \psi, \cN\psi\rangle 
\leq C\rho^{\frac 4 3 + 2\gamma -3 \alpha + \eta} \langle \psi, \cN\psi\rangle.
\end{align*}
The optimal choice of $\eta$ is thus $\eta = -(5/4)\gamma -\kappa/2 +(3/2)\alpha$, 
and we arrive at  
\begin{equation}\label{i1n}
|\langle \psi, I_1\psi\rangle|  \leq C\rho^{\frac{4}{3} + \frac{3}{4}\gamma - \frac{3}{2}\alpha - \kappa} \langle \psi, \mathcal{N}\psi\rangle + C\rho^{1+ \frac{7}{4}\gamma - \frac{3}{2}\alpha - \kappa}\|\mathbb{H}_0^{\frac{1}{2}}\psi\|\|\mathcal{N}^{\frac{1}{2}}\psi\|. 
\end{equation}

\medskip
\noindent{\bf Analysis of $\mathrm{I}_2$.} We now turn to the estimate the term $\mathrm{I}_2$ in \eqref{eq:TS11-dec}. 
On the support  of $\hat v_\sigma(r)$, $\hat v_{\sigma'}(r')$, $v_\sigma(r+p-q)$ and $\hat{\varphi}^>(q)$, we have $\hat{u}_\sigma(r+p) =1= \hat{u}_{\sigma^\prime}(r^\prime - p)$. Using the identity in \eqref{eq: comparison w eps with phi} we can write 
 \begin{equation}\label{eq: R*S-I3-dec}
  \mathrm{I}_2 =  \mathrm{I}_{2;a} + \mathrm{I}_{2;b} + \mathrm{I}_{2;c},
\end{equation}
 with 
\begin{align*}  
  \mathrm{I}_{2;a} &= \frac{2}{L^6}\sum_{\sigma\neq\sigma^\prime} \sum_{p,q,r,r^\prime} ||r+p|^2 - (k_F^\sigma)^2|\hat{\varphi}(p)\hat{\varphi}^>(q) \hat{v}_{\sigma}(r) \hat{{u}}_{\sigma^\prime}^{<\alpha}(r^\prime - p+q) \nn\\
  &\qquad\qquad \times \hat{v}_\sigma(r+p-q)\hat{v}_{\sigma^\prime}(r^\prime) \hat{a}_{q-p-r,\sigma}^\ast \hat{a}_{-r,\sigma}\hat{a}_{r^\prime -p+q,\sigma^\prime}\hat{a}_{-r^\prime,\sigma^\prime},\\
  \mathrm{I}_{2;b} & = -\frac{4}{L^6}\sum_{\sigma\neq\sigma^\prime} \sum_{p,q,r,r^\prime} ||r+p|^2 - (k_F^\sigma)^2|\frac{p\cdot (r-r^\prime)\hat{\varphi}(p)\hat{\varphi}^>(q)}{|r+p|^2 - |r|^2 + |r^\prime - p|^2 - |r^\prime|^2+2\eps} \nn\\
  &\qquad\qquad \times   \hat{v}_{\sigma}(r) \hat{{u}}_{\sigma^\prime}^{<\alpha}(r^\prime - p+q) \hat{v}_\sigma(r+p-q)\hat{v}_{\sigma^\prime}(r^\prime)  \hat{a}_{q-p-r,\sigma}^\ast \hat{a}_{-r,\sigma}\hat{a}_{r^\prime -p+q,\sigma^\prime}\hat{a}_{-r^\prime,\sigma^\prime},\\
  \mathrm{I}_{2;c} &= - \frac{4\varepsilon}{L^6}\sum_{\sigma\neq\sigma^\prime} \sum_{p,q,r,r^\prime} ||r+p|^2 - (k_F^\sigma)^2|\frac{\hat{\varphi}(p)\hat{\varphi}^>(q)}{|r+p|^2 - |r|^2 + |r^\prime - p|^2 - |r^\prime|^2 + 2\eps}  
 \nn \\
 &\qquad \qquad  \times \hat{v}_{\sigma}(r) \hat{{u}}^{<\alpha}_{\sigma^\prime}(r^\prime - p+q) \hat{v}_\sigma(r+p-q)\hat{v}_{\sigma^\prime}(r^\prime)  \hat{a}_{q-p-r,\sigma}^\ast \hat{a}_{-r,\sigma}\hat{a}_{r^\prime -p+q,\sigma^\prime}\hat{a}_{-r^\prime,\sigma^\prime}.
\end{align*}
 In all the three terms, due to the  constraints on $r,r+p-q$ and $q$, we observe that $\hat{\varphi}(p)$ can be replaced by $\hat{\varphi}_1^>(p)$ with 
  \begin{equation}\label{eq: def phi1>}
    \supp \hat{\varphi}_1^> \subset \{ |p|\geq 2\rho^{1/3 - \gamma} \}
  \end{equation}
  where we define $\hat{\varphi}_1^>$ using a smooth cut-off such that $\varphi_1^>$ satisfies the same bounds as $\varphi^>$ in Lemma \ref{lem: bounds phi}.   
We first estimate $\mathrm{I}_{2;a}$ in \eqref{eq: R*S-I3-dec}. Using  
\begin{align}\label{eq:TS-split-I3ac}
|r+p|^2 - (k_F^\sigma)^2 = |p|^2  + 2p\cdot r  + |r|^2 - (k_F^\sigma)^2,
\end{align}
we split $\mathrm{I}_{2;a} = \sum_{j=1}^4 \mathrm{I}_{2;a;j}$ accordingly. Using $\|\Delta \varphi_1^>\|_1\le C$ and $\|\varphi^>\|_\infty\le C$ from Lemma \ref{lem: bounds phi}, we can bound the term 
\begin{equation}\label{eq: I3a1 T*R}
  \mathrm{I}_{2;a;1} = -2\sum_{\sigma\neq \sigma^\prime}\int dxdy\, \Delta{\varphi}^>_1 (x-y){\varphi}^>(x-y) a_\sigma^\ast(v_x)a_{\sigma^\prime}({u}^{<\alpha}_y)a_\sigma(v_x)a_{\sigma^\prime}(v_y)
\end{equation}
with Lemma \ref{lem:4aaa} and obtain 
\[
  |\langle \psi, \mathrm{I}_{2;a;1}\psi\rangle| \leq C\rho^{1+\frac{\beta}{6}}\langle \psi, \mathcal{N}\psi\rangle + C\rho \|\mathcal{N}^{\frac{1}{2}}\psi\|\|\mathcal{N}_\beta^{\frac{1}{2}}\psi\|
\]
for any $0\le \beta <1$. For the other terms, we can use Lemma \ref{lem:1a} and Lemma \ref{lem: bounds phi} to get 
\begin{align*}
  |\langle \psi, (\mathrm{I}_{2;a;2} + \mathrm{I}_{2;a;3} + \mathrm{I}_{2;a;4})\psi\rangle| &\leq C\rho\|\varphi^>_1\|_\infty\left(\rho^{\frac{1}{3}}\|\nabla\varphi^>_1\|_1 + \rho^{\frac{2}{3}}\|\varphi^>_1\|_1\right) \langle \psi, \mathcal{N}\psi\rangle \\
  & \leq C\rho^{1+\gamma}\langle \psi, \mathcal{N}\psi\rangle.
\end{align*}
Thus, 
\[
|\langle \psi, \mathrm{I}_{2;a}\psi\rangle| \leq C \big( \rho^{1+\frac{\beta}{6}} + \rho^{1+\gamma}\big)\langle \psi, \mathcal{N}\psi\rangle + C\rho \|\mathcal{N}^{\frac{1}{2}}\psi\|\|\mathcal{N}_\beta^{\frac{1}{2}}\psi\|.
\]
{We now turn to the terms $\mathrm{I}_{2;b}$ and $\mathrm{I}_{2;c}$ in \eqref{eq: R*S-I3-dec}.
}  {For $\mathrm{I}_{2;b}$,} note that on the support  of $\hat v_\sigma(r)$, $\hat v_{\sigma'}(r')$, $v_\sigma(r+p-q)$ and $\hat{\varphi}^>(q)$, we have $|r+p|\ge 3 \rho^{\frac 1 3 -\gamma}$.  
We then use 
\begin{equation}\label{eq: r+p wrt q}
  ||r+p|^2 - (k_F^\sigma)^2| = |q|^2 +2q\cdot(r+p-q) + |r+p-q|^2- (k_F^\sigma)^2
\end{equation}
and split $\mathrm{I}_{2;b}=\sum_{j=1}^4 \mathrm{I}_{2;b;j}$ accordingly. Using \eqref{eq: int t conf space}, we can write the first term in configuration space as 
\begin{align*}  
  \mathrm{I}_{2;b;1} & = 
  4\sum_{\sigma\neq\sigma^\prime}  \int_0^{\infty} dt\,  e^{-2t\varepsilon} \int dx dy dz dz'  \sum_{\ell=1}^3 (\partial_\ell \varphi_1^>)(x-y) (-\Delta \varphi^>)(z-z')   \nn\\
  &\qquad \textcolor{black}{\times \zeta^t_>(x-z) \zeta^t_1(y-z^\prime) a_\sigma^\ast(v_{z})a_{\sigma^\prime}({u}_{z^\prime}^{<\alpha}) \big(a_{\sigma^\prime}(v_{t,y})a_{\sigma}(\partial_\ell v_{t,x}) - a_{\sigma^\prime}(\partial_\ell v_{t,y})a_{\sigma}( v_{t,x})\big)}
\end{align*}
with $\zeta_1^t$, $\zeta_>^t$  defined in Lemma \ref{lem: zeta t L1}. With the Cauchy--Schwarz inequality
\textcolor{black}{\begin{align*} 
&\int dx dy dz dz' |\partial_\ell \varphi_1^>(x-y)| |\Delta \varphi^>(z-z')|  |\zeta^t_>(x;z)| |\zeta^t_1(y;z^\prime)| \| a_\sigma(v_x) \psi\| \| a_{\sigma^\prime}({u}_y^{<\alpha}) \psi\|  \\
&\le \| \nabla\varphi_1^>\|_2 \|\Delta \varphi^>\|_1 \| \|\zeta^t_>\|_2 \|\zeta^t_1\|_1 \langle \psi, \cN \psi\rangle
\end{align*}}
and the bounds in Lemmas~\ref{lem: zeta t L1} and~\ref{lem: bounds phi}, we find that 
\begin{align*}  
|\langle\psi,   \mathrm{I}_{2;b;1}\psi\rangle| 
&\le  C   \sum_{\sigma\neq \sigma'} \int_0^{\infty} dt\,  \| \nabla \varphi_1^>\|_2 \|\Delta \varphi^>\|_1 \| \zeta^t_>\|_2 \| \zeta^t_1 \|_1     \|v_{t,\sigma}\|_2  \|  \nabla v_{t,\sigma'} \|_2 \langle \psi, \cN \psi\rangle\\
 &\le C \rho^{4/3} \, \int_0^{\infty} dt\,  t^{-3/4} \textcolor{black}{e^{-  \frac{9}2 t \rho^{\frac 2 3 -2\gamma }}} \textcolor{black}{e^{t(k_F^\sigma)^2} e^{t(k_F^{\sigma^\prime})^2}}\,   \langle\psi, \cN \psi\rangle 
  \leq  C \rho^{\frac 7 6 + \frac \gamma 2}\langle \psi, \cN\psi\rangle.
 \end{align*}
 Similarly we obtain
\begin{align*}  
& \sum_{j=2}^4 |\langle \psi, \mathrm{I}_{2;b;j}\psi\rangle| 
\\&  \le  C\int_0^{\infty} dt\,  \| \nabla \varphi_1^>\|_2 \Big( \rho^{\frac 1 3 }\|\nabla \varphi^>\|_1 + \rho^{\frac 2 3 }\|\varphi^>\|_1 \Big) e^{t(k_F^\sigma)^2} e^{t(k_F^{\sigma^\prime})^2} \| \zeta^t_>\|_2 \| \zeta^t_1\|_1 \rho^{\frac 4 3} \langle \psi, \cN \psi\rangle\\
 &
  \le C \rho^{\frac 7 6 + \frac {3\gamma} 2}\langle \psi, \cN \psi\rangle. 
 \end{align*}
 Thus, in summary,
$$
  |\langle \psi, \mathrm{I}_{2;b}\psi\rangle| \le C \rho^{\frac 7 6 + \frac \gamma 2}\langle \psi, \cN\psi\rangle. 
$$

Now we consider the term $\mathrm{I}_{2;c}$. We use \eqref{eq:TS-split-I3ac} and split $\mathrm{I}_{2;c}=\sum_{j=1}^4 \mathrm{I}_{2;c;j}$ accordingly. 
The first term $\mathrm{I}_{2;c;1}$ equals
\begin{multline*} 
\mathrm{I}_{2;c;1} = 2\eps \sum_{\sigma\neq\sigma^\prime}  \int_0^{\infty} dt\,  e^{-2t\varepsilon} \int dx dy dz dz' \Delta \varphi^>_1(x-y) \varphi^>(z-z')   \nn\\
  \textcolor{black}{\times \zeta^t_>(x-z) \zeta^t_1(y-z^\prime) a_\sigma^\ast(v_{z})a_{\sigma^\prime}({u}_{z^\prime}^{<\alpha}) a_{\sigma^\prime}(v_{t,y})a_{\sigma} (v_{t,x})}.
   \end{multline*}
 By using the Cauchy--Schwarz inequality and the bounds in Lemmas~\ref{lem: zeta t L1} and~\ref{lem: bounds phi}, similarly to the analysis of $\mathrm{I}_{2;b;1}$, we obtain 
   \begin{align*} 
  |\langle \psi, \mathrm{I}_{2;c;1}\psi\rangle|&\le C \eps \sum_{\sigma\neq\sigma^\prime} \int_0^{\infty} dt\,  \| \Delta \varphi^>_1\|_1 \|\varphi^>\|_2 \| \zeta^t_>\|_2 \| \zeta^t_1 \|_1     \|v_{t,\sigma}\|_2 \|v_{t,\sigma'}\|_2   \langle \psi,  \cN \psi\rangle\\
  &\le C \eps\rho^{\frac{5}{6} + \frac{\gamma}{2}}\left(\int_0^{\infty} dt\,  t^{-\frac 3 4} \textcolor{black}{e^{-  \frac{9}{2} t\rho^{\frac 2 3 -2\gamma }}} \textcolor{black}{e^{t(k_F^\sigma)^2} e^{t(k_F^{\sigma^\prime})^2}}\right)  \langle \psi, \cN\psi\rangle
  \\
  & 
  \le  C \rho^{\frac 4 3  + \gamma +\delta}\langle \psi, \cN\psi\rangle. 
\end{align*}
For the other terms $\mathrm{I}_{2;c;j}$ with $j=2,3,4$, we have \textcolor{black}{$p_\ell\hat \varphi_1(p)$ (with $\ell =1,2,3)$} or $\hat{\varphi}_1(p)$ instead of $|p|^2 \hat\varphi_1(p)$ in the momentum space. Proceeding as above, 
we find
\begin{align*}  
 \sum_{j=2}^4 |\langle \psi, \mathrm{I}_{2;c;j}\psi\rangle| 
   \le C \rho^{\frac 4 3  + 2\gamma +\delta}\langle \psi, \cN\psi\rangle. 
\end{align*}
Thus, in summary,
$$
  |\langle \psi,\mathrm{I}_{2;c}\psi\rangle| \le C \rho^{\frac 4 3  + \gamma +\delta}\langle \psi, \cN\psi\rangle.
$$
Since $0<\gamma<1/3$ by assumption, we thus conclude that 
 \eqref{eq: R*S-I3-dec} is bounded as 
 \begin{equation}\label{eq: R*S-I3}
|\langle \psi,  \mathrm{I}_2\psi\rangle| \le C\left(\rho^{1+\frac{\beta}{6}} + C\rho^{1+\gamma} \right) \langle \psi, \mathcal{N}\psi\rangle + \rho \|\mathcal{N}^{\frac{1}{2}}\psi\|\|\mathcal{N}^{\frac{1}{2}}_\beta\psi\|.
\end{equation}

\medskip
\noindent{\bf Analysis of $\mathrm{I}_3$.} We insert the decomposition $1 = \widehat{\chi}_>(p) + \widehat{\chi}_<(p)$ and correspondingly write  $\mathrm{I}_3$  in \eqref{eq:TS11-dec} as 
$$\mathrm{I}_3 = \mathrm{I}_3^> + \mathrm{I}_3^<.$$
We first estimate $\mathrm{I}_3^<$. For $|p| \leq 5\rho^{1/3 - \gamma}$ and  $|r^\prime| \leq C\rho^{1/3}$, we have $|r^\prime - p|\leq 6\rho^{1/3 -\gamma}$, which allows to replace $\hat{u}_{\sigma^\prime}(r^\prime - p)$ by $\hat{u}^<_{\sigma^\prime}(r^\prime - p)$ defined in \eqref{eq: u< u> gamma} and the same for $\hat u_\sigma(r+p)$. With the aid  \eqref{eq: int t conf space} we can thus write in configuration space 
\begin{multline*}
  \mathrm{I}_{3}^{<} = 2\sum_{\sigma\neq \sigma^\prime} \int_0^{\infty} dt\, e^{-2t\varepsilon} \int dxdydzdz^\prime\, \Delta\varphi^<(x-y) {\varphi}^>(z-z^\prime)  
  \\
  \times ( \Delta + (k_F^\sigma)^2) v_{t,\sigma}) (x-z) u_{t,\sigma^\prime}^<(y-z^\prime) a^\ast_{\sigma}(u_z) a_\sigma(u_{t,x}^<) a_{\sigma^\prime}({u}^{<\alpha}_{z^\prime})a_{\sigma^\prime}(v_{t,y}).
\end{multline*}
and then estimate $\mathrm{I}_{3}^{<}$ using Lemma \ref{lem:4aa}.  
Using also Lemma \ref{lem: bounds phi} and \eqref{eq: est int t vu< final}, this yields 
\begin{align*}
  |\langle \psi, \mathrm{I}_3^<\psi\rangle| 
 &\leq C\rho^{\frac{4}{3} +\frac{3}{2}\gamma -\frac{3}{2}\alpha -\kappa}\langle \psi, \mathcal{N}\psi\rangle
\end{align*}
for any $\kappa>0$ and $\psi \in\mathcal{F}_{\mathrm{f}}$. 

Next we consider $\mathrm{I}_3^>$. Using  the constraints on $p,r,r^\prime$, we can replace $\hat{u}_{\sigma^\prime}(r^\prime - p)$ and $\hat{u}_{\sigma}(r + p)$ by $\hat{u}^>_{\sigma^\prime}(r^\prime - p)$ and $\hat{u}^>_\sigma(r+ p)$, respectively, with $\hat{u}^>_\sigma$ defined in \eqref{eq: u< u> gamma}. Moreover, we multiply and divide by $|r^\prime-p|^2$ and redistribute the momenta by writing 
$$|r^\prime - p |^2 = |r^\prime - p+q|^2 + |q|^2 - 2q\cdot (r^\prime - p+q).$$
This way, we obtain the identity 
\begin{align*}
  \mathrm{I}_{3}^> &= 2 \sum_{\sigma \neq \sigma^\prime}\int_0^{\infty} dt\, e^{-2t\varepsilon} \int dxdydzdz^\prime\, \Delta\varphi^>(x-y) (\Delta + (k_F^\sigma)^2) v_{t,\sigma}(x-z)  \widetilde{u}^>_{t,\sigma^\prime}(y-z^\prime)  
  \\
 & \times  \sum_{\ell=1}^3 \big( \partial_\ell^2 {\varphi}^>(z-z^\prime) a_{\sigma^\prime}({u}^{<\alpha}_{z^\prime}) + 2 \partial_\ell{\varphi}^>(z-z^\prime) a_{\sigma^\prime}(\partial_\ell {u}^{<\alpha}_{z^\prime}) + {\varphi}^>(z-z^\prime) a_{\sigma^\prime}(\partial_\ell^2{u}^{<\alpha}_{z^\prime}) \big)  
  \\
&\times a^\ast_{\sigma}(u_z) a_\sigma(u_{t,x}^>) a_{\sigma^\prime}(v_{t,y}).
\end{align*}
Again it can be estimated by Lemma \ref{lem:4aa}, with the result that 
\begin{align*}
  |\langle \psi, \mathrm{I}_{3}^>\psi\rangle 
   &\leq C\rho^{1+\frac{9}{2}\gamma - \frac{7}{2}\alpha - \kappa}\|\mathbb{H}_0^{\frac{1}{2}}\psi\|\|\mathcal{N}^{\frac{1}{2}}\psi\|
\end{align*}
for any $\kappa>0$. Here we used Lemmas~\ref{lem:1a} and~\ref{lem: bounds phi},  \eqref{eq: est int t vu> final-b} and the support properties of $\hat{u}^{<\alpha}_\sigma$. As a result,  
\begin{align}\label{i3n}
|\langle \psi, \mathrm{I}_{3}\psi\rangle| \le C\rho^{\frac{4}{3} + \frac{3}{2}\gamma - \frac{3}{2}\alpha - \kappa}\langle \psi, \mathcal{N}\psi\rangle + C\rho^{1+\frac{9}{2}\gamma - \frac{7}{2}\alpha - \kappa}\|\mathbb{H}_0^{\frac{1}{2}}\psi\|\|\mathcal{N}^{\frac{1}{2}}\psi\|.
\end{align}

{ {
\medskip
\noindent{\bf Analysis of $\{ S_{1,2,\sigma}^\ast(r), T_\sigma(r) \}$.} The term to consider is given by   
\begin{align*}
  &  \sum_{\sigma}\sum_r ||r|^2 - (k_F^\sigma)^2| \{ S_{1,2,\sigma}^\ast(r), T_\sigma(r) \} \\
  &= \frac{1}{L^6}\sum_{\sigma\neq \sigma^\prime}\sum_{q,p,r,r^\prime, s} ||r|^2 - (k_F^\sigma)^2| \hat{\varphi}^>(q)\widehat{\omega}^\varepsilon_{r-p,s}(p)\hat{u}_\sigma(r)\hat{{u}}^{<\alpha}_\sigma(r-q)  \hat{v}_{\sigma^\prime}(r^\prime)\hat{u}_{\sigma^\prime}(r^\prime -q) \times 
    \\
   &\qquad\qquad \times  \hat{v}_\sigma(r-p)\hat{u}_{\sigma^\prime}(s-p)\hat{v}_{\sigma^\prime}(s) \Big\{\hat{a}_{r-q,\sigma}^\ast \hat{a}_{r^\prime - q, \sigma^\prime} \hat{a}_{-r^\prime, \sigma^\prime}, \hat{a}_{-s,\sigma^\prime}^\ast \hat{a}^\ast_{s-p, \sigma^\prime }\hat{a}_{p-r,\sigma}^\ast \Big\}.
  \end{align*}
  Computing the anti-commutator explicitly, we obtain
  \begin{align*}
&\Big\{\hat{a}_{r-q,\sigma}^\ast \hat{a}_{r^\prime - q, \sigma^\prime} \hat{a}_{-r^\prime, \sigma^\prime}, \hat{a}_{-s,\sigma^\prime}^\ast \hat{a}^\ast_{s-p, \sigma^\prime }\hat{a}_{p-r,\sigma}^\ast \Big\} = \delta_{s-p, r^\prime -q}\hat{a}^\ast_{r-q,\sigma}\hat{a}_{-s,\sigma^\prime}^\ast \hat{a}_{p-r,\sigma}^\ast\hat{a}_{-r^\prime, \sigma^\prime}
\\
&\qquad -\delta_{s,r^\prime}\hat{a}_{r-q, \sigma}^\ast \hat{a}_{s-p, \sigma^\prime}^\ast \hat{a}_{r^\prime - q, \sigma^\prime}\hat{a}_{p-r,\sigma}^\ast + \delta_{s,r^\prime}\delta_{p,q}\hat{a}_{r-q,\sigma}^\ast \hat{a}_{p-r,\sigma}^\ast.
  \end{align*}
With a change of variables and using $\hat{{u}}^{<\alpha}_\sigma(r-p) \hat{v}_\sigma(r-p)=0$, we can write 
  \begin{align}\label{eq:S12T-dec}
\sum_{\sigma}\sum_r ||r|^2 - (k_F^\sigma)^2| \{ S_{1,2,\sigma}^\ast(r), T_\sigma(r) \}  = \sum_{j=1}^2 \mathrm{J}_j
  \end{align}
  with 
    \begin{align*}
  \mathrm{J}_1 &= \frac{1}{L^6}\sum_{\sigma\neq \sigma^\prime}\sum_{q,p,r,r^\prime}||r+p|^2 - (k_F^\sigma)^2| \widehat{\omega}^\varepsilon_{r,r^\prime}(p) \hat{\varphi}^>(q) \hat{u}_\sigma(r+p)\hat{{u}}^{<\alpha}_{\sigma}(r+p-q)
  \\
  &\qquad \qquad \times \hat{v}_{\sigma^\prime}(r^\prime+q-p)\hat{u}_{\sigma^\prime}(r^\prime -p)\hat{v}_\sigma(r)\hat{v}_{\sigma^\prime}(r^\prime)\hat{a}_{r-q+p,\sigma}^\ast \hat{a}_{-r^\prime,\sigma^\prime}^\ast \hat{a}_{-r,\sigma}^\ast\hat{a}_{-r^\prime -q+p, \sigma^\prime},\\
  \mathrm{J}_2 &= \frac{1}{L^6}\sum_{\sigma\neq \sigma^\prime}\sum_{q,p,r,r^\prime}||r+p|^2 - (k_F^\sigma)^2| \widehat{\omega}^{\varepsilon}_{r, r^\prime}(p)\hat{\varphi}^>(q) \hat{u}_\sigma(r+p)\hat{{u}}^{<\alpha}_{\sigma}(r+p-q)  \\
&\qquad \qquad  \times \hat{v}_{\sigma^\prime}(r^\prime) \hat{u}_{\sigma^\prime}(r^\prime - q)\hat{v}_\sigma(r) 
\hat{u}_{\sigma^\prime}(r^\prime - p)\hat{a}_{r+p-q,\sigma}^\ast \hat{a}_{r^\prime -p, \sigma^\prime}^\ast\hat{a}_{-r,\sigma}^\ast \hat{a}_{r^\prime -q, \sigma^\prime}.
\end{align*}
The term $\mathrm{J}_1$ is similar to the term $\mathrm{I}_2$ in \eqref{eq:TS11-dec}, and can be bounded in the same way as 
\begin{equation}\label{eq: est J1 lem 55}
  |\langle \psi,  \mathrm{J}_1\psi\rangle| \le C\left(\rho^{1+\frac{\beta}{6}} + C\rho^{1+\gamma}\right) \langle \psi, \mathcal{N}\psi\rangle + \rho \|\mathcal{N}^{\frac{1}{2}}\psi\|\|\mathcal{N}^{\frac{1}{2}}_\beta\psi\|
\end{equation}
The term  $\mathrm{J}_2$ we split as 
$
\mathrm{J}_2 = \mathrm{J}_2^> + \mathrm{J}_2^<
$
using $1=\widehat{\chi}_>(p) + \widehat{\chi}_<(p)$.  
For $\mathrm{J}_2^<$, we observe that 
 the support properties of $p,r,r^\prime$ imply $|r+p|, |r^\prime -p| \leq 6\rho^{1/3 - \gamma}$. As a consequence, we may replace $\hat{u}_\sigma(r+p)$ and $\hat{u}_{\sigma^\prime}(r^\prime - p)$ by $\hat{u}^<_\sigma(r+p)$ and $u^<_{\sigma^\prime}(r^\prime - p)$, respectively, with $\hat{u}^<_\sigma$ defined in \eqref{eq: u< u> gamma}. Moreover, using \eqref{eq: r+p wrt q} we further split 
$
 \mathrm{J}_2^<= \mathrm{J}_{2;a}^< + \mathrm{J}_{2;b}^< + \mathrm{J}_{2;c}^< + \mathrm{J}_{2;d}^<
$
accordingly. The first term, which involves $\widehat{\omega}^{\varepsilon}_{r, r^\prime}(p)\widehat{\chi}_<(p) |q|^2\hat{\varphi}^>(q)$, can be written in configuration space using  \eqref{eq: int t conf space}  as  
\begin{align*}
  \mathrm{J}_{2;a}^< &
=  2 \sum_{\sigma\neq \sigma^\prime}\int_0^{\infty}dt\, e^{-2t\varepsilon}\int dxdydzdz^\prime\, \Delta{\varphi}^>(x-y)\Delta\varphi^<(z-z^\prime)u_{t,\sigma}^<(x-z)v_{t,\sigma^\prime}(y-z^\prime) \\
   &\qquad   \times a^\ast_\sigma({u}^{<\alpha}_x)a^*_{\sigma^\prime}(u_{t,z^\prime}^<)a_{\sigma}^\ast(v_{t,z}) a_{\sigma^\prime}(u_y).
\end{align*}
Using Lemma \ref{lem:4aa} and  $\|\hat{v}_{t,\sigma^\prime}\|_2 \|\hat{u}^<_{t,\sigma^\prime}\|_2 \leq C\rho^{1 -(3/2)\gamma}$, we can bound
\begin{align*}
|\langle \psi, \mathrm{J}_{2;a}^<\psi\rangle| &\leq C \rho^{1 - \frac {3 \gamma} 2}  \left(\int_0^{\infty} dt\, e^{-2t\varepsilon}  \|\hat{u}_{t,\sigma}^<\|_2 \|\hat{v}_{t,\sigma}\|_2\right) \|\Delta\varphi^< \|_1 \|\Delta\varphi^> \|_1 \langle \psi, \cN\psi\rangle   \\
  &\leq 
  C \rho^{\frac{4}{3} - 2\gamma - \kappa}\langle \psi, \mathcal{N}\psi\rangle
\end{align*}
for any $\kappa>0$. 
Here we also applied Lemma \ref{lem: bounds phi} and  \eqref{eq: est int t vu< final}. The other terms in $\mathrm{J}_{2}^<$ can be estimated in a similar way, using also Lemma \ref{lem:1a}. 
We omit the details, and only state the final result
\begin{align*}
 &  |\langle \psi, (\mathrm{J}_{2;b}^<+\mathrm{J}_{2;c}^<+\mathrm{J}_{2;d}^<) \psi\rangle|\leq 
  C\rho^{1 - \kappa}(\rho^{-\alpha} + \rho^{-\gamma})\left(\|\mathbb{H}_0^{\frac{1}{2}}\psi\|\|\mathcal{N}^{\frac{1}{2}}\psi\| + \rho^{\frac{1}{3}}\langle \psi, \mathcal{N}\psi\rangle\right).
\end{align*}
Since $\gamma < \alpha$ by assumption, we thus obtain,  for any $\kappa>0$, 
\begin{align}\label{eq: estimate I2< T1*R}
  |\langle\psi, \mathrm{J}_2^< \psi\rangle| \leq C\left(\rho^{\frac{4}{3}  - 2\gamma - \kappa} + \rho^{\frac{4}{3} - \alpha - \kappa}\right) \langle \psi, \mathcal{N}\psi\rangle + C \rho^{1-\alpha -\kappa}\|\mathbb{H}_0^{\frac{1}{2}}\psi\| \|\mathcal{N}^{\frac{1}{2}}\psi\|.
\end{align}

For the term $\mathrm{J}^>_2$, we observe that on the support  of $\hat v_\sigma(r)$, $\hat v_{\sigma'}(r')$, $\hat{\varphi}^>(p)$ and $\hat{\varphi}^>(q)$, we have $\hat u_\sigma(r'-p)=\hat u^>_\sigma(r'-p)$ and $\hat u_\sigma(r'-q)=\hat u^>_\sigma(r'-q)$ with $u_\sigma^>$ defined in \eqref{eq: u< u> gamma}.  
Using the identity in \eqref{eq: comparison w eps with phi} we can write 
 $ \mathrm{J}^>_2 =  \mathrm{J}^>_{2;a} + \mathrm{J}^>_{2;b} + \mathrm{J}^>_{2;c} $ 
 with 
\begin{align*}  
  \mathrm{J}^>_{2;a} &= \frac{2}{L^6}\sum_{\sigma\neq \sigma^\prime}\sum_{q,p,r,r^\prime}||r+p|^2 - (k_F^\sigma)^2| \hat{\varphi}^>(p)\hat{\varphi}^>(q)\hat{v}_{\sigma^\prime}(r^\prime) \hat{u}_\sigma(r+p) \\
&\qquad   \times 
\hat{u}^{<\alpha}_{\sigma}(r+p-q)  \hat{u}^>_{\sigma^\prime}(r^\prime - p)  \hat{u}^>_{\sigma^\prime}(r^\prime - q)  \hat{v}_\sigma(r)  \hat{a}_{r+p-q,\sigma}^\ast \hat{a}_{r^\prime -p, \sigma^\prime}^\ast\hat{a}^\ast_{-r,\sigma}\hat{a}_{r^\prime -q, \sigma^\prime},\\
  \mathrm{J}^>_{2;b} &= -\frac{4}{L^6}\sum_{\sigma\neq \sigma^\prime}\sum_{q,p,r,r^\prime}||r+p|^2 - (k_F^\sigma)^2| \frac{p\cdot (r-r^\prime)  \hat{\varphi}^>(p) \hat{\varphi}^>(q)}{|r+p|^2 - |r|^2 + |r^\prime - p|^2 - |r^\prime|^2+2\eps}  \hat{u}_\sigma(r+p)  \\
&\qquad   \times 
\hat{v}_{\sigma^\prime}(r^\prime) \hat{u}^{<\alpha}_{\sigma}(r+p-q)  \hat{u}^>_{\sigma^\prime}(r^\prime - p)  \hat{u}^>_{\sigma^\prime}(r^\prime - q)\hat{v}_\sigma(r)  \hat{a}_{r+p-q,\sigma}^\ast \hat{a}_{r^\prime -p, \sigma^\prime}^\ast \hat{a}^\ast_{-r,\sigma}\hat{a}_{r^\prime -q, \sigma^\prime},\\
  \mathrm{J}^>_{2;c} &= -\frac{4\eps}{L^6}\sum_{\sigma\neq \sigma^\prime}\sum_{q,p,r,r^\prime}||r+p|^2 - (k_F^\sigma)^2| \frac{\hat{\varphi}^>(p)\hat{\varphi}^>(q)}{|r+p|^2 - |r|^2 + |r^\prime - p|^2 - |r^\prime|^2 + 2\eps} \hat{u}_\sigma(r+p)\\
&\qquad   \times 
\hat{v}_{\sigma^\prime}(r^\prime) \hat{u}^{<\alpha}_{\sigma}(r+p-q)  \hat{u}^>_{\sigma^\prime}(r^\prime - p)  \hat{u}^>_{\sigma^\prime}(r^\prime - q)  \hat{v}_\sigma(r)  \hat{a}_{r+p-q,\sigma}^\ast \hat{a}_{r^\prime -p, \sigma^\prime}^\ast \hat{a}^\ast_{-r,\sigma}\hat{a}_{r^\prime -q, \sigma^\prime}.
\end{align*}
In the term $\mathrm{J}^>_{2;a}$, we observe that $\hat{u}(r+p)= 1$. We use \textcolor{black}{\eqref{eq:TS-split-I3ac}} and further split $\mathrm{J}^>_{2;a}=\sum_{j=1}^4 \mathrm{J}^>_{2;a;j}$ accordingly. The first term, involving $|p|^2 \hat{\varphi}^>(p)\varphi^>(q)$, can be written in configuration space as
\textcolor{black}{\begin{align*}  
\mathrm{J}^>_{2;a;1}= -2\sum_{\sigma\neq \sigma^\prime} \int dx dy dz\, \Delta \varphi^>(x-y) \varphi^>(x-z) v_{\sigma'}(y-z) a_\sigma^*(u_x^{<\alpha})  a_{\sigma'}^*(u^>_y)a_{\sigma}^\ast(v_{x}) a_{\sigma'} (u^>_z)  .
\end{align*}  
Applying the Cauchy--Schwarz inequality and Lemmas~\ref{lem:1a} and~\ref{lem: bounds phi}, we have
\begin{align*}  
|\langle \psi, \mathrm{J}^>_{2;a;1}\psi\rangle| & \le \sum_{\sigma\neq \sigma'} \|\Delta \varphi^>\|_1 \|\varphi^>\|_1 \| v_{\sigma'}\|_\infty  \|v_{\sigma}\|_2 \|u^{<\alpha}_\sigma\|_2  
\\
&\qquad\times   \sqrt{\int dy \|a_{\sigma'} (u_y^>)\psi\|^2 }\sqrt{\int dz \|a_{\sigma'} (u_z^>)\psi\|^2 }\\
&\le  
C\rho^{\frac 2 3 + 4 \gamma - \frac{3}{2}\alpha} \langle \psi, \bH_0\psi\rangle. 
\end{align*}}
Proceeding in a similarly way for the other terms, we find
\begin{align*}  
\sum_{j=2}^4 |\langle \psi, \mathrm{J}^>_{2;a;j}\psi\rangle| &\le
C\rho^{\frac 2 3 + 5 \gamma - \frac{3}{2}\alpha} \langle \psi, \bH_0\psi\rangle. 
\end{align*}  
Thus 
$$
|\langle \psi, \mathrm{J}^>_{2;a} \psi\rangle|\le C\rho^{\frac 2 3 + 4\gamma - \frac{3}{2}\alpha} \langle\psi, \mathbb{H}_0\psi\rangle.
$$

For $\mathrm{J}^>_{2;b}$, we use \eqref{eq: r+p wrt q} and split $\mathrm{J}^>_{2;b}=\sum_{j=1}^4 \mathrm{J}^>_{2;b;j}$ accordingly.  For the first term involving $p \cdot(r-r') \hat{\varphi}^>(p) |q|^2 \hat\varphi^>(q)$, we use that on the  support of $\hat v_\sigma(r)$, $\hat{\varphi}^>(p)$ we can replace $\hat{u}_\sigma(r+p)$ by $\hat{u}^>_\sigma(r+p)$ defined in \eqref{eq: u< u> gamma}, and then use \eqref{eq: int t conf space} to write
\begin{align*}  
\mathrm{J}^>_{2;b;1} 
&= -4 \sum_{\sigma\neq \sigma^\prime} \int_0^\infty dt\, e^{-2t\eps} \sum_{\ell=1}^3\int dx dy dz dz' (\partial_\ell \varphi_1^>)(x-y) (\Delta \varphi^>)(z-z')  \zeta^t_>(x-z)  \\
&\qquad \times a_\sigma^*(u^{<\alpha}_{z})  a_{\sigma'}^*(u^>_{t,y}) \Big( (\partial_\ell v_{t,\sigma'})(y-z') a_{\sigma}^\ast(v_{t,x}) -  v_{t,\sigma'}(y-z') a_{\sigma}^\ast(\partial_\ell v_{t,x}) \Big)a_{\sigma'} (u^>_{z^\prime}) . 
\end{align*} 
{}Applying the Cauchy--Schwarz inequality for the $x$-integration, and  Lemmas~\ref{lem: bounds phi}, \ref{lem: zeta t L1} and \ref{lem:1a}, we obtain
\begin{align*}   
&|\langle\psi, \mathrm{J}^>_{2;b;1}\psi\rangle| \le 4\sum_{\sigma\neq \sigma^\prime} \int_0^\infty dt \|\nabla\varphi^>\|_1 \|\Delta \varphi^>\|_1 \|\zeta^t_>\|_2  \|v_{t,\sigma}\|_2 \|\nabla v_{t,\sigma^\prime}\|_2\|u^{<\alpha}_\sigma\|_2   \\
&\qquad\qquad\qquad\qquad\times  \sqrt{\int dy \|a_{\sigma'}(u^>_{t,y})\psi\|^2}\sqrt{\int dz^\prime \, \|a_{\sigma'}(u^>_{z^\prime})\psi\|^2 } \\
&\le C  \int_0^\infty dt\, t^{-\frac 3 4} e^{- \frac{9}{2} t  \rho^{\frac 2 3 -2\gamma}}e^{t(k_F^\sigma)^2} \, \|\nabla\varphi^>\|_1 \|\Delta \varphi^>\|_1 \ \rho^{\frac{5}{3} + 2\gamma - \frac{3}{2}\alpha} \langle \psi, \bH_0\psi\rangle \\
&\le C 
\rho^{\frac 2 3 + \frac{7\gamma}{2} - \frac{3}{2}\alpha} \langle \psi, \bH_0\psi\rangle.
\end{align*}  
In a similarly way, we obtain for the remaining terms  
\begin{align*}  
&\sum_{j=2}^4 |\langle \psi, \mathrm{J}^>_{2;b;j} \psi\rangle|\le 
C\rho^{\frac 2 3 + \frac{7}{2}\gamma  -\frac{3}{2}\alpha}\left(\rho^{\gamma- \alpha} + \rho^{2\gamma  - 2\alpha}\right)\langle \psi, \bH_0\psi\rangle.
\end{align*}  
Thus, for $\gamma<\alpha$, we get
\begin{align*}  
|\langle \psi,  \mathrm{J}^>_{2;b}\psi\rangle \leq C\rho^{\frac 2 3 + \frac{11}{2}\gamma  -\frac{7}{2}\alpha}\langle \psi, \bH_0\psi\rangle.
\end{align*}  

The term $\mathrm{J}^>_{2;c}$ can be treated similarly to $\mathrm{J}^>_{2;b}$, effectively replacing the bound  $\|\nabla \varphi^>\|_1  \|v_{t,\sigma}\|_2  \|\nabla v_{t,\sigma^\prime}\|_2 \le C\rho^{1 +\gamma}$  by  
$\eps \|\varphi^>\|_1  \|v_{t,\sigma}\|_2\|v_{t,\sigma^\prime}\|_2 \le C\rho^{1+2\gamma+\delta},$ 
leading to the better bound
\begin{align*}  
|\langle \psi, \mathrm{J}^>_{2;c} \psi\rangle| \le C\rho^{\frac 2 3 + \frac{9}{2} \gamma  -\frac{3}{2}\alpha +\delta}\langle \psi, \bH_0\psi\rangle. 
\end{align*}  
Combining all the bounds, we conclude that 
\begin{align}\nonumber
  |\langle \psi, \mathrm{J}_{2} \psi\rangle| &\le  C\left(\rho^{\frac{4}{3} - 2\gamma - \kappa} + \rho^{\frac{4}{3} - \alpha -\kappa}\right)\langle \psi,\mathcal{N}\psi\rangle
  \\
  &\quad + C\rho^{1-\alpha - \kappa}\|\mathbb{H}_0^{\frac{1}{2}}\psi\|\|\mathcal{N}^{\frac{1}{2}}\psi\| +  C\rho^{\frac 2 3 + \frac{11}{2}\gamma  -\frac{7}{2}\alpha}\langle \psi, \bH_0\psi\rangle
  \label{final:j2}
\end{align}

\medskip
\noindent
{\bf Conclusion.} The statement of Lemma~\ref{lem: TS1} follows from a combination of the bounds in \eqref{i1n}, \eqref{eq: R*S-I3}, \eqref{i3n}, \eqref{eq: est J1 lem 55}  and \eqref{final:j2}.
}}


\subsection{Proof of Lemma \ref{lem: S2}}

By splitting $S_\sigma(r)=S_{1,\sigma}(r)+ S_{2,\sigma}(r)$ as in  \eqref{eq: def S1 S2}, we have  
\begin{align*}  
S^\ast_{2,\sigma}(r)(T_\sigma(r) + S_\sigma(r)) = S^\ast_{2,\sigma}(r)S_{2,\sigma}(r) + S^\ast_{2,\sigma}(r)S_{1,\sigma}(r) + S^\ast_{2,\sigma}(r)T_\sigma(r). 
  \end{align*}
In the following, we estimate the contribution of the three operators on the right-hand side  separately.

\medskip

\noindent
{\bf Analysis of $S^\ast_{2,\sigma}(r)S_{2,\sigma}(r) $.} By a direct computation we have 
\begin{align*}
 & \sum_{\sigma}\sum_{r}||r|^2 -(k_F^\sigma)^2| S^\ast_{2,\sigma}(r)S_{2,\sigma}(r) \\
&= \sum_{\sigma\neq \sigma^\prime}\frac{1}{L^3}\sum_r ||r|^2 - (k_F^\sigma)^2| \hat{u}^{<\alpha}(r)\int dxdydzdz^\prime\, e^{ir\cdot(x-z)}{\varphi}^>(x-y){\varphi}^>(z-z^\prime) 
\\
&\qquad\qquad\times 
 a^\ast_{\sigma}(u_x)a_{\sigma^\prime}^\ast(v_y)a^\ast_{\sigma^\prime}(u_y^>)a_{\sigma^\prime}(u_{z^\prime}^>) a_{\sigma^\prime}(v_{z^\prime})a_{\sigma}(u_z),
\end{align*}
with $\hat{u}^>_{\sigma^\prime}$ defined in \eqref{eq: u< u> gamma}.
Thus, using the notation introduced in \eqref{eq: def b general}, we have  for any $\psi\in\mathcal{F}_{\mathrm{f}}$ by the Cauchy-Schwarz inequality (in $r$)  
\begin{align}\label{eq: est S2*S2}
 & \sum_{\sigma}\sum_{r}||r|^2 -(k_F^\sigma)^2| \langle \psi, S^\ast_{2,\sigma}(r)S_{2,\sigma}(r)\psi\rangle  = \frac{1}{L^3}\sum_{\sigma\neq\sigma^\prime}\sum_k (|r|^2 - (k_F^\sigma)^2) \hat{{u}}^{<\alpha}_\sigma(r) 
  \nn\\
&\qquad  \times \left\langle \left(\int dx\, e^{-ir\cdot x} b_{\sigma^\prime}({\varphi}_x^>, v, u^>)a_\sigma(u_x)\psi\right), \left(\int dz\,e^{-ir\cdot z} b_{\sigma^\prime}({\varphi}_z^>, v, u^>) a_\sigma(u_z)\psi\right)\right\rangle \nn\\
&\leq \sum_{\sigma \neq \sigma^\prime}\| (|\cdot|^2 - (k_F^\sigma)^2)\hat{u}^{<\alpha}_\sigma\|_\infty \sup_x \|b_{\sigma^\prime}({\varphi}_x^>, v, u^>)\|^2\int dx\, \|a_\sigma(u_x)\psi\|^2 .
  \end{align}
  Here we applied Lemma \ref{lem:6a} together with $||k|^2 - (k_F^\sigma)^2| \hat{{u}}^{<\alpha}_\sigma(k) \leq C\rho^{2/3 - 2\alpha}$. By Lemmas \ref{lem: bounds phi} and \ref{lem:2a} (see \eqref{eq: b op phi vu}), we hence get 
  \begin{equation}\label{56.1}
  \sum_{\sigma}\sum_{r}||r|^2 -(k_F^\sigma)^2| |\langle \psi, S^\ast_{2,\sigma}(r)S_{2,\sigma}(r)\psi\rangle| \leq C\rho^{\frac{4}{3} +\gamma - 2\alpha} \langle \psi, \mathcal{N}\psi\rangle.
  \end{equation}

\medskip

\noindent
{\bf Analysis of $S^\ast_{2,\sigma}(r)S_{1,\sigma}(r) $.} 
By the Cauchy--Schwarz inequality, we deduce from  \eqref{56.1} and Lemma \ref{lem: S1} that 
\begin{align}\label{eq: est S2*S1}
 & \sum_{\sigma}\sum_{r}||r|^2 -(k_F^\sigma)^2| \langle \psi, S^\ast_{2,\sigma}(r)S_{1,\sigma}(r)\psi\rangle \nn\\
 &\le \rho^{-\eta} \sum_{\sigma}\sum_{r}||r|^2 -(k_F^\sigma)^2| \|S_{2,\sigma}(r)\psi\|^2+  \rho^{\eta} \sum_{\sigma}\sum_{r}||r|^2 -(k_F^\sigma)^2| \|S_{1,\sigma}(r)\psi\|^2 \nn\\
 &\le C\rho^{\frac{4}{3} + \gamma - 2\alpha -\eta} \langle \psi, \mathcal{N}\psi\rangle + C\rho^{1+\eta}\langle \psi, \mathcal{N}\psi\rangle + C\rho^\eta|\langle \psi, \mathcal{E}_{S_1^\ast S_1}\psi\rangle| .
  \end{align}
Using Lemma \ref{lem: S1}, under the conditions $0 <\gamma <1/6<\alpha<1/3$ and \eqref{eq: mix cond alpha gamma}, 
we have 
\[
  |\langle \psi, \mathcal{E}_{S_1^\ast S_1}\psi\rangle|\leq C\rho\langle \psi, \mathcal{N}\psi\rangle + C\rho^{\frac{2}{3} + 6\gamma - 5\alpha}\langle \psi, \mathbb{H}_0 \psi\rangle.
\]
Thus, after an optimization over $\eta$, which yields $\eta= 1/6+\gamma/2-\alpha$, we find 
\begin{multline}\label{56.2}
  \sum_{\sigma}\sum_{r}||r|^2 -(k_F^\sigma)^2| |\langle \psi, S^\ast_{2,\sigma}(r)S_{1,\sigma}(r)\psi\rangle | \\ \leq C\rho^{\frac{7}{6} + \frac{\gamma}{2} - \alpha}\langle \psi, \mathcal{N}\psi\rangle + C\rho^{\frac{5}{6} + \frac{13}{2}\gamma - 6\alpha}\langle \psi, \mathbb{H}_0 \psi\rangle.
\end{multline}

\medskip

\noindent
{\bf Analysis of $S^\ast_{2,\sigma}(r) T_\sigma(r)$.} 
The final term to consider is 
\begin{align*}
&\sum_\sigma \sum_r  ||r|^2 - (k_F^\sigma)^2| S^\ast_{2,\sigma}(r) T_{\sigma}(r)  
\\
& = -\frac{1}{L^6}\sum_{\sigma\neq \sigma^\prime}\sum_{p,q,r,r^\prime,s} ||r|^2 - (k_F^\sigma)^2|\hat{\varphi}^>(p)\widehat{\omega}^{\varepsilon}_{r-q, s}(q)\\
  &\qquad \qquad \qquad\qquad \times \hat{{u}}^{<\alpha}_\sigma(r) \hat{u}_\sigma(r+p) \hat{v}_\sigma(r-q)  \hat{a}_{r+p,\sigma}^\ast b_{-p,r^\prime,\sigma^\prime}^\ast b^\ast_{-q,s,\sigma^\prime}\hat{a}_{q-r,\sigma}^\ast .
\end{align*}
We insert the decomposition $1=\widehat{\chi}_<(q)+ \widehat{\chi}_>(q)$ and hence write it as a sum of two terms, $ \mathrm{I}+ \mathrm{J}$. 
We start with estimating $\mathrm{I}$. We shall split $\widehat{\omega}^\varepsilon_{r-q,s}(q)$ into three terms using 
 \eqref{eq: comparison w eps with phi}, and write 
$\mathrm{I} = \mathrm{I}_{1} + \mathrm{I}_{2} + \mathrm{I}_{3}$ accordingly. The first term equals
\begin{align*}
  \mathrm{I}_{1}&= -\frac{1}{L^6}\sum_{\sigma\neq \sigma^\prime}\sum_{p,q,r,r^\prime,s} ||r|^2 - (k_F^\sigma)^2| \hat{\varphi}^>(p)\hat{\varphi}^>(q)\hat{{u}}^{<\alpha}_\sigma(r)\hat{u}_\sigma(r+p) \hat{v}_\sigma(r-q)  \hat{u}_{\sigma^\prime}(r^\prime - p) 
  \\
  &\qquad \qquad \qquad \qquad \times \hat{u}_{\sigma^\prime}(s-q)\hat{v}_{\sigma^\prime}(r^\prime)\hat{v}_{\sigma^\prime}(s) \hat{a}_{r+p,\sigma}^\ast \hat{a}_{-r^\prime, \sigma^\prime}^\ast\hat{a}_{r^\prime-p,\sigma^\prime}^\ast \hat{a}_{-s,\sigma^\prime}^\ast \hat{a}^\ast_{s-q,\sigma^\prime}\hat{a}_{q-r,\sigma}^\ast.
\end{align*}
Using the constraints on $p,r^\prime$ and $s,q$, we replace $\hat{u}_{\sigma^\prime}(r^\prime - p)$ and $\hat{u}_{\sigma^\prime}(s-q)$ with $\hat{u}^>_{\sigma^\prime}(r^\prime - p)$ and $\hat{u}^>_{\sigma^\prime}(s-q)$, where $\hat{u}^>_\sigma$ is defined  in \eqref{eq: u< u> gamma}. Moreover, we split 
\[
  \hat{u}_\sigma(r+p) = \hat{\zeta}_{\sigma}^{<\alpha}(r+p) + \hat{u}_\sigma(r+p)(1-\hat{\zeta}_{\sigma}^{<\alpha}(r+p)) =:   \hat{\zeta}_{\sigma}^{<\alpha}(r+p) + \hat{\zeta}_\sigma^{>\alpha}(r+p),
\]
with $\hat{\zeta}_{\sigma}^{<\alpha}$ defined  in \eqref{eq: def zeta alpha}, and write $\mathrm{I}_{1} = \mathrm{I}_{1}^{<} + \mathrm{I}_{1}^{>}$ accordingly.  We have
\begin{align*}
  &\mathrm{I}_{1}^{>} = \sum_{\sigma\neq\sigma^\prime}\frac{1}{L^3}\sum_r (-|r|^2 + (k_F^\sigma)^2)\hat{u}^{<\alpha}_\sigma(r) 
  \\
  &\qquad \qquad \times\int dxdz\,e^{ir\cdot(x-z)} a^\ast_\sigma(\zeta^{>\alpha}_x)b_{\sigma^\prime}^\ast({\varphi}_x^>, v,u^>)b^\ast_{\sigma^\prime}({\varphi}_{z}^>, v, u^>)a_\sigma^\ast(v_z).
\end{align*}
Applying Lemma \ref{lem:6a}, and using that $| |r|^2 - (k_F^{\sigma})^2 |\hat{u}^{<\alpha}_\sigma(r) \leq C\rho^{2/3 - 2\alpha}$, we get 
\begin{align}\label{eq: I>alpha 1a}
|\langle \psi, \mathrm{I}_{1}^{>} \psi\rangle| &\leq C\rho^{\frac{2}{3} -2\alpha} \sum_{\sigma\neq \sigma^\prime}\sup_x \| b_{\sigma^\prime}({\varphi}^>_x, v, u^>)\|^2  
  \nn\\
   &\qquad \times \left(\int dx\| a_{\sigma}(\zeta^{>\alpha}_x)\psi\|^2\right)^{\frac{1}{2}}\left(\int dz\, \| a^\ast_\sigma(v_z)\psi\|^2\right)^{\frac{1}{2}}\nn\\
   &\le CL^{\frac{3}{2}}\rho^{\frac{3}{2} +\gamma - \alpha}\|\mathbb{H}_0^{\frac{1}{2}}\psi\|
\end{align}
for all $\psi\in\mathcal{F}_{\mathrm{f}}$. In the last estimate we also used Lemmas \ref{lem:1a}, \ref{lem: bounds phi} and \ref{lem:2a} (see \eqref{eq: b op phi vu}). 

For $\mathrm{I}_{1}^{<}$, we write 
\[
  - |r|^2 + (k_F^{\sigma})^2= \left(r\cdot r^\prime -  r\cdot (r+p) + (k_F^{\sigma})^2 \right) - r\cdot (r^\prime -p) ,
\]
and split $\mathrm{I}_{1}^{<}=\mathrm{I}_{1;a}^{<}+\mathrm{I}_{1;b}^{<}$ with $\mathrm{I}_{1;b}^{<}$ the term involving $ r\cdot (r^\prime -p)$. We first consider $\mathrm{I}_{1;a}^{<}$. Due to the constraints on $p,r^\prime$ and $s,q$ we can replace $\hat{u}_{\sigma^\prime}(r^\prime - p)$ and  $\hat{u}_{\sigma^\prime}(s-q)$ by $\hat{u}_{\sigma^\prime}^>(r^\prime - p)$ and $\hat{u}_{\sigma^\prime}^>(s-q)$, with $\hat{u}^>_{\sigma^\prime}$ defined in \eqref{eq: u< u> gamma}.
Thus, we have the identity
\begin{align*}
&\mathrm{I}_{1;a}^{<} = \\
 & -\frac{1}{L^3}\sum_{\sigma\neq \sigma^\prime}\sum_{\ell = 1}^3\sum_r r_\ell \hat{u}^{<\alpha}_\sigma(r) \int dxdz\,e^{ir\cdot (x-z)} a^\ast_\sigma(\zeta^{<\alpha}_x)b^\ast_{\sigma^\prime}(\varphi^>_x, \partial_\ell v, u^>)b^\ast_{\sigma^\prime}(\varphi^>_z, v, u^>)a^\ast_{\sigma}(v_z)
 \\
 &  + \frac{1}{L^3}\sum_{\sigma\neq\sigma^\prime}\sum_{\ell = 1}^3\sum_r r_\ell \hat{u}^{<\alpha}_\sigma(r)  \int dxdz\, e^{ir\cdot (x-z)}a^\ast_{\sigma}(\partial_\ell \zeta^{<\alpha}_x) b^\ast_{\sigma^\prime}(\varphi^>_x, v, u^>)b^\ast(\varphi^>_z, v, u^>)a^\ast_{\sigma}(v_z)
 \\
 &  +\frac{(k_F^\sigma)^2}{L^3}\sum_{\sigma\neq \sigma^\prime}\sum_r \hat{u}^{<\alpha}_\sigma(r) \int dxdz\, e^{ir\cdot (x-z)} a^\ast_\sigma(\zeta^{<\alpha}_x)b^\ast_{\sigma^\prime}(\varphi^>_x,  v, u^>)b^\ast_{\sigma^\prime}(\varphi^>_z, v, u^>)a^\ast_{\sigma}(v_z)
\end{align*}
where we used the notation introduced in \eqref{eq: def b general}. The three terms above can be estimated by applying Lemma \ref{lem:6a}, using also that $|r_\ell| \hat{u}^{<\alpha}_\sigma\leq C\rho^{1/3 - \alpha}$, with the result that  
\begin{align*}
|\langle\psi, \mathrm{I}_{1;a}^{<}  \psi\rangle| &\leq CL^{\frac{3}{2}}\sum_{\sigma \neq \sigma^\prime}\rho^{\frac{5}{6} - \alpha}\sup_z \|b_{\sigma^\prime}^\ast(\varphi^>_z, v, u^>) \|  
\\
&\quad \times \bigg(\sup_x \|b_{\sigma^\prime}(\varphi^>_x, \partial_\ell  v, u^>)\|\left(\int dx\,  \|a_\sigma(\zeta^{<\alpha}_x)\psi\|^2\right)^{\frac{1}{2}}
\\
&\quad \qquad\qquad\qquad  + \sup_x \|b_{\sigma^\prime}(\varphi^>_x,  v, u^>)\|\left(\int dx\,  \|a_\sigma(\partial_\ell \zeta^{<\alpha}_x)\psi\|^2\right)^{\frac{1}{2}}  \bigg)
\\
 &\quad + CL^{\frac{3}{2}}\rho^{\frac{7}{6}}\left(\sup_x \|b_{\sigma^\prime}^\ast(\varphi^>_x, v, u^>)\|\right)^2  
 \left(\int dx\,  \|a_\sigma(\zeta^{<\alpha}_x)\psi\|^2\right)^{\frac{1}{2}}.
\end{align*}
From \eqref{eq:Pauli-1} and Lemmas \ref{lem: bounds phi}, \ref{lem:2a} and \ref{lem:1a} we then conclude that
\begin{equation}\label{eq: I<alpha 1a1}
  |\langle\psi, \mathrm{I}^{<}_{1;a}\psi\rangle|\leq CL^{\frac{3}{2}}\rho^{\frac{11}{6} + \gamma - \alpha}\|\mathcal{N}^{\frac{1}{2}}\psi\| + CL^{\frac{3}{2}}\rho^{\frac{3}{2} + \gamma - \alpha}\|\mathbb{H}_0^{\frac{1}{2}}\psi\|. 
\end{equation}

We now consider $\mathrm{I}_{1;b}^{<}$. Due to the support properties of $\hat{v}_\sigma(r-q)$ and $\hat{{u}}^{<\alpha}_\sigma(r)$, we find that $|q|\leq 2 \rho^{1/3 -\alpha}$, which implies also that $\hat{u}_{\sigma^\prime}(s-q)$ is supported only for $|s-q| \leq 3\rho^{1/3-\alpha}$ since $|s| \leq k_F^{\sigma^\prime}$. We thus replace $\hat{u}_{\sigma^\prime}(s-q)$ with $\zeta^{<\alpha}_\sigma(s-q)$ defined in \eqref{eq: def zeta alpha}, and find
\begin{align*}
\mathrm{I}_{1;b}^{<} 
& =\frac{1}{L^3}\sum_{\sigma\neq \sigma^\prime}\sum_{\ell = 1}^3\sum_r \hat{u}^{<\alpha}_\sigma(r) \int dxdydzdz^\prime\,  e^{ir\cdot(x-z)}{\varphi}^>(x-y)  {\varphi}^>(z-z^\prime) \times 
  \\
  &\quad \qquad \times a^\ast_\sigma(\zeta^{<\alpha}_x)a^\ast_{\sigma^\prime}(v_y)a_{\sigma^\prime}^\ast(\partial_\ell u_y) \bigg(a_{\sigma^\prime}^\ast(v_{z^\prime})a_{\sigma^\prime}^\ast (\partial_\ell \zeta^{<\alpha}_{z^\prime})a_{\sigma}^\ast(v_z) + 
  \\
  &\qquad \qquad\qquad  + (a_{\sigma^\prime}^\ast(\partial_\ell v_{z^\prime})a_{\sigma^\prime}^\ast ( \zeta^{<\alpha}_{z^\prime})a_{\sigma}^\ast(v_z) + (a_{\sigma^\prime}^\ast( v_{z^\prime})a_{\sigma^\prime}^\ast (\zeta^{<\alpha}_{z^\prime})a_{\sigma}^\ast(\partial_\ell v_z)\bigg). 
\end{align*}

Proceeding similarly as in the proof of Lemma \ref{lem:6a}, by using the Cauchy-Schwarz inequality in $r$ and $|\hat{{u}}^{<\alpha}_\sigma(k)|\leq 1$, we get 
\begin{align*}
  &|\langle \psi, \mathrm{I}_{1;b}^{<}\psi\rangle| 
  \\
  &\leq CL^{\frac{3}{2}}\sum_{\sigma\neq \sigma^\prime}\sum_{\ell = 1}^3\left(\frac{1}{L^3}\sum_r \left\|   \int dxdy\, e^{-ir\cdot x}{\varphi}^>(x-y)a_{\sigma^\prime}(\partial_\ell u_y)a_{\sigma^\prime}(v_y)a_{\sigma}(\zeta^{<\alpha}_x)\psi  \right\|^2\right)^{\frac{1}{2}} 
  \\
  &\qquad \times \bigg[ \sup_z \|b^\ast_{\sigma^\prime}(\varphi^>_z, v, \partial_\ell \zeta^{<\alpha})a^\ast_\sigma(v_z)\|+  \sup_z \|b^\ast_{\sigma^\prime}(\varphi^>_z, \partial_\ell v, \zeta^{<\alpha})a^\ast_\sigma(v_z)\|
  \\
  &\qquad \qquad\qquad\qquad \qquad\qquad+ \sup_z \|b^\ast_{\sigma^\prime}(\varphi^>_z, v, \zeta^{<\alpha})a^\ast_\sigma(\partial_\ell v_z)\|\bigg]. 
\end{align*}
The first factor on the right-hand side above can be bounded as  
\begin{align*}
&\frac{1}{L^3}\sum_r \left\|   \int dxdy\, e^{-ir\cdot x}{\varphi}^>(x-y)a_{\sigma^\prime}(\partial_\ell u_y)a_{\sigma^\prime}(v_y)a_{\sigma}(\zeta^{<\alpha}_x)\psi  \right\|^2
\\
&\leq \|\zeta^{<\alpha}_\sigma\|_2^2  \|v_{\sigma'}\|_2^2 \int dxdydy^\prime |{\varphi}^>(x-y)||{\varphi}^>(x-y^\prime) \|a_{\sigma^\prime}(\partial_\ell u_y)\psi\|\|a_{\sigma^\prime}(\partial_\ell u_{y^\prime})\psi\|
\\
&\leq  
C\rho^{\frac{2}{3}-3\alpha + 4\gamma}\left(\langle \psi, \mathbb{H}_0 \psi\rangle + \rho^{\frac{2}{3}}\langle \psi, \mathcal{N}\psi\rangle\right),
\end{align*}
where we used also \eqref{eq:Pauli-1}, Lemma \ref{lem: bounds phi} and Lemma \ref{lem:1a}. The other terms can be estimated using \eqref{eq:Pauli} together with Lemmas \ref{lem:2a} and \ref{lem: bounds phi}, resulting in
\begin{multline*} 
\sup_z\| b^\ast_{\sigma^\prime}(\varphi^>, v, \partial_\ell \zeta^{<\alpha})a^\ast_\sigma(v_z)\|^2 +\sup_z \|b^\ast_{\sigma^\prime}(\varphi^>, \partial_\ell v, \zeta^{<\alpha})a^\ast_\sigma(v_z)\| ^2 
\\
+ \sup_z\|b^\ast_{\sigma^\prime}(\varphi^>, v, \zeta^{<\alpha})a^\ast_\sigma(\partial_\ell v_z)\|^2 
\leq C \rho^{\frac{7}{3} - \alpha}. 
\end{multline*}
Thus, combining the above estimates, we find 
\begin{equation}\label{56.3}
  |\langle \psi, \mathrm{I}^{<}_{1;b}\psi\rangle| \leq CL^{\frac{3}{2}}\rho^{\frac{3}{2} + 2\gamma - 2\alpha}\|\mathbb{H}_0^{\frac{1}{2}}\psi\| + CL^{\frac{3}{2}}\rho^{\frac{11}{6} + 2\gamma - 2\alpha}\|\mathcal{N}^{\frac{1}{2}}\psi\|. 
\end{equation}

Next we consider $I_2$, given by 
\begin{align*}
  \mathrm{I}_{2}  &= \frac{2}{L^6}\sum_{\sigma\neq \sigma^\prime}\sum_{p,q,r,r^\prime,s} (|r|^2 -(k_F^\sigma)^2) \hat{\varphi}^>(p)\frac{q\cdot ( (r-q) -s)}{|r|^2 - |r-q|^2 + |s-q|^2 - |s|^2 + 2\varepsilon}\hat{\varphi}^>(q) 
  \\
  &\qquad \times \hat{{u}}^{<\alpha}_\sigma(r)\hat{u}_\sigma(r+p) \hat{v}_\sigma(r-q)
  \hat{u}_{\sigma^\prime}(r^\prime - p)\hat{u}_{\sigma^\prime}(s-q)\hat{v}_{\sigma^\prime}(r^\prime)\hat{v}_{\sigma^\prime}(s) 
  \\
  &\qquad \times  \hat{a}_{r+p,\sigma}^\ast \hat{a}_{-r^\prime, \sigma^\prime}^\ast \hat{a}_{r^\prime -p,\sigma^\prime}^\ast \hat{a}_{-s, \sigma^\prime}^\ast \hat{a}^\ast_{s-q,\sigma^\prime}\hat{a}_{q-r,\sigma}^\ast .
\end{align*}
In the relevant domain, we can replace $\hat{u}_{\sigma^\prime}(s-q)$ by $\zeta^{<\alpha}_{\sigma^\prime}(s-q)$ defined in \eqref{eq: def zeta alpha}, and replace $\hat{u}_{\sigma^\prime}(r^\prime - p)$ by $\hat{u}^>_{\sigma^\prime}(r^\prime - p)$ defined in \eqref{eq: u< u> gamma}. We split $\mathrm{I}_{2}=\mathrm{I}_{2;a}+\mathrm{I}_{2;b}$ using 
\[
  \hat{u}_\sigma(r+p) = \hat{u}^{<\eta}_\sigma(r+p) + \hat{u}^{>\eta}_\sigma(r+p),
\]
with $\hat{u}^{<\eta}_\sigma$ supported for $k_F^\sigma < |k| < \rho^{1/3 - \eta}$ and $\hat{u}^{>\eta}_\sigma = \hat{u}_\sigma(1 -\hat{u}^{<\eta}_\sigma) $. 
We first estimate $\mathrm{I}_{2;a}$, given in configuration space by
\begin{align*}
  &\mathrm{I}_{2;a}= - \sum_{\sigma\neq \sigma^\prime} \sum_{\ell = 1}^3 \int_0^{\infty}dt\, e^{-2t\varepsilon} \frac{1}{L^3}\sum_{\sigma\neq \sigma^\prime} \sum_r (|r|^2 - (k_F^\sigma)^2) {u}^{<\alpha}_\sigma(r)e^{-t|r|^2}  
  \\
  &\qquad  \times\left(\int dx \, e^{ir\cdot x} a^\ast_\sigma(u_x^{>\eta})b^\ast_{\sigma^\prime}(\varphi_x, v, u^>)\right) 
\\
  &\qquad \times \left(\int dzdz^\prime e^{-ir\cdot z} \partial_\ell{\varphi}^>(z-z^\prime) a_{\sigma^\prime}^\ast (\zeta^{<\alpha}_{t,z^\prime}) (a_{\sigma^\prime}^\ast(v_{t,z^\prime})a_{\sigma}^\ast(\partial_\ell v_{t,z}) -  a_{\sigma^\prime}^\ast(\partial_\ell v_{t,z^\prime})a_{\sigma}^\ast(v_{t,z}) )\right)
\end{align*}
where  $\hat{u}^{<\alpha}_{t,\sigma}(k) = \hat{{u}}^{<\alpha}_\sigma(k)e^{-t|k|^2}$ and $\hat{\zeta}^{<\alpha}_{t,\sigma^\prime}(k) = \hat{\zeta}^{<\alpha}_{\sigma^\prime}(k)e^{-t|k|^2}$. 
We shall apply Lemma \ref{lem:6a}, use
 $||r|^2 - (k_F^\sigma)^2|{u}^{<\alpha}_\sigma(r)e^{-t|r|^2} \leq C\rho^{2/3 - 2\alpha}e^{-t(k_F^\sigma)^2}$, the bound
\begin{equation}\label{eq: int z' a*a*}
\left\|\int dz^\prime \partial_\ell {\varphi}^>(z-z^\prime) a_{\sigma^\prime}^\ast (\zeta^{<\alpha}_{t,z^\prime}) a_{\sigma^\prime}^\ast(\partial_\ell^n v_{t,z^\prime})\right\|\leq C\rho^{\frac{n}{3}}\|\partial_\ell \varphi^>\|_1\|{\zeta}^{<\alpha}_{t,\sigma^\prime}\|_2 \|{v}_{t,\sigma^\prime}\|_2,
\end{equation}
together with Lemma \ref{lem:1a} and   \eqref{eq: est int t vu< final} with $\gamma$ replaced by $\alpha$. All this gives
\begin{align}\label{eq: est I1a2 <}
  |\langle \psi,\mathrm{I}_{2;a}\psi\rangle|&\leq 
  CL^{\frac{3}{2}}\rho^{\frac{3}{2} + \frac{3}{2}\gamma +\eta - \frac{5}{2}\alpha - \kappa}\|\mathbb{H}_0^{\frac{1}{2}}\psi\|.
\end{align}

Next we consider $\mathrm{I}_{2;b}$. By writing 
\[
|r|^2 -(k_F^\sigma)^2 =  - r\cdot r^\prime + r\cdot (r+p) + r\cdot (r^\prime - p)  - (k_F^\sigma)^2,
\]
we split it into four terms, 
$\mathrm{I}_{2;b} = \mathrm{I}_{2;b;1} + \mathrm{I}_{2;b;2}+ \mathrm{I}_{2;b;3} + \mathrm{I}_{2;b;4} $.
 The first one is 
\begin{align*}
&\mathrm{I}_{2;b;1} =
\frac{2}{L^3}\sum_{\sigma\neq \sigma^\prime}\sum_{\ell,m = 1}^3 \sum_r \int_0^{\infty} dt\, e^{-2t\varepsilon} (-ir_\ell) \hat{u}^{<\alpha}_{\sigma}(r) e^{-t|r|^2} 
\\
&\quad \times  \int dx\,e^{ir\cdot x} a^\ast_\sigma(u_x^{<\eta})b^\ast_{\sigma^\prime}(\varphi^>_x,\partial_\ell v, u^>) 
\\
&\quad \times \int dzdz^\prime \, e^{-ir \cdot z} \partial_m {\varphi}^>(z-z^\prime)a_{\sigma^\prime}^\ast (\zeta^{<\alpha}_{t,z^\prime}) \big(a_{\sigma^\prime}^\ast(\partial_m v_{t,z^\prime})a_{\sigma}^\ast( v_{t,z}) - a_{\sigma^\prime}^\ast(v_{t,z^\prime})a_{\sigma}^\ast(\partial_m v_{t,z})\big) .
\end{align*} 
Using again Lemma \ref{lem:6a} with \eqref{eq: int z' a*a*}, \eqref{eq:Pauli} and $|r| \hat{u}^{<\alpha}_{\sigma}(r) e^{-t|r|^2}\leq C\rho^{1/3 - \alpha}e^{-t(k_F^{\sigma})^2}$,  we obtain
\begin{align*}
  &|\langle \psi, \mathrm{I}_{2;b;1} \psi\rangle| \leq CL^{\frac{3}{2}}\rho^{\frac{2}{3} -\alpha}\sum_{\sigma\neq\sigma^\prime}\sum_{\ell, m = 1}^3\|\partial_m {\varphi}^>\|_1 \sup_x\| b_{\sigma^\prime}({\varphi}_x^>, \partial_\ell v, u^>)\| 
  \\
  &\quad \times \int_0^{\infty}dt\, e^{-2t\varepsilon}e^{-t(k_F^\sigma)^2}\|{\zeta}^{<\alpha}_{t,\sigma^\prime}\|_2 \|{v}_{t,\sigma^\prime}\|_2 \|{v}_{t,\sigma}\|_2\,  \sqrt{\int dx\, \| a_{\sigma}(u_x^{<\eta})\psi\|^2}\\
  &\le CL^{\frac{3}{2}}\rho^{\frac{11}{6} +\frac{3}{2}\gamma - \frac{3}{2}\alpha -\kappa}\|\mathcal{N}^{\frac{1}{2}}\psi\|.
\end{align*}
Here we also used Lemmas \ref{lem:1a},  \ref{lem: bounds phi}, and  \ref{lem:2a}, as well as \eqref{eq: est int t vu< final} with $\gamma$ replaced by $\alpha$. 

The term $\mathrm{I}_{2;b;2}$ can be treated in a similar way, with the result 
\begin{align*}
  &|\langle \psi,  \mathrm{I}_{2;b;2} \psi\rangle|
 \le CL^{\frac{3}{2}}\rho^{\frac{3}{2} + \frac{3}{2}\gamma - \frac{3}{2}\alpha -\kappa}\left(\|\mathbb{H}_0^{\frac{1}{2}}\psi\| + \rho^{\frac{1}{3}}\|\mathcal{N}^{\frac{1}{2}}\psi\|\right).
\end{align*}

The term  
\begin{align*}
& \mathrm{I}_{2;b;3}=  2\sum_{\sigma\neq \sigma^\prime}\sum_{\ell,m = 1}^3\sum_r (-ir_\ell) \hat{u}^{<\alpha}_{\sigma}(r)\int_0^{\infty} dt\, e^{-2t\varepsilon}e^{-t|r|^2} 
\\
& \qquad \times \int dxdydzdz^\prime\, {\varphi}^>(x-y)\partial_m {\varphi}^>(z-z^\prime)   
\\
&\qquad \times a^\ast_\sigma(u_x^{<\eta})a^\ast_{\sigma^\prime}(v_y)a_{\sigma^\prime}^\ast(\partial_\ell u_y^>) a_{\sigma^\prime}^\ast (\zeta^{<\alpha}_{t,z^\prime}) \big(a_{\sigma^\prime}^\ast(\partial_m v_{t,z^\prime})a_{\sigma}^\ast( v_{t,z})  - a_{\sigma^\prime}^\ast(v_{t,z^\prime})a_{\sigma}^\ast(\partial_m v_{t,z}) \big)
\end{align*} 
can be also handled by similarly as in the proof of Lemma \ref{lem:6a}. Using $|r_\ell| \hat{u}^{<\alpha}_\sigma\leq C\rho^{1/3 - \alpha}$ and the Cauchy-Schwarz inequality, we get
\begin{align*}
&|\langle \psi, \mathrm{I}_{2;b;3}\psi\rangle| \leq CL^{\frac{3}{2}}\rho^{\frac{2}{3} -\alpha}\sum_{\sigma\neq\sigma^\prime}\sum_{\ell = 1}^3\|\partial_\ell {\varphi}^>\|_1  \int_0^{\infty} dt\, e^{-2t\varepsilon}e^{-t(k_F^\sigma)^2}\|{\zeta}^{<\alpha}_{t,\sigma^\prime}\|_2 \|{v}_{t,\sigma^\prime}\|_2 \|{v}_{t,\sigma}\|_2
\\
&\quad \times \left(\frac{1}{L^3}\sum_r \left\|\int dxdy\, e^{-ir\cdot x}\varphi^>(x-y)a_\sigma(u^{<\eta}_x)a_{\sigma^\prime}(v_y)a_{\sigma^\prime}(\partial_\ell u^>_y)\psi\right\|^2\right)^{\frac{1}{2}},
\end{align*}
where we used \eqref{eq: int z' a*a*} and \eqref{eq:Pauli}. By the Cauchy-Schwarz inequality, 
\begin{align*}
&\frac{1}{L^3}\sum_r \left\|\int dxdy\, e^{-ir\cdot x}\varphi^>(x-y)a_\sigma(u^{<\eta}_x)a_{\sigma^\prime}(v_y)a_{\sigma^\prime}(\partial_\ell u^>_y)\psi\right\|^2
\\
&\qquad\leq  \|\varphi^>\|_1^2 \|{u}^{<\eta}_\sigma\|_2^2 \|{v}_{\sigma^\prime}\|_2^2 \int dy\, \|a_{\sigma^\prime}(\partial_\ell u_y)\psi\|^2 .
\end{align*}
Using also Lemmas \ref{lem: bounds phi} and \ref{lem:1a} as well as \eqref{eq:Pauli-1}, we find 
\begin{align*}
|\langle \psi, \mathrm{I}_{2;b;3} \psi\rangle| &\leq CL^{\frac{3}{2}}\rho^{\frac{7}{6} + 3\gamma - \frac{3}{2}\eta - \alpha} \left(\int_0^{\infty}dt\, e^{-2t\varepsilon} \|\hat{v}_{t,\sigma^\prime}\|_2 \|\hat{\zeta}^{<\alpha}_{t,\sigma^\prime}\|_2\right)\left(\|\mathbb{H}_0^{\frac{1}{2}}\psi\| + \rho^{\frac{1}{3}}\|\mathcal{N}^{\frac{1}{2}}\psi\|\right)
 \\
 &\leq CL^{\frac{3}{2}} \rho^{\frac{3}{2} + 3\gamma - \frac{3}{2}\eta - \frac{3}{2}\alpha - \kappa} \left( \|\mathbb{H}_0^{\frac{1}{2}}\psi\| + \rho^{\frac{1}{3}}\|\mathcal{N}^{\frac{1}{2}}\psi\|\right).
\end{align*}
The estimate for $\mathrm{I}_{2;b;4}$ can be done in the same way, yielding 
\begin{align*}
  &|\langle \psi, \mathrm{I}_{2;b;4} \psi\rangle| 
 \leq CL^{\frac{3}{2}}\rho^{\frac{11}{6} +\frac{3}{2}\gamma -\frac{\alpha}{2} -\kappa}\|\mathcal{N}^{\frac{1}{2}}\psi\|.
\end{align*}
Combining all the above estimates,  
we get 
\begin{align*}
  |\langle\psi, \mathrm{I}_{2} \psi\rangle| &\leq  CL^{\frac{3}{2}}\rho^{\frac{3}{2} + \frac{3}{2}\gamma +\eta - \frac{5}{2}\alpha - \kappa}\|\mathbb{H}_0^{\frac{1}{2}}\psi\| + CL^{\frac{3}{2}} \rho^{\frac{3}{2} + \frac{3}{2}\gamma - \frac{3}{2}\alpha - \kappa}(\|\mathbb{H}_0^{\frac{1}{2}}\psi\| + \rho^{\frac{1}{3}} \|\mathcal{N}^{\frac{1}{2}}\psi\|)
  \\
  &\quad +CL^{\frac{3}{2}} \rho^{\frac{3}{2} + 3\gamma -\frac{3}{2}\eta - \frac{3}{2}\alpha - \kappa} (\|\mathbb{H}_0^{\frac{1}{2}}\psi\| + \rho^{\frac{1}{3}}\|\mathcal{N}^{\frac{1}{2}}\psi\|).
\end{align*}
An optimization over $\eta$ yields $\eta = (2/5)\alpha + (3/5)\gamma$. Thus, since $\alpha > \gamma$, we obtain
\begin{equation}\label{I2f}
  |\langle\psi, \mathrm{I}_{2} \psi\rangle| \leq  CL^{\frac{3}{2}}\rho^{\frac{3}{2} + \frac{21}{10}\gamma- \frac{21}{10}\alpha - \kappa} (\|\mathbb{H}_0^{\frac{1}{2}}\psi\| + \rho^{\frac{1}{3}}\|\mathcal{N}^{\frac{1}{2}}\psi\|).
\end{equation}

To conclude the analysis of $\mathrm{I}$, it remains to estimate 
\begin{align}\label{eq: I1a3>}
\mathrm{I}_{3} &= -2\varepsilon \sum_{\sigma\neq \sigma^\prime} \frac{1}{L^3}\sum_r  ||r|^2 - (k_F^\sigma)^2|\hat{u}^{<\alpha}_\sigma(r)\int_0^{\infty}dt\, e^{-2t\varepsilon}e^{-t|r|^2}  \\
&\qquad \times \int dxdydzdz^\prime \, e^{ir\cdot(x-z)}b^\ast_{\sigma^\prime}(\varphi_x^>, v, u^>) a^\ast_\sigma(u_x)b^\ast_{\sigma^\prime}(\varphi^>_z, v_t, \zeta^{<\alpha}_t)a_{\sigma}^\ast( v_{t,z}). \nn
\end{align}
Using Lemma \ref{lem:6a} and $|r|^2 \hat{{u}}^{<\alpha}_\sigma(r)e^{-t|r|^2} \leq C\rho^{2/3 - 2\alpha}e^{-t(k_F^\sigma)^2}$, we get the bound 
\begin{align}\label{eq: I1a3> 3}
  &|\langle \psi, \mathrm{I}_{3}\psi\rangle| \leq C\rho^{\frac{4}{3} -2\alpha +\delta}\sum_{\sigma\neq\sigma^\prime}\int_0^{\infty}dt\, e^{-2t\varepsilon}e^{-t(k_F^\sigma)^2}\sup_x \|b_{\sigma^\prime}(\varphi^>_x, v, u^>)  \| \nn
\\
&\quad \times \left(\int dx\, \|a_\sigma(u_x) \psi\|^2\right)^{\frac{1}{2}}  \sup_z \|b^\ast_{\sigma^\prime}(\varphi^>_z, v_t, \zeta^{<\alpha}_t)\|\left(\int dz\, \|a_{\sigma}^\ast(v_{t,z}) \psi\|^2\right)^{\frac{1}{2}}.
\end{align}
From Lemmas \ref{lem:2a}, \ref{lem: bounds phi} and \ref{lem:1a}, \eqref{eq:Pauli-1} and \eqref{eq: est int t vu< final} with $\gamma$ replaced by $\alpha$, we get 
\begin{align*}
|\langle \psi, \mathrm{I}_{3}\psi\rangle|
 &\leq C\rho^{\frac{3}{2} + \frac{5}{2}\gamma -2\alpha + \delta} \left(\int_0^{\infty}dt\, e^{-2t\varepsilon} \|{v}_{t,\sigma^\prime}\|_2 \|{\zeta}^{<\alpha}_{t,\sigma^\prime}\|_2\right)L^{\frac{3}{2}} \|\mathcal{N}^{\frac{1}{2}}\psi\|
 \\
 &\leq C \rho^{\frac{11}{6} + \frac{5}{2}\gamma - \frac{5}{2}\alpha +\delta-\kappa} L^{\frac{3}{2}} \|\mathcal{N}^{\frac{1}{2}}\psi\|.
\end{align*}
Combining the above estimates for $\mathrm{I}_{1}$, $\mathrm{I}_{2}$ and $\mathrm{I}_{3}$, we find that  
\begin{align}\label{eq: est T2*R I>}
  |\langle \psi, \mathrm{I}\psi\rangle| &\leq C  L^{\frac{3}{2}}\left(\rho^{\frac{3}{2} +2\gamma - 2\alpha} + \rho^{\frac{3}{2} +\frac{21}{10}\gamma - \frac{21}{10}\alpha - \kappa}\right) \left(\|\mathbb{H}_0^{\frac{1}{2}}\psi\| + \rho^{\frac{1}{3}}\|\mathcal{N}^{\frac{1}{2}}\psi\|\right) \nn
  \\
  &\quad + C L^{\frac{3}{2}}\rho^{\frac{11}{6} + \frac{5}{2}\gamma - \frac{5}{2}\alpha +\delta-\kappa} \|\mathcal{N}^{\frac{1}{2}}\psi\|.
\end{align}

It remains to consider  
\begin{align*}
  \mathrm{J} &= -\frac{1}{L^6}\sum_{\sigma\neq \sigma^\prime}\sum_{p,q,r,r^\prime,s} (|r|^2 - (k_F^\sigma)^2)\hat{\varphi}^>(p)\widehat{\omega}^{\varepsilon}_{r-q, s}(q)\widehat{\chi}_<(q)
  \\
  &\qquad \qquad\times \hat{{u}}^{<\alpha}_\sigma(r)\hat{u}_\sigma(r+p) \hat{v}_\sigma(r-q)\hat{u}_{\sigma^\prime}(r^\prime - p)\hat{v}_{\sigma^\prime}(r^\prime)\hat{u}_{\sigma^\prime}(s-q)\hat{v}_{\sigma^\prime}(s)
  \\
  &\qquad \qquad \times \hat{a}_{r+p,\sigma}^\ast \hat{a}_{-r^\prime,\sigma^\prime}^\ast\hat{a}^\ast_{r^\prime - p,\sigma^\prime} \hat{a}^\ast_{-s,\sigma^\prime}\hat{a}_{s-q,\sigma^\prime}^\ast\hat{a}_{q-r,\sigma}^\ast. 
\end{align*}
In the relevant domain, we can replace $\hat{u}_{\sigma^\prime}(s-q)$, $\hat{{u}}^{<\alpha}_\sigma(r)$ by $\hat{u}^<_{\sigma^\prime}(s-q)$, $\hat{u}^<_\sigma(r)$, respectively, and replace  $\hat{u}_{\sigma^\prime}(r^\prime - p)$ by  $\hat{u}^>_{\sigma^\prime}(r^\prime - p)$, with $\hat{u}^<_\sigma$ and $\hat{u}^>_{\sigma'}$ defined in \eqref{eq: u< u> gamma}. 
Further  splitting 
\[
  \hat{u}_\sigma(r+p) =  \hat{u}^{>}_\sigma(r+p) + \hat{u}^{<}_\sigma(r+p), 
\] 
we write $\mathrm{J} =  \mathrm{J}^{>} + \mathrm{J}^{<}$ accordingly.
We first consider $\mathrm{J}^{>}$, which  
can be written as 
\begin{align*}
  \mathrm{J}^{>} &= \frac{2}{L^3}\sum_{\sigma\neq \sigma^\prime}\sum_r ||r|^2 - (k_F^\sigma)^2|\hat{u}^{<}_\sigma(r)\int_0^{\infty} dt\, e^{-2t\varepsilon}e^{-t|r|^2}  
  \\
  &\qquad \qquad \qquad \quad\times \int dxdzdz^\prime  e^{ir\cdot (x-z)} a^\ast_\sigma(u^{>}_x)b^\ast_{\sigma^\prime}({\varphi}_x^>, v, u^>)b^\ast_{\sigma^\prime}(\Delta\varphi_z^<, v_{t}, u^<_t)a^\ast_\sigma(v_{t,z}).
\end{align*}
Using Lemmas \ref{lem:6a}, \ref{lem:2a},  \ref{lem: bounds phi} and \ref{lem:1a}, we find that 
\begin{align*}
  &|\langle \psi, \mathrm{J}^{>}\psi\rangle| 
  \leq CL^{\frac{3}{2}}\rho^{\frac{2}{3} - 2\gamma}\sum_{\sigma\neq\sigma^\prime}\int_0^{\infty}dt\, e^{-2t\varepsilon} e^{-t(k_F^\sigma)^2}\sup_x \| b_{\sigma^\prime}({\varphi}_x^>, v, u^>)\| 
  \\
  &\qquad\qquad\qquad \times \left( \int dx\,\|a_\sigma(u^{>}_x)\psi\|^2\right)^{\frac{1}{2}} \sup_z \|b_{\sigma^\prime}(\Delta\varphi^<_z, v_{t}, u^<_t)\| \| v_{t,\sigma}\|_2 \\
  &
  \leq CL^{\frac{3}{2}}\rho^{\frac{3}{2} -\gamma -\kappa}\|\mathbb{H}_0^{\frac{1}{2}}\psi\|,
\end{align*}
where in the last estimate we also used  \eqref{eq: est int t vu< final}. Next, we consider $\mathrm{J}^{<}$. In this case, it is convenient to write $|r|^2 - (k_F^\sigma)^2 = r\cdot (r+p) - r\cdot p- (k_F^\sigma)^2$ and accordingly split
 $\mathrm{J}^<$ into three terms $\mathrm{J}^<_1$, $\mathrm{J}^<_2$ and $\mathrm{J}^<_3$   
containing $r\cdot (r+p)$, $r\cdot p$  and $(k_F^\sigma)^2$, respectively. 
The first term $\mathrm{J}^{<}_{1}$ is given by 
\begin{align*}
 \mathrm{J}^{<}_{1}&= \frac{2i}{L^3} \sum_{\sigma\neq \sigma^\prime}\sum_{\ell = 1}^3\sum_r r_\ell \hat{u}^{<}(r) \int_0^{\infty}dt\, e^{-2t\varepsilon} e^{- t|r|^2} 
  \\
  &\qquad \times  \int dxdz\, e^{-ir\cdot (x-z)} b_{\sigma^\prime}^\ast({\varphi}_x^>, v, u^>)a_\sigma^\ast(\partial_\ell u^{<}_x)b_{\sigma^\prime}^\ast(\Delta\varphi_z^<, v_t, u^<_t)a_\sigma^\ast(v_{t,z})
\end{align*}
Proceeding similarly as for $\mathrm{J}^{>}$, using Lemmas \ref{lem:6a}, \ref{lem:2a}, \ref{lem:1a}, \ref{lem: bounds phi} and \eqref{eq:Pauli-1}, we obtain
\begin{align*}
  |\langle \psi, \mathrm{J}^{<}_{1}\psi\rangle| &\leq CL^{\frac{3}{2}}\rho^{\frac{7}{6} - \frac{\gamma}{2}}\sum_{\sigma\neq\sigma^\prime}\sum_{\ell=1}^3 \int_0^{\infty} dt\, e^{-2t\varepsilon} \|{u}^<_{t,\sigma^\prime}\|_2 \|{v}_{t,\sigma^\prime}\|_2\, \sqrt{\int dx \, \|a_\sigma(\partial_\ell u^{<}_x)\psi\|^2}
  \\
  &\leq CL^{\frac{3}{2}} \rho^{\frac{3}{2} - \gamma -\kappa}\left(\|\mathbb{H}^{\frac{1}{2}}_0\psi\| + \rho^{\frac{1}{3}}\|\mathcal{N}^{\frac{1}{2}}\psi\|\right).
\end{align*}
Next, we estimate  
\begin{multline*}
  \mathrm{J}^{<}_{2} = \frac{-2i}{L^3}\sum_{\sigma\neq \sigma^\prime}\sum_{\ell = 1}^3\sum_r r_\ell \hat{u}^{<}_\sigma(r)\int_0^{\infty} dt\, e^{-2t\varepsilon} e^{-t|r|^2}\int dxdz\, e^{-ir\cdot(x-z)} 
  \\
  \times a^\ast_{\sigma}(u^{<}_x)b^\ast_{\sigma^\prime}(\partial_\ell\varphi^>_x, v, u^>)b^\ast_{\sigma^\prime}(\Delta\varphi_z^<, v_{t}, u^<_t)a_\sigma^\ast(v_{t,z}).
\end{multline*}
Similarly as in the proof of Lemma \ref{lem:6a}, we can use the Cauchy--Schwarz inequality in $r$ and get 
\begin{align*}
|\langle \psi, J_{2}^{<}\psi\rangle| &\leq CL^{\frac{3}{2}}\rho^{\frac{1}{3} - \gamma} \sum_{\sigma\neq \sigma^\prime}\int_0^{\infty}dt\, e^{-2t\varepsilon}e^{-t(k_F^\sigma)^2} \sup_z \|b^\ast_{\sigma^\prime}(\Delta\varphi_z^<, v_t, u^<_t)a^\ast_{\sigma}(v_{t,z})\| 
\\
&\quad \times \left(\frac{1}{L^3}\sum_r \left\|\int dxdy\, e^{-ir\cdot x} \partial_\ell {\varphi}^>(x-y) a_\sigma(u^{<}_x)a_{\sigma^\prime}(u^>_y)a_{\sigma^\prime}(v_y)\psi\right\|^2\right)^{\frac{1}{2}}.
\end{align*}
Using now that 
\begin{multline*}
 \frac{1}{L^3}\sum_r \left\|\int dxdy\, e^{-ir\cdot x} \partial_\ell {\varphi}^>(x-y) a_\sigma(u^{<}_x)a_{\sigma^\prime}(u^>_y)a_{\sigma^\prime}(v_y)\psi\right\|^2
  \\
   \leq C \| u^<_\sigma\|_2^2 \|v_{\sigma'}\|_2^2\|\nabla{\varphi}^>\|_1^2 \int dy\ \|a_{\sigma^\prime}(u^>_y)\psi\|^2 
\end{multline*}
together with \eqref{eq:Pauli-1}, Lemma \ref{lem: bounds phi} and Lemma \ref{lem:1a} (see \eqref{eq:1a-3}), we get 
\[
  |\langle\psi, \mathrm{J}^{<}_{2}\psi\rangle| \leq CL^{\frac{3}{2}}\rho^{\frac{3}{2}-\gamma - \kappa}\|\mathbb{H}_0^{\frac{1}{2}}\psi\|.
\]
Finally, the term 
\begin{align*}
 \mathrm{J}^{<}_{3}&=  \sum_{\sigma\neq \sigma^\prime}\frac{(k_F^\sigma)^2}{L^3}\sum_r  \hat{u}^{<}(r) \int_0^{\infty}dt\, e^{-2t\varepsilon} e^{-t|r|^2} 
  \\
 & \qquad \times \int dxdz\, e^{-ir\cdot (x-z)} b_{\sigma^\prime}^\ast({\varphi}_x^>, v, u^>)a_\sigma^\ast(u^{<}_x)b_{\sigma^\prime}^\ast(\Delta\varphi_z^<, v_t, u^<_t)a_\sigma^\ast(v_{t,z})
\end{align*}
can be treated similarly to $\mathrm{J}^{<}_1$ and $\mathrm{J}^{<}_2$. Using Lemmas \ref{lem:6a}, \ref{lem:2a}, \ref{lem:1a}, \ref{lem: bounds phi} and \eqref{eq:Pauli-1},   
we obtain
\begin{align*}
  |\langle \psi,  \mathrm{J}^{<}_{3} \psi\rangle| &\leq C\rho^{\frac{2}{3}}\sum_{\sigma\neq \sigma^\prime}\int_0^\infty dt\, e^{-2t\varepsilon}e^{-t(k_F^\sigma)^2} \sup_x \|b_{\sigma^\prime}(\varphi^>_x, v, u^>)\|\sup_z \|b_{\sigma^\prime}^\ast (\Delta\varphi_z^<, v_t,u_t^<)\| 
  \\
  &\quad \times \left(\int dx \|a_\sigma(u^<_x)\psi\|^2\right)^{\frac{1}{2}}\left(\int dz \|a_\sigma^\ast(v_{t,z})\psi\|^2\right)^{\frac{1}{2}}
  \\
  &\leq CL^{\frac{3}{2}}\rho^{\frac{3}{2} + \frac{\gamma}{2}}\left(\int_0^{\infty} dt\, e^{-2t\varepsilon} \|{u}^<_{t,\sigma^\prime}\|_2 \|{v}_{t,\sigma^\prime}\|_2\right)\|\mathcal{N}^{\frac{1}{2}}\psi\| \leq CL^{\frac{3}{2}}\rho^{\frac{11}{6} - \kappa}\|\mathcal{N}^{\frac{1}{2}}\psi\|. 
\end{align*}
Combining all the above estimates, we find  that 
\begin{equation}\label{Jf}
  |\langle \psi, \mathrm{J} \psi\rangle| \leq CL^{\frac{3}{2}}\rho^{\frac{3}{2} - \gamma - \kappa} \left(\|\mathbb{H}_0^{\frac{1}{2}}\psi\| + \rho^{\frac{1}{3}}\|\mathcal{N}^{\frac{1}{2}}\psi\|\right).
\end{equation}
In combination with \eqref{eq: est T2*R I>} this gives
\begin{align}\label{eq: est  S2*T}
&\sum_{\sigma}\sum_{r\in \Lambda^\ast} ||r|^2 - (k_F^\sigma)^2| |\langle \psi, S^\ast_{2,\sigma}(r)T_\sigma(r)\psi\rangle|
  \\
 &\leq C ( \rho^{\frac{3}{2} -\gamma - \kappa} + \rho^{\frac{3}{2} + \frac{21}{10}\gamma- \frac{21}{10}\alpha - \kappa} ) (L^{\frac 3 2}\|\mathbb{H}_0^{\frac{1}{2}}\psi\| + \rho^{\frac{1}{3}} L^{\frac{3}{2}} \|\mathcal{N}^{\frac{1}{2}}\psi\|).
 \end{align}

The statement of the Lemma follows from \eqref{56.1}, \eqref{56.2} and \eqref{eq: est  S2*T}.

\section{Improved a priori bounds}\label{sec:apriori}

In this section we prove Proposition \ref{pro: a priori}. 

\begin{proof}[Proof of Proposition \ref{pro: a priori}]
Let $\Psi=R\psi$ be a ground state of $H_N$ or, more generally, any state with an energy expectation exceeding the ground state energy by at most $O(L^3 \rho^{7/3})$. From the upper bound on the ground state energy in \cite[Theorem 1.2]{GHNS} and Proposition \ref{pro: fermionic transf} (with the choice $\beta =0$), we know that 
\begin{align}
&E_{\mathrm{FFG}}  + \left\langle  \psi, \Big[ \mathbb{H}_0 +   \mathbb{Q}_2 + \mathbb{Q}_3 + \mathbb{Q}_4 \Big]  \psi\right\rangle \nn\\
&\le   \frac{3}{5}(6\pi^2)^{\frac{2}{3}} \left(\rho_\uparrow^{\frac{5}{3}}  + \rho_\downarrow^{\frac{5}{3}}\right) L^3 + 8\pi a \rho_\uparrow\rho_\downarrow  L^3+ C L^3 \rho^{7/3} + C\rho \langle \psi, \cN \psi\rangle.  
 \label{eq:apriori-0}
\end{align} 
We shall derive a lower bound for the left-hand side of \eqref{eq:apriori-0} by using simplified versions of Propositions   \ref{pro: completion square Vphi} and \ref{pro: completion square Vf}. By Lemma \ref{lem: Ta}, Lemma \ref{lem: T} and the a-priori estimates for $\mathbb{H}_0$ and $\mathbb{Q}_4$ in Lemma \ref{lem: a priori}, we have 
\begin{align}\label{eq:apriori-1a}
& \left\langle \psi, \Big[  \mathbb{H}_0 + \mathbb{Q}_2\big\vert_{Vf} \Big] \psi\right\rangle \ge   \left\langle \psi,  \big[ \widetilde{\mathbb{H}}_0 - C\rho \cN\big] \psi \right\rangle - C\rho^{7/3} L^3-  \mathfrak{e}_L   L^{3} \nn\\
&\qquad \qquad  - \frac{1}{L^6}\sum_{p,r,r^\prime}\frac{(2|p|^2 \hat{\varphi}(p))^2}{\lambda_{p,r}+ \lambda_{-p,r^\prime} + 2\varepsilon}\hat{u}_{\uparrow}(r+p)\hat{u}_{\downarrow}(r^\prime - p)\hat{v}_\uparrow(r)\hat{v}_{\downarrow}(r^\prime) 
 \end{align}
with $\widetilde{\mathbb{H}} _0$ introduced in  \eqref{eq:tH0}. The bound \eqref{eq:apriori-1a} serves as a simplified version of Proposition \ref{pro: completion square Vf}, since it does not take into account any contribution of $\bQ_3$. The full term $\bQ_3$ can in fact be handled using 
\begin{align}  \label{eq:apriori-1b}
&  \mathbb{Q}_4 +  \mathbb{Q}_3 + \mathbb{Q}_2\big\vert_{V\varphi} \ge -L^3\rho_\uparrow\rho_\downarrow \int_{\Lambda}\, V\varphi^2 - C\rho \cN 
\\
  &\qquad +\frac{1}{2}\sum_{\sigma\neq\sigma^\prime}\int_{\Lambda^2} dxdy\, V(x-y)\left| a_{\sigma^\prime}(u_y)a_\sigma(u_x) + \mathcal{T}_{\sigma,\sigma^\prime}(x,y) + \widetilde{\mathcal{S}}_{\sigma,\sigma^\prime}(x,y)\right|^2  \nn
  \end{align}
with $\mathcal{T}_{\sigma,\sigma^\prime}(x,y)$ defined in Definition \ref{def:ren-Q2} and $\widetilde{\mathcal{S}}^\ast_{\sigma,\sigma^\prime}(x,y) =  a^\ast_{\sigma^\prime}(v_y) a_{\sigma}(u_x) - a^\ast_\sigma(v_x)a_{\sigma^\prime}(u_y)$,  a simpler form of $\mathcal{S}^\ast_{\sigma,\sigma^\prime}(x,y)$ in Definition \ref{def:ren-Q3}.  The proof of \eqref{eq:apriori-1b} can be obtained in a straightforward way following the one of Proposition \ref{pro: completion square Vphi}, but is in fact simpler since we only aim at getting an error of order $\rho \cN$. Hence we omit the details. 

Combining \eqref{eq:apriori-1a} and \eqref{eq:apriori-1b}, and dropping the last positive term of \eqref{eq:apriori-1b} for a lower bound,  we arrive at 
\begin{align}
&E_{\mathrm{FFG}} + \left\langle \psi, \Big[ \mathbb{H}_0 +  \mathbb{Q}_2 + \mathbb{Q}_3 + \mathbb{Q}_4 \Big] \psi\right\rangle  \ge  \left\langle \psi,  \widetilde{\mathbb{H}} _0  \psi \right\rangle -  C \rho \left\langle \psi,  \cN \psi \right\rangle -  C\rho^{7/3} L^3 - \mathfrak{e}_L   L^{3}  \nn\\
&+E_{\mathrm{FFG}}  -L^3\rho_\uparrow\rho_\downarrow \int_{\Lambda} V\varphi^2  - \frac{1}{L^6}\sum_{p,r,r^\prime}\frac{(2|p|^2 \hat{\varphi}(p))^2}{\lambda_{p,r}+ \lambda_{-p,r^\prime} + 2\varepsilon}\hat{u}_{\uparrow}(r+p)\hat{u}_{\downarrow}(r^\prime - p)\hat{v}_\uparrow(r)\hat{v}_{\downarrow}(r^\prime) 
 \nn\\
&\ge  \frac{3}{5}(6\pi^2)^{\frac{2}{3}} \left(\rho_\uparrow^{\frac{5}{3}}  + \rho_\downarrow^{\frac{5}{3}}\right) L^3 + 8\pi a \rho_\uparrow\rho_\downarrow  L^3 + \left\langle \psi,  \widetilde{\mathbb{H}} _0  \psi \right\rangle -  C \rho \left\langle \psi,  \cN \psi \right\rangle -  C\rho^{7/3} L^3 - \mathfrak{e}_L   L^{3},  
 \label{eq:apriori-1}
\end{align}
where in the last step we applied the bounds \eqref{eq: FFG energy} and  \eqref{eq:corr-energy-1}. 
In combination, it follows from  \eqref{eq:apriori-0} and \eqref{eq:apriori-1} that
\begin{align}\label{eq:apriori-2}
\langle \psi, \widetilde{\mathbb{H}}_0  \psi\rangle \le C \rho \langle \psi, \cN \psi\rangle+C L^3 \rho^{7/3}.
\end{align}

We shall now show that for all $1\le \eta \le C\rho^{-1/3}$,  
\begin{align}\label{eq:apriori-3}
 \langle \psi,  \cN  \psi\rangle \le C L^3 \eta \rho^{4/3} + C\eta^{-1} \langle \psi, ( \rho^{-1} \widetilde{\mathbb{H}}_0 + \cN) \psi\rangle. 
\end{align}
Using \eqref{eq:Pauli} we get the simple bound  
\begin{align}\label{eq:apriori-4}
\cN = \sum_{\sigma}  \sum_{r\in \Lambda^*} \hat a_{r,\sigma}^* \hat a_{r,\sigma} &\le \sum_{\sigma} \left( \sum_{| |r|-k_F^\sigma| \le \eta \rho^{2/3} }  1 +   \sum_{| |r|-k_F^\sigma|>\eta \rho^{2/3} }  \hat a_{r,\sigma}^* \hat a_{r,\sigma} \right) \nn \\
& \le C L^3 \eta \rho^{4/3} +\sum_{\sigma} \sum_{| |r|-k_F^\sigma|>\eta \rho^{2/3} }  \hat a_{r,\sigma}^* \hat a_{r,\sigma}. 
\end{align}
In the last term in \eqref{eq:apriori-4} we apply the Cauchy--Schwarz inequality  as $\hat a_{r,\sigma}^* \hat a_{r,\sigma} \le  2 ( \left|  \hat a_{r,\sigma} + T_{\sigma}(r) \right|^2  +  |T_{\sigma}(r) |^2 )$ with $T_\sigma(r)$ defined in \eqref{eq: def Tk}. By the definition of $\widetilde{\mathbb{H}}_0$ in \eqref{eq:tH0} we have the obvious inequality 
\begin{align}\label{eq:apriori-6}
\sum_\sigma \sum_{{| |r|-k_F^\sigma|>\eta \rho^{2/3} } }  \left|  \hat a_{r,\sigma} + T_{\sigma}(r)\right|^2 
&\le C \sum_\sigma \sum_{{| |r|-k_F^\sigma|>\eta \rho^{2/3} } } \frac{||r|^2- (k_F^\sigma)^2|}{\eta \rho }  \left|  \hat a_{r,\sigma} + T_{\sigma}(r) \right|^2 \nn\\
&\le C\eta^{-1}\rho^{-1} \widetilde{\mathbb{H}}_0.  
\end{align}
 
For the sum involving $\left|   T_{\sigma}(r) \right|^2$, we adapt the exact computation in \eqref{eq:RR-dec} in the proof of Lemma \ref{lem: T} to decompose  
\begin{align}\label{eq:apriori-8}
\sum_\sigma \sum_{| |r|-k_F^\sigma|>\eta \rho^{2/3} } \left\{   T_{\sigma}(r)^\ast, T_{\sigma}(r)\right\} =\sum_{j=1}^{10}\widetilde{\mathrm{I}}_j
 \end{align}
where the terms $\widetilde{\mathrm{I}}_j$ are identical to the ${\mathrm{I}}_j$ in \eqref{eq:RR-dec}, except for the replacement 
of the factors  $||r|^2-k_F^2|$ by $\1_{\{  | |r| - k_F^\sigma| > \eta \rho^{2/3}\}}$, and correspondingly also of factors $||r+p|^2-|r|^2|$ (where $r$ is inside and $p+r$ outside the Fermi ball) by  $\1_{\{  | |r+p| - k_F^\sigma| > \eta \rho^{2/3}\}} + \1_{\{  | |r| - k_F^\sigma| > \eta \rho^{2/3}\}}$. We shall only write down explicitly the first term, given by 
\begin{align}\nn
 \widetilde{\mathrm{I}}_1 & = \frac{1}{L^6}\sum_{\sigma\neq \sigma^\prime}\sum_{p,r,r^\prime}\left(\1_{\{  | |r+p| - k_F^\sigma| > \eta \rho^{2/3}\}} + \1_{\{  | |r| - k_F^\sigma| > \eta \rho^{2/3}\}} \right)  \\
 &\qquad\qquad\times (\widehat{\omega}^\varepsilon_{r,r^\prime}(p))^2 \hat{u}_{\sigma}(r+p)\hat{u}_{\sigma^\prime}(r^\prime - p)\hat{v}_\sigma(r)\hat{v}_{\sigma^\prime}(r^\prime). \label{def:ti1}
\end{align}
The terms $\widetilde{\mathrm{I}}_{2}, \widetilde{\mathrm{I}}_{3}, \widetilde{\mathrm{I}}_{9}, \widetilde{\mathrm{I}}_{10}$ are non-positive, and can be dropped for an upper bound. The terms $\widetilde{\mathrm{I}}_{j}$ with $4\leq j \leq 8$ are non-negative, but they can be estimated  simply using, as in \eqref{eq:apriori-6}, 
 $$
 1_{\{ | |r| - k_F^\sigma|> \eta \rho^{2/3}\}}  \le C  (\eta \rho)^{-1} | |r|^2- (k_F^\sigma)^2|
 $$
and the bounds on ${\mathrm{I}}_{j}$ in  the proof of Lemma \ref{lem: T},  yielding  
\begin{align}\label{eq:apriori-8a}
\langle \psi, \widetilde{\mathrm{I}}_{j} \psi \rangle \le C(\eta \rho)^{-1}  \langle \psi, {\mathrm{I}}_{j}  \psi \rangle \le C \eta^{-1} \big(\rho^{4/3} L^3 + \langle \psi ,  \cN \psi\rangle \big) , \quad 4\leq j \leq 8.
\end{align}
To obtain this bound, we also used the a-priori estimates in Lemma \ref{lem: a priori}, which actually yield an exponent larger than $4/3$ in the first term, but \eqref{eq:apriori-8a} is sufficient for our purposes.

Finally, we consider the constant term $\widetilde{\mathrm{I}}_1$ in  \eqref{def:ti1}. On the set $\{  | |r+p| - k_F^\sigma| > \eta \rho^{2/3}\}$, we bound $\omega^\eps_{r,r'}(p) \leq C/ (|p+r|^2 - (k_F^\sigma)^2)$, using that  $|p|^2|\hat \varphi(p)| \leq C$. On the set $\{  | |r+p| - k_F^\sigma| \leq \eta \rho^{2/3}\}$, on the other hand, we use $\omega^\eps_{r,r'}(p) \leq C/ (  (k_F^\sigma)^2 - |r|^2)$; the remaining sum over $p$ then gives  a factor bounded by $L^3 \eta \rho^{4/3} \leq C L^3 \rho$. Combining both cases, we thus find
\begin{align}\label{eq:constant-widetilde-I1}
\widetilde{\mathrm{I}}_1 &\le C  \rho^2 \sum_{r} \frac{ \1_{\{  | |r| - k_F^\sigma| > \eta \rho^{2/3}\}} }{ | |r|^2 -( k_F^{\sigma})^2|^2} \le C \rho^2  L^3  \int_{\R^3} dx \frac{\1_{\{  | |x| - k_F^\sigma| > \eta \rho^{2/3}\}} }{||x|^2-(k_F^{\sigma})^2|^2}\nn \\
&\le  C\rho^2 L^3 \int_{0}^\infty \frac{  \1_{\{  | t - k_F^\sigma| > \eta \rho^{2/3}\}} }{|t-k_F^{\sigma}|^2} \le C  \eta^{-1} \rho^{4/3} L^3. 
 \end{align}

From \eqref{eq:constant-widetilde-I1},  \eqref{eq:apriori-8a} and \eqref{eq:apriori-8}, we conclude that 
\begin{align}\label{eq:apriori-8c}
\sum_\sigma \sum_{| |r|- k_F^\sigma|> \eta \rho^{2/3} } \langle \psi ,   \left|   T_{\sigma}(r) \right|^2 \psi \rangle   \le C \eta^{-1} \rho^{4/3}L^3 + C\eta^{-1}  \langle \psi , \cN \psi\rangle. 
\end{align}
Moreover, together with 
 \eqref{eq:apriori-6} this gives 
\begin{align}\label{eq:apriori-9}
\sum_\sigma \sum_{| |r|- k_F^\sigma|> \eta \rho^{2/3} }  \langle \psi,  \hat a^*_{r,\sigma}  \hat a_{r,\sigma}   \psi\rangle \le  C\eta^{-1} \langle \psi, ( \rho^{-1} \widetilde{\mathbb{H}}_0 + \cN + \rho^{4/3}L^3) \psi\rangle
\end{align}
for all $1\le \eta \le C\rho^{-1/3}$. 
In combination with \eqref{eq:apriori-4}, this yields the desired bound \eqref{eq:apriori-3}.

  \medskip
 {\bf Bound for $\cN$.} Inserting \eqref{eq:apriori-2} into the right-hand side of \eqref{eq:apriori-3}, we get
\begin{align}\label{eq:apriori-9a}
\langle \psi, \cN  \psi\rangle \le C  \eta \rho^{4/3} L^3   + C \eta^{-1}   \langle \psi, \cN \psi\rangle
\end{align}
for all $1\le \eta \le C\rho^{-1/3}$. In \eqref{eq:apriori-9a}, we can choose $\eta$ sufficiently large, but independent of $\rho$, to conclude that $\langle \psi, \cN \psi\rangle\le C\rho^{4/3}L^3$. 

 \medskip
 {\bf Bound for $\cN_\beta$.}
From the  bound $\langle \psi, \cN \psi\rangle\le C\rho^{4/3}L^3$ and \eqref{eq:apriori-2}, we have
\begin{align}\label{eq:apriori-10}
\langle \psi, \widetilde{\mathbb{H}}_0 \psi\rangle \le C L^3 \rho^{7/3}.  
\end{align}
Inserting \eqref{eq:apriori-10} and the  bound $\langle \psi, \cN \psi\rangle\le C\rho^{4/3}L^3$ into \eqref{eq:apriori-9}, we deduce that 
\begin{align}\label{eq:apriori-9b}
\sum_\sigma \sum_{| |r| - k_F^\sigma| > \eta \rho^{2/3}} \langle \psi,  \hat a^*_{r,\sigma}  \hat a_{r,\sigma}   \psi\rangle  \le C\eta^{-1} \rho^{4/3}L^3
\end{align}
for all $1\le \eta \le C\rho^{-1/3}$. For every $0 \le \beta < 1$, we can apply \eqref{eq:apriori-9b} with $\eta = \min_\sigma (k_F^{\sigma})^{1+\beta} \rho^{-2/3} \le C \rho^{(\beta - 1)/3}$ to conclude that 
$$\langle \psi, \mathcal{N}_\beta \psi \rangle \le C L^3 \rho^{\frac {5-\beta} 3}.$$
The proof of Proposition \ref{pro: a priori} is complete. 
\end{proof}


\end{document}